\newtheorem{theorem}{Theorem}[section]
\newtheorem{lemma}{Lemma}[section]
\newtheorem{remark}{Remark}[section]
\newtheorem{corollary}{Corollary}[section]
\newtheorem{definition}{Definition}[section]
\newtheorem{proposition}{Proposition}[section]
\newcommand{\la}{\langle}
\newcommand{\ra}{\rangle}
\title{On Fast Johnson-Lindenstrauss Embeddings of Compact Submanifolds of $\mathbbm{R}^N$ with Boundary}
\author{Mark A. Iwen\thanks{Michigan State University, Department of Mathematics, and the Department of Computational Mathematics, Science and Engineering (CMSE), \texttt{markiwen@math.msu.edu}.  Supported in part by NSF DMS 1912706 and by NSF DMS 2106472.}, 
Benjamin Schmidt\thanks{Michigan State University, Department of Mathematics, \texttt{schmidt@math.msu.edu}.  Supported in part by a Simons Collaboration Grant.}, 
Arman Tavakoli\thanks{Michigan State University, Department of Mathematics, \texttt{tavakol4@msu.edu}.  Supported in part by NSF DMS 1912706, and by an MSU College of Natural Science Dissertation Completion Fellowship.}}
\begin{document}

\maketitle

\begin{abstract}
Let $\mathcal{M}$ be a smooth $d$-dimensional submanifold of $\mathbbm{R}^N$ with boundary that's equipped with the Euclidean (chordal) metric, and choose $m \leq N$.  In this paper we consider the probability that a random matrix $A \in \mathbbm{R}^{m \times N}$ will serve as a bi-Lipschitz function $A: \mathcal{M} \rightarrow \mathbbm{R}^m$ with bi-Lipschitz constants close to one for three different types of distributions on the $m \times N$ matrices $A$, including two whose realizations are guaranteed to have fast matrix-vector multiplies.  In doing so we generalize prior randomized metric space embedding results of this type for submanifolds of $\mathbbm{R}^N$ by allowing for the presence of boundary while also retaining, and in some cases improving, prior lower bounds on the achievable embedding dimensions $m$ for which one can expect small distortion with high probability.  In particular, motivated by recent modewise embedding constructions for tensor data, herein we present a new class of highly structured distributions on matrices which outperform prior structured matrix distributions for embedding sufficiently low-dimensional submanifolds of $\mathbbm{R}^N$ (with $d \lesssim \sqrt{N}$) with respect to both achievable embedding dimension, and computationally efficient realizations.  As a consequence we are able to present, for example, a general new class of Johnson-Lindenstrauss embedding matrices for $\mathcal{O}(\log^c N)$-dimensional submanifolds of $\mathbbm{R}^N$ which enjoy $\mathcal{O}(N \log (\log N))$-time matrix vector multiplications. 
\end{abstract}

\ 

\textbf{Keywords} Randomized manifold embeddings, Johnson-Lindenstrauss lemma, Manifolds with boundary, Fast dimension reduction \\

\textbf{Mathematics Subject Classification} 53C40, 53Z99, 68P30 , 65D99

\section{Introduction}
\label{sec:Intro}

Given a subset $S$ of $\mathbbm{R}^N$, $m \leq N$, and $\epsilon \in (0,1)$, we will consider random matrices $A \in  \mathbbm{R}^{m \times N}$ satisfying 
\begin{itemize}
    \item[] \hspace{1.1 in} $(1 - \epsilon)\| {\bf x} - {\bf y} \|^2_2 \leq \| A{\bf x} - A{\bf y} \|_2^2 \leq (1 + \epsilon)\| {\bf x} - {\bf y} \|^2_2$ \hfill ($\dagger$)
\end{itemize}
for all ${\bf x}, {\bf y} \in S$ simultaneously with high probability, where $\| \cdot \|_2$ denotes the $\ell_2$-norm.  Herein we will refer to any successful realization $A \in  \mathbbm{R}^{m \times N}$ satisfying ($\dagger$) as an {\bf $\epsilon$-JL embedding of $S$ into $\mathbbm{R}^m$} in keeping with the extensive literature (see, e.g., \cite{dasgupta1999elementary,achlioptas2003database,ailon2006approximate,krahmer2011new,foucart_mathematical_2013,bamberger2021johnsonlindenstrauss}) related to the many applications, extensions, and modifications of the celebrated Johnson-Lindenstrauss (JL) Lemma \cite{lindenstrauss1984extensions}.  More specifically, this paper is principally concerned with the case where $S$ is a low-dimensional compact submanifold of $\mathbbm{R}^N$.  In such cases the primary goal then becomes to bound the minimum embedding dimension $m$ achievable by any $\epsilon$-JL embedding of the submanifold in terms of its geometric characteristics, including, e.g, its dimension, volume, and reach \cite{federer_curvature_1959}.  Of course, the sufficient minimum achievable embedding dimension of a given submanifold generally depends on the distributions of the random matrices $A$ considered above. As a result, there is a large body of work bounding the minimal embedding dimension of submanifolds achievable by various classes of random matrices \cite{NIPS2007_1e6e0a04,baraniuk2009random,clarkson_tighter_2008,yap2013stable,eftekhari_new_2015,dirksen2016dimensionality,lahiri_random_2016} including, e.g., matrices with independent sub-gaussian rows \cite{eftekhari_new_2015,dirksen2016dimensionality} as well as more structured random matrices which support faster matrix-vector multiplies \cite{yap2013stable}.  In this paper we prove three new embedding theorems of this type which apply to submanifolds of $\mathbbm{R}^N$ {\it both with and without boundary}, including results which provide both improved embedding dimension and runtime bounds for $\epsilon$-JL embeddings of sufficiently low-dimensional manifold data.\\   

{\bf The Importance of Boundaries:}  The applications of random low-distortion embeddings of type $(\dagger)$ are wide-ranging due to their ability to provide dimensionality reduction of incoming data prior to the user having any detailed knowledge of the data's characteristics beyond some rough measures of its likely complexity (e.g., in terms of an upper bound on its Gaussian width \cite[Section 7.5]{vershynin_high-dimensional_2018}, etc.).  This has lead to $\epsilon$-JL embeddings being proposed as a means to reduce measurement costs for many applications involving data conforming to a manifold model.  Such applications include compressive sensing with manifold models \cite{chen2010compressive,iwen2013approximation,iwen2021recovery, iwen2019new, dirksen2019robust}, manifold learning and parameter estimation from compressive measurements \cite{NIPS2007_1e6e0a04,baraniuk2009random,eftekhari_new_2015,eftekhari2017happens}, and target recognition and classification via manifold models \cite{davenport2007smashed}.  In addition, low-distortion manifold embeddings have  recently been used to, e.g., help explain successful medical imaging from subsampled data via deep learning techniques \cite{hyun2020deep}.  In most of these applications the manifold models one considers often have boundary, and often for natural reasons.  Consider, e.g., the standard ``Swiss-roll'' manifold one commonly encounters in the manifold learning literature (see, e.g., \cite{tenenbaum2000global}) which has a boundary.  More pertinently, however, one might also consider applications such as the aforementioned work on target recognition and classification \cite{davenport2007smashed} where one encounters image manifolds whose parameters include, e.g., the direction of view between an overflying aircraft collecting data and the object one wishes to classify.  In such settings the physical limitations of the data collection (e.g., the pilot's understandable desire for an above-ground flight path which limits viewing directions to at most half of $\mathbb{S}^2$) will generally necessitate the presence of a boundary in the collectable manifold data.  For such reasons we believe a careful analysis of boundary effects on $\epsilon$-JL embeddings of submanifolds of $\mathbbm{R}^N$ to be of fundamental importance in the context of all of the applications mentioned above.\\

Mathematically, the presence of a boundary in a given manifold $\mathcal{M}$ makes formulating covering number bounds for $\mathcal{M}$ more difficult by complicating the estimation of the volume of the portion of the manifold contained within a given Euclidean ball whose center lies too close to its boundary.  As a consequence, the types of uniform volume estimates present in prior $\epsilon$-JL embedding proofs for manifolds without boundary do not apply near $\partial \mathcal{M}$.  A further complication is the assumption in prior work for manifolds without boundary that geodesics have a well defined external acceleration.  This is not the case in manifolds with boundary as a geodesic may not be $\mathcal{C}^2$, and may not have a unique continuation even if the underlying manifold is smooth (see Figure \ref{ImportanceOfBoudnary}).  In this paper we address these difficulties in order to extend prior results to the case of manifolds with boundary by carefully treating boundary and interior regions separately. The end result of this work is a general bound on the Gaussian width of the unit secants of a given submanifold of $\mathbbm{R}^N$, potentially with boundary, in terms of its dimension, volume, and reach properties.  With these bounds in hand we are then able to apply embedding results for general infinite sets with bounded Gaussian width to prove several new manifold embedding results.  {\it To the best of our knowledge the resulting $\epsilon$-JL embedding theorems proven herein are the first to apply to manifolds with boundary, and as such greatly generalize the class of manifold models for which such embedding techniques can be theoretically proven to work.}\\
\begin{figure}
    \centering
    \label{ImportanceOfBoudnary}
        \begin{minipage}{0.49\textwidth}
        \centering
        \includegraphics[scale=0.18]{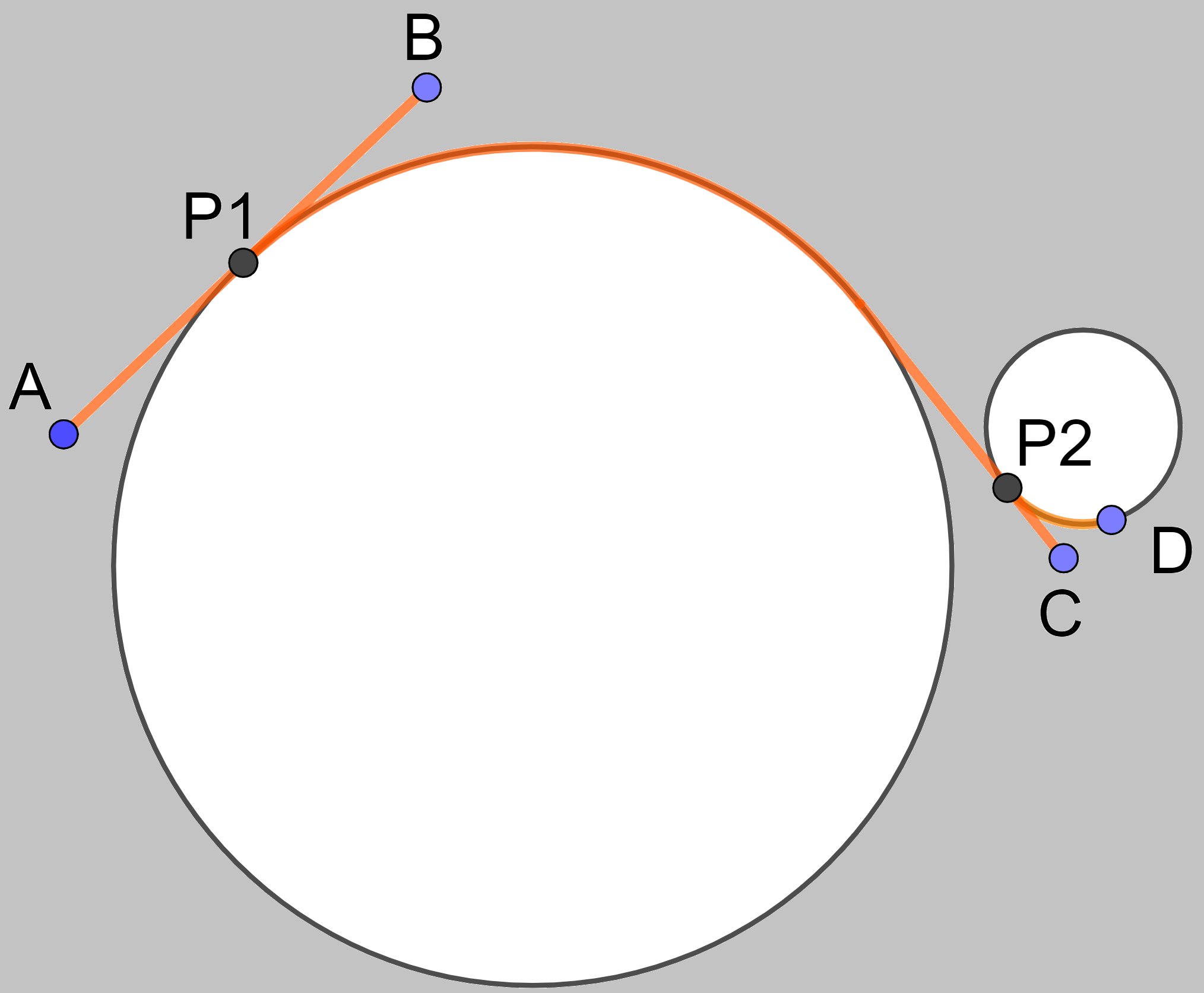}
        {\small } 
    \end{minipage} \hfill
        \begin{minipage}{0.49\textwidth}
        \centering
        \includegraphics[scale=0.5]{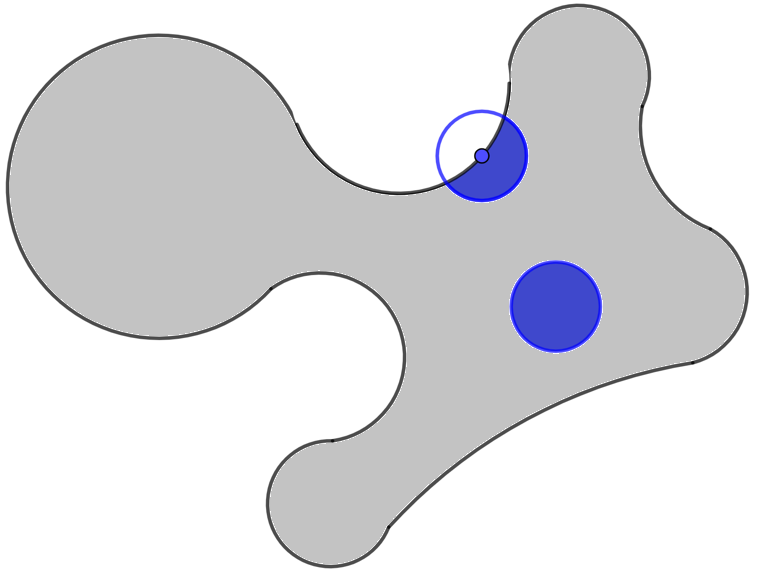}
        {\small } 
    \end{minipage} \hfill
        \caption{The presence of boundary can cause geodesics to bifurcate. In the left figure, the geodesic connecting $A$ to $B$ does not have a unique continuation as it can bifurcate at $P_1$ to reach $C$. This process can repeat, as it further bifurcates at $P_2$ to reach $D$. Such unit speed geodesics are ${\mathcal C}^1$ but not $\mathcal{C}^2$.  In the right figure, we have two Euclidean balls overlapping with a manifold with boundary. If the center of the ball is close to the boundary, the ball will cover less of the manifold. This situation is further amplified in higher dimensions as the volume of the collar of the boundary typically  grows exponentially \cite[Remark 5.1.10]{vershynin_high-dimensional_2018}. In theorem \ref{cover-boundary-gunther} we address this issue by treating the collar of the boundary and the interior regions separately.}
\end{figure}

{\bf Improved $\epsilon$-JL Embedding Dimensions and Runtimes for Low-Dimensional Manifolds:}  In addition to allowing for the presence of boundary, we also provide improved $\epsilon$-JL embedding results for submanifolds of $\mathbbm{R}^N$ via highly structured random matrices which admit fast matrix-vector multiplies.  Perhaps the most widely considered structured random matrices of this type are Subsampled Orthonormal with Random Signs (SORS) matrices of the form $A = \sqrt{\frac{N}{m}} RUD \in \mathbbm{R}^{m \times N}$, where $R \in \{0,1\}^{m \times N}$ contains $m$ rows independently sampled uniformly at random from the $N \times N$ identity matrix, $U \in \mathbbm{R}^{N \times N}$ is a unitary matrix, and $D \in \{ -1,0,1 \}^{N \times N}$ is a diagonal matrix with independently and identically distributed (i.i.d.) Rademacher random variables on its diagonal.  Note that such SORS matrices $A$ will have fast matrix vector multiplies if, e.g., the orthonormal basis $U$ is chosen to be related to a Discrete Fourier Transform (DFT) matrix with an ${\mathcal O}(N \log N)$ time matrix-vector multiply.\footnote{Common choices for $U$ include discrete cosine transform and Hadamard matrices.  In addition, one can also see that choosing $U$ to be a complex-valued DFT matrix outright will also work as a consequence of Euler's formula.}  Herein we generalize existing results concerning SORS embeddings of submanifolds \cite{yap2013stable} to accommodate for the presence of boundary, while simultaneously removing a few logarithmic factors from prior lower bounds by appealing to recent concentration inequalities.\\     

More interestingly, though, we also propose a new class of structured random matrices for embedding manifold data motivated by recent developments in the construction of fast modewise JL-embeddings for tensor data (see, e.g., \cite{doi:10.1137/19M1308116, bamberger2021johnsonlindenstrauss}).  This new class of structured linear JL maps has several advantages over more commonly considered random embedding matrices including $(i)$ lower-storage costs, $(ii)$ trivially parallelizable data evaluations, $(iii)$ the use of fewer random bits, and $(iv)$ faster serial matrix-vector multiplies for structured data.  The many useful computational characteristics of these embeddings for tensor data motivate the following naive question:  Is it possible to effectively reshape vector data into tensor data, apply one of these low-cost linear maps, and obtain a new embedding that out-competes, e.g., SORS matrices on a rich class of vector data?  Herein we answer this question to the affirmative using a vectorized form of a two-stage modewise tensor embedding matrix constructed along the lines of those proposed in \cite{doi:10.1137/19M1308116}.  {\it In particular, we show herein that a general class of random matrices exists which outperforms SORS embeddings on sufficiently low-dimensional manifold data with respect to both their provably achievable embedding dimensions and matrix-vector multiplication runtimes, all while maintaining similar embedding quality.}  We consider this to be an exciting demonstration of the power of such modewise maps, and hope it helps to spur additional analysis of such JL embedding maps for tensor data going forward.

\subsection{The Proposed Construction and A Motivating Experiment}

We now present the proposed matrix construction aimed at combining the benefits of $(i)$ fast JL-embeddings using matrices with a fast matrix-vector multiply and low memory requirements, with $(ii)$ subgaussian matrices that have no simplifying structure but that offer optimal reduction in the embedding dimension of the given data. In particular, we will focus on an approach where we divide the data in blocks, apply a fast JL-map to each block, recombine the outputs, and then feed them to a sub-gaussian JL-embedding for additional compression.  See Figure~\ref{fig:divide-conquer} for a graphical illustration.  By designing each step carefully in this way we will see that one can retain the fast matrix-vector multiplication property of the first map along with the near-optimal dimension reduction of the second. \\

\begin{figure}
\centering
\includegraphics[scale=0.14]{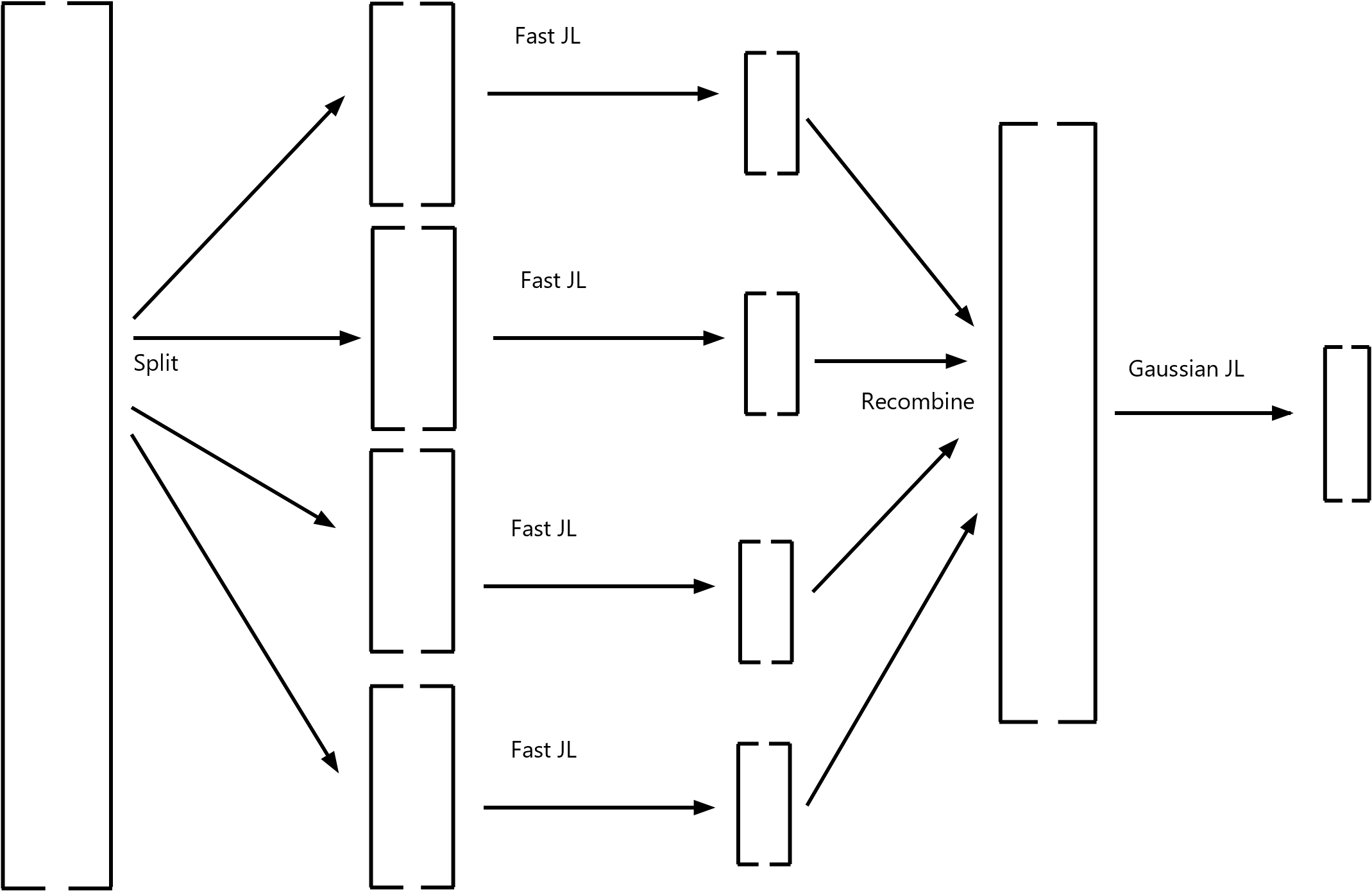}
\caption{A Schematic diagram for an example matrix $E$ of type \eqref{equ:Ddef}. In this approach we split a vector into pieces, process each part with a fast JL map, and then recombine the outputs and feed it as a vector to a Gaussian JL map for optimal secondary dimensionality reduction.  Note that this compression scheme is intrinsically parallel in nature and so should also be easily implementable in distributed settings.}
\label{fig:divide-conquer}
\end{figure}

More specifically, the proposed matrices $E \in \mathbbm{R}^{m_2 \times N}$ are constructed from two other matrices $B \in \mathbbm{R}^{m_2 \times N/m_1}$ and $A \in \mathbbm{R}^{m_1 \times m_1^2}$ where, for ease of notation, $m_1^2$ divides $N$.    Given $A$ and $B$ as above, we let $C := \begin{pmatrix} A &  & \\ & \ddots & \\
 &  & A \end{pmatrix} \in \mathbbm{R}^{N/m_1 \times N}$ be the block diagonal matrix formed using $N/m_1^2$ copies of $A$, and then set 
 \begin{equation}
 E := BC \in \mathbbm{R}^{m_2 \times N}.
 \label{equ:Ddef}
 \end{equation}
One can now see that this construction is analogous to reshaping the vector data one wishes to compress into a matrix, applying $A$ to each column of the matrix, and then reshaping the resulting matrix back into a vector before applying $B$.  As such, it is a specific example of a modewise JL-operator being applied to a vector after reshaping it into (in this case) a $2$-mode tensor.  When $A$ above is chosen to be a matrix with a fast matrix-vector multiply (e.g., either a Partial Random Circulant (PRC) matrix \cite[Corollary III.4]{yap2013stable}, or a SORS matrix), and $B$ is chosen to be a Gaussian random matrix, we obtain a matrix $E$ corresponding to Figure~\ref{fig:divide-conquer}.\\

The following lemma describes the properties of the matrices $A$ and $B$ that guarantee $E$ in \eqref{equ:Ddef} will have a fast matrix vector multiply.  We emphasize again that this lemma is compatible with choosing $A$ as, e.g., either a PRC or SORS matrix, and $B$ as a Gaussian matrix as per Figure~\ref{fig:divide-conquer}.

\begin{lemma}
\label{lem:FFTthenGaussiantime}
Let $A \in \mathbbm{R}^{m_1 \times m_1^2}$, $B \in \mathbbm{R}^{m_2 \times N/m_1}$, $C \in \mathbbm{R}^{N/m_1 \times N}$, and $E \in \mathbbm{R}^{m_2 \times N}$ be as above in \eqref{equ:Ddef} with $m_1 \geq m_2$.  Furthermore, suppose that $A \in \mathbbm{R}^{m_1 \times m_1^2}$ has an $m_1^2 \cdot f(m_1)$ time matrix-vector multiplication algorithm.  Then $E = BC \in \mathbbm{R}^{m_2 \times N}$ will also have an $\mathcal{O}(N \cdot f(m_1))$-time matrix-vector multiply.
\end{lemma}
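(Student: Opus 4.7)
The plan is to compute $E\mathbf{x} = B(C\mathbf{x})$ for an arbitrary input $\mathbf{x}\in\mathbbm{R}^N$ in two stages, exploiting the block-diagonal structure of $C$ in the first stage and using only the naive matrix-vector multiplication in the second stage.

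For the first stage, I would partition $\mathbf{x}$ into $N/m_1^2$ consecutive subvectors $\mathbf{x}^{(1)},\dots,\mathbf{x}^{(N/m_1^2)}\in\mathbbm{R}^{m_1^2}$ corresponding to the block structure of $C$. By the definition of $C$, the vector $\mathbf{y}:=C\mathbf{x}\in\mathbbm{R}^{N/m_1}$ decomposes as the concatenation of the $N/m_1^2$ vectors $A\mathbf{x}^{(j)}\in\mathbbm{R}^{m_1}$. By hypothesis each such $A\mathbf{x}^{(j)}$ can be computed in time $m_1^2\cdot f(m_1)$, so the total cost of forming $\mathbf{y}=C\mathbf{x}$ is $\tfrac{N}{m_1^2}\cdot m_1^2\cdot f(m_1) = N\cdot f(m_1)$.

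For the second stage, I would simply compute $B\mathbf{y}$ directly, which costs $\mathcal{O}(m_2 \cdot N/m_1)$ arithmetic operations. Since $m_2\leq m_1$ by hypothesis, this is $\mathcal{O}(N)$. Adding the two contributions gives a total runtime of $\mathcal{O}(N\cdot f(m_1)+N)=\mathcal{O}(N\cdot f(m_1))$, where absorption of the $\mathcal{O}(N)$ term is justified by the fact that any matrix-vector multiply algorithm for $A$ must at minimum read all $m_1^2$ input entries, forcing $f(m_1)\geq 1$.

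There is no real obstacle here: the argument is essentially bookkeeping on the block-diagonal action of $C$ combined with the trivial bound on the cost of multiplying by $B$. The only point one needs to be careful about is ensuring that the $\mathcal{O}(N)$ cost of the second stage really does get absorbed into $\mathcal{O}(N\cdot f(m_1))$, which (as noted above) follows from the mild observation that $f(m_1)\geq 1$ for any meaningful multiplication algorithm.
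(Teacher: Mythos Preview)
Your proof is correct and follows essentially the same approach as the paper: compute $C\mathbf{x}$ blockwise at cost $\tfrac{N}{m_1^2}\cdot m_1^2 f(m_1)=N f(m_1)$, then apply $B$ naively at cost $\mathcal{O}(m_2 N/m_1)=\mathcal{O}(N)$ using $m_2\le m_1$. Your explicit remark that $f(m_1)\ge 1$ is a nice touch that the paper leaves implicit when absorbing the second term.
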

\begin{proof}
The number of required operations for multiplying $E$ against a vector is  

\begin{equation*}
\frac{N}{m_1^2} (m_1^2 \cdot f(m_1)) + \mathcal{O} \left( m_2\frac{N}{m_1} \right) = \mathcal{O}(N \cdot f(m_1)) .
\end{equation*}

Here, the first term comes from the  $\frac{N}{m_1^2}$ multiplications of the matrix $A$ that must be performed during a multiplication of a vector $\in \mathbbm{R}^N$ by $C$.  The second term results from a naive multiplication of a vector in the range of $C$ by $B$, together with the assumption that $m_1 \geq m_2$.
\end{proof}

\begin{figure}[h!]
    \centering
    \begin{minipage}{0.49\textwidth}
        \centering
        \includegraphics[scale=0.18]{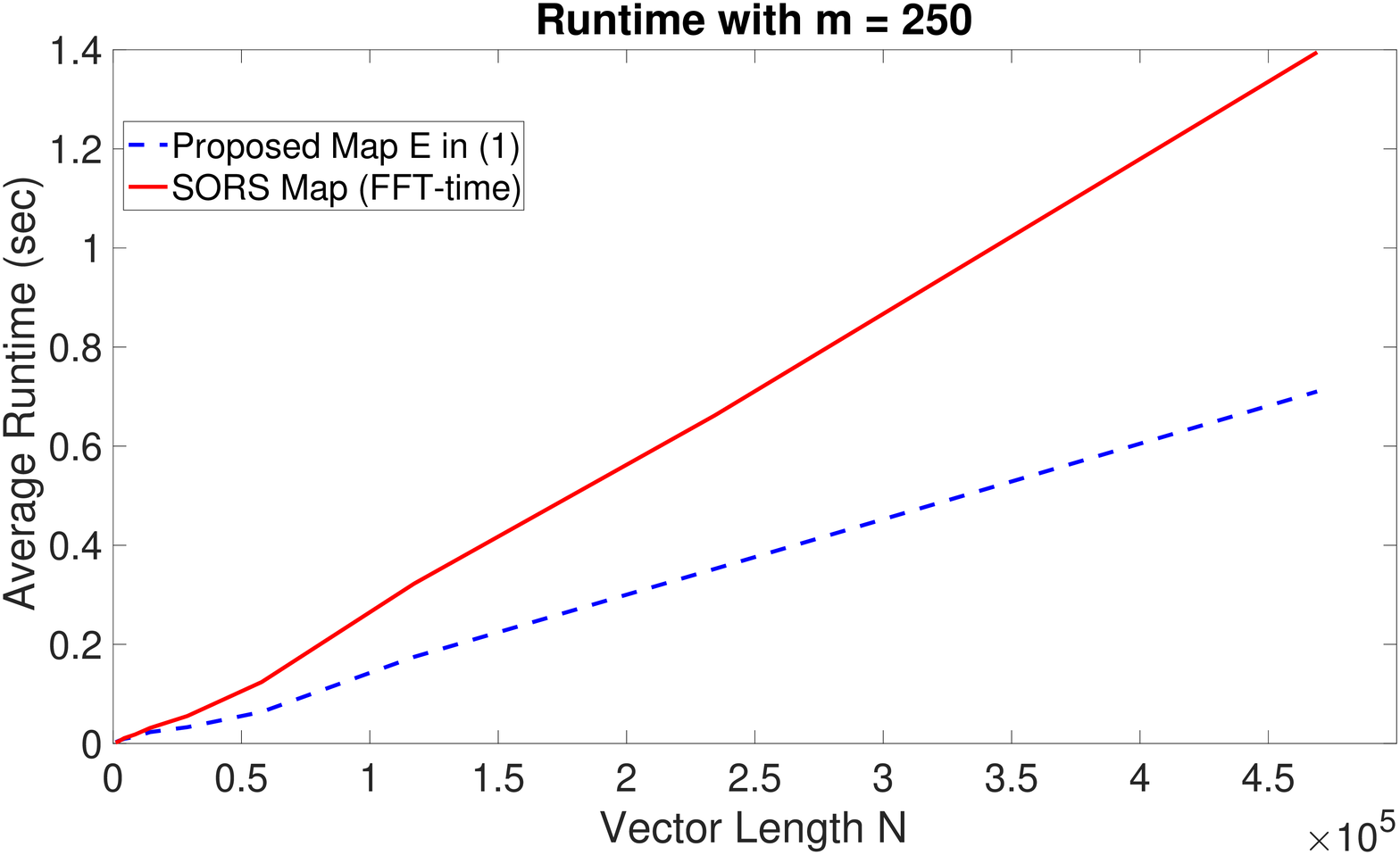}
        {\small } 
    \end{minipage} \hfill
    \begin{minipage}{0.49\textwidth}
        \centering
        \includegraphics[scale=0.18]{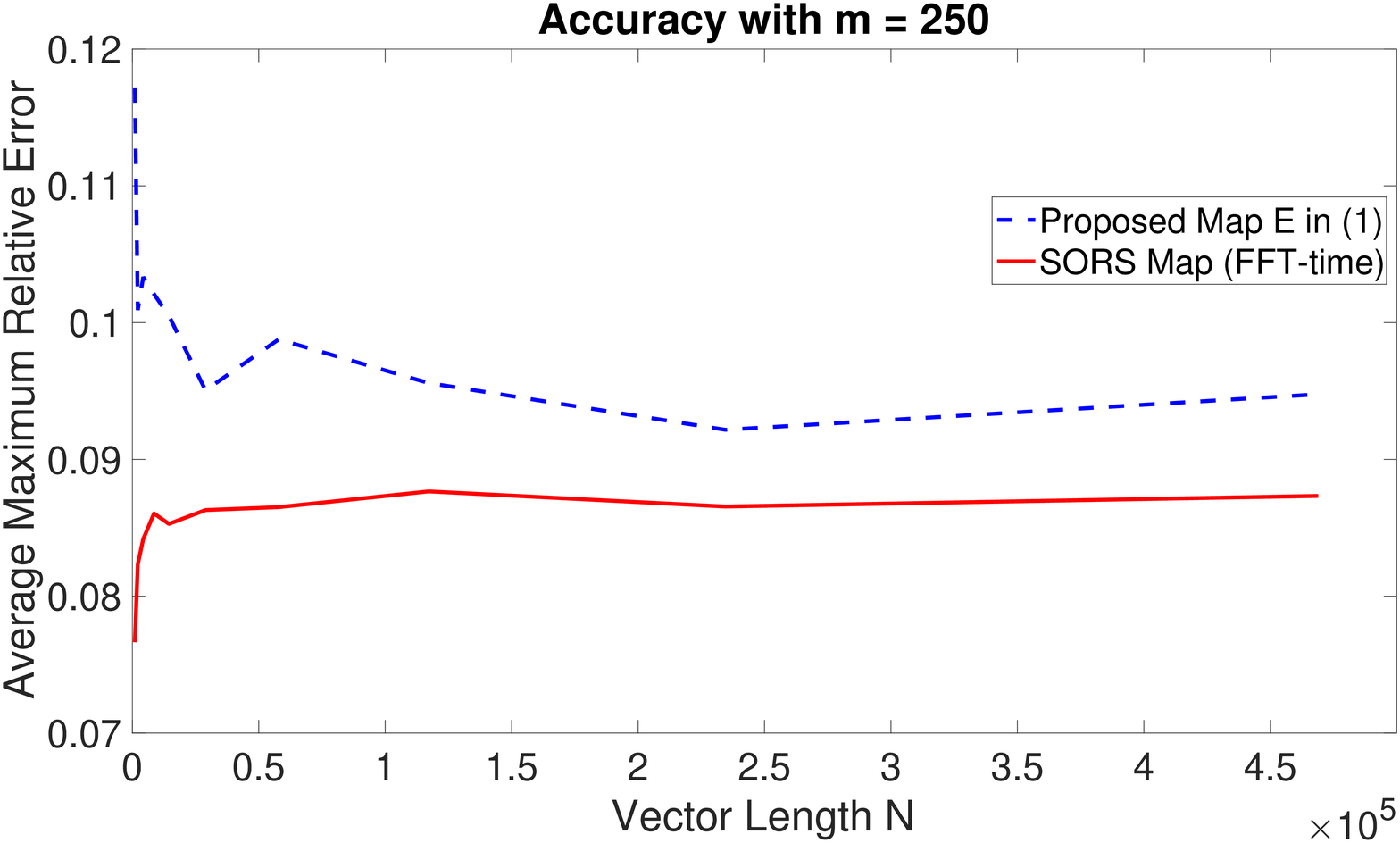}
        {\small } 
    \end{minipage} \hfill
        \begin{minipage}{0.49\textwidth}
        \centering
        \includegraphics[scale=0.18]{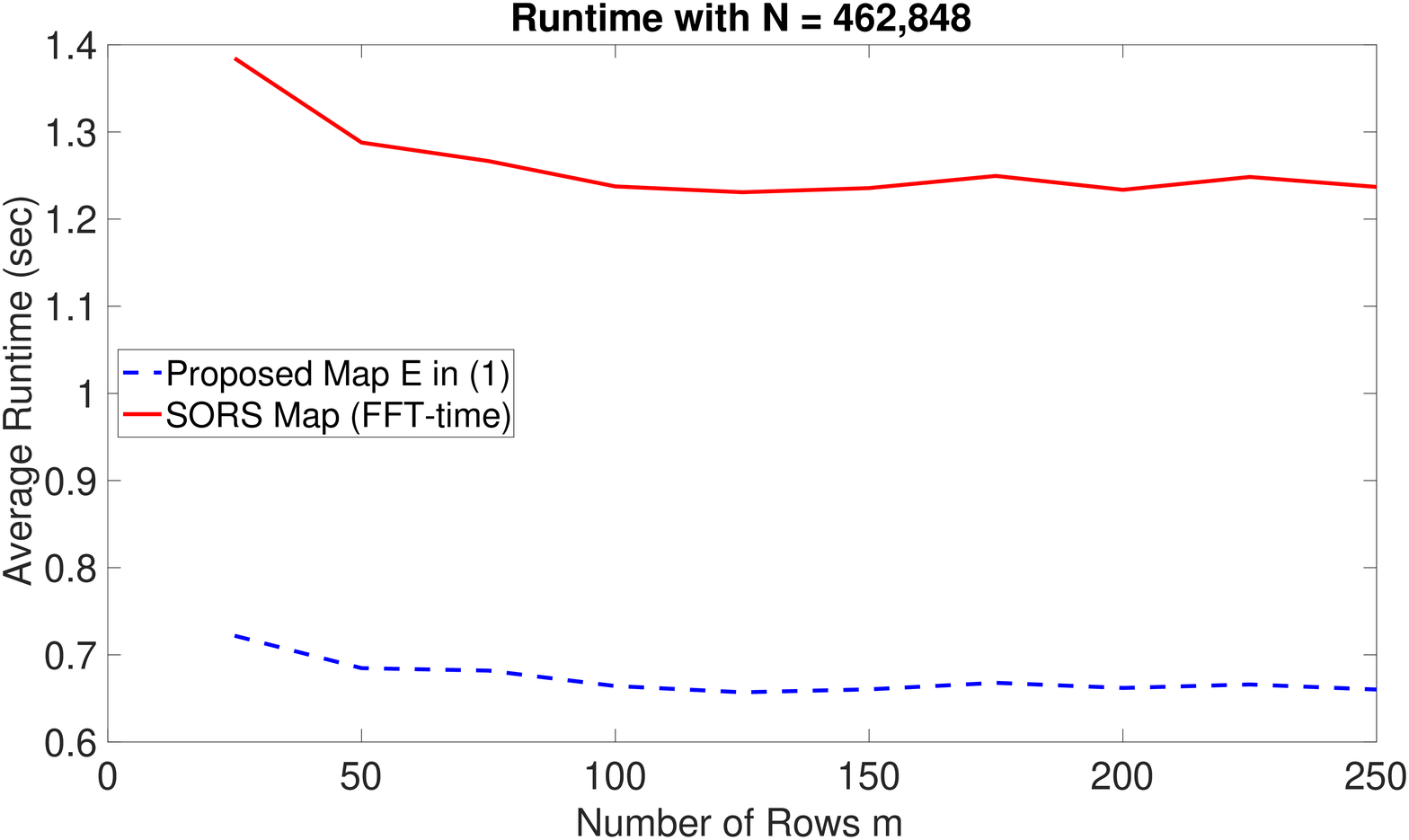}
    \end{minipage}\hfill 
    \begin{minipage}{0.49 \textwidth}
        \centering
        \includegraphics[scale=0.18]{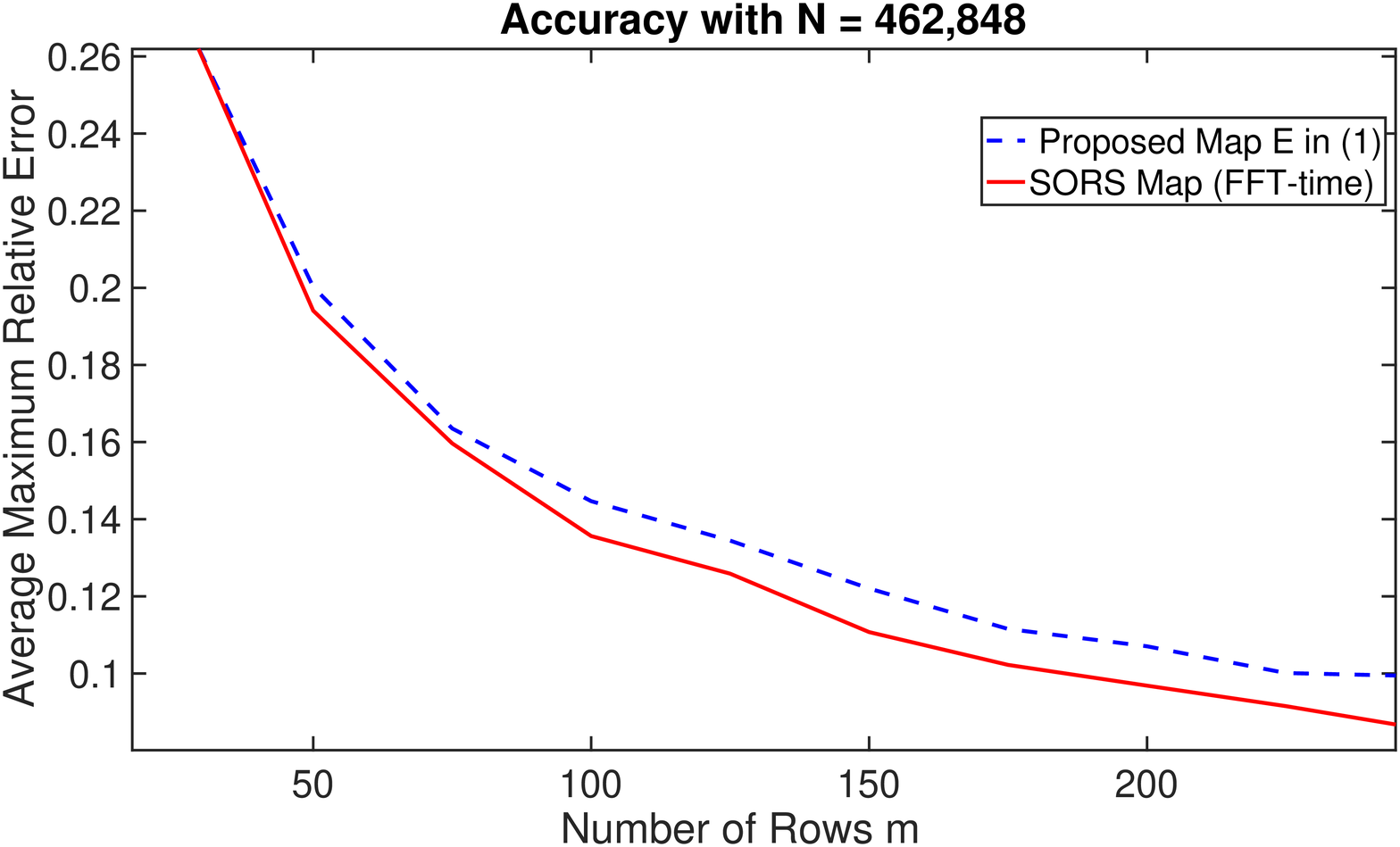}
    \end{minipage}
    \caption{Proposed embedding matrices $E$ of the form \eqref{equ:Ddef} where $B$ is a Gaussian random matrix and $A$ is a Discrete Fourier Transform (DFT)-based SORS matrix (in dotted blue), versus a standard $m \times N$ DFT-based SORS matrix with an $\mathcal{O}(N \log N)$-time matrix-vector multiply (in solid red). The reported runtimes are the average time in seconds needed to map randomly generated subsets of $100$ vectors in $\mathbb{R}^N$ into $\mathbb{C}^m$, averaged over $100$ randomly generated subsets.  The errors reported in the accuracy plots are obtained by averaging the maximum relative errors $\max_{\mathbf{x} \in S} \left| \| E\mathbf{x} \|_2 - \| \mathbf{x} \|_2 \right|/ \| \mathbf{x} \|_2$ over the $100$ randomly generated subsets $S \subset \mathbb{R}^N$ for each given matrix $E \in \mathbbm{C}^{m \times N}$.  Here we briefly note that though our theory is developed for real-valued SORS matrices, it is relatively straightforward to extend all the results herein to the setting of complex-valued SORS matrices built using complex unitary matrices $U \in \mathbbm{C}^{N \times N}$.  See, e.g., the homework exercises in \cite[Chapter 4.4]{MarksNotes} for more details. }
    \label{fig:ModewiseFaster}
\end{figure}

Note that if $A$ is chosen to be, e.g.,  a SORS matrix, $A= \sqrt{m_1} RUD$, where $U$ is, e.g., an $m_1^2 \times m_1^2$ Discrete Cosine Transform (DCT) matrix, then $f(m_1)$ above will be $\mathcal{O}(\log m_1)$.  As a result, the matrix $E$ guaranteed by Lemma~\ref{lem:FFTthenGaussiantime} will have an $\mathcal{O}(N \log m_1)$-time serial matrix-vector multiply in this setting, and can also easily benefit from parallel evaluation of $C$ in \eqref{equ:Ddef} in a blockwise fashion.  Thus, for example, we can see that such matrices $E$ of the form \eqref{equ:Ddef} will have $o(N \log N)$-time matrix vector multiplies whenever $m_1$ can be chosen to be sufficiently small while still maintaining the desired level of embedding accuracy.  But, how do they perform in practice?  See Figure~\ref{fig:ModewiseFaster} for an example comparison between SORS and the proposed \eqref{equ:Ddef} random matrices when embedding finite point sets.\\

Looking at Figure~\ref{fig:ModewiseFaster} we can see that the proposed matrices $E$ in  \eqref{equ:Ddef} retain similar accuracy to standard SORS embeddings (i.e., their maximum relative errors generally differ by less than $1 \%$) while simultaneously being twice as fast or more for sufficiently large values of $N$.  Taking such results as motivation, we will now turn our focus to proving theoretically that the proposed matrices $E$ in \eqref{equ:Ddef} can also accurately embed submanifolds of $\mathbbm{R}^N$ into much lower dimensional Euclidean space.  In the process we will carefully compare the developed theory for the proposed matrices to similar embedding results via both standard SORS and sub-gaussian matrices.  Our main results along these lines follow below.

\subsection{Main Results and Discussion}
\label{sec:ProofsofMainRes}

We will begin by proving bounds for the embedding dimension of submanifolds of $\mathbbm{R}^N$ with boundary using sub-gaussian random matrices for the purposes of later comparison. The reach of a submanifold used as a parameter below is provided in definition \ref{definition:reach}.

\begin{theorem}[Embedding a Submanifold of $\mathbbm{R}^N$ with Boundary via Sub-gaussian Random Matrices] \label{thm:finalsubgres}
Fix $\epsilon, p \in (0,1)$ and let $A$ be a $m \times N$ sub-gaussian random matrix. Then, there exists a constant $c'$ depending only on the distribution of the rows of $A$ such that the following holds.
Let $\mathcal{M}  \hookrightarrow \mathbbm{R}^N$ be a compact $d$-dimensional submanifold of $\mathbbm{R}^N$ 
with $d \geq 2$, boundary $\partial \mathcal{M}$, finite reach $\tau_{\mathcal{M}}$, and volume  $V_{\mathcal M}$.\footnote{Note that one can prove similar results for one dimensional manifolds and for manifolds with infinite reach using the results herein.  However, they require different definitions of $\alpha_{\mathcal M}$ and $\beta_{\mathcal M}$ below.  See Theorem~\ref{ConveringNumberForSecantSet} and Proposition~\ref{prop:infinitereach} for details on these special cases.} 
Enumerate the connected components of $\partial \mathcal{M}$ and let $\tau_i$ be the reach of the $i^{\rm th}$ connected component of $\partial \mathcal M$ as a submanifold of $\mathbbm{R}^N$. 
Set $\tau := \min_{i} \{\tau_{\mathcal M}, \tau_i \}$, let $V_{\partial \mathcal{M}}$ be the volume of $\partial \mathcal{M}$, and denote the volume of the $d$-dimensional Euclidean ball of radius $1$ by $\omega_d$.
Finally, define 
\begin{align}
\alpha_{\mathcal{M}} &:= \frac{V_{\mathcal M}}{ \omega_d} \left(\frac{41}{\tau} \right)^d  
+ \frac{V_{\partial \mathcal M}}{ \omega_{d-1}} \left(\frac{81}{\tau} \right)^{d-1} ~{\rm and} \nonumber\\
\beta_{\mathcal{M}} &:= \left(\alpha_{\mathcal M}^2 +3^{d} \alpha_{\mathcal M} \right), \label{equ:betadef}
\end{align}
and suppose that 
$$m \geq \frac{c' \left( \sqrt{\ln(\beta_{\mathcal{M}})} + \sqrt{\ln \left(2/p \right)} \right)^2}{\epsilon^2}.$$
Then, $\frac{1}{\sqrt{m}}A$ will be an $\epsilon$-JL embedding of $\mathcal{M}$ into $\mathbbm{R}^m$ with probability at least $1-p$.
\end{theorem}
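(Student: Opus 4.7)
The plan is to proceed along the standard route for establishing sub-gaussian Johnson--Lindenstrauss embeddings of infinite sets: reformulate $(\dagger)$ as a uniform concentration statement on the unit secant set, bound the Gaussian width of that set, and then apply a matrix deviation inequality. First, I would introduce the unit secants
$$U(\mathcal{M}) := \left\{ \frac{\mathbf{x} - \mathbf{y}}{\|\mathbf{x} - \mathbf{y}\|_2} \,:\, \mathbf{x}, \mathbf{y} \in \mathcal{M}, \, \mathbf{x} \neq \mathbf{y} \right\} \subset \mathbb{S}^{N-1}$$
and observe that, after rescaling $A \mapsto A/\sqrt{m}$, condition $(\dagger)$ is equivalent to $\sup_{\mathbf{u} \in U(\mathcal{M})} \bigl| \tfrac{1}{m}\|A\mathbf{u}\|_2^2 - 1 \bigr| \leq \epsilon$. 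This follows from the identity $\|A\mathbf{u}\|_2^2 - m = (\|A\mathbf{u}\|_2 - \sqrt{m})(\|A\mathbf{u}\|_2 + \sqrt{m})$, which allows deviations in $\|A\mathbf{u}\|_2$ around $\sqrt{m}$ of order $\epsilon \sqrt{m}/2$ to be upgraded to the desired $\epsilon$-level control of $\|A\mathbf{u}\|_2^2/m$. A standard matrix deviation inequality for sub-gaussian matrices then guarantees this uniform bound holds with probability at least $1-p$ provided $m \gtrsim \epsilon^{-2}\bigl(w(U(\mathcal{M})) + \sqrt{\ln(2/p)}\bigr)^2$, where $w(\cdot)$ denotes the Gaussian width and the implicit constant depends only on the sub-gaussian norm of the rows of $A$.

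Second, the remaining geometric task is to show $w(U(\mathcal{M})) \lesssim \sqrt{\ln \beta_{\mathcal{M}}}$; this is essentially the content of Theorem \ref{ConveringNumberForSecantSet} (with Proposition \ref{prop:infinitereach} handling the infinite reach case), which I take as a black box. The explicit form of $\alpha_{\mathcal{M}}$ is transparent from its proof strategy: cover $\mathcal{M}$ itself by splitting into an interior region --- where a G\"unther-type volume comparison $\mathrm{vol}(\mathcal{M} \cap B_r(\mathbf{x})) \gtrsim \omega_d r^d$ for $r \lesssim \tau$ bounds the covering number by $V_{\mathcal{M}}/(\omega_d r^d)$ --- and a tubular collar of $\partial \mathcal{M}$, which is covered via the analogous $(d{-}1)$-dimensional estimate $V_{\partial \mathcal{M}}/(\omega_{d-1} r^{d-1})$ on the boundary. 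The quantity $\beta_{\mathcal{M}} = \alpha_{\mathcal{M}}^2 + 3^d \alpha_{\mathcal{M}}$ then arises from lifting a manifold cover to a cover of the secant set: the $\alpha_{\mathcal{M}}^2$ term enumerates well-separated pairs of cover points, while the $3^d \alpha_{\mathcal{M}}$ term accounts for near-tangent secants, whose directions are controlled by a $3^d$-size $\eta$-net of the unit sphere of each $d$-dimensional tangent space. Feeding the resulting scale-dependent covering numbers into Dudley's inequality produces an integral whose leading contribution is a constant multiple of $\sqrt{\ln \beta_{\mathcal{M}}}$, yielding the desired Gaussian width bound.

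Third, substituting $w(U(\mathcal{M})) \lesssim \sqrt{\ln \beta_{\mathcal{M}}}$ into the concentration bound from step one yields exactly the claimed lower bound on $m$ in Theorem \ref{thm:finalsubgres}. The genuinely hard part of the overall program is not this assembly but the covering number estimate for $U(\mathcal{M})$: for boundaryless manifolds the uniform volume comparison $\mathrm{vol}(\mathcal{M} \cap B_r(\mathbf{x})) \gtrsim \omega_d r^d$ suffices to control $\mathcal{N}(\mathcal{M}, r)$ with the single bulk term $V_{\mathcal{M}}/(\omega_d r^d)$, but as illustrated in Figure \ref{ImportanceOfBoudnary} this comparison fails for balls centered near $\partial \mathcal{M}$, and in high dimensions the collar volume typically grows exponentially. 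Treating the collar separately via the $(d{-}1)$-dimensional boundary geometry is precisely what produces the second summand in $\alpha_{\mathcal{M}}$, and is the main new geometric ingredient that must be established outside of this proof in order for the sub-gaussian argument above to go through.
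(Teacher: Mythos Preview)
Your proposal is correct and follows essentially the same route as the paper: the paper's proof simply invokes Corollary~\ref{vershynin-specialized-to-sphere} (the matrix deviation inequality giving $m \gtrsim \epsilon^{-2}(w(U(S))+\sqrt{\ln(2/p)})^2$) together with Theorem~\ref{GaussianWidthOfManifodWithBoundaryViaGunther}, which packages the covering estimate of Theorem~\ref{ConveringNumberForSecantSet} with Dudley's inequality to obtain $w(\overline{U(\mathcal{M}-\mathcal{M})}) \lesssim \sqrt{\ln \beta_{\mathcal{M}}}$. Your three steps unpack exactly this, and your description of how $\alpha_{\mathcal{M}}$ and $\beta_{\mathcal{M}}$ arise from the interior/collar decomposition and the long/short secant split matches the paper's geometric argument.
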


\begin{proof}
Apply Corollary~\ref{vershynin-specialized-to-sphere} together with Theorem~\ref{GaussianWidthOfManifodWithBoundaryViaGunther}.
\end{proof}

Considering the sufficient lower bound on the embedding dimension $m$, one can analyze the dependence of $m$ on $d$ while keeping the other variables fixed. \footnote{One can show that $\beta_{\mathcal{M}}$ is  guaranteed to be $> (d-1) \cdot 41^{2d-3}$ in this setting so that $m \geq c'' d$ always holds in keeping with our intuition.  See, e.g., Proposition \ref{prop:coverboundOK} and \eqref{uno} -- \eqref{dos} below for additional related discussion.} If one puts $m = \frac{c' \left( \sqrt{\ln(\beta_{\mathcal{M}})} + \sqrt{\ln \left(2/p \right)} \right)^2}{\epsilon^2}$ as the least sufficient value of $m$, then $m$ depends on $d$ with order $\mathcal{O}(d\ln d )$.  To see this we note that
$
\frac{1}{\omega_d}  
= \frac{\Gamma(\frac{d}{2} + 1)}{\pi^{\frac{d}{2}} }                = \mathcal{O}(d^d)
$ so that 
$
\frac{V_{\mathcal M}}{ \omega_d} \left(\frac{41}{\tau} \right)^d  
+ \frac{V_{\partial \mathcal M}}{ \omega_{d-1}} \left(\frac{81}{\tau} \right)^{d-1} 
= \mathcal{O}(d^d)$ and  
$
\beta_{\mathcal{M}} 
= \left(\frac{\alpha^2}{2} +3^{d} \alpha \right) 
= \mathcal{O}(d^{2d}) 
$.  Comparing Theorem~\ref{thm:finalsubgres} to the state-of-the-art work in \cite[Theorem 2]{eftekhari_new_2015}, we have removed the mild geometric condition on reach $\frac{V_\mathcal{M}}{\tau^d} \geq \left(\frac{21}{2\sqrt{d}} \right)^d$ therein and can also accommodate the presence of a boundary while still having the embedding dimension, $m$, scale like $\mathcal{O}(d\log(d))$.\\

Most interestingly, we emphasize that the lower bound on the embedding dimension $m$ for  sub-gaussian matrices given by Theorem~\ref{thm:finalsubgres} has no dependence on the ambient dimension $N$ whatsoever.  However, sub-gaussian matrices are generally unstructured which means that they can not benefit from, e.g., fast specialized matrix vector multiplication methods such as Fast Fourier Transform (FFT) techniques.  SORS matrices, on the other hand, do allow for such fast $\mathcal{O}(N \log N)$-time matrix-vector multiplies. The following result considers manifold embeddings by such fast-to-multiply structured matrices. SORS matrices and their constant $K$ are introduced in definition \ref{sors-matrix-definition}. 

\begin{theorem}[Embedding a Submanifold of $\mathbbm{R}^N$ with Boundary via SORS Matrices] \label{thm:finalSORSres}
Fix $\epsilon, p \in (0,1)$ and let $A = \sqrt{N/m} RUD$ be a $m \times N$ random SORS matrix with constant $K$. Then, there exist absolute constants $c_0, c_1$ such that such that the following holds.
Let $\mathcal{M}  \hookrightarrow \mathbbm{R}^N$ be a compact $d$-dimensional submanifold of $\mathbbm{R}^N$ 
with boundary $\partial \mathcal{M}$, define $\beta_{\mathcal{M}}$ as per \eqref{equ:betadef} in Theorem~\ref{thm:finalsubgres}, and suppose that 
$$m \geq \frac{c_0}{\epsilon^2} K^2 \ln(\beta_{\mathcal{M}}) \ln^2\left( \frac{c_1 \ln(\beta_{\mathcal{M}})  \ln(2/p) K^2}{\epsilon^2} \right) \ln(2/p) \ln(2eN/p).$$
Then, $A$ will be an $\epsilon$-JL embedding of $\mathcal{M}$ into $\mathbbm{R}^m$ with probability at least $1-p$.
\end{theorem}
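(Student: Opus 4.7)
The plan is to mirror the strategy used for Theorem~\ref{thm:finalsubgres}, replacing only the sub-gaussian concentration step with its structured-matrix counterpart. Specifically, I would first invoke Theorem~\ref{GaussianWidthOfManifodWithBoundaryViaGunther} to bound the Gaussian width $w\bigl(U(\mathcal{M})\bigr)$ of the set of unit secants $U(\mathcal{M}):=\{(\mathbf{x}-\mathbf{y})/\|\mathbf{x}-\mathbf{y}\|_2 \ :\ \mathbf{x},\mathbf{y}\in\mathcal{M},\ \mathbf{x}\neq \mathbf{y}\}$ of $\mathcal{M}$ by a quantity of order $\sqrt{\ln\beta_{\mathcal{M}}}$, where the boundary effects are absorbed into $\beta_{\mathcal{M}}$ via \eqref{equ:betadef}. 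As in the sub-gaussian case, property $(\dagger)$ for all $\mathbf{x},\mathbf{y}\in\mathcal{M}$ is equivalent (after dividing through by $\|\mathbf{x}-\mathbf{y}\|_2^2$) to the statement $\sup_{\mathbf{u}\in U(\mathcal{M})} \bigl|\|A\mathbf{u}\|_2^2-1\bigr|\leq\epsilon$, so it suffices to control $A$ uniformly on a spherical set whose Gaussian width is already in hand.

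The second ingredient is a SORS-specific embedding theorem for arbitrary subsets $T\subseteq\mathbb{S}^{N-1}$ of bounded Gaussian width, playing the role that Corollary~\ref{vershynin-specialized-to-sphere} plays in the sub-gaussian proof. Such a theorem is by now classical, and can be derived either from the Krahmer--Mendelson--Rauhut suprema-of-chaos-processes machinery or from Oymak--Recht--Soltanolkotabi-style isometric sketching results. Either yields a bound of the form: if
\[
m \ \geq\ \frac{c_0 K^2\, w(T)^2}{\epsilon^2}\ \ln^2\!\left(\frac{c_1\, w(T)^2\,\ln(2/p)\, K^2}{\epsilon^2}\right)\ln(2/p)\,\ln(2eN/p),
\]
then a random SORS matrix $A=\sqrt{N/m}\,RUD$ with constant $K$ satisfies $\sup_{\mathbf{u}\in T}\bigl|\|A\mathbf{u}\|_2^2-1\bigr|\leq\epsilon$ with probability at least $1-p$. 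Taking $T=U(\mathcal{M})$ and substituting the Gaussian width bound from step one produces, after absorbing universal constants into $c_0$ and $c_1$, exactly the lower bound on $m$ stated in the theorem.

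The main obstacle I anticipate is bookkeeping around the polylogarithmic factors, rather than anything conceptually deep. In particular, one must be careful that the double-logarithm argument coming from the chaos-process tail bound matches the form $\ln^2\bigl(c_1 \ln(\beta_{\mathcal M})\ln(2/p)K^2/\epsilon^2\bigr)$ after the substitution $w(T)^2 \mapsto \mathcal{O}(\ln\beta_{\mathcal{M}})$, and that the $\ln(2eN/p)$ factor (which reflects the structured, coordinate-sampled nature of $R$ and the Rademacher matrix $D$, and which has no analogue in the sub-gaussian setting) is tracked correctly through the union bound for the failure probability. Since Theorem~\ref{GaussianWidthOfManifodWithBoundaryViaGunther} is stated in terms of Gaussian width, while some formulations of the SORS theorem are stated in terms of Talagrand's $\gamma_2$ functional, one may need a minor translation between the two, which costs only universal constants via the majorizing measure theorem. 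Otherwise the measurability and separability hypotheses are automatic for a smooth compact submanifold.
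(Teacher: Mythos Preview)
Your proposal is correct and follows exactly the paper's approach: apply Theorem~\ref{thm:SORSforINfSets} (the SORS embedding result for infinite sets in terms of Gaussian width, which is derived precisely via the Oymak--Recht--Soltanolkotabi MRIP machinery you mention) with $S=\mathcal{M}-\mathcal{M}$, then substitute the Gaussian width bound $w\bigl(\overline{U(\mathcal{M}-\mathcal{M})}\bigr)^2\lesssim \ln\beta_{\mathcal{M}}$ from Theorem~\ref{GaussianWidthOfManifodWithBoundaryViaGunther}. The translation to $\gamma_2$ is unnecessary here since Theorem~\ref{thm:SORSforINfSets} is already phrased in terms of Gaussian width, but otherwise your outline matches the paper's proof.
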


\begin{proof}
Apply Theorem~\ref{thm:SORSforINfSets} together with Theorem~\ref{GaussianWidthOfManifodWithBoundaryViaGunther}.
\end{proof}

Comparing Theorem~\ref{thm:finalSORSres} to Theorem~\ref{thm:finalsubgres} we can see that the the lower bound on the embedding dimension $m$ provided by SORS matrices via Theorem~\ref{thm:finalSORSres} now does exhibit logarithmic $N$ dependence\footnote{We believe that this is at least partially an artifact of the proof technique which ultimately depends on establishing the Restricted Isometry Property for a subsampled orthonormal basis system.}, though these matrices can also benefit from fast matrix-vector multiplication techniques in practice.  Comparing to prior state-of-the-art manifold embedding bounds for similar matrices \cite[Corollary III.2]{yap2013stable} we see that they provide a sufficient embedding dimension lower bound of 
\begin{equation}
\label{equ:Wakinpriorwork}
m \geq \frac{c_0}{\epsilon^2}
\left(
d \ln \left( \frac{N}
{\tau_{\mathcal{M}} \epsilon} \right)
+ \ln(V_{\mathcal{M}}/p) \right)
\ln^4 (N) \ln(1/p)
\end{equation}
via SORS matrices.  Again noting that $\ln(\beta_{\mathcal M})$ has no $N$ dependence, we see that Theorem~\ref{thm:finalSORSres} improves the logarithmic dependence on $N$ in \eqref{equ:Wakinpriorwork} while again also allowing for the presence of a manifold boundary.\\

We are now prepared to prove our main result concerning the embedding of submanifolds of $\mathbbm{R}^N$ that possibly have boundary via matrices which are structured along the lines of \eqref{equ:Ddef}.  More specifically, the matrices we propose for submanifolds of $\mathbbm{R}^N$ herein (as well as for more general infinite sets with sufficiently small Gaussian width) will have the form
\begin{equation}
\label{equ:mainthmofpaperMatrix}
E := \sqrt{\frac{m_1}{m_2}} B \begin{pmatrix} RU &  & \\ & \ddots & \\
 &  & RU \end{pmatrix} D   \in \mathbbm{R}^{m_2 \times N} 
\end{equation}
where $B \in \mathbbm{R}^{m_2 \times N/m_1}$ has i.i.d. mean $0$ and variance $1$ sub-gaussian entries, $R \in \{ 0,1 \}^{m_1 \times m_1^2}$ contains $m_1$ rows independently selected uniformly at random from the $m_1^2 \times m_1^2$ identity matrix, $U \in \mathbbm{R}^{m_1^2 \times m_1^2}$ is a unitary matrix with $\max_{i,j} |u_{i,j}| \leq K/m_1$ for a constant $K$, and $D\in \{ 0, -1, 1 \}^{N \times N}$ is a random diagonal sign matrix with i.i.d. Rademacher random variables on its diagonal.  We further assume that the matrix $U$ has a $\mathcal{O}(m_1^2 \log(m_1))$-time matrix vector multiply (as will be the case if it is, e.g., a Hadamard or DCT matrix).  We have the following manifold embedding result for this type of matrix.

\begin{theorem}[Embedding a Submanifold of $\mathbbm{R}^N$ with a Matrix of Type \eqref{equ:mainthmofpaperMatrix}] \label{thm:finalMainres}
There exist absolute constants $c_1, c_2, c_3, c_4 \in \mathbbm{R}^+$ such that following holds for a given compact $d$-dimensional submanifold $\mathcal{M}$ of $\mathbbm{R}^N$ 
with boundary $\partial \mathcal{M}$ and $\beta_{\mathcal{M}}$ defined as per \eqref{equ:betadef} in Theorem~\ref{thm:finalsubgres}.  Suppose that 
$N \geq 50$, $\epsilon \in \left(0, 1 \right)$, $p \in \left(e^{-c_1 N}, 1/3 \right)$, $\ln(\beta_{\mathcal{M}}) \leq c_2 \epsilon^2 \sqrt{N} / \ln^6(c_3 N / \epsilon p)$ and that $m_2 \in \mathbbm{Z}^+$ satisfies
$$m_2 \geq c_4 \ln(\beta_{\mathcal{M}}) \frac{\ln \left( N/ \epsilon p\right) \ln(1/p)}{\epsilon^{2}}.$$
Then, one may randomly select an $m_2 \times N$ matrix $E$ of the form in \eqref{equ:mainthmofpaperMatrix} such that $E$ will be an $\epsilon$-JL embedding of $\mathcal{M}$ into $\mathbbm{R}^{m_2}$ with probability at least $1-p$.  Furthermore, $E$ will always have an $\mathcal{O}\left( N \cdot \left( \log \left( \sqrt{\ln(\beta_{\mathcal{M}})} /\epsilon \right) +  \log \log \left( N / \epsilon p \right) \right) \right)$ run-time matrix-vector multiply.
\end{theorem}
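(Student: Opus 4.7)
The plan is to follow the same template used to prove Theorems~\ref{thm:finalsubgres} and~\ref{thm:finalSORSres}: combine a generic ``infinite set of bounded Gaussian width'' JL result tailored to the two-stage matrix family \eqref{equ:mainthmofpaperMatrix} with the manifold-to-Gaussian-width conversion of Theorem~\ref{GaussianWidthOfManifodWithBoundaryViaGunther}, and then obtain the runtime from Lemma~\ref{lem:FFTthenGaussiantime}. Concretely, I would first apply Theorem~\ref{GaussianWidthOfManifodWithBoundaryViaGunther} to the normalized chord set of $\mathcal{M}$ to bound its Gaussian width by a constant multiple of $\sqrt{\ln \beta_{\mathcal{M}}}$ -- exactly the opening move used for the sub-Gaussian and SORS theorems. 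The hypotheses $\ln(\beta_{\mathcal M}) \le c_2 \epsilon^2 \sqrt{N}/\ln^6(c_3 N/\epsilon p)$ and $p > e^{-c_1 N}$ are precisely the conditions that will be required for the generic infinite-set result below to apply with this choice of Gaussian width.

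The bulk of the work is to prove (or invoke) a generic embedding theorem for matrices of the form \eqref{equ:mainthmofpaperMatrix}: any set $T \subset \mathbbm{R}^N$ whose normalized difference set has Gaussian width at most $w$ is $\epsilon$-JL embedded by a random $E$ with probability at least $1-p$ whenever $m_2 \gtrsim w^2 \ln(N/\epsilon p)\ln(1/p)/\epsilon^2$. I would prove this via a two-stage argument that mirrors the block-then-Gaussian structure of $E$. In the \emph{inner stage} I condition on the diagonal Rademacher matrix $D$ and view $\operatorname{diag}(RU,\dots,RU)\,D$ as applying a SORS-type map to each of the $N/m_1^2$ blocks of the input, with shared $R,U$ but independent sign flips across blocks coming from $D$. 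An application of Theorem~\ref{thm:SORSforINfSets} blockwise, after choosing $m_1$ to scale like $w^2 \cdot \mathrm{polylog}(N/\epsilon p)/\epsilon^2$, yields that $\sqrt{m_1}\,CD$ is an $\epsilon/2$-JL embedding of $T$ into $\mathbbm{R}^{N/m_1}$. In the \emph{outer stage} I condition on this success event, observe that the image $CD(T)$ inherits Gaussian width $\mathcal{O}(w)$ from the near-isometry of the inner map, and apply Corollary~\ref{vershynin-specialized-to-sphere} to the sub-Gaussian matrix $\sqrt{m_1/m_2}\,B$ acting on this image to obtain a second $\epsilon/2$-JL embedding into $\mathbbm{R}^{m_2}$. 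Composing the two distortions yields the full $\epsilon$-JL embedding; specializing $T$ to $\mathcal{M}$ then gives the theorem. For the runtime I invoke Lemma~\ref{lem:FFTthenGaussiantime} with $f(m_1) = \mathcal{O}(\log m_1)$ (available because $U$ has an $\mathcal{O}(m_1^2 \log m_1)$-time matrix-vector multiply); substituting the prescribed $m_1 \lesssim \ln(\beta_{\mathcal{M}}) \cdot \mathrm{polylog}(N/\epsilon p)/\epsilon^2$ produces the stated $\mathcal{O}(N(\log(\sqrt{\ln \beta_{\mathcal{M}}}/\epsilon) + \log\log(N/\epsilon p)))$ bound.

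I expect the main obstacle to be the inner SORS stage. Because the same $R$ and $U$ are reused across all $N/m_1^2$ blocks, the blockwise maps are not independent SORS embeddings; only the Rademacher signs within $D$ provide blockwise independence. The analysis must therefore get uniform concentration across an infinite set of ``block-structured'' inputs using only this limited independence, while simultaneously controlling the Gaussian width of the intermediate image $CD(T)$ so that the outer sub-Gaussian step sees a set of width $\mathcal{O}(w)$ rather than something that grows with $N$. Getting clean control on both of these quantities under shared inner randomness is the technical heart of the argument and is what forces the restriction $\ln(\beta_{\mathcal{M}}) \lesssim \epsilon^2 \sqrt{N}/\mathrm{polylog}(N/\epsilon p)$ appearing in the hypothesis.
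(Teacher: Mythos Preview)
Your high-level template (manifold Gaussian width bound from Theorem~\ref{GaussianWidthOfManifodWithBoundaryViaGunther} combined with a generic infinite-set embedding result for the matrix family \eqref{equ:mainthmofpaperMatrix}, plus Lemma~\ref{lem:FFTthenGaussiantime} for the runtime) matches the paper exactly. The disagreement is in how you propose to prove the generic infinite-set result, and there the approach has a genuine gap.

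In the inner stage you want to apply Theorem~\ref{thm:SORSforINfSets} ``blockwise'' with $m_1 \asymp w^2 \cdot \mathrm{polylog}/\epsilon^2$, where $w = w(U(T-T))$. But for the block-diagonal map $CD$ to be a JL map of $T-T$ via the Lemma~\ref{lem:SimpleEmbedwAdditiveError}-style decomposition, you need the $m_1\times m_1^2$ SORS block $A$ to be an $\epsilon$-JL map of each projected set $P_j(T-T)$. Invoking Theorem~\ref{thm:SORSforINfSets} for that requires controlling $w\!\left(U(P_j(T-T))\right)$, and this quantity is \emph{not} bounded by $w(U(T-T))$: projecting to a block and then renormalizing can blow up the set dramatically (a block whose contribution is tiny in $T-T$ can, after normalization, fill out the entire sphere $\mathbb{S}^{m_1^2-1}$). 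So the choice $m_1 \asymp w^2\cdot\mathrm{polylog}/\epsilon^2$ is not justified, even if the blocks had fully independent randomness. The difficulty you flag about shared $R,U$ is real but secondary; the width-of-projections issue is the fundamental obstruction. Your outer-stage claim that ``$CD(T)$ inherits Gaussian width $\mathcal{O}(w)$ from the near-isometry'' is likewise unjustified: a multiplicative $\ell_2$ near-isometry on $T-T$ says nothing about how $CD$ acts on a Gaussian vector, which is what governs $w(CD\cdot U(T-T))$. The paper in fact singles out exactly this type of ``width of the image'' question as nontrivial in the discussion following Theorem~\ref{thm:finalMainres}.

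The paper avoids both issues by taking an entirely different route to the generic result (Theorem~\ref{thm:OURmatforINfSets}). It never attempts a direct two-stage JL argument on the infinite set. Instead it (i) uses the two-stage decomposition only on \emph{finite} sets (Lemma~\ref{lem:SimpleEmbedwAdditiveError} / Theorem~\ref{thm:MasterEmbedd}), where $P_j\mathcal{C}$ is finite of the same cardinality and no width control is needed; (ii) upgrades this to an RIP statement for sparsities $s\lesssim \sqrt{N}$ via a covering argument (Theorem~\ref{thm:fasterRIP}); (iii) lifts RIP to MRIP (Theorem~\ref{thm:fasterMRIP}); and (iv) invokes the Oymak--Recht--Soltanolkotabi machinery (Theorem~\ref{thm:Mahdithm}) to pass from MRIP to arbitrary infinite sets of bounded Gaussian width. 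The final random sign matrix $D$ in \eqref{equ:mainthmofpaperMatrix} enters precisely at step (iv), and Remark~\ref{rem:WecanIgnoreLittleD} explains why it can absorb the inner block signs. This RIP/MRIP detour is what makes the restriction $\ln\beta_{\mathcal{M}}\lesssim \epsilon^2\sqrt{N}/\mathrm{polylog}$ appear: it comes from the sparsity ceiling in Theorem~\ref{thm:fasterRIP}, not from any blockwise-width argument.
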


\begin{proof}
Apply Theorem~\ref{thm:OURmatforINfSets} in light of 
Remark~\ref{rem:WecanIgnoreLittleD}, together with Theorem~\ref{GaussianWidthOfManifodWithBoundaryViaGunther}.
\end{proof}

First, we note that the mild restrictions on $p$ and $N$ in Theorem~\ref{thm:finalMainres} are somewhat artificial and were made mainly to allow for greater simplification of the other derived bounds on $m_2$ and $d \leq \ln(\beta_{\mathcal{M}})$.  They can be removed without real consequences beyond cosmetics.  The restriction that $\ln(\beta_{\mathcal{M}})  \leq c_2 \epsilon^2 \sqrt{N} / \ln^6(c_3 N / \epsilon p)$ can also be made less severe at the cost of becoming less interpretable.  However, it can not be discarded entirely and is ultimately required to allow for a valid choice of the intermediate matrix dimension $m_1 \leq \sqrt{N}$ to be made in the proposed construction \eqref{equ:mainthmofpaperMatrix}.  Ignoring log factors and considering $\epsilon$ and $p$ to be constant, this restriction will ultimately always force the submanifolds we seek to embed via Theorem~\ref{thm:finalMainres} to have dimension $d \lesssim \sqrt{N}$.  Removing this restriction on $d$ while preserving the nice lower bound on $m_2$ is of great interest, but appears to be difficult.\\

Similarly, retaining the restriction on $d$ and obtaining a better lower bound on $m_2$ which is entirely independent of $N$ similar to the one provided by Theorem~\ref{thm:finalsubgres} for sub-gaussian matrices would also be of great interest.  This in fact appears possible if one can rigorously argue that the Gaussian width of $S_{\mathcal{M}} := U\left(U(\mathcal{M} - \mathcal{M}) - U(\mathcal{M} - \mathcal{M}) \right)$ is always independent of $N$ for a $d$-dimensional submanifold $\mathcal{M}$ of $\mathbbm{R}^N$, where $U$ here denotes normalization ${\bf x} \rightarrow {\bf x}/\| {\bf x} \|_2$, and $\mathcal{M} - \mathcal{M} := \left\{ {\bf x} - {\bf y}~\big|~ {\bf x}, {\bf y} \in \mathcal{M}, {\bf x} \neq {\bf y}  \right\}$.  Though this statement seems intuitively plausible, quantifying a concrete upper bound on the Gaussian width of $S_{\mathcal{M}}$ in terms of the original manifold parameters appears to be a non-trivial task.  Another path toward removing the logarithmic $N$ dependence in the lower bound for $m_2$ might be to carry out a modified chaining argument using, e.g., a result along the lines of Corollary \ref{Cor:FastFinite} below at each level.  Though this idea appears potentially promising in the abstract, the restrictions \eqref{equ:Srestriction} that need to be satisfied in order to apply embedding results such as Corollary \ref{Cor:FastFinite} to each cover involved complicate the standard approach.\\

Focusing now on the positive aspects of Theorem~\ref{thm:finalMainres} we note that the lower bound on the embedding dimension $m_2$ it provides removes additional log factors from the embedding dimension lower bound for structured (SORS) matrices given by Theorem~\ref{thm:finalSORSres}.  In fact, ignoring constants and the logarithmic dependencies on $\epsilon$ and $p$, we believe that the lower bound provided by Theorem~\ref{thm:finalMainres} for $m_2$ is the best one can ever hope to achieve in this setting via embedding arguments that require the embedding matrices to have the RIP.  In addition, the structure of the proposed embedding matrices \eqref{equ:mainthmofpaperMatrix} endow them with $\mathcal{O}(N \log \log N)$-time matrix-vector multiplies whenever, e.g., $\ln(\beta_{\mathcal{M}}) \leq \ln^{\rm c} (N)$ holds for fixed $\epsilon, p$. Using the earlier estimates on $\beta_{\mathcal{M}}$, it is sufficient to have $d^{d/c} \leq N$.  Finally, we again emphasize that these results hold for a general class of submanifolds of $\mathbbm{R}^N$ both with and without boundary.

\subsection{Paper Outline and Comments on  Proof Elements}

The proofs of all of Theorems \ref{thm:finalsubgres}, \ref{thm:finalSORSres}, and \ref{thm:finalMainres} are split into two independent parts:  A general embedding result for infinite subsets $S \subset \mathbbm{R}^N$ via a particular type of random matrix in terms of the subsets' Gaussian widths (i.e., Corollary~\ref{vershynin-specialized-to-sphere}, Theorem~\ref{thm:SORSforINfSets}, and Theorem~\ref{thm:OURmatforINfSets}), combined with a Gaussian width bound for submanifolds of $\mathbbm{R}^N$ which may (or may not) have boundary (i.e., Theorem~\ref{GaussianWidthOfManifodWithBoundaryViaGunther}).  These component results are proven in three different sections below.\\

First, Corollary~\ref{vershynin-specialized-to-sphere} and Theorem~\ref{thm:SORSforINfSets} are proven in Section~\ref{sec:Prelims}, and are largely the result of updating existing compressive sensing and high dimensional probability bounds using some recent results by, e.g., Brugiapaglia, Dirksen, Jung, and Rauhut \cite{brugiapaglia2020sparse}.  As a result, Section~\ref{sec:Prelims} is written in the form of a review of relevant prior work from these areas which makes some minor but useful (for our purposes later) modifications of existing theory along the way.  The reader who is well familiar with these areas can safely skip to Section~\ref{sec:FasterRIP} and refer back as needed.  To the less initiated reader, however, we recommend a more careful look and hope that the section may serve as a crash course to some current state-of-the-art results, techniques, and tools.\\

Next, Theorem~\ref{thm:OURmatforINfSets} is proven in Section~\ref{sec:FasterRIP} in three phases.  First, fast embedding results are proven for finite point sets with cardinalities bounded by $e^{\mathcal{O}(\sqrt{N})}$ using matrices of the form \eqref{equ:Ddef}.  We note that these results can be considered a simplification and generalization of a prior and more specialized JL-construction by Ailon and Liberty \cite{ailon_fast_2009}.  Next, these finite embedding results are then used together with a modified covering argument to prove that matrices of the form \eqref{equ:Ddef} also have the RIP for sufficiently small sparsities $s \lesssim \sqrt{N}$.  In fact, for this range of sparsities, these structured RIP matrices have both an optimal number of rows (up to constant factors) and an $\mathcal{O}(N ( \log(s) + \log \log(N) )$-time matrix-vector multiply, a result of potential independent interest.  Finally, Theorem~\ref{thm:OURmatforINfSets} is then proven by using these new RIP matrices together with results by Oymak, Recht, and Soltanolkotabi \cite{oymak_isometric_2018}.\\

To finish, Theorem~\ref{GaussianWidthOfManifodWithBoundaryViaGunther} which bounds the Gaussian width of the closure of the unit secants of a submanifold $\mathcal{M}$ of $\mathbbm{R}^N$ (potentially with boundary), i.e. $w \left( \overline{U(\mathcal{M} - \mathcal{M})} \right)$, is  proven in Section~\ref{sec:GeometryandCoveringNumBounds}.

The proof begins by established covering number bounds for manifolds (possibly with boundary) by applying G\"unther's volume comparison theorem from Riemannian geometry.  Next, covering number estimates for the unit secants of submanifolds of $\mathbbm{R}^N$ (possibly with boundary) are then proven by modifying arguments motivated by the work of Eftekhari and Wakin for manifolds without boundary \cite{eftekhari_new_2015}.  Once finished, these covering number estimates are then used in combination with Dudley's inequality to prove Theorem~\ref{GaussianWidthOfManifodWithBoundaryViaGunther}.\\

In the next somewhat long section we will set terminology and review some relevant work from the compressive sensing and high dimensional probability literature.

\section{Definitions, Notation, and Preliminaries}
\label{sec:Prelims}

A matrix $A \in \mathbbm{R}^{m \times N}$ is an {\it $\epsilon$-JL \textbf{map} of a set $T \subset \mathbbm{R}^N$ into $\mathbbm{R}^m$} if 
$$(1 - \epsilon) \| {\bf x} \|_2^2 \leq \| A{\bf x}\|_2^2 \leq (1 + \epsilon) \| {\bf x} \|_2^2$$
holds for all ${\bf x} \in T$.  Note that this is equivalent to $A \in \mathbbm{R}^{m \times N}$ having the property that
$$ \sup_{{\bf x} \in T \setminus \{ {\bf 0} \}} \left| \left\| A ({\bf x}/\| {\bf x} \|_2) \right\|^2_2 - 1 \right| = \sup_{{\bf x} \in U(T)} \left| \left\| A {\bf x} \right\|^2_2 - 1 \right| \leq \epsilon,$$
where $U(T) \subset \mathbbm{R}^N$ is the normalized version of $T \subset \mathbbm{R}^N$ defined by
$$U(T) := \left \{ \frac{{\bf x}~~}{\|{\bf x} \|_2} ~\big|~ {\bf x} \in T \setminus \{ {\bf 0} \} \right\}.$$
We will say that a matrix $A \in \mathbbm{R}^{m \times n}$ is an {\it $\epsilon$-JL \textbf{embedding} of a set $T \subset \mathbbm{R}^n$ into $\mathbbm{R}^m$} if $A$ is an $\epsilon$-JL map of 
$$T - T := \left\{ {\bf x} - {\bf y } ~\big|~ {\bf x},{\bf y} \in T \right\}$$ 
into $\mathbbm{R}^m$.  Here we will be working with random matrices which will embed any fixed set $T$ of bounded size measured in an appropriate way with high probability.   Such matrix (distributions) are often called {\bf oblivious} and discussed in the absence of any particular set $T$ since they are independent of any properties of $T$ beyond its size.\\

Of course, the discussion above now requires us to define what we actually mean by the ``size'' of an arbitrary and potentially infinite set $T \subset \mathbbm{R}^N$.  The following notions of the size of a set $T$ will be useful and utilized heavily throughout.  We will denote the cardinality of a finite set $T$ by $|T|$.  For a (potentially infinite) set $T \subset \mathbbm{R}^N$ we then define its {\bf radius} and {\bf diameter} to be
$${\rm rad}(T) := \sup_{{\bf x} \in T} \| {\bf x} \|_2$$
and
$${\rm diam}(T) := {\rm rad}(T-T) = \sup_{{\bf x}, {\bf y} \in T} \| {\bf x} - {\bf y} \|_2,$$
respectively.  Given a value $\delta \in \mathbbm{R}^+$ a {\bf $\delta$-cover of $T$} (also sometimes called a {\bf $\delta$-net of $T$}) will be a subset $S \subset T$ such that the following holds 
$$\forall {\bf x} \in T ~\exists {\bf y} \in S ~{\rm so~that~} \| {\bf x} - {\bf y} \|_2 \leq \delta.$$ 
The {\bf $\delta$-covering number of $T$}, denoted by $\mathcal{N}(T,\delta) \in \mathbbm{N}$, is then the smallest achievable cardinality of a $\delta$-cover of $T$.
Finally, the {\bf Gaussian width} of a set $T$ is defined as follows.
\begin{definition}\cite[Definition 7.5.1]{vershynin_high-dimensional_2018}
The Gaussian width of a set $T \subset \mathbb{R}^n$ is  \begin{align*}
w(T) := \mathbb{E} \sup_{{\bf x} \in T} \, \langle {\bf g},{\bf x} \rangle
\end{align*}
where ${\bf g}$ is a random vector with $n$ independent and identically distributed (i.i.d.) mean $0$ and variance $1$ Gaussian entries.  
\end{definition}

For more detail about the properties of the Gaussian width see \cite[Proposition 7.5.2]{vershynin_high-dimensional_2018}.\\[2pt]

For simplicity we will focus on two general types of random matrices in this paper:  sub-gaussian random matrices with independent, isotropic, and sub-gaussian rows (referred to simply as {\bf sub-gaussian random matrices} below), and Krahmer-Ward Subsampled Orthonormal with Random Signs (SORS) matrices \cite{krahmer2011new}.  We will discuss each of these classes of random matrices in more detail next.

\subsection{Sub-gaussian Random Matrices as Oblivious $\epsilon$-JL maps}

Sub-gaussian random matrices include, e.g., matrices with i.i.d. mean $0$ and variance $1$ Gaussian or Rademacher entries as special cases. We refer the reader to, e.g., \cite[Section 2.5, Chapter 3, and Chapter 4]{vershynin_high-dimensional_2018} and/or \cite[Chapters 7 and 9]{foucart_mathematical_2013} for details regarding this rich class of random matrices.  The following results demonstrate the use of these matrices as oblivious $\epsilon$-JL maps of arbitrary sets. 
\begin{theorem}[See Theorem 9.1.1 and Exercise 9.1.8 in \cite{vershynin_high-dimensional_2018}]
\label{vershynin-matrix-deviation-theorem}
Let $A$ be $m \times N$ matrix whose rows are independent, isotropic, and sub-gaussain random vectors in $\mathbbm{R}^N$. Let $p \in (0,1)$ and $T \subset \mathbb{R}^N$. 
Then there exists a constant $c$ depending only on the distribution of the rows of $A$ such that
\begin{align*}
\sup_{{\bf x} \in T} \left| \| A{\bf x} \|_2 - \sqrt{m} \| {\bf x} \|_2   \right| 
\leq 
c \left[w(T) + \sqrt{\ln(2/p)} \cdot {\rm rad}(T) \right]
\end{align*}
holds with probability at least $1 - p$.
\end{theorem}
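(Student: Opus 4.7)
The plan is to recognize the claim as the Talagrand--Vershynin matrix deviation inequality and to prove it by setting up the natural stochastic process, establishing sub-gaussian increments, and invoking generic chaining. Concretely, define
\[
X_{\mathbf{x}} \;:=\; \| A\mathbf{x}\|_2 - \sqrt{m}\,\|\mathbf{x}\|_2, \qquad \mathbf{x}\in T,
\]
and split the quantity to be bounded as $\sup_{\mathbf{x}\in T}|X_\mathbf{x}|\le |X_{\mathbf{x}_0}|+\sup_{\mathbf{x}\in T}|X_\mathbf{x}-X_{\mathbf{x}_0}|$ for an arbitrary anchor $\mathbf{x}_0\in T$. The first term will be controlled by a fixed-point concentration inequality and will produce the $\sqrt{\ln(2/p)}\cdot\mathrm{rad}(T)$ contribution; the second term will be controlled by generic chaining and will produce the $w(T)$ contribution.

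The technical heart is showing that $\{X_{\mathbf{x}}\}_{\mathbf{x}\in T}$ is a sub-gaussian process with natural pseudometric the Euclidean one, i.e.
\[
\bigl\|X_{\mathbf{x}}-X_{\mathbf{y}}\bigr\|_{\psi_2}\;\le\; c\,\|\mathbf{x}-\mathbf{y}\|_2,
\]
where $c$ depends only on the sub-gaussian parameter of the rows of $A$. The approach I would take is to first write $\|A\mathbf{z}\|_2^2-m\|\mathbf{z}\|_2^2=\sum_{i=1}^m\bigl(\langle A_i,\mathbf{z}\rangle^2-\|\mathbf{z}\|_2^2\bigr)$ as a sum of independent, centered, sub-exponential variables (sub-exponential because the square of a sub-gaussian is sub-exponential), and apply Bernstein's inequality. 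Then "de-square" via the identity $a-b=(a^2-b^2)/(a+b)$ combined with the deterministic lower bound $\|A\mathbf{z}\|_2+\sqrt{m}\,\|\mathbf{z}\|_2\ge \sqrt{m}\,\|\mathbf{z}\|_2$ to convert the sub-exponential tail for $\|A\mathbf{z}\|_2^2-m\|\mathbf{z}\|_2^2$ into a sub-gaussian tail for $\|A\mathbf{z}\|_2-\sqrt{m}\|\mathbf{z}\|_2$ with parameter proportional to $\|\mathbf{z}\|_2$. This already handles the fixed-point case with $\mathbf{z}=\mathbf{x}$ and the anchor term. For the increments, the cleanest route is a two-case argument: on the event $\|A(\mathbf{x}-\mathbf{y})\|_2\le 2\sqrt{m}\,\|\mathbf{x}-\mathbf{y}\|_2$ one uses the reverse triangle inequality together with the previous de-squaring applied to $\mathbf{z}=\mathbf{x}-\mathbf{y}$; on the complementary event one uses a straightforward union bound, since that event itself has sub-gaussian probability by the fixed-point result.

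With sub-gaussian increments established, I would then invoke Talagrand's generic chaining upper bound to conclude
\[
\mathbb{E}\sup_{\mathbf{x}\in T}\bigl(X_{\mathbf{x}}-X_{\mathbf{x}_0}\bigr)\;\lesssim\; \gamma_2(T,\|\cdot\|_2),
\]
and invoke the majorizing measure theorem to upgrade this to $\gamma_2(T,\|\cdot\|_2)\lesssim w(T)$. (A weaker but self-contained substitute would be Dudley's entropy integral bound, which also yields a bound controlled by the Gaussian width via Sudakov minoration.) To pass from the mean bound to a tail bound, apply Talagrand's concentration inequality for suprema of sub-gaussian processes, whose deviation term is $\sqrt{\ln(2/p)}$ times the diameter of $T$ in the process pseudometric (hence $\lesssim \sqrt{\ln(2/p)}\cdot\mathrm{diam}(T)\lesssim\sqrt{\ln(2/p)}\cdot\mathrm{rad}(T)$). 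Combining with the $\psi_2$-tail for $X_{\mathbf{x}_0}$, also of order $\sqrt{\ln(2/p)}\cdot\|\mathbf{x}_0\|_2\le \sqrt{\ln(2/p)}\cdot\mathrm{rad}(T)$, and taking a union bound over the two events gives the claimed inequality.

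The step I expect to be the main obstacle is the sub-gaussian increment bound: the fixed-point de-squaring is routine, but making it work uniformly in $\mathbf{x},\mathbf{y}$ requires care because the naive bound $|\|A\mathbf{x}\|_2-\|A\mathbf{y}\|_2|\le\|A(\mathbf{x}-\mathbf{y})\|_2$ loses the crucial centering by $\sqrt{m}\|\mathbf{x}-\mathbf{y}\|_2$. Once the increment bound is secured, the chaining and tail-bound machinery is standard, and packaging the result into Vershynin's stated form is a matter of absorbing the sub-gaussian parameter of the rows of $A$ into the constant $c$.
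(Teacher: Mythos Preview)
The paper does not give its own proof of this statement: it is quoted directly as Theorem 9.1.1 and Exercise 9.1.8 of Vershynin's book and used as a black box. Your outline is correct and is essentially the argument that appears in that reference: define the process $X_{\mathbf{x}}=\|A\mathbf{x}\|_2-\sqrt{m}\,\|\mathbf{x}\|_2$, establish sub-gaussian increments $\|X_{\mathbf{x}}-X_{\mathbf{y}}\|_{\psi_2}\lesssim \|\mathbf{x}-\mathbf{y}\|_2$ via Bernstein on the squared version and the de-squaring/two-case trick, then apply generic chaining together with the majorizing measures theorem to obtain the $w(T)$ term, and finally invoke the high-probability tail bound for suprema of sub-gaussian processes to pick up the $\sqrt{\ln(2/p)}\cdot\mathrm{rad}(T)$ term. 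There is nothing to compare here beyond noting that your sketch reproduces the cited proof; your identification of the increment bound as the crux is also accurate.
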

\begin{remark}
The constant $c$'s dependence on the distributions of the rows of $A$ can be bounded explicitly via their sub-guassian norms (see \cite[Definition 3.4.1 ]{vershynin_high-dimensional_2018}).  For simplicity we will neglect these more exact expressions and simply note here that once a distribution for $A$ is fixed this constant will be completely independent of $T$ and all its attributes.  In particular, if the rows of $A$ are all distributed identically as is common in practice then $c$ will be an absolute constant with no dependence on any other quantities or entities whatsoever.
\end{remark}

The following simple corollary of Theorem~\ref{vershynin-matrix-deviation-theorem} demonstrates how sub-gaussian matrices may be used to produce $\epsilon$-JL maps of arbitrary subsets into lower dimensional Euclidean space with high probability.

\begin{corollary}[Sub-gaussian Matrices Embed Infinite Sets]\label{vershynin-specialized-to-sphere}
Let $S \subset \mathbb{R}^N$ and $\epsilon, p \in (0,1)$. Let $A$ be a $m \times N$ sub-gaussian random matrix. Then, there exists a constant $c'$ depending only on the distribution of the rows of $A$ such that $\frac{1}{\sqrt{m}}A$ will be an $\epsilon$-JL map of $S$ into $\mathbbm{R}^m$ with probability at least $1-p$ provided that
\begin{align*}
m \geq \frac{c' \left( w\left(U(S)\right) + \sqrt{\ln \left(2/p \right)} \right)^2}{\epsilon^2}.
\end{align*}
\end{corollary}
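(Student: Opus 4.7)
The plan is to apply Theorem~\ref{vershynin-matrix-deviation-theorem} directly to the normalized set $T = U(S)$, and then convert the one-sided deviation bound for $\|A\mathbf{x}\|_2$ into the two-sided squared-norm bound required by the definition of an $\epsilon$-JL map recalled at the start of Section~\ref{sec:Prelims}. Since every element of $U(S)$ is a unit vector by construction, I immediately get $\mathrm{rad}(U(S)) \leq 1$, so Theorem~\ref{vershynin-matrix-deviation-theorem} yields, with probability at least $1-p$,
\begin{equation*}
\sup_{\mathbf{x} \in U(S)} \bigl| \|A\mathbf{x}\|_2 - \sqrt{m} \bigr| \;\leq\; c\left[ w(U(S)) + \sqrt{\ln(2/p)} \right],
\end{equation*}
where $c$ depends only on the sub-gaussian distribution of the rows of $A$.

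Next I would divide through by $\sqrt{m}$ to obtain a bound on $\bigl|\|(A/\sqrt{m})\mathbf{x}\|_2 - 1\bigr|$ uniform over $\mathbf{x} \in U(S)$, and then pass to the squared norm via the factorization $a^2 - 1 = (a-1)(a+1)$. Concretely, if $\delta := c[w(U(S)) + \sqrt{\ln(2/p)}]/\sqrt{m} \leq 1$, then for every $\mathbf{x} \in U(S)$ the quantity $a := \|(A/\sqrt{m})\mathbf{x}\|_2$ lies in $[0,2]$, so $|a+1| \leq 3$ and hence $|a^2 - 1| \leq 3\delta$. Requiring $3\delta \leq \epsilon$ translates directly into a lower bound on $m$ of the form
\begin{equation*}
m \;\geq\; \frac{(3c)^2 \bigl( w(U(S)) + \sqrt{\ln(2/p)} \bigr)^2}{\epsilon^2},
\end{equation*}
so setting $c' := 9c^2$ (or any sufficiently large absolute multiple of $c^2$) finishes the argument, since by definition $A/\sqrt{m}$ being an $\epsilon$-JL map of $S$ is exactly the statement that $\sup_{\mathbf{x} \in U(S)} |\|(A/\sqrt{m})\mathbf{x}\|_2^2 - 1| \leq \epsilon$.

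There is no real obstacle here; the only mild care required is to verify that the choice of $m$ indeed forces $\delta \leq 1$ so that the linearization $|a^2-1| \leq 3|a-1|$ is valid in the regime of interest, and to absorb the constant $3$ into the final constant $c'$. Since the hypothesis $\epsilon \in (0,1)$ combined with the lower bound on $m$ automatically ensures $\delta \leq \epsilon/3 < 1$, this check is immediate. The resulting constant $c'$ inherits from $c$ the property of depending only on the distribution of the rows of $A$, matching the statement of the corollary.
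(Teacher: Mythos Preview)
Your proof is correct and follows essentially the same approach as the paper: apply Theorem~\ref{vershynin-matrix-deviation-theorem} to $T = U(S)$ with $\mathrm{rad}(T)=1$, divide by $\sqrt{m}$, and pass from the linear deviation $|a-1|$ to the quadratic one via $|a^2-1|=|a-1||a+1|$. The paper uses the slightly sharper bound $|a+1|\leq 2+\epsilon/3$ in place of your $|a+1|\leq 3$, but this only affects the implicit constant and the arguments are otherwise identical.
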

\begin{proof}
Let $T = U(S) \subset \mathbb{S}^{N-1} := U(\mathbbm{R}^N) = \{ {\bf x} \in \mathbbm{R}^N ~|~ \| {\bf x} \|_2 = 1 \}$.  Since $T = U(S) \subset \mathbb{S}^{N-1}$, $\textrm{rad}\left( T \right) = 1$ and $\| {\bf x} \|_2 = 1$ for all ${\bf x} \in T$. Furthermore, for all $u \in \mathbbm{R}$ with $|u - 1| \leq \epsilon/3$ one has that
$$|u^2 - 1| = |u+1||u-1| \leq (2 + \epsilon/3)(\epsilon/3) < \epsilon.$$
Hence, we may apply Theorem \ref{vershynin-matrix-deviation-theorem} to $T = U(S)$ with
$m \geq \frac{9c\left(w(T) + \sqrt{\ln(2/p)}\right)^2}{\epsilon^2}$ to see that
$$\sup_{{\bf x} \in U(S)} \left| \left\| \frac{1}{\sqrt{m}}A{\bf x} \right\|^2_2 - 1 \right| 
\leq 
\epsilon$$
holds.
\end{proof}

The following simplification of Corollary~\ref{vershynin-specialized-to-sphere} to finite sets $S$ will be useful later.

\begin{corollary}[Sub-gaussian Matrices Embed Finite Sets]\label{vershynin-specialized-finite}
Let $S \subset \mathbb{R}^N$ be finite and $\epsilon, p \in (0,1)$. Let $A$ be a $m \times N$ sub-gaussian random matrix. Then, there exists a constant $c''$ depending only on the distribution of the rows of $A$ such that $\frac{1}{\sqrt{m}}A$ will be an $\epsilon$-JL map of $S$ into $\mathbbm{R}^m$ with probability at least $1-p$ provided that
\begin{align*}
m \geq \frac{c'' \ln(2|S|/p)}{\epsilon^2}.
\end{align*}
\end{corollary}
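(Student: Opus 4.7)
The plan is to derive this as an immediate consequence of Corollary~\ref{vershynin-specialized-to-sphere} by bounding the Gaussian width $w(U(S))$ for a finite set $S$. First I would observe that $U(S) \subset \mathbb{S}^{N-1}$, so $\mathrm{rad}(U(S)) = 1$, and that $|U(S)| \leq |S|$ since the normalization map is a surjection from $S \setminus \{\mathbf{0}\}$ onto $U(S)$.

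Next I would invoke the classical Gaussian width estimate for finite sets: for any finite $T \subset \mathbbm{R}^N$,
$$w(T) \;\leq\; \mathrm{rad}(T)\sqrt{2 \ln |T|}.$$
This follows from the standard subgaussian maximal inequality applied to the $|T|$ centered Gaussian random variables $\{\langle \mathbf{g}, \mathbf{x}\rangle\}_{\mathbf{x} \in T}$, each of which has variance $\|\mathbf{x}\|_2^2 \leq \mathrm{rad}(T)^2$. Specializing to $T = U(S)$ yields $w(U(S)) \leq \sqrt{2 \ln |S|}$.

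Finally I would substitute this bound into the hypothesis of Corollary~\ref{vershynin-specialized-to-sphere}, which then becomes
$$m \;\geq\; \frac{c'\left(\sqrt{2\ln|S|} + \sqrt{\ln(2/p)}\right)^2}{\epsilon^2}.$$
A single application of $(a+b)^2 \leq 2(a^2 + b^2)$, together with the identity $\ln|S| + \ln(2/p) = \ln(2|S|/p)$, collapses the two logarithmic terms into one and gives the sufficient lower bound $m \geq c'' \ln(2|S|/p)/\epsilon^2$ with $c'' := 4c'$. Since $c'$ depends only on the distribution of the rows of $A$, the same is true of $c''$, which is exactly what is claimed.

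No real obstacle arises here: the corollary is essentially a packaging of Corollary~\ref{vershynin-specialized-to-sphere} specialized to finite sets, and the only ingredient beyond that corollary is the elementary Gaussian maximal inequality. The only minor bookkeeping step is absorbing the two square roots into a single logarithm, which is routine.
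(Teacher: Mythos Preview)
Your proposal is correct and follows essentially the same approach as the paper: both reduce to Corollary~\ref{vershynin-specialized-to-sphere} after bounding $w(U(S))$ by a constant times $\sqrt{\ln|S|}$ (the paper cites \cite[Exercise 7.5.10]{vershynin_high-dimensional_2018} using ${\rm diam}(U(S))\leq 2$, you use the equivalent radius-based maximal inequality), and then combine the two square-root terms via $(a+b)^2\leq 2(a^2+b^2)$.
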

\begin{proof}
Note that $T = U(S) \subset \mathbb{S}^{N-1}$ will also be finite with $|T| \leq |S|$, and with $\textrm{diam}\left( T \right) \leq 2$. 
Hence, \cite[Exercise 7.5.10]{vershynin_high-dimensional_2018} implies that $w(T) \leq c \sqrt{\ln |S|}$ for an absolute constant $c \in (1,\infty)$.  As a consequence it suffices to take 
\begin{align*}
m \geq c'' \frac{\ln(2|S|/p)}{\epsilon^2} 
&\geq c' \frac{2(c^2\ln(|S|) + \ln(2/p))}{\epsilon^2} \\
&\geq c' \frac{\left( c \sqrt{\ln(|S|)} + \sqrt{\ln(2/p)}\right)^2}{\epsilon^2}
\geq \frac{c' \left( w\left(U(S)\right) + \sqrt{\ln \left(2/p \right)} \right)^2}{\epsilon^2}
\end{align*}
for $c'' \in \mathbbm{R}^+$ sufficiently large when applying Corollary~\ref{vershynin-specialized-to-sphere}. We also used $2(a^2+b^2)\geq (a+b)^2$ in the line above. 
\end{proof}

It can be shown that sub-gaussian random matrices are near-optimal with respect to the embedding dimension $m$ they provide for $\epsilon$-JL embeddings \cite{iwen2021lower}.  However, they are generally unstructured matrices which do not benefit from having, e.g., fast specialized algorithms for computing matrix-vector multiplies quickly.  We will discuss more structured classes of random matrices that do have such algorithms next.

\subsection{Oblivious $\epsilon$-JL maps for Finite Sets from SORS Matrices}

SORS matrices are derived from orthonormal bases and so can benefit from their inherent structure.  They are defined as follows.

\begin{definition}\label{sors-matrix-definition}[{\bf SORS Matrices}]
Let $U \in \mathbb{R}^{N \times N}$ be an orthogonal matrix obeying 
\begin{align*}
U^*U = I  \hspace{1cm} \mbox{ and } \hspace{1cm} \max_{i,j} |u_{i,j}| \leq \frac{K}{\sqrt{N}} 
\end{align*}
where $I$ is the $N \times N$ identity matrix.
Let $R \in \mathbb{R}^{m \times N}$ be a random matrix created by independently selecting $m$ rows of $I$ uniformly at random with replacement. Let $D\in \mathbb{R}^{N \times N}$ be a random diagonal matrix with i.i.d Rademacher random variables on its diagonal. Then $A = \sqrt{\frac{N}{m}} RUD$ is a Subsampled Orthogonal with Random Sign (SORS) matrix with constant $K \geq 1$.  
\end{definition}

The analysis of SORS matrices as $\epsilon$-JL maps depends on the Restricted Isometry Constants (RICs) of the {\bf Subsampled Orthonormal Basis (SOB)} matrices $\sqrt{N/m} RU$ with constant $K$ defined above as a part of the SORS matrix definition.  These constants are also closely associated with the Restricted Isometry Property (RIP) from compressive sensing \cite{foucart_mathematical_2013}.

\begin{definition}[{\bf RICs}]\cite[Definition 6.1]{foucart_mathematical_2013}
The $s^{\rm th}$ Restricted Isometry Constant (RIC) $\epsilon_s$ of a matrix $A \in \mathbb{R}^{m \times N}$ is the smallest $\epsilon \geq 0$ such that all at most $s$-sparse $x \in \mathbb{R}^N$ satisfy
\begin{equation*}
    (1-\epsilon) \| {\bf x} \|^2_2 \leq \| A{\bf x} \|^2_2 \leq (1+\epsilon) \| {\bf x} \|^2_2.
\end{equation*}
\end{definition}

\begin{definition}[{\bf RIP}]
If a given value $\epsilon \in (0,1)$ is larger than the $s^{\rm th}$ RIC of $A$ so that $\epsilon_s \leq \epsilon$ we say that $A$ has the Restricted Isometry Property (RIP) of order $(s, \epsilon)$.
\label{def:RIP}
\end{definition}

As we shall see, the following theorem by Brugiapaglia, Dirksen, Jung, and Rauhut allows one to prove that a general class of random matrices have the RIP.

\begin{theorem} \cite[Theorem 1.1]{brugiapaglia2020sparse} \label{HolgerVerbatim}
There exist absolute constants $\kappa > 0$ and $c_0, c_1 > 1$ such that the following holds. Let $X_1,...,X_m$ be independent copies of a random vector $X \in \mathbb{C}^N$ with bounded coordinates, i.e. $\max_{1 \leq i \leq N} | \la X,{\bf e}_i \ra| \leq K$ for some $K>0$, where $\left\{ {\bf e}_i \right\}^N_{i=1}$ is the standard basis of $\mathbbm{C}^N$. Let $T \subseteq \{ {\bf x} \in \mathbb{C}^N : \| {\bf x} \|_1 \leq \sqrt{s} \}, \epsilon \in (0, \kappa)$, and assume that 
\begin{equation*}
    m \geq c_0K^{2}\epsilon^{-2}s\ln(eN)\ln^2\left(sK^2/\epsilon \right).
\end{equation*}
Then, with probability exceeding $1 - 2\exp \left(-\epsilon^2 m/(sK^2) \right)$,
\begin{equation*}
    \sup_{{\bf y} \in T}\left| \frac{1}{m} \sum_{i=1}^m | \la {\bf y},X_i \ra|^2 - \mathbb{E}| \la {\bf y},X \ra |^2 \right| \leq c_1\epsilon \left(1+\sup_{{\bf y}\in T}\mathbb{E}|\la {\bf y},X\ra|^2 \right).
\end{equation*}
\end{theorem}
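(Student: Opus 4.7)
The plan is to view the quantity $Z_{\bf y} := \frac{1}{m} \sum_{i=1}^m |\langle {\bf y}, X_i\rangle|^2 - \mathbb{E}|\langle {\bf y}, X\rangle|^2$ as an empirical process indexed by ${\bf y} \in T$ and to control $\sup_{{\bf y} \in T} |Z_{\bf y}|$ by combining (a) a symmetrization step, (b) a chaining bound for the expected supremum of the symmetrized process, and (c) a Talagrand/Bousquet-type concentration inequality to pass from expectation to deviation. The crucial structural input is the pointwise bound $|\langle {\bf y}, X\rangle|^2 \leq (K \|{\bf y}\|_1)^2 \leq K^2 s$ on every ${\bf y} \in T$, which supplies the natural scale $K^2 s$ for every concentration estimate and makes the class of functions uniformly bounded.

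First I would symmetrize in the standard way to replace $\mathbb{E} \sup_{{\bf y} \in T} |Z_{\bf y}|$ with the Rademacher-averaged process $\mathbb{E}_X \mathbb{E}_\varepsilon \sup_{{\bf y} \in T} \bigl| \tfrac{1}{m} \sum_{i=1}^m \varepsilon_i |\langle {\bf y}, X_i\rangle|^2 \bigr|$, and then apply the contraction principle (using that the nonlinearity $t \mapsto t^2$ is Lipschitz on $[-K\sqrt{s}, K\sqrt{s}]$ with constant $2K\sqrt{s}$) to linearize inside the supremum. This reduces the problem to controlling a multiplier/Rademacher process that is essentially linear in ${\bf y}$ but with weights that conditionally behave like $|\langle {\bf y}, X_i\rangle|$. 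The Cauchy--Schwarz splitting $|\langle {\bf y}, X\rangle|^2 \leq K \sqrt{s}\,|\langle {\bf y}, X\rangle|$ is exactly what allows the term $\sup_{{\bf y} \in T} \mathbb{E}|\langle {\bf y}, X\rangle|^2$ on the right-hand side to appear with the right geometric meaning.

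Second I would apply a chaining bound (Dudley for a first pass, refined generic chaining for the sharp version) to the resulting conditionally sub-Gaussian process over $T$, with respect to the $L^\infty$-type metric $d_\infty({\bf y},{\bf y}') = \max_i |\langle {\bf y} - {\bf y}', X_i\rangle|$. Because $T \subseteq \sqrt{s}\, B_{\ell_1}^N$ and $|\langle \cdot, X_i\rangle|$ are bounded coordinatewise by $K$, Maurey's empirical method yields the covering-number estimate $\log \mathcal{N}(T, d_\infty, \delta) \lesssim (s K^2 / \delta^2) \log(eN)$. Integrating this through the chaining bound, with the $L^2$ diameter controlled by the data-dependent quantity $R := \sup_{{\bf y} \in T} \mathbb{E}|\langle {\bf y}, X\rangle|^2$, produces an expected supremum of order $K\sqrt{s} \sqrt{(s \log(eN)/m)}\,\log(sK^2/\epsilon)$, which is below $\epsilon(1+R)$ precisely under the stated sample-complexity hypothesis $m \gtrsim K^2 \epsilon^{-2} s \log(eN) \log^2(sK^2/\epsilon)$.

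Finally I would upgrade the in-expectation bound to the stated probability bound via Bousquet's version of Talagrand's inequality for suprema of bounded empirical processes: the summands are pointwise bounded by $K^2 s / m$ and their weak variance is at most $(K^2 s / m) \cdot R$, so the Bousquet deviation term is of order $\sqrt{(K^2 s / m)\, R\, t} + (K^2 s / m)\, t$; the choice $t = \epsilon^2 m/(sK^2)$ yields both the absorption into the right-hand side and the failure probability $2\exp(-\epsilon^2 m/(sK^2))$. The main obstacle is securing the sharp double-logarithm $\log^2(sK^2/\epsilon)$ rather than an extra $\log(eN)$ factor; this is what forces one to use generic chaining (or Talagrand's $\gamma_2$ functional) rather than plain Dudley, since the $L^2$ and $L^\infty$ geometries on $T$ interact nontrivially and a naive union over scales loses a logarithm. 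A secondary technical point is ensuring that the data-dependent quantity $R$ can be safely treated as a constant in the chaining step, which is achieved by a standard peeling/bootstrap over dyadic levels of $R$.
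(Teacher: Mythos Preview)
The paper does not prove this statement at all: Theorem~\ref{HolgerVerbatim} is quoted verbatim from \cite[Theorem~1.1]{brugiapaglia2020sparse} and used as a black box, with no proof or sketch provided in the paper under review. So there is nothing in the paper to compare your argument against.

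That said, your outline is a faithful high-level summary of the proof in the original reference: symmetrization, contraction to linearize the square, generic chaining over $T \subseteq \sqrt{s}\,B_{\ell_1}^N$ with Maurey-type entropy bounds, and then Bousquet's version of Talagrand's inequality to pass from expectation to high probability with deviation scale $\epsilon^2 m/(sK^2)$. You correctly identify the delicate point, namely that plain Dudley loses a logarithm and one needs the $\gamma_2$/generic chaining machinery to get $\ln^2(sK^2/\epsilon)$ rather than $\ln^2(eN)$ or $\ln^3$. One minor caution: the contraction step as you phrase it (global Lipschitz constant $2K\sqrt{s}$ for $t\mapsto t^2$) is a bit crude and by itself would give an extra $\sqrt{s}$; the actual argument in \cite{brugiapaglia2020sparse} handles the square more carefully, essentially via a self-bounding/bootstrap that keeps the data-dependent radius $R = \sup_{{\bf y}\in T}\mathbb{E}|\langle {\bf y},X\rangle|^2$ in play rather than the worst-case bound $K^2 s$. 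Your final paragraph alludes to this (``peeling/bootstrap over dyadic levels of $R$''), so you are aware of the issue, but the sketch as written would not close without making that step explicit.
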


Specializing Theorem~\ref{HolgerVerbatim} to the case of SOB matrices we arrive at the following corollary which upper bounds their RICs, thereby proving they have the RIP.

\begin{corollary}[SOB Matrices have the RIP for Small $\epsilon$]
There exists absolute constants, $a_0, a_1 > 1$ and $a_2 > 0$ such that the following holds for any ${\epsilon \in (0, a_2]}$. Assume $A$ is a $m \times N$ SOB matrix with
\begin{equation*}
    m \geq a_0 K^2 \frac{s}{\epsilon^2} \ln(eN) \ln^2 \left(\frac{a_1 sK^2}{\epsilon} \right).
\end{equation*}
Then $A$ will have RIP of order $(s,\epsilon)$ with probability at least $1 - 2 \exp(-\epsilon^2m/(a_1^2sK^2))$.
\label{coro:NewRIPbounds}
\end{corollary}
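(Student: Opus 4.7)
The plan is to specialize Theorem~\ref{HolgerVerbatim} directly, by recognizing that the SOB map $A = \sqrt{N/m}\,RU$ acts on a vector $\mathbf{x}$ as an empirical second-moment estimator. Concretely, if $j_1,\dots,j_m \in \{1,\dots,N\}$ are the i.i.d.\ uniform indices that pick out the rows of $R$, and I set $X_i := \sqrt{N}\, U^{*} \mathbf{e}_{j_i}$, then a direct computation shows that $\|A\mathbf{x}\|_2^2 = \tfrac{1}{m}\sum_{i=1}^m |\langle \mathbf{x}, X_i\rangle|^2$ while $\mathbb{E}|\langle \mathbf{x}, X_i\rangle|^2 = \|\mathbf{x}\|_2^2$ (the latter identity uses orthogonality of $U$ and uniformity of $j_i$). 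Moreover, the entrywise bound on $U$ gives $|\langle X_i,\mathbf{e}_\ell\rangle| = \sqrt{N}\,|u_{j_i,\ell}| \leq K$, so the bounded-coordinate hypothesis of Theorem~\ref{HolgerVerbatim} is satisfied with the same constant $K$.

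Next I would choose the index set in Theorem~\ref{HolgerVerbatim} to be $T = \{\mathbf{y} \in \mathbb{R}^N : \mathbf{y}\text{ is }s\text{-sparse},\ \|\mathbf{y}\|_2 \leq 1\}$, which sits inside the $\ell_1$-ball of radius $\sqrt{s}$ by Cauchy--Schwarz, and on which $\sup_{\mathbf{y}\in T} \mathbb{E}|\langle \mathbf{y}, X\rangle|^2 \leq 1$. Thus the right-hand side of the conclusion of Theorem~\ref{HolgerVerbatim} is at most $2c_1\epsilon$, with $c_1$ being the constant from that theorem. Applying it with the rescaled error parameter $\epsilon' = \epsilon/(2c_1)$ produces
\begin{equation*}
\sup_{\mathbf{y}\in T}\bigl|\|A\mathbf{y}\|_2^2 - \|\mathbf{y}\|_2^2\bigr| \leq \epsilon
\end{equation*}
with high probability. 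Homogeneity in $\mathbf{x}$ then extends this deviation bound to every $s$-sparse vector simultaneously, which is exactly the statement that $\epsilon_s \leq \epsilon$, i.e.\ the RIP of order $(s,\epsilon)$. The rescaling $\epsilon \mapsto \epsilon/(2c_1)$ is what produces the claimed constants $a_0$ and $a_1$ in the sample-complexity lower bound (with, for example, $a_1 = 2c_1$ and $a_0 = 4c_0 c_1^2$), together with the smallness requirement $\epsilon \in (0,a_2]$ inherited from the hypothesis $\epsilon \in (0,\kappa)$ of Theorem~\ref{HolgerVerbatim}; the probability bound transforms accordingly and matches the stated $1-2\exp(-\epsilon^2 m/(a_1^2 s K^2))$.

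I do not expect any genuine obstacle here: the argument is essentially a direct specialization with light bookkeeping. The only points requiring care are (i) getting the normalization of $X_i$ right so that the identity $\mathbb{E}|\langle \mathbf{x}, X\rangle|^2 = \|\mathbf{x}\|_2^2$ holds exactly, which is what forces uniform row-sampling with replacement and orthogonality of $U$; and (ii) tracking how the multiplicative factor $c_1(1 + \sup_{\mathbf{y}\in T}\mathbb{E}|\langle \mathbf{y}, X\rangle|^2)$ in the conclusion of Theorem~\ref{HolgerVerbatim} propagates through the rescaling of $\epsilon$ into the absolute constants $a_0, a_1, a_2$ of the corollary.
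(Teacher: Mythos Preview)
Your proposal is correct and follows essentially the same approach as the paper's own proof: both identify the rows of $\sqrt{N}\,U$ as the random vectors $X_i$ in Theorem~\ref{HolgerVerbatim}, invoke Cauchy--Schwarz to land in the $\ell_1$-ball of radius $\sqrt{s}$, and then absorb the resulting $2c_1$ factor on $\epsilon$ into the constants. Your write-up is in fact more explicit than the paper's about the normalization, the homogeneity step, and the constant bookkeeping.
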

\begin{proof}
Using Theorem \ref{HolgerVerbatim}, we consider the set of unit length vectors $T := \{ {\bf x} \in \mathbbm{C}^N$ with $\|{\bf x}\|_2 = 1$ and $\| {\bf x} \|_1 \leq \sqrt{s} \}$, which includes all unit length $s$-sparse vectors by Cauchy–Schwarz. Let $X_j$ be the uniform selection of a row of $\sqrt{N} U$ for a unitary matrix $U \in \mathbbm{C}^{N \times N}$.  We then have $\mathbb{E}|\la {\bf x}, X_j \ra |^2 = \|{\bf x}\|_2^2 = 1$.  Thus, if the rows of the SOB matrix $A$ are selected uniformly at random we get that
$$\sup_{{\bf x} \in T} \left| \left\|\frac{1}{\sqrt{m}}A{\bf x} \right\|^2_2 - \|{\bf x}\|^2_2 \right|\leq  2c_1 \epsilon. $$
Changing constants to account for the extra $2c_1$-factor accompanying the $\epsilon$ above gives the stated bounds on the probability and $m$.  
\end{proof}

The following additional variant of Corollary~\ref{coro:NewRIPbounds} provides an explicit probability variable, and will be more convenient to apply in some settings.

\begin{corollary}[SOB Matrices have the RIP]\label{Coro:SOBRIPcombined}
There exist absolute constants $a'_0, a'_1 > 1$ such that the following holds.  Let $\epsilon, p \in (0,1)$.  Any SOB matrix $A \in \mathbbm{R}^{m \times N}$ with constant $K$ that has  
$$  m \geq \frac{a'_0}{\epsilon^2} K^2 s \left( \ln(eN) \ln^2\left( \frac{a'_1sK^2}{\epsilon} \right) + \ln(e/p) \right)$$
will have the RIP of order $(s,\epsilon)$ with probability at least $1-p$.
\end{corollary}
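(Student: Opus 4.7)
The plan is to deduce this from Corollary \ref{coro:NewRIPbounds} by a two-case bookkeeping argument in $\epsilon$, combined with a direct inversion of the probability estimate. First I would handle the case $\epsilon \in (0, a_2]$, which already falls inside the hypotheses of Corollary \ref{coro:NewRIPbounds}. There, the guaranteed failure probability $2\exp(-\epsilon^2 m / (a_1^2 s K^2))$ is at most $p$ precisely when $m \geq (a_1^2/\epsilon^2) s K^2 \ln(2/p)$. Since the hypothesized lower bound on $m$ contains the additive summand $(a'_0/\epsilon^2) K^2 s \ln(e/p)$, and since $\ln(2/p) \leq \ln(e/p)$ (because $e > 2$), this summand alone dominates the probability requirement as soon as $a'_0 \geq a_1^2$. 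The other summand $(a'_0/\epsilon^2) K^2 s \ln(eN) \ln^2(a'_1 s K^2 /\epsilon)$ in turn dominates the sample-complexity hypothesis of Corollary \ref{coro:NewRIPbounds} as soon as $a'_0 \geq a_0$ and $a'_1 \geq a_1$, so both preconditions of that corollary are met and the conclusion follows.

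Second, for $\epsilon \in (a_2, 1)$, I would apply Corollary \ref{coro:NewRIPbounds} at the threshold value $a_2$ instead of at $\epsilon$, producing RIP of order $(s, a_2)$. By the monotonicity of the RIP in the distortion parameter (cf.\ Definition \ref{def:RIP}), this immediately implies RIP of order $(s, \epsilon)$. The resulting sample-size and probability thresholds now involve only the absolute constants $a_0, a_1, a_2$; since $1/\epsilon^2 \geq 1 \geq a_2^2 \cdot (1/a_2^2)$ gives $1/\epsilon^2 \geq a_2^2/\epsilon^2 \cdot (\text{constant})$, inflating $a'_0$ to also satisfy $a'_0 \geq a_0/a_2^2$ and $a'_0 \geq a_1^2/a_2^2$ ensures both conditions are uniformly controlled by the stated hypothesis. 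Choosing $a'_0$ and $a'_1$ as the maxima of these finitely many constant requirements then handles both cases simultaneously.

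I do not foresee any real obstacle here beyond careful tracking of constants; no new probabilistic input is needed beyond Corollary \ref{coro:NewRIPbounds}. The only very mild subtlety is the boundary regime $\epsilon > a_2$, which is outside the reach of the predecessor corollary, but this is dispatched cleanly by the monotonicity argument above.
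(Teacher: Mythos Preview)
Your proposal is correct and follows essentially the same route as the paper's proof: a two-case split on whether $\epsilon \leq a_2$ or $\epsilon > a_2$, applying Corollary~\ref{coro:NewRIPbounds} directly in the first case and at the threshold value in the second (using monotonicity of RIP in $\epsilon$), then choosing $a'_0, a'_1$ as maxima of the finitely many resulting constant constraints. The only detail your sketch glosses over is that in the second case one must also enlarge $a'_1$ (the paper takes $a'_1 = a_1/\min\{a_2,1\}$) so that $\ln^2(a'_1 s K^2/\epsilon) \geq \ln^2(a_1 s K^2/a_2)$ for $\epsilon < 1$; this is routine.
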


\begin{proof}
If $\epsilon \leq  a_2$ then
\begin{equation}
    m \geq \max \{ a_1^2, a_0 \} K^2 \frac{s}{\epsilon^2}  \left( \ln(eN) \ln^2\left( \frac{a_1sK^2}{\epsilon} \right) + \ln(e/p) \right) \geq \frac{a_1^2 s K^2 \ln(e/p)}{\epsilon^2} . 
    \label{caseepssmall}
\end{equation}
Now Corollary~\ref{coro:NewRIPbounds} tells us that we will have the RIP of order $(s,\epsilon)$ with probability at least $$1 - 2 \exp(-\epsilon^2m/(a_1^2sK^2)) \geq 1 - 2p / e \geq 1 - p.$$
If $\epsilon \geq a_2$ Corollary~\ref{coro:NewRIPbounds} tells us that when
\begin{equation}
    m \geq \max \{ a_1^2, a_0 \} K^2 \frac{s}{\min^2 \{a_2, 1\}} \left( \ln(eN) \ln^2 \left(\frac{a_1 sK^2}{\min \{a_2, 1\}} \right) + \ln(e/p) \right) \geq \frac{a_1^2 s K^2 \ln(e/p)}{a_2^2}  
    \label{caseepsbig}
\end{equation}
we will again have the RIP of order $(s,\epsilon)$ with probability at least $$1 - 2 \exp(-a_2^2m/(a_1^2sK^2)) \geq 1 - 2p / e \geq 1 - p.$$
Combining \eqref{caseepssmall} and \eqref{caseepsbig} we can, e.g., set $a_0' = \max \{ a_1^2, a_0 \}/ \min \{a_2^2, 1\}$ and $a_1' = a_1 / \min \{a_2, 1\}$.
\end{proof}

With Corollary~\ref{Coro:SOBRIPcombined} in hand we can now make a minor improvement to the embedding dimension $m$ provided by the Krahmer-Ward theorem in the case of SORS matrices \cite[Section 4]{krahmer2011new}.  

\begin{corollary}[SORS Matrices Embed Finite Sets] \label{coro:SORSfiniteembed}
Let $S \subset \mathbb{R}^N$ be finite and $\epsilon, p \in (0,1)$. Let $A$ be a $m \times N$ random SORS matrix with constant $K$. Then, there exist absolute constants $c'_0, c'_1, c'_2$ such that $A$ will be an $\epsilon$-JL map of $S$ into $\mathbbm{R}^m$ with probability at least $1-p$ provided that
\begin{align*}
m \geq c_0'  \frac{K^2}{\epsilon^2} \ln(c'_1 |S|/p) \cdot \left( \ln^2 \left(\frac{\ln(c'_2 |S|/p)K^2}{ \epsilon} \right) \ln(eN) + \ln(2e/p) \right).
\end{align*}
\end{corollary}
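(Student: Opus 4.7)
The plan is to invoke the Krahmer--Ward reduction, which converts the SORS embedding problem for a finite set into a question about the Restricted Isometry Property of the underlying SOB matrix $\sqrt{N/m}\,RU$, and then feed in the improved RIP bound from Corollary~\ref{Coro:SOBRIPcombined}. Recall that Krahmer--Ward \cite{krahmer2011new} shows that whenever $\Phi \in \mathbbm{R}^{m \times N}$ has the RIP of order $(s,\delta)$ with $\delta \lesssim \epsilon$ and $s \gtrsim \ln(|S|/p)$, the randomized matrix $\Phi D$ is an $\epsilon$-JL map of any fixed set $S$ (after centering/normalizing) with probability at least $1-p$. Since a SORS matrix is precisely $\Phi D$ for the SOB matrix $\Phi = \sqrt{N/m}\,RU$, this reduces the corollary to choosing a sparsity level $s = c_* \ln(c_1'|S|/p)$ for an appropriate absolute constant $c_*$ and guaranteeing $\Phi$ has RIP of order $(s,\epsilon/c_{**})$ with probability at least $1-p/2$.

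The second step is to plug these choices into Corollary~\ref{Coro:SOBRIPcombined}. With $s = c_*\ln(c_1'|S|/p)$ and $\epsilon' = \epsilon/c_{**}$, that corollary guarantees the desired RIP provided
\[
m \;\geq\; \frac{a_0'}{(\epsilon/c_{**})^2} K^2 s \left(\ln(eN)\ln^2\!\left(\frac{a_1' s K^2}{\epsilon/c_{**}}\right) + \ln(2e/p)\right),
\]
which, after absorbing $c_{**}$ and $a_1'$ into new absolute constants and substituting $s$ back, becomes exactly the bound stated in the corollary (with $c_0', c_1', c_2'$ playing the role of the absorbed constants). A union bound over the RIP event and the Krahmer--Ward event gives overall failure probability at most $p$.

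The only genuinely technical point is bookkeeping the constants so that the single condition on $m$ simultaneously (a) makes $s$ large enough for Krahmer--Ward to trigger, (b) makes the RIP distortion small enough ($\leq \epsilon/c_{**}$), and (c) makes the RIP failure probability at most $p/2$. None of these are obstacles in themselves; the mildly delicate part is that $s$ appears inside the inner $\ln^2(\cdot)$ factor of Corollary~\ref{Coro:SOBRIPcombined}, so when we substitute $s = c_*\ln(c_1'|S|/p)$ the inner log becomes $\ln(\ln(c_2'|S|/p)K^2/\epsilon)$, matching the form written in the statement. This is precisely the cosmetic reason for separating the two additive terms $\ln(eN)\ln^2(\cdot)$ and $\ln(2e/p)$ in the displayed bound, rather than collapsing them the way Corollary~\ref{Coro:SOBRIPcombined} does.

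I expect no serious obstacles beyond this constant-tracking: the only substantive input beyond Krahmer--Ward is the sharpened RIP estimate of Corollary~\ref{Coro:SOBRIPcombined}, which itself rests on Theorem~\ref{HolgerVerbatim}. This is exactly where the logarithmic improvement over the original Krahmer--Ward embedding bound enters, since the older $\ln^4 N$ factor is replaced by $\ln(eN)\ln^2(\cdot)$ after invoking the Brugiapaglia--Dirksen--Jung--Rauhut RIP estimate.
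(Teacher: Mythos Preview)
Your proposal is correct and follows essentially the same route as the paper: the paper sets $s = 16\ln(8|S|/p)$, invokes \cite[Theorem 9.36]{foucart_mathematical_2013} (the Krahmer--Ward result) to reduce the $\epsilon$-JL property of $A=\sqrt{N/m}\,RUD$ to the RIP of order $(2s,\epsilon/4)$ for $\sqrt{N/m}\,RU$, supplies that RIP via Corollary~\ref{Coro:SOBRIPcombined} with failure probability $p/2$, and finishes with a union bound. Your constant-tracking discussion is accurate and matches the paper's brief remark about ``adjusting the absolute constants.''
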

\begin{proof}
There are two steps: establishing an RIP bound and obtaining a JL map from the RIP bound. In both steps there is a failure probability which we control via the union bound. Let $s = 16 \ln\left( 8 |S| / p \right)$.  For this choice of $s$ \cite[Theorem 9.36]{foucart_mathematical_2013} guarantees that $A$ will be an $\epsilon$-JL map of $S$ into $\mathbbm{R}^{m}$ with probability at least $1 - p/2$ provided that $\sqrt{N/m}RU$ has the RIP of order $(2s,\epsilon/4)$.  This RIP condition is provided by Corollary~\ref{Coro:SOBRIPcombined} with probability at least $1 - p/2$.  Applying the union bound and adjusting the absolute constants now yields the desired result. 
\end{proof}

Looking at Corollary~\ref{coro:SORSfiniteembed} we can see that the embedding dimension $m$ provided there is about a factor of $\mathcal{O}\left(\ln^2 \left( \ln |S| \right) \log N \right) $ worse than that provided by Corollary~\ref{vershynin-specialized-finite} for sub-gaussian random matrices (holding $\epsilon, p,$ and $K$ constant).  One the other hand, if the unitary matrix $U$ used to build the SORS matrix has an efficient matrix-vector multiply, then the SORS matrix will also have one.  To try to get the best of both of these worlds (i.e., a near optimal embedding dimension together with a fast matrix-vector multiply) we will use the proposed construction \eqref{equ:Ddef}.  However, in order to demonstrate that this construction can in fact embed arbitrary (and potentially infinite) sets we will need a few more tools.  These will be discussed in the next section.

\subsection{Oblivious $\epsilon$-JL maps for Infinite Sets via Structured Matrices}

Referring back to Corollary~\ref{vershynin-specialized-to-sphere}, we can see that sub-gaussian random matrices can embed arbitrary infinite sets into lower dimensional Euclidean space.  Note that we have not seen such a result for SORS matrices yet (note that, e.g., Corollary~\ref{coro:SORSfiniteembed} only applies to finite sets).  This is due to the proofs of such embedding results for infinite sets using structured matrices (such as SORS matrices) being significantly more involved in general.  In this section we will outline a general approach for proving such results by Oymak, Recht, and Soltanolkotabi \cite{oymak_isometric_2018} which will require, among other things, the use of a couple of modified RIP definitions.  The first one is essentially identical to the original RIP.

\begin{definition}[{\bf Extended Restricted Isometry Property (ERIP)} \cite{oymak_isometric_2018}]
Let $s \in [N]$ and $\epsilon \in \mathbbm{R}^+$.  A matrix ${A \in \mathbb{R}^{m\times N}}$ satisfies the extended RIP of order $(s,\epsilon)$ if 
\begin{equation*}
    \left| \| A {\bf x} \|_2^2 - \|{\bf x}\|_2^2 \right| \leq \max\{\epsilon, \epsilon^2\} \| {\bf x} \|_2^2
\end{equation*}
holds for all at most $s$-sparse ${\bf x} \in \mathbbm{R}^N$.
\end{definition}
\begin{remark}
Note that the above definition only differs from the RIP in Definition~\ref{def:RIP} when $\epsilon \geq 1$.
\end{remark}

One can use results about the RICs of matrices to see that RIP results can be used to imply the ERIP for $\epsilon \geq 1$.  In particular, the following facts are useful for this purpose.

\begin{proposition}\label{holgerRIPineql}\cite[Proposition 6.6]{foucart_mathematical_2013}
For a matrix $A \in \mathbbm{R}^{m \times N}$, let $\epsilon_s$ be the $s^{\rm th}$ restricted isometry constant of $A$. Then for integers $1 \leq s \leq t$,
\begin{equation*}
    \epsilon_{t} \leq \frac{t-d}{s} \epsilon_{2s} + \frac{d}{s} \epsilon_{s}, \hspace{1cm} d = \text{gcd}(s,t).
\end{equation*}
In particular since $\epsilon_s \leq \epsilon_{2s}$ we have that
\begin{equation*}
    \epsilon_t \leq \frac{t}{s} \epsilon_{2s}.
\end{equation*}
\end{proposition}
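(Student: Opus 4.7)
The plan is to reduce the claim to a block decomposition argument on the quadratic form $\langle (A^*A - I)\mathbf{x}, \mathbf{x}\rangle$ and exploit that a support of size $t$ can be partitioned cleanly into blocks of size $d = \gcd(s,t)$. First I would recall the standard reformulation that $\epsilon_t$ equals the supremum of $|\langle (A^*A - I)\mathbf{x}, \mathbf{x}\rangle|$ over all $t$-sparse unit vectors $\mathbf{x}$ (equivalently, the operator norm $\|A_T^*A_T - I_T\|_{2 \to 2}$ over supports $T$ of size $t$). This turns the goal into comparing such quadratic forms across sparsity levels.

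Next, I would fix a $t$-sparse unit vector $\mathbf{x}$ with support $T$ and partition $T$ into $r := t/d$ disjoint blocks $B_1, \ldots, B_r$ each of size $d$, giving a corresponding decomposition $\mathbf{x} = \sum_{i=1}^r \mathbf{x}^{(i)}$ with $\mathbf{x}^{(i)}$ supported on $B_i$. For any subset $I \subseteq \{1,\ldots,r\}$ of size $s/d$, the vector $\mathbf{x}_I := \sum_{i \in I} \mathbf{x}^{(i)}$ is $s$-sparse and hence $|\langle (A^*A - I)\mathbf{x}_I, \mathbf{x}_I\rangle| \leq \epsilon_s \|\mathbf{x}_I\|_2^2$; for a pair of disjoint index subsets $I, J$ each of size $s/d$, the sum $\mathbf{x}_I + \mathbf{x}_J$ is $2s$-sparse, so by polarization $|\langle (A^*A - I)\mathbf{x}_I, \mathbf{x}_J\rangle| \leq \epsilon_{2s} \|\mathbf{x}_I\|_2 \|\mathbf{x}_J\|_2$.

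The core of the argument is then an averaging identity: summing $\langle (A^*A - I)\mathbf{x}_I, \mathbf{x}_I\rangle$ over all $\binom{r}{s/d}$ subsets $I$ of size $s/d$ counts each intra-block contribution $\langle (A^*A - I)\mathbf{x}^{(i)}, \mathbf{x}^{(i)}\rangle$ with multiplicity $\binom{r-1}{s/d - 1}$ and each cross-block contribution $\langle (A^*A - I)\mathbf{x}^{(i)}, \mathbf{x}^{(j)}\rangle$ with multiplicity $\binom{r-2}{s/d - 2}$. Solving the resulting linear relation recovers $\langle (A^*A - I)\mathbf{x}, \mathbf{x}\rangle = \sum_{i,j} \langle (A^*A - I) \mathbf{x}^{(i)}, \mathbf{x}^{(j)}\rangle$ as a linear combination of the $s$-sparse and $2s$-sparse quadratic forms, with coefficients whose ratio is exactly $d : (t-d)$ after simplification. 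Bounding each piece by $\epsilon_s \|\mathbf{x}\|_2^2$ or $\epsilon_{2s} \|\mathbf{x}\|_2^2$ (using $\sum_i \|\mathbf{x}^{(i)}\|_2^2 = \|\mathbf{x}\|_2^2$ and Cauchy--Schwarz on the cross terms) yields the first inequality, and the second follows immediately since $\epsilon_s \leq \epsilon_{2s}$ gives $\tfrac{t-d}{s}\epsilon_{2s} + \tfrac{d}{s}\epsilon_s \leq \tfrac{t}{s}\epsilon_{2s}$.

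The main obstacle I expect is purely combinatorial: getting the counting multiplicities $\binom{r-1}{s/d - 1}$ and $\binom{r-2}{s/d - 2}$ to collapse into the clean coefficient $\tfrac{t-d}{s}$ on the cross term and $\tfrac{d}{s}$ on the same-block term. It is essential here that $d = \gcd(s,t)$ so that both $s/d$ and $t/d = r$ are integers, making the blockwise partition and the corresponding sum over size-$s/d$ subsets well-defined; without this divisibility the averaging identity does not close. Once the combinatorics is handled, no further analytic input beyond the definition of the RICs and polarization is needed.
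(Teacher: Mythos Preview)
The paper does not prove this proposition; it is quoted directly from \cite[Proposition~6.6]{foucart_mathematical_2013} and used as a black box, so the relevant comparison is to the argument in that reference.

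Your framework---partitioning the support into $r=t/d$ blocks of size $d$ and passing to $s$-sparse aggregates indexed by size-$q$ subsets ($q=s/d$)---is the right one, but the specific averaging you describe does not deliver the coefficients $d/s$ and $(t-d)/s$. A single sum $\sum_{|I|=q}\langle M\mathbf{x}_I,\mathbf{x}_I\rangle$ (with $M=A^*A-I$) is one linear relation $\binom{r-1}{q-1}D+\binom{r-2}{q-2}C$ between the diagonal and cross sums $D,C$; it cannot by itself recover $D+C$ with prescribed weights. If instead you use $\binom{r-1}{q-1}\mathbf{x}=\sum_I\mathbf{x}_I$ and expand $\binom{r-1}{q-1}^2\langle M\mathbf{x},\mathbf{x}\rangle=\sum_{I,J}\langle M\mathbf{x}_I,\mathbf{x}_J\rangle$, bounding diagonal terms by $\epsilon_s$ and off-diagonal terms by $\epsilon_{2s}$, the coefficients that emerge are $1/\binom{r-1}{q-1}$ and $\tfrac{r}{q}-1/\binom{r-1}{q-1}$. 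For $r=4,\,q=2$ this is $\tfrac{1}{3}\epsilon_s+\tfrac{5}{3}\epsilon_{2s}$, not the claimed $\tfrac{1}{2}\epsilon_s+\tfrac{3}{2}\epsilon_{2s}$. (Both sum to $t/s$, so you do recover the weaker second inequality, but not the first.) Separately, restricting to \emph{disjoint} $I,J$ for the $2s$-sparse bound is vacuous whenever $t<2s$, since no such pair exists.

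The repair---and the argument in Foucart--Rauhut---is to replace the full family of $\binom{r}{q}$ subsets by the $r$ \emph{cyclic} windows $T_j=B_j\cup B_{j+1}\cup\cdots\cup B_{j+q-1}$ (indices mod $r$). Each block lies in exactly $q$ windows, so $q\mathbf{x}=\sum_j P_{T_j}\mathbf{x}$ and $\sum_j\|P_{T_j}\mathbf{x}\|_2^2=q\|\mathbf{x}\|_2^2$. Expanding $q^2\langle M\mathbf{x},\mathbf{x}\rangle=\sum_{j,k}\langle MP_{T_j}\mathbf{x},P_{T_k}\mathbf{x}\rangle$, bounding the $r$ diagonal terms by $\epsilon_s$ and the off-diagonal terms by $\epsilon_{2s}$ (the joint support $T_j\cup T_k$ has size at most $2s$, overlap allowed), and applying Cauchy--Schwarz in the form $\sum_j\|P_{T_j}\mathbf{x}\|_2\le\sqrt{rq}\,\|\mathbf{x}\|_2$, yields precisely
\[
|\langle M\mathbf{x},\mathbf{x}\rangle|\le \tfrac{1}{q}\epsilon_s\|\mathbf{x}\|_2^2+\tfrac{r-1}{q}\epsilon_{2s}\|\mathbf{x}\|_2^2=\tfrac{d}{s}\epsilon_s\|\mathbf{x}\|_2^2+\tfrac{t-d}{s}\epsilon_{2s}\|\mathbf{x}\|_2^2.
\]
The point is that using only $r$ overlapping windows makes the multiplicities trivial ($q$ on the diagonal, $q(r-1)$ off) and lands exactly on the stated constants; averaging over all $\binom{r}{q}$ subsets introduces binomial factors that do not cancel to the same ratio.
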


\begin{proposition}\label{ScalingRIP}
Let $s\in \mathbb{N}$, and $k \geq 1$ be a real number. Then
\begin{equation*}
    \epsilon_{s} \leq k \epsilon_{\left(2 \textstyle \lceil s/k \rceil \right)}.
\end{equation*}
\end{proposition}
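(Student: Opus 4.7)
The plan is to derive this inequality directly from Proposition \ref{holgerRIPineql} by making a suitable choice of the auxiliary integer appearing there. Recall that the cited proposition asserts $\epsilon_t \leq (t/s')\,\epsilon_{2s'}$ for any integers $1 \leq s' \leq t$. I would like to apply this with $t$ equal to the $s$ in our statement and with $s' := \lceil s/k \rceil$, so that the right-hand side becomes $(s/\lceil s/k \rceil)\,\epsilon_{2\lceil s/k \rceil}$, which already has the correct second factor.

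Before invoking the proposition, I would verify the hypotheses. The assumption $k \geq 1$ together with $s \in \mathbbm{N}$ gives $s/k \leq s$, and since $s$ is itself an integer, $\lceil s/k \rceil \leq s$; similarly $s/k > 0$ forces $\lceil s/k \rceil \geq 1$. Thus $1 \leq s' \leq t$ as required. Applying Proposition \ref{holgerRIPineql} then yields
\begin{equation*}
\epsilon_s \;\leq\; \frac{s}{\lceil s/k \rceil}\,\epsilon_{2\lceil s/k \rceil}.
\end{equation*}

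To finish, I would replace the prefactor $s/\lceil s/k \rceil$ by $k$. This is immediate from the defining inequality $\lceil s/k \rceil \geq s/k$, which gives $s/\lceil s/k \rceil \leq k$. Combining this with the displayed bound produces the claimed inequality $\epsilon_s \leq k\,\epsilon_{2\lceil s/k \rceil}$.

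There is no real obstacle here; the statement is essentially a convenient real-valued repackaging of Proposition \ref{holgerRIPineql}. The only subtle point worth flagging explicitly is the ceiling: without it, the choice $s' = s/k$ would not be an integer, which is why the statement quotes $2\lceil s/k \rceil$ rather than a bare $2s/k$. The argument is robust to the trivial edge case $k = 1$, in which $\lceil s/k \rceil = s$ and the conclusion reduces to the tautology $\epsilon_s \leq \epsilon_{2s}$ coming from monotonicity of the restricted isometry constants.
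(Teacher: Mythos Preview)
Your proof is correct and follows essentially the same approach as the paper: apply Proposition~\ref{holgerRIPineql} with $t=s$ and the auxiliary integer $\lceil s/k\rceil$, verify $1\leq \lceil s/k\rceil\leq s$, and then bound $s/\lceil s/k\rceil\leq k$.
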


\begin{proof} From proposition \ref{holgerRIPineql}, for $1 \leq s \leq t$, we have $\epsilon_t \leq \frac{t}{s} \epsilon_{2s}$. Since $k \geq 1$ and $s$ is an integer, $1\leq \lceil s/k \rceil \leq s$, and hence $\epsilon_s \leq \frac{s}{\textstyle \lceil \frac{s}{k} \rceil} \epsilon_{\left(2 \textstyle \lceil \frac{s}{k} \rceil \right)} \leq  k \epsilon_{\left(2 \textstyle \lceil \frac{s}{k} \rceil \right)}$.
\end{proof}

As noted above, the RIP and ERIP coincide for $\epsilon < 1$.  With the following two propositions we can now further see that the ERIP of order $(s, \epsilon)$ with $\epsilon > 1$ follows from the RIP of order, e.g., $(2\lceil s / \epsilon^2 \rceil, 0.9)$.

\begin{lemma}
\label{lem:ERIPfromRIP}
Let $b,a \in (0,1]$ with $a < b$, $s \in [N]$, $\epsilon \in [b, \infty)$, and suppose that $A \in \mathbbm{R}^{m \times N}$ has the RIP of order $(2\lceil s b^2/ \epsilon^2 \rceil, ab)$.  Then, $A$ will also have the ERIP of order $(s, \epsilon)$.
\end{lemma}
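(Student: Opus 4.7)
The plan is to reduce the ERIP statement to a restricted isometry constant bound and then apply Proposition~\ref{ScalingRIP} to transfer RIP information from sparsity level $2\lceil sb^2/\epsilon^2\rceil$ up to sparsity level $s$.

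Concretely, I would first unpack the ERIP conclusion: writing $\epsilon_s$ for the $s^{\rm th}$ restricted isometry constant of $A$, the ERIP of order $(s,\epsilon)$ is precisely the statement $\epsilon_s \le \max\{\epsilon,\epsilon^2\}$. So it suffices to control $\epsilon_s$. The hypothesis says exactly that $\epsilon_{2\lceil sb^2/\epsilon^2\rceil} \le ab$.

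Next, I would apply Proposition~\ref{ScalingRIP} with the choice $k = \epsilon^2/b^2$. Since $\epsilon \ge b$, this $k$ satisfies $k \ge 1$, so the proposition is in force and yields
\begin{equation*}
\epsilon_s \;\le\; k\,\epsilon_{2\lceil s/k\rceil} \;=\; \frac{\epsilon^2}{b^2}\,\epsilon_{2\lceil sb^2/\epsilon^2\rceil} \;\le\; \frac{\epsilon^2}{b^2}\cdot ab \;=\; \frac{a\,\epsilon^2}{b}.
\end{equation*}

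Finally I would check that $a\epsilon^2/b \le \max\{\epsilon,\epsilon^2\}$ on the full range $\epsilon \in [b,\infty)$, splitting into two cases. When $\epsilon \ge 1$ the right-hand side is $\epsilon^2$ and the required inequality reduces to $a/b \le 1$, which holds since $a < b$. When $b \le \epsilon < 1$ the right-hand side is $\epsilon$ and the required inequality becomes $a\epsilon/b \le 1$; since $a<b\le 1$ gives $b/a > 1 > \epsilon$, this also holds. Combining both cases yields $\epsilon_s \le \max\{\epsilon,\epsilon^2\}$, which is exactly the ERIP of order $(s,\epsilon)$.

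There is no real obstacle here: the argument is a one-line application of the scaling proposition followed by an elementary case split. The only thing that could trip one up is remembering that the hypotheses $a<b\le 1$ are there precisely to make the final comparison $a\epsilon^2/b \le \max\{\epsilon,\epsilon^2\}$ valid across both the sub-unit regime $b \le \epsilon < 1$ (where ERIP coincides with RIP) and the super-unit regime $\epsilon \ge 1$ (where ERIP is genuinely weaker and the quadratic slack $\epsilon^2$ is what absorbs the scaling factor $1/b$).
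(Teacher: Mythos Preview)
Your proof is correct and follows essentially the same approach as the paper: apply Proposition~\ref{ScalingRIP} with $k=\epsilon^2/b^2$ and use $a<b$ to finish. The paper's final step is marginally slicker in that it avoids the case split by observing directly that $\frac{a\epsilon^2}{b}<\epsilon^2\le\max\{\epsilon,\epsilon^2\}$, but your two-case argument is equally valid.
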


\begin{proof}
Note that $A$ will have the RIC $\epsilon_{\left(2 \textstyle \lceil s b^2/ \epsilon^2 \rceil \right)} \leq ab$ by assumption.  Applying Proposition~\ref{ScalingRIP} with $k = \epsilon^2 / b^2 \geq 1$ we can then see that
$\epsilon_s \leq \frac{\epsilon^2}{b^2} \epsilon_{\left(2 \textstyle \lceil s b^2 /\epsilon^2 \rceil \right)} \leq \epsilon^2 \frac{a}{b} < \epsilon^2 \leq \max \{ \epsilon, \epsilon^2 \}.$
\end{proof}

We are now prepared to define the central RIP variant of this section.

\begin{definition}
{\bf (Multiresolution Restricted Isometry Property (MRIP))}\cite[Definition 2.2]{oymak_isometric_2018}{\bf .}  A matrix ${A \in \mathbb{R}^{m\times N}}$ satisfies the MRIP of order  $(s,\epsilon)$ if it possesses the extended RIP of order $(2^ls, 2^{l/2} \epsilon)$ for all integers $l$ with $0 \leq l \leq \lceil \log_2(N/s) \rceil$.
\end{definition}

The following theorem can be used to convert RIP guarantees into MRIP guarantees.

\begin{theorem}[RIP implies MRIP]
\label{MRIPfromRIP}
Let $a \in(0,1]$, $A \in \mathbbm{R}^{m \times N}$ be a random matrix, and $f_N: [N] \times (0,a) \times (0,1) \rightarrow \mathbbm{R}^+$ have the property that
$$m \geq f_N \left(s', \epsilon', p'\right) \implies A {\rm ~has~ the~ RIP~ of~ order}~(s',\epsilon')~{\rm with~ probability~ at~ least~ } 1-p'.$$
Fix $\epsilon, p \in (0,1)$ and $s \in [N]$.  Then, $A$ will have the MRIP of order $(s,\epsilon)$ with probability at least $1 - p$ provided that
\begin{equation}
m \geq \max \left\{ f_N \left( 2\left\lceil \frac{a^2s }{\epsilon^2} \right\rceil, a^2/2, \frac{p}{\lceil \log_2(N/s) \rceil + 1} \right),~ \max_{0 \leq l < L} f_N \left( 2^l s, 2^{l/2} \epsilon, \frac{p}{\lceil \log_2(N/s) \rceil + 1} \right) \right \}
\label{equ:MRIPmbound}
\end{equation}
where $L := \min \left\{ 2 \log_2(a/\epsilon),~\lceil \log_2(N/s) \rceil + 1 \right\}$.
\end{theorem}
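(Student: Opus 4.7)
The plan is to derive each of the ERIP conclusions required by the MRIP definition from the RIP hypothesis on $A$, splitting into two regimes according to the size of $2^{l/2}\epsilon$, and then to take a single union bound at the end. Set the per-event failure budget to $p/(\lceil\log_2(N/s)\rceil+1)$; since the MRIP asks for at most $\lceil\log_2(N/s)\rceil+1$ separate ERIP guarantees, a union bound over all the events we produce will give overall failure probability at most $p$.

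In the first regime, consider integers $l$ with $0 \leq l < L$, where $L:=\min\{2\log_2(a/\epsilon),\lceil\log_2(N/s)\rceil+1\}$. For such $l$ one has $2^{l/2}\epsilon < a \leq 1$, so the ERIP of order $(2^l s, 2^{l/2}\epsilon)$ is literally the RIP of the same order (recall the remark following the ERIP definition). Applying the hypothesis directly gives this event on the condition $m \geq f_N(2^l s,\,2^{l/2}\epsilon,\,p/(\lceil\log_2(N/s)\rceil+1))$, which matches exactly the second argument of the outer $\max$ in \eqref{equ:MRIPmbound}.

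In the second regime, consider integers $l$ with $L \leq l \leq \lceil\log_2(N/s)\rceil$, so $2^{l/2}\epsilon \geq a$. Here I invoke Lemma~\ref{lem:ERIPfromRIP} with its parameters chosen as $b=a$, $a=a/2$, $s=2^l s$, $\epsilon=2^{l/2}\epsilon$. The lemma then tells us that the ERIP of order $(2^l s, 2^{l/2}\epsilon)$ follows from the RIP of order $\bigl(2\lceil 2^l s\cdot a^2/(2^{l/2}\epsilon)^2\rceil,\, a^2/2\bigr) = \bigl(2\lceil sa^2/\epsilon^2\rceil,\, a^2/2\bigr)$. The key observation, and really the only nontrivial point of the proof, is that the factor $2^l$ cancels: a single RIP event at order $(2\lceil sa^2/\epsilon^2\rceil,\, a^2/2)$ simultaneously certifies the ERIP at every high-level $l$, so we need only one call to $f_N$ to cover the entire regime. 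That call accounts precisely for the first argument of the outer $\max$ in \eqref{equ:MRIPmbound}.

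Putting the pieces together, the required $m$ is the maximum of the two $f_N$ thresholds across all events, and the number of events invoked is at most $\lceil\log_2(N/s)\rceil+1$ (the $L$ levels in the first regime plus one event for the entire second regime). The union bound then delivers the MRIP of order $(s,\epsilon)$ with probability at least $1-p$. I expect the main obstacle to be not a difficulty of argument but rather a matter of accounting: verifying that the Lemma~\ref{lem:ERIPfromRIP} parameters are in range (in particular $a^2/2 < a \leq b$ and that $2\lceil sa^2/\epsilon^2\rceil$ does not exceed $N$, which holds trivially for the levels considered), and confirming that $L$ correctly splits the regimes so that no level is either double-counted or missed.
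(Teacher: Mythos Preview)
Your proposal is correct and follows essentially the same approach as the paper: the same split at $L$, the same identification of ERIP with RIP in the low regime, the same invocation of Lemma~\ref{lem:ERIPfromRIP} with $b\leftarrow a$, $a\leftarrow a/2$ in the high regime, and a union bound with per-event budget $p/(\lceil\log_2(N/s)\rceil+1)$. Your explicit remark that the $2^l$ factors cancel so that a \emph{single} RIP event certifies all of regime~(b) is a small clarification beyond the paper's presentation, but not a different argument.
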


\begin{proof}
We need to establish that $A$ has the ERIP of order $(2^l s, 2^{l/2} \epsilon)$ for all integers $l$ with $0 \leq l \leq \lceil \log_2(N/s) \rceil$.  To do so we will consider two separate ranges of the integers $l$:  
\begin{enumerate}
    \item[(a)] The $L$ integers $l < L \leq 2 \log_2 (a/\epsilon)$ for which $2^{l/2} \epsilon < a$ holds, and
    \item[(b)] The remaining $\lceil \log_2(N/s) \rceil + 1 - L$ integers $l$ for which $2^{l/2} \epsilon \geq a$ holds.
\end{enumerate}
For integers $l$ in range $(a)$ the ERIP of order $(2^l s, 2^{l/2} \epsilon)$ is equivalent to the RIP of order $(2^l s, 2^{l/2} \epsilon)$, and so choosing $m$ as in \eqref{equ:MRIPmbound} immediately provides each of these $L$ ERIP conditions with probability at least $1 - p/(\lceil \log_2(N/s) \rceil + 1)$.  For each of the integers $l$ in range $(b)$ the assumed RIP of order $\left(2 \left\lceil 2^l s a^2 /  (2^{l/2}\epsilon)^2 \right\rceil, a^2/2 \right)$ together with an application of Lemma~\ref{lem:ERIPfromRIP} with $b \leftarrow a$ and $a \leftarrow a/2$ yields the desired result, where $x \leftarrow y$ means substitute $y$ for $x$.  Again, one can see that choosing $m$ as in \eqref{equ:MRIPmbound} therefore  provides each of these $\lceil \log_2(N/s) \rceil + 1 - L$ ERIP conditions with probability at least $1 - p/(\lceil \log_2(N/s) \rceil + 1)$.  An application of the union bound now establishes that $A$ will therefore satisfy all of the $\lceil \log_2(N/s) \rceil + 1$ required ERIP conditions with probability at least $1-p$ as claimed.
\end{proof}

Using Theorem~\ref{MRIPfromRIP} with $a = 1$ together with Corollary~\ref{Coro:SOBRIPcombined} we can now see that SOB matrices have the MRIP.

\begin{theorem}[SOB Matrices have the MRIP]
\label{thm:SOBMRIP}
There exist absolute constants $c'_0, c'_1 > 1$ such that the following holds.  Let $\epsilon, p \in (0,1)$.  Any SOB matrix $A \in \mathbbm{R}^{m \times N}$ with constant $K$ that has  
$$  m \geq  \frac{c'_0}{\epsilon^2} K^2 s\ln(eN/p) \ln^2\left( \frac{c'_1sK^2}{\epsilon^2} \right)$$
will have the MRIP of order $(s,\epsilon)$ with probability at least $1-p$.
\end{theorem}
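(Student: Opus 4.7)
The plan is to apply Theorem~\ref{MRIPfromRIP} with the choice $a = 1$, using Corollary~\ref{Coro:SOBRIPcombined} to supply the function $f_N$. Concretely, that corollary tells us that an SOB matrix satisfies the RIP of order $(s',\epsilon')$ with failure probability $p'$ whenever
$$m \;\geq\; f_N(s',\epsilon',p') \;:=\; \frac{a'_0 K^2 s'}{(\epsilon')^2}\left( \ln(eN)\,\ln^2\!\left( \tfrac{a'_1 s' K^2}{\epsilon'}\right) + \ln(e/p')\right).$$
The task is then to simplify the maximum appearing in \eqref{equ:MRIPmbound} and collect constants.

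The key observation is a scale-invariance of $f_N$ along the MRIP ladder. For the type-(b) term with $s' = 2^l s$, $\epsilon' = 2^{l/2}\epsilon$, the leading prefactor satisfies $s'/(\epsilon')^2 = s/\epsilon^2$, independent of $l$. Moreover, since $l < L \leq 2\log_2(1/\epsilon)$ forces $2^{l/2}\epsilon \leq 1$, one has $a'_1 s' K^2/\epsilon' = a'_1 2^{l/2} s K^2/\epsilon \leq a'_1 s K^2/\epsilon^2$, so the logarithmic factor $\ln^2(a'_1 s' K^2/\epsilon')$ is uniformly dominated by $\ln^2(a'_1 s K^2/\epsilon^2)$. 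The type-(a) term, evaluated at $s' = 2\lceil s/\epsilon^2\rceil$ and $\epsilon' = 1/2$, is handled identically up to harmless constants using $s' \leq 2s/\epsilon^2 + 2$ and $1/(\epsilon')^2 = 4$. Hence every summand in \eqref{equ:MRIPmbound} is bounded by a common expression of the form
$$\frac{C K^2 s}{\epsilon^2}\left(\ln(eN)\,\ln^2\!\left(\tfrac{C' s K^2}{\epsilon^2}\right) + \ln\!\left(\tfrac{e(\lceil\log_2(N/s)\rceil+1)}{p}\right)\right).$$

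Finally I would absorb the $\ln(\lceil\log_2(N/s)\rceil+1)$ term into $\ln(eN/p)$, using $\lceil\log_2(N/s)\rceil+1 \leq \log_2(eN)$, and fold the additive $\ln(eN/p)$ piece into the product $\ln(eN/p)\ln^2(c'_1 sK^2/\epsilon^2)$ at the cost of enlarging absolute constants. After relabeling $a'_0, a'_1$ as $c'_0, c'_1$, this yields the stated lower bound and the claimed $1-p$ success probability. There is no real obstacle here; the only bookkeeping issue is the uniform-in-$l$ control of $\ln^2(a'_1 s' K^2/\epsilon')$, which is what the restriction $l < L$ in Theorem~\ref{MRIPfromRIP} is tailored to provide.
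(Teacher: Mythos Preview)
Your proposal is correct and follows exactly the approach the paper indicates: apply Theorem~\ref{MRIPfromRIP} with $a=1$, feeding in Corollary~\ref{Coro:SOBRIPcombined} as $f_N$, and then simplify the maximum in \eqref{equ:MRIPmbound} using the scale-invariance $2^ls/(2^{l/2}\epsilon)^2 = s/\epsilon^2$. The only quibble is the inequality $\lceil \log_2(N/s)\rceil + 1 \leq \log_2(eN)$, which is off by a small constant (e.g.\ $\log_2(4N)$ works), but this is harmless since you are absorbing into absolute constants anyway.
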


Having defined and discussed the MRIP condition we can now state the main theorem of \cite{oymak_isometric_2018} which will ultimately allow us to construct $\epsilon$-JL maps for arbitrary infinite sets using their Gaussian width via structured matrices (including, e.g., SORS matrices).

\begin{theorem}[MRIP implies Embedding of Infinite Sets \cite{oymak_isometric_2018} 
]
\label{thm:Mahdithm} Fix $p,\epsilon \in (0,1)$.  Let $T \subset \mathbbm{R}^N$ and suppose that $E \in \mathbbm{R}^{m \times N}$ has the MRIP of order $(s, \epsilon')$ with
$$s = 200(1 + \ln(1/p)) \hspace{1cm} \mbox{ and } \hspace{1cm} \epsilon' \leq \frac{\epsilon \cdot {\rm rad}(T)}{c \cdot \max \{ {\rm rad}(T), w(T) \} } $$
where $c > 0$ is an absolute constant.  Let $D\in \mathbb{R}^{N \times N}$ be a random diagonal matrix with i.i.d Rademacher random variables on its diagonal.  Then, the matrix $A = ED$ will obey
$$\sup_{{\bf x} \in T} \left| \| A {\bf x} \|^2_2 - \| {\bf x} \|^2_2 \right| \leq \epsilon \cdot {\rm rad}^2(T)$$
with probability at least $1-p$.
\end{theorem}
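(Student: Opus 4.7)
The plan is to reduce the squared-norm deviation for $A = ED$ to a Rademacher chaos in the diagonal signs, and then dominate that chaos via a generic-chaining argument in which the MRIP at each scale replaces the scalar concentration estimate one would use if $E$ were subgaussian. Since $D^2 = I$, one has
$$\|ED{\bf x}\|_2^2 - \|{\bf x}\|_2^2 = \langle (E^{T}E - I)\,D{\bf x},\, D{\bf x}\rangle,$$
a second-order Rademacher chaos in $\epsilon_1,\ldots,\epsilon_N$. A standard decoupling inequality (de la Pe\~na--Gin\'e) lets me pass, at the cost of a universal constant, to the bilinear form $\langle (E^{T}E - I)D{\bf x}, D'{\bf x}\rangle$ with an independent sign vector $D'$, which is cleaner to bound.

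Next I would set up generic chaining on $T$. Choose an admissible sequence of nets $\{T_k\}_{k\ge 0}$ with $|T_k|\le 2^{2^k}$ so that $\sum_k 2^{k/2}\,\mathrm{diam}(T_k) \lesssim w(T)$, and write the telescoping decomposition ${\bf x} = \pi_0({\bf x}) + \sum_{k\ge 0}(\pi_{k+1}({\bf x})-\pi_k({\bf x}))$. Expanding the bilinear form produces a base term, plus diagonal terms of the form $\langle (E^{T}E-I)D\Delta_k,\,D'\Delta_k\rangle$ and off-diagonal cross terms, where $\Delta_k := \pi_{k+1}({\bf x})-\pi_k({\bf x})$ has $\ell_2$ diameter decreasing geometrically in $k$. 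The point is that the cross terms involve vectors of increasingly small norm coming from nets of increasing cardinality, which is exactly the regime where a multi-scale isometry hypothesis on $E$ is useful.

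The main obstacle, and the reason the MRIP (rather than plain RIP at one sparsity) is needed, is matching the right sparsity to the right scale in the chain. At level $k$, concentration of $\|D{\bf x}\|_\infty$ against $\|{\bf x}\|_2$ lets me truncate $D\Delta_k$ to an essentially $s_k := 2^k s$-sparse vector of controlled $\ell_2$-norm, with an excess-mass term whose contribution can be absorbed via Bernstein/Hanson--Wright; the truncation threshold is what forces $s \asymp 1 + \ln(1/p)$ so that the union bound over chain-links of cardinality up to $2^{2^k}$ still closes. Given this truncation, the MRIP assumption at order $(2^k s,\, 2^{k/2}\epsilon')$ applied to these effectively $s_k$-sparse pieces bounds each chain-link cross term by a constant times $2^{k/2}\epsilon'\,\mathrm{diam}(T_k)^2$, and similarly for the off-diagonal pieces.

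Finally, summing the resulting geometric series in $k$ gives a total deviation of order $\epsilon' \cdot \max\{\mathrm{rad}(T), w(T)\}\cdot \mathrm{rad}(T)$; the hypothesis $\epsilon' \le \epsilon \,\mathrm{rad}(T)/\bigl(c\cdot\max\{\mathrm{rad}(T), w(T)\}\bigr)$ then yields the desired $\epsilon\cdot\mathrm{rad}(T)^2$. Probability bookkeeping is routine: a union bound over the $O(\log_2(N/s))$ MRIP scales absorbed into $c$, together with the conditional tail bounds at each chain-link, delivers the success probability $1-p$. The delicate part that I expect to consume most of the work is calibrating the sparsity levels $s_k = 2^k s$ to the chain-link diameters so that the MRIP inequality is used at exactly its intended order at every scale; this calibration is essentially what allows the Gaussian width $w(T)$ (rather than a crude Dudley entropy integral) to appear in the final estimate, and reproducing it is the substance of the Oymak--Recht--Soltanolkotabi argument that underlies the stated result.
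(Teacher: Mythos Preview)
The paper does not give a proof of this theorem: it is quoted verbatim from Oymak--Recht--Soltanolkotabi \cite{oymak_isometric_2018} and used as a black box, so there is no ``paper's own proof'' to compare against. Your sketch is a faithful high-level outline of the argument in that reference --- Rademacher chaos reduction, decoupling, generic chaining on $T$, scale-matched truncation to $2^k s$-sparse vectors so that the MRIP of order $(2^k s, 2^{k/2}\epsilon')$ controls the $k$-th chain link, and summation yielding the $\max\{\mathrm{rad}(T), w(T)\}$ factor --- so in spirit you are reproducing the intended proof rather than offering an alternative.

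One point of caution: as written, your sketch is plausible but not yet a proof. The calibration you flag as ``the delicate part'' really is the substance of the cited argument; in particular, the truncation step needs a quantitative statement (e.g., that $D{\bf y}$ has at most $s_k$ coordinates exceeding $c\|{\bf y}\|_2/\sqrt{s_k}$ except with probability $2^{-2^k}$), and the cross terms between different chain levels require care beyond the diagonal terms you describe. If you intend to include a self-contained proof rather than a citation, those two pieces --- the $\ell_\infty$-truncation lemma for Rademacher-sign-randomized vectors and the bookkeeping for the off-diagonal bilinear terms across scales --- are what would need to be filled in.
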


We can now use Theorems~\ref{thm:SOBMRIP} and~\ref{thm:Mahdithm} to prove a generalized version of Corollary~\ref{coro:SORSfiniteembed} that still holds when $S$ is an infinite set.

\begin{theorem}[SORS Matrices Embed Infinite Sets] 
\label{thm:SORSforINfSets}
Let $S \subset \mathbb{R}^N$ and $\epsilon, p \in (0,1)$. Let $A = \sqrt{N/m} RU'D$ be a $m \times N$ random SORS matrix with constant $K$. Then, there exist absolute constants $c_0, c_1$ such that $A$ will be an $\epsilon$-JL map of $S$ into $\mathbbm{R}^m$ with probability at least $1-p$ provided that
\begin{align*}
m \geq \frac{c_0}{\epsilon^2} K^2 w^2(U(S)) \ln^2\left( \frac{c_1 w^2(U(S)) \ln(2/p) K^2}{\epsilon^2} \right) \ln(2/p) \ln(2eN/p).
\end{align*}
\end{theorem}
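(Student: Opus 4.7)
The plan is to combine the MRIP guarantee for SOB matrices given by Theorem~\ref{thm:SOBMRIP} with the infinite-set embedding theorem of Oymak, Recht, and Soltanolkotabi (Theorem~\ref{thm:Mahdithm}). I would factor the SORS matrix as $A = ED$, where $E := \sqrt{N/m}\, RU'$ is its SOB factor and $D$ is the random Rademacher diagonal already built into $A$. Because the conclusion ``$A$ is an $\epsilon$-JL map of $S$'' is equivalent to $\sup_{{\bf x} \in U(S)} \bigl| \|A{\bf x}\|_2^2 - 1 \bigr| \leq \epsilon$, I would set $T := U(S) \subset \mathbb{S}^{N-1}$ and note that ${\rm rad}(T) \leq 1$, so the output bound of Theorem~\ref{thm:Mahdithm} applied to $T$ reads $\sup_{{\bf x}\in T}\bigl| \|A{\bf x}\|_2^2 - \|{\bf x}\|_2^2\bigr| \leq \epsilon \cdot {\rm rad}^2(T) \leq \epsilon$, which is exactly what we want.

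Next I would read off the hypotheses Theorem~\ref{thm:Mahdithm} demands of $E$: the MRIP of order $(s,\epsilon')$ with $s = 200(1+\ln(2/p))$ and $\epsilon' \leq \epsilon/(c \cdot \max\{1,w(U(S))\})$. In the nontrivial regime $w(U(S)) \geq 1$ this is $\epsilon' \asymp \epsilon / w(U(S))$; the opposite regime can be folded into the absolute constants at the end. Invoking Theorem~\ref{thm:SOBMRIP} with these parameters and failure probability $p/2$ shows that $E$ has the required MRIP provided
$$m \geq \frac{c_0'}{(\epsilon')^2} K^2 s \, \ln(2eN/p) \, \ln^2\!\left( \frac{c_1' s K^2}{(\epsilon')^2} \right).$$
Substituting $s \asymp \ln(2/p)$ and $\epsilon' \asymp \epsilon / w(U(S))$ and collecting terms yields the sufficient lower bound on $m$ stated in the theorem.

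The final step is a union bound. Conditional on the MRIP event for $E$ (which holds with probability at least $1-p/2$), Theorem~\ref{thm:Mahdithm} applied to the independent Rademacher randomness in $D$ gives the desired uniform embedding of $T = U(S)$ with conditional probability at least $1-p/2$. Union-bounding these two events produces the embedding conclusion with overall probability at least $1-p$, and re-translating into the original scale via $\textrm{rad}(T) \leq 1$ recovers the $\epsilon$-JL map property of $A$ on $S$.

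The only real obstacle is careful bookkeeping: propagating $s \asymp \ln(2/p)$ and $\epsilon' \asymp \epsilon/w(U(S))$ through the nested logarithms in Theorem~\ref{thm:SOBMRIP}, verifying that the factor $\max\{1, w(U(S))\}$ can be absorbed into the stated $w^2(U(S))$ factor modulo adjusting the absolute constants $c_0, c_1$, and splitting the $p$-budget correctly between the MRIP event (over $R$) and the randomization-of-signs event (over $D$). No new conceptual ingredients are required beyond the two prerequisite theorems.
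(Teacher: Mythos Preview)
Your proposal is correct and follows essentially the same route as the paper: set $T=U(S)$, factor $A=ED$ with $E=\sqrt{N/m}\,RU'$, invoke Theorem~\ref{thm:SOBMRIP} with $p\leftarrow p/2$ to give $E$ the MRIP of order $(s,\epsilon')$ for $s=200(1+\ln(2/p))$ and $\epsilon'\asymp \epsilon/(1+w(T))$, then apply Theorem~\ref{thm:Mahdithm} with $p\leftarrow p/2$ and union bound. The paper's proof is only a few lines and differs from yours only cosmetically (it writes $\epsilon'=\epsilon/(c+c\,w(T))$ and carries $(1+w(T))^2$ through before simplifying to $w^2(U(S))$).
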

{\it Note:} We use the $U'$ notation for the unitary matrix in the theorem \ref{thm:SORSforINfSets} above to avoid confusion with the notation $U(S)$ for the set of unit vectors corresponding to set $S$. 
\begin{proof}
We will apply Theorem~\ref{thm:Mahdithm} with $T = U(S)$ and $p \leftarrow p/2$ noting that ${\rm rad}(T) = 1$.  Hence,  by the union bound it suffices to invoke Theorem~\ref{thm:SOBMRIP} for $\sqrt{N/m} RU' \in \mathbbm{R}^{m \times N}$ with $s = 200(1 + \ln(2/p))$ and $\epsilon' =  \epsilon / ( c+ c \cdot w(T))$ again with $p \leftarrow p/2$.  Doing so we learn that we will obtain the desired result as long as 
\begin{align*}
m &\geq  \frac{c''_0}{\epsilon^2} K^2 ( 1+ w(T))^2 (1 + \ln(2/p)) \ln(2eN/p) \ln^2\left( \frac{c''_1( 1+ w(T))^2 (1 + \ln(2/p)) K^2}{\epsilon^2} \right).
\end{align*}
Simplifying and combining absolute constants now leads to our final bound.
\end{proof}

Looking at Theorem 3.3 in \cite{oymak_isometric_2018} we can see that Theorem~\ref{thm:SORSforINfSets} improves the bound on $m$ provided there while retaining the fast $\mathcal{O}(N \log N)$-time matrix-vector multiplies provided by SORS matrices.  It is important to remember, however, that unstructured sub-gaussian matrices provide the smallest bounds on $m$ (recall Corollary~\ref{vershynin-specialized-to-sphere}) in the setting where fast matrix-vector multiplies are of secondary importance.
We now have all the tools necessary to prove that our proposed construction \eqref{equ:Ddef} can serve as an oblivious $\epsilon$-JL map for infinite sets.

\section{New Fast Embeddings for Infinite Sets}
\label{sec:FasterRIP}

This section is devoted to showing that the variant of the proposed construction \eqref{equ:Ddef} corresponding to Figure~\ref{fig:divide-conquer} can indeed embed arbitrary infinite subsets of $\mathbbm{R}^N$ into $\mathbbm{R}^m$ with $m$ near optimal.  We will do this in four steps.  First, we will establish that the proposed construction \eqref{equ:Ddef} can indeed embed finite point sets near-optimally provided that their cardinality is not too large.  As mentioned above, this result can be considered a simplification and generalization of a prior embedding result due to Ailon and Liberty \cite{ailon_fast_2009}.  Once we have the embedding result for finite point sets, we will then show that, in fact, it also means that our proposed matrices \eqref{equ:Ddef} have the RIP for sufficiently small sparsities. Next, having established the RIP we will then prove the MRIP for the proposed matrices by applying  Theorem~\ref{MRIPfromRIP}.  Finally, Theorem~\ref{thm:Mahdithm} (\cite[Theorem 3.1]{oymak_isometric_2018}) can then be used to prove the desired oblivious embedding result for arbitrary infinite sets.  We are now prepared to begin.

\subsection{The Case of Finite Point Sets}

The following lemma shows that a very general set of choices for both $A \in \mathbbm{R}^{m_1 \times m_1^2}$ and $B \in \mathbbm{R}^{m_2 \times N/m_1}$ in the proposed construction \eqref{equ:Ddef} lead to a matrix $E \in \mathbbm{R}^{m_2 \times N}$ which will embed arbitrary finite subsets of $\mathbbm{R}^N$.  Before stating the result, however, we need some additional notation that will be useful later.  Let $P_j: \mathbbm{R}^N \mapsto \mathbbm{R}^{m_1^2}$ for $j \in [N/m_1^2]_0 := \{ 0, \dots, \lceil N/m_1^2 \rceil - 1\}$ be the orthogonal projection defined by $\left(P_j({\bf x}) \right)_\ell := x_{j m_1^2 + \ell}$ for all $\ell \in [m_1^2] := \{ 1, \dots, m_1^2\}$.  For notational simplicity we will generally assume that $m_1^2$ divides $N$ below.  If not, $P_{\lceil N/m_1^2 \rceil - 1}$ can still map into $m_1^2$ by padding its output with zeros as needed.  All instances of $N/m_1$ can then also be replaced by $m_1 \lceil N/m_1^2 \rceil$ in such cases without harm.  We have the following result.

\begin{lemma}
\label{lem:SimpleEmbedwAdditiveError}
Let $\epsilon \in \left(0, 1 \right)$, $S \subset \mathbbm{R}^N$ be finite, and $A \in \mathbbm{R}^{m_1 \times m_1^2}$, $B \in \mathbbm{R}^{m_2 \times N/m_1}$, $C \in \mathbbm{R}^{N/m_1 \times N}$, and $E = BC \in \mathbbm{R}^{m_2 \times N}$ be as above in \eqref{equ:Ddef}.  Furthermore, suppose that
\begin{enumerate}
    \item[(a)] $A$ is an $\epsilon$-JL map of $P_j S$ into $\mathbbm{R}^{m_1}$ for all $j \in [N/m_1^2]_0$, and that
    \item[(b)] $B$ is an $\epsilon$-JL map of $C S$ into $\mathbbm{R}^{m_2}$.
\end{enumerate}
Then, $E$ will be a $3 \epsilon$-JL map of $S$ into $\mathbbm{R}^{m_2}$.
\end{lemma}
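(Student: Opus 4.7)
The plan is to chain the two embedding conditions together after first showing that the block-diagonal matrix $C$ inherits an $\epsilon$-JL property on all of $S$ from the componentwise hypothesis on $A$.

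First I would exploit the block-diagonal structure of $C$. For any ${\bf x} \in \mathbbm{R}^N$, the definition of $C$ in \eqref{equ:Ddef} together with the projections $P_j$ yields the identity
$$\|C{\bf x}\|_2^2 \;=\; \sum_{j \in [N/m_1^2]_0} \|A P_j {\bf x}\|_2^2, \qquad \|{\bf x}\|_2^2 \;=\; \sum_{j \in [N/m_1^2]_0} \|P_j {\bf x}\|_2^2.$$
For ${\bf x} \in S$ we have $P_j {\bf x} \in P_j S$ by definition, so hypothesis (a) applies blockwise and gives $(1-\epsilon)\|P_j {\bf x}\|_2^2 \leq \|A P_j {\bf x}\|_2^2 \leq (1+\epsilon)\|P_j {\bf x}\|_2^2$ for each $j$. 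Summing over $j$ shows that $C$ is itself an $\epsilon$-JL map of $S$ into $\mathbbm{R}^{N/m_1}$.

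Next I would compose with $B$. Since ${C {\bf x} \in CS}$ for any ${\bf x} \in S$, hypothesis (b) gives $(1-\epsilon)\|C{\bf x}\|_2^2 \leq \|B C {\bf x}\|_2^2 \leq (1+\epsilon)\|C{\bf x}\|_2^2$. Combining with the previous step yields
$$(1-\epsilon)^2 \|{\bf x}\|_2^2 \;\leq\; \|E {\bf x}\|_2^2 \;\leq\; (1+\epsilon)^2 \|{\bf x}\|_2^2$$
for every ${\bf x} \in S$.

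Finally I would convert the multiplicative distortion $(1\pm\epsilon)^2$ into an additive distortion of the required form. Since $\epsilon \in (0,1)$, we have $(1+\epsilon)^2 = 1 + 2\epsilon + \epsilon^2 \leq 1 + 3\epsilon$ and $(1-\epsilon)^2 \geq 1 - 2\epsilon \geq 1 - 3\epsilon$, which gives the claimed $3\epsilon$-JL bound on $E$. There is no real obstacle here: the only thing to be slightly careful about is that hypothesis (a) applies to $P_j S$ rather than to an arbitrary subset of $\mathbbm{R}^{m_1^2}$, so one must check that $\{P_j {\bf x} : {\bf x} \in S\}$ is exactly the set on which $A$ is assumed to act isometrically, which is immediate from the definition of $P_j S$.
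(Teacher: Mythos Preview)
Your proof is correct and follows essentially the same approach as the paper: first establish that $C$ is an $\epsilon$-JL map of $S$ by exploiting the block-diagonal structure and summing the blockwise bounds from hypothesis (a), then compose with hypothesis (b) on $B$, and finally simplify $(1\pm\epsilon)^2$ to obtain the $3\epsilon$ distortion. The paper phrases the first step via the triangle inequality on $\left|\|C{\bf x}\|_2^2 - \|{\bf x}\|_2^2\right|$ rather than as two separate inequalities, but this is a cosmetic difference only.
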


\begin{proof}
To begin we note that $C$ will be an $\epsilon$-JL map of $S$ into $\mathbb{R}^{N/m_1}$ since
\begin{align*}
\left| \| C{\bf x} \|^2_2 - \| {\bf x} \|_2^2 \right| &= \left| \sum_{j \in [N/m_1^2]_0} \| A P_j {\bf x}\|_2^2 - \| P_j {\bf x}\|_2^2 \right| \leq \sum_{j \in [N/m_1^2]_0} \left| \| A P_j {\bf x}\|_2^2 - \| P_j {\bf x}\|_2^2 \right| \nonumber \\
&\leq \epsilon \sum_{j \in [N/m_1^2]_0} \| P_j {\bf x}\|_2^2 = \epsilon \| {\bf x} \|_2^2 
\end{align*}
holds for all ${\bf x} \in S$ by assumption $(a)$ about $A$.  As a result, we can further see that $E$ will be a $3\epsilon$-JL map of $S$ into $\mathbbm{R}^{m_2}$ since
\begin{align*}
(1-2\epsilon) \| {\bf x} \|_2^2 \leq (1-\epsilon)^2 \| {\bf x} \|_2^2 \leq (1-\epsilon) \| C {\bf x} \|_2^2 &\leq \| E {\bf x}\|^2_2\\ &\leq (1+\epsilon) \| C {\bf x} \|_2^2 \leq (1+\epsilon)^2 \| {\bf x} \|_2^2 \leq (1 + 3 \epsilon) \| {\bf x} \|_2^2 \nonumber
\end{align*}
will hold for all ${\bf x} \in S$.  Here we have used assumption $(b)$ about $B$ to obtain the third and fourth inequalities just above.  
\end{proof}

We can now use Lemma~\ref{lem:SimpleEmbedwAdditiveError} to prove the promised fast $\epsilon$-JL mapping result for finite sets.

\begin{theorem}[Fast Embedding of Finite Sets by General Setup]
\label{thm:MasterEmbedd}
Let $\epsilon, p \in \left(0,1 \right)$, $S \subset \mathbbm{R}^N$ be finite, $A = \sqrt{m_1} RUD$ be an $m_1 \times m_1^2$ random SORS matrix with constant $K$, and $B \in \mathbbm{R}^{m_2 \times N/m_1}$ have i.i.d. mean zero, sub-gaussian entries.  Furthermore, suppose that $m_1, m_2 \in \mathbbm{Z}^+$ satisfy
\begin{align*}
\sqrt{N} ~\geq~ m_1 ~&\geq c_0  \frac{K^2}{\epsilon^2} \ln(c_1 N|S|/m_1^2p) \cdot \ln^2 \left(\frac{\ln(c_2 N |S|/m_1^2p)K^2}{ \epsilon} \right) \ln(4eN/p) ~~{ \rm and}\\
m_2 ~&\geq \frac{c_3 \ln(4|S|/p)}{\epsilon^2}
\end{align*}
where $c_0, c_1, c_2, c_3 \in \mathbbm{R}^+$ are absolute constants.  Then, $E = \frac{1}{\sqrt{m_2}} BC \in \mathbbm{R}^{m_2 \times N}$ as in \eqref{equ:Ddef} will be an $\epsilon$-JL map of $S$ into $\mathbbm{R}^{m_2}$ with probability at least $1-p$.  Furthermore, if $U \in \mathbbm{R}^{m^2_1 \times m_1^2}$ has an $m_1^2 \cdot f(m_1)$ time matrix-vector multiplication algorithm, then $E$
will have an $\mathcal{O}(N \cdot f(m_1))$-time matrix-vector multiply.
\end{theorem}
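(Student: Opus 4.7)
The plan is to invoke Lemma \ref{lem:SimpleEmbedwAdditiveError} with tolerance $\epsilon/3$ in place of $\epsilon$ so that the end product is an $\epsilon$-JL map of $S$. This reduces matters to two independent probabilistic tasks. First, the SORS matrix $A$ should act as an $(\epsilon/3)$-JL map of every one of the $N/m_1^2$ projected subsets $P_j S \subset \mathbbm{R}^{m_1^2}$ simultaneously; second, the rescaled sub-gaussian matrix $B/\sqrt{m_2}$ should act as an $(\epsilon/3)$-JL map of the image set $CS \subset \mathbbm{R}^{N/m_1}$, which has cardinality at most $|S|$ since $C$ is linear.

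For the first task, I would apply Corollary \ref{coro:SORSfiniteembed} separately to each $P_j S$ using tolerance $\epsilon/3$ and per-block failure probability $p' := pm_1^2/(2N)$, then take a union bound over the $N/m_1^2$ blocks to obtain simultaneous failure probability at most $p/2$. Substituting $\epsilon \to \epsilon/3$ and $p \to p'$ into that corollary's sufficient dimension bound, and using $m_1 \leq \sqrt{N}$ to replace $\ln(em_1^2)$ by $\ln(eN)$ and to absorb the additive $\ln(2e/p')$ term into a constant multiple of $\ln(4eN/p)$, yields the lower bound on $m_1$ stated in the theorem after absolute constants are renamed $c_0, c_1, c_2$. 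For the second task, Corollary \ref{vershynin-specialized-finite} applied with tolerance $\epsilon/3$ and failure probability $p/2$ gives the required embedding under the stated bound $m_2 \geq c_3 \ln(4|S|/p)/\epsilon^2$, the constant $c_3$ once more absorbing the factors of $9$ and $2$ coming from the rescalings. A final union bound across the two tasks produces overall failure probability at most $p$, and Lemma \ref{lem:SimpleEmbedwAdditiveError} then delivers the $\epsilon$-JL property for $E = (B/\sqrt{m_2})\, C$.

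The runtime claim follows directly from Lemma \ref{lem:FFTthenGaussiantime}: the rescaled SORS matrix $A = \sqrt{m_1} RUD$ inherits an $\mathcal{O}(m_1^2 \cdot f(m_1))$-time matrix-vector multiply from $U$ since $R$, $D$, and the scalar prefactor add only $\mathcal{O}(m_1^2)$ further operations, and the lemma then yields the $\mathcal{O}(N f(m_1))$ bound for $E$. The main obstacle in the argument is bookkeeping rather than conceptual: one must verify that after substituting $p \to pm_1^2/(2N)$ into the logarithmic factors of Corollary \ref{coro:SORSfiniteembed}, each can be bounded by the corresponding factor in the theorem's hypothesis using only $m_1 \leq \sqrt{N}$, $\epsilon, p \in (0,1)$, and absorption of constants; in particular, the additive combination $\ln^2(\cdot)\ln(eN) + \ln(2e/p')$ inside the corollary must be repackaged as a purely multiplicative expression of the form $\ln^2(\cdot)\,\ln(4eN/p)$ at the cost of adjusting $c_0$.
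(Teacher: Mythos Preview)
Your proposal is correct and follows essentially the same route as the paper: invoke Lemma~\ref{lem:SimpleEmbedwAdditiveError} with $\epsilon \leftarrow \epsilon/3$, establish hypothesis~(a) via Corollary~\ref{coro:SORSfiniteembed} together with a union bound over the $N/m_1^2$ blocks, establish hypothesis~(b) via Corollary~\ref{vershynin-specialized-finite}, and finish the runtime claim with Lemma~\ref{lem:FFTthenGaussiantime}. The only point you should make explicit is the hypothesis $m_1 \geq m_2$ of Lemma~\ref{lem:FFTthenGaussiantime}; as the paper notes, this is ensured simply by enlarging $c_0, c_1$ so that the lower bound on $m_1$ always dominates that on $m_2$.
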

\begin{proof}
Note the stated result follows from the union bound together with Lemma~\ref{lem:SimpleEmbedwAdditiveError} provided that its assumptions $(a)$ and $(b)$ both hold with probability at least $1 - p/2$ for $\epsilon \leftarrow \epsilon / 3$.  Hence, we seek to establish that both of these assumptions will hold with probability at least $1-p/2$ for our choices of $A$ and $B$ above. 
This can be done by applying Corollaries~\ref{coro:SORSfiniteembed} and~\ref{vershynin-specialized-finite}, respectively, utilizing the union bound and adjusting constants as necessary.  Finally, the runtime result for $E$ follows from the structure of $A$ combined with Lemma~\ref{lem:FFTthenGaussiantime} after noting that $c_0$ and $c_1$ can easily be increased, if necessary, to ensure that $m_1 \geq m_2$ always holds.
\end{proof}

Note that an application of Theorem~\ref{thm:MasterEmbedd} requires a valid choice of $m_1$ to be made.  This will effectively limit the sizes of the sets $S$ which we can embed quickly below.  In order to make the discussion of this limitation a bit easier below we can further simplify the lower bound for $m_1$ by noting that for a fixed and nonempty $S \subset \mathbbm{R}^N$ with, e.g., $N \geq 4 e$ we will have
\begin{align*}
     \ln(c_1 N|S|/m_1^2p) \cdot \ln^2 \left(\frac{\ln(c_2 N |S|/m_1^2p)K^2}{ \epsilon} \right) \ln(4eN/p) \leq c \ln( N|S|/p) \cdot  \ln^3 \left(\frac{N K^2}{\epsilon p}\right)
\end{align*}
for an absolute constant $c \in \mathbbm{R}^+$, provided that $|S| \leq p m_1^2 e^N/N$.  As a consequence, we may weaken the lower bound for $m_1$ and instead focus on the smaller interval 
\begin{align*}
  \sqrt{N} \geq m_1 \geq c' \frac{K^2}{\epsilon^2} \ln( N|S|/p) \cdot  \ln^3 \left(\frac{N K^2}{\epsilon p}\right) 
\end{align*}
for simplicity.  Further assuming that $K$ is upper bounded by a universal constant below (as it will be in all subsequent applications) we can see that our smaller range for $m_1$ will be nonempty whenever 
\begin{equation}
1 \leq |S| \leq \frac{p}{N} e^{c'' \epsilon^2 \sqrt{N}/\ln^3 \left(\frac{N}{\epsilon p}\right)} \leq p m_1^2 e^N/N
\label{equ:Srestriction}
\end{equation}
holds for another sufficiently small and absolute constant $c'' \in \mathbbm{R}^+$.  We will use \eqref{equ:Srestriction} below to limit the sizes of the sets that we embed so that Theorem~\ref{thm:MasterEmbedd} can always be applied with a valid minimal choice of $m_1 \leq c''' \frac{K^2}{\epsilon^2} \ln\left( N |S| / p \right) \ln^3 \left(\frac{N K^2}{\epsilon p}\right) \leq \sqrt{N}$ below.
The following corollary of Theorem~\ref{thm:MasterEmbedd} is based on making more explicit choices for both $A$ and $B$.

\begin{corollary}[Fast Embedding of Finite Sets by SORS and Sub-gaussian Matrices]
There exist absolute constants $c, c' \in \mathbbm{R}^+$ such that the following holds.
Let $\epsilon, p \in \left(0,1 \right)$ and $S \subset \mathbbm{R}^N$ with $N \geq 4e$ be finite with cardinality satisfying \eqref{equ:Srestriction}.  
Then, one may randomly select an $m \times N$ matrix $E$ of the form \eqref{equ:Ddef} such that $E$ will be an $\epsilon$-JL map of $S$ into $R^{m}$ with probability at least $1-p$ provided that
$$m \geq c \epsilon^{-2} \ln \left(|S|/p\right).$$
Furthermore, $E$ will always have an $\mathcal{O}(N \log N)$ run-time matrix-vector multiply, and will in fact have, e.g., an $\mathcal{O}(N \log(\log N))$-time matrix-vector multiply for all $S, p, \epsilon$ with $|S|/p \leq N^{c'}$ and $\epsilon \geq 1/N^{c'}$.
\label{Cor:FastFinite}
\end{corollary}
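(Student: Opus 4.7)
The plan is to apply Theorem~\ref{thm:MasterEmbedd} directly, making explicit choices for $A$ and $B$. Specifically, I would take $A = \sqrt{m_1}RUD \in \mathbb{R}^{m_1 \times m_1^2}$ to be a SORS matrix built from, e.g., a DCT or Hadamard unitary $U$ (which has $\max_{i,j}|u_{i,j}| \leq K/m_1$ for an absolute constant $K$, and admits an $\mathcal{O}(m_1^2 \log m_1)$-time matrix-vector multiply, i.e.\ $f(m_1) = \mathcal{O}(\log m_1)$), and take $B \in \mathbb{R}^{m_2 \times N/m_1}$ to have i.i.d.\ mean-zero, variance-one Gaussian entries.

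The first step is to choose $m_1$. The discussion immediately preceding the corollary shows that, under the hypothesis \eqref{equ:Srestriction}, the interval
$$ c''' \tfrac{K^2}{\epsilon^2}\ln(N|S|/p)\ln^3\!\left(\tfrac{N K^2}{\epsilon p}\right) \;\leq\; m_1 \;\leq\; \sqrt{N} $$
is nonempty, and one may take $m_1$ equal to the left endpoint while still satisfying the hypotheses of Theorem~\ref{thm:MasterEmbedd}. The second step is to set $m_2 = m$, noting that for a sufficiently large absolute constant $c$ the assumption $m \geq c \epsilon^{-2} \ln(|S|/p)$ exceeds the requirement $m_2 \geq c_3 \epsilon^{-2}\ln(4|S|/p)$ of Theorem~\ref{thm:MasterEmbedd}. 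Applying Theorem~\ref{thm:MasterEmbedd} then yields the claimed $\epsilon$-JL map with probability at least $1-p$.

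For the runtime, Theorem~\ref{thm:MasterEmbedd} (via Lemma~\ref{lem:FFTthenGaussiantime}) guarantees that $E$ has an $\mathcal{O}(N \cdot f(m_1)) = \mathcal{O}(N \log m_1)$-time matrix-vector multiply. Since $m_1 \leq \sqrt{N}$ always, one gets $\log m_1 \leq \tfrac{1}{2}\log N$, yielding the general $\mathcal{O}(N \log N)$ bound. Under the additional hypotheses $|S|/p \leq N^{c'}$ and $\epsilon \geq N^{-c'}$ for an appropriate (small) absolute constant $c'$, each logarithmic factor appearing in the lower bound for $m_1$ becomes $\mathcal{O}(\log N)$, so the minimal admissible $m_1$ is at most polylogarithmic in $N$. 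Consequently $\log m_1 = \mathcal{O}(\log \log N)$, giving the improved $\mathcal{O}(N \log \log N)$-time matrix-vector multiply.

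The main technical point is purely bookkeeping: one must verify that the absolute constants hidden in the lower bound for $m_1$ from Theorem~\ref{thm:MasterEmbedd} can be absorbed into the constant $c''$ appearing in \eqref{equ:Srestriction}, so that the admissible range for $m_1$ is nonempty and the chosen $m_1$ indeed grows only polylogarithmically under the stronger hypothesis. This is essentially the content of the paragraph immediately preceding the corollary, so the proof amounts to assembling these already-prepared ingredients rather than performing any new estimates.
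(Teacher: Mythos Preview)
Your proposal is correct and follows essentially the same approach as the paper's proof: both apply Theorem~\ref{thm:MasterEmbedd} with $U$ a Hadamard or DCT matrix (so $K$ is an absolute constant and $f(m_1)=\mathcal{O}(\log m_1)$) and $B$ a sub-gaussian matrix, then read off the $m_2$ lower bound and the runtime from the minimal admissible $m_1$ identified in the paragraph preceding the corollary. The only cosmetic difference is that the paper picks i.i.d.\ Rademacher entries for $B$ while you pick Gaussians, but either choice is covered by Theorem~\ref{thm:MasterEmbedd}.
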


\begin{proof}
We will let $B$ have, e.g., i.i.d. Rademacher entries and will choose $U \in \mathbbm{R}^{m_1^2 \times m_1^2}$ to be, e.g., a Hadamard or DCT matrix (see, e.g., \cite[Section 12.1]{foucart_mathematical_2013}.)  Making either choice for $U$ will endow $A$ with an $\mathcal{O}(m_1^2 \log(m_1))$-time matrix vector multiply via FFT-techniques, and will also ensure that $K = \sqrt{2}$ always suffices.  As a result, we note that  $f(m_1) = \mathcal{O}(\log(m_1))$ in Theorem~\ref{thm:MasterEmbedd}.
Combining this with the runtime guarantee of Theorem~\ref{thm:MasterEmbedd} gives the runtime bound when using the minimal choice of $m_1$.  The lower bound for $m$ results from the $m_2$ lower bound in Theorem~\ref{thm:MasterEmbedd}.
\end{proof}

Looking at Corollary~\ref{Cor:FastFinite} we can see that the resulting matrices $E$ achieve near-optimal embedding dimensions while simultaneously having $o(N \log N)$-time matrix vector multiplies for sufficiently small finite sets.  Comparing Corollary~\ref{Cor:FastFinite} to  Corollary~\ref{coro:SORSfiniteembed} we can see that our proposed matrices of the form \eqref{equ:Ddef} have matrix-vector multiplies which are always at least as fast as SORS matrices while simultaneously improving on their current best embedding dimension, $m$, bounds by a multiplicative factor of size roughly $\Theta \left(
\ln^2 \left(\frac{\ln(c |S|/p)}{ \epsilon} \right) \ln(eN) + \ln(2e/p) \right)$. Of course, it must also be remembered that Corollary~\ref{Cor:FastFinite} only applies to finite sets $S$ whose cardinality satisfies \eqref{equ:Srestriction} whereas Corollary~\ref{coro:SORSfiniteembed} applies more generally to larger sets.

\subsection{New Fast Oblivious Subspace Embeddings and RIP Matrices}

Let $S \subset \mathbbm{R}^N$ be a $d$-dimensional subspace.  The following fact will be useful.
\begin{lemma}[See, e.g., Corollary 4.2.13 in \cite{vershynin_high-dimensional_2018}]
\label{lem:SphereCover}
Let $S^{d-1} \subset S \subset \mathbbm{R}^N$ be the $d-1$-dimensional unit Euclidean sphere in $S$.  Then 
$\mathcal{N}(S^{d-1}, \delta) \leq \left( \frac{3}{\delta} \right)^d$ for all $\delta > 0$.
\end{lemma}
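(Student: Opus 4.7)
The plan is to use the standard volumetric packing argument carried out inside the $d$-dimensional subspace $S$. First I would replace the cover by a maximal $\delta$-separated subset $P \subset S^{d-1}$, i.e.\ a set with the property that $\|{\bf x} - {\bf y}\|_2 > \delta$ for all distinct ${\bf x}, {\bf y} \in P$. Maximality forces $P$ to be a $\delta$-cover of $S^{d-1}$ (otherwise any uncovered point could be adjoined to $P$, contradicting maximality), so $\mathcal{N}(S^{d-1},\delta) \leq |P|$.

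Next I would estimate $|P|$ by a volume comparison carried out in the ambient $d$-dimensional subspace $S$. Let $B \subset S$ denote the closed unit Euclidean ball in $S$. The open Euclidean balls of radius $\delta/2$ (taken inside $S$) centered at the distinct points of $P$ are pairwise disjoint by the separation property, and each is contained in $(1+\delta/2) B$ because every center lies in $S^{d-1} \subset B$. Comparing $d$-dimensional volumes inside $S$ yields
\begin{equation*}
|P| \cdot \left(\frac{\delta}{2}\right)^d \operatorname{vol}_d(B) \;\leq\; \left(1+\frac{\delta}{2}\right)^d \operatorname{vol}_d(B),
\end{equation*}
so that $|P| \leq \left(1 + 2/\delta\right)^d$.

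Finally I would convert this to the stated form. For $\delta \in (0,1]$ we have $1 + 2/\delta \leq 3/\delta$, giving $|P| \leq (3/\delta)^d$ directly. For $\delta \in (1, 2]$ the same inequality $1 + 2/\delta \leq 3/\delta$ fails, so I would instead note that $1 + 2/\delta < 3$ and $(3/\delta)^d \geq 1$ for $\delta \leq 3$, and in fact one can cover $S^{d-1}$ by a single point once $\delta \geq 2$; for $\delta \in (1,2)$ one uses $|P| \leq (1+2/\delta)^d \leq 3^d \leq (3/\delta)^d$ directly from the computation above. For $\delta > 2$ a single point suffices and $(3/\delta)^d \geq 1$ is only needed when $\delta \leq 3$, while the trivial bound $\mathcal{N}(S^{d-1},\delta) = 1$ handles the remaining case. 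There is no real obstacle here; the only mild bookkeeping is checking the different ranges of $\delta$ so that the clean bound $(3/\delta)^d$ holds uniformly, and this is exactly how \cite[Corollary 4.2.13]{vershynin_high-dimensional_2018} is proved, so the argument can be quoted rather than re-derived in detail.
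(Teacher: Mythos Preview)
Your volumetric packing argument is exactly the standard proof (and is what appears in the cited reference, which is all the paper does here---it gives no proof of its own). For $\delta \in (0,1]$ your argument is complete and correct, and this is the only range the paper ever invokes the lemma in (e.g.\ $\delta = \epsilon/16$ in Theorem~\ref{thm:FastSubspace2} and $\delta = \epsilon/(c_1 N)$ in the appendix).

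Your bookkeeping for $\delta > 1$, however, is flawed. In the chain $|P| \leq (1+2/\delta)^d \leq 3^d \leq (3/\delta)^d$ the last inequality is backwards: for $\delta > 1$ one has $3/\delta < 3$, so $(3/\delta)^d < 3^d$. In fact the clean bound $(3/\delta)^d$ cannot hold for all $\delta > 0$ as literally asserted: for $\delta > 3$ it gives a number less than $1$, yet $\mathcal{N}(S^{d-1},\delta) \geq 1$ always; and for $d=1$, $\delta \in (3/2,2)$ one has $\mathcal{N}(S^0,\delta) = 2 > 3/\delta$. What the volumetric argument actually yields is $(1+2/\delta)^d$, and the simplification to $(3/\delta)^d$ is valid precisely when $\delta \leq 1$. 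So the right fix is to restrict the statement to $\delta \in (0,1]$ (as Vershynin effectively does), which is harmless for every downstream use in the paper.
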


We are now prepared to apply Corollary~\ref{Cor:FastFinite} in order to produce an oblivious $\epsilon$-JL map of $S$ into $\mathbbm{R}^m$ with $m$ near-optimal. 

\begin{theorem}[Fast Oblivious Subspace Embedding]
\label{thm:FastSubspace2}
There exist absolute constants $c, c' \in \mathbbm{R}^+$ such that following holds for $d$-dimensional subspaces of $\mathbbm{R}^N$.
Let $\epsilon, p \in (0,1)$ and $S \subset \mathbbm{R}^N$ with $N \geq 50$ be a $d$-dimensional subspace with $d \leq c \epsilon^{2} \sqrt{N} / \ln^4(N/\epsilon p)  - 1$.  Furthermore, suppose that $m \in \mathbbm{Z}^+$ satisfies
$$m \geq c' d \epsilon^{-2} \ln \left(1/ \epsilon \sqrt[d]{p}\right).$$
Then, one may randomly select an $m \times N$ matrix $E$ of the form in \eqref{equ:Ddef} such that $E$ will be an $\epsilon$-JL embedding of $S$ into $\mathbbm{R}^{m}$ with probability at least $1-p$.  Furthermore, $E$ will always have an $\mathcal{O}\left( N \cdot \left( \log (d/\epsilon^2) +  \log \log \left( N / \epsilon p \right) \right) \right)$-time matrix-vector multiply.
\end{theorem}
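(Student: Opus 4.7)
Since $S$ is a linear subspace, $S-S=S$, so an $\epsilon$-JL embedding of $S$ is the same as an $\epsilon$-JL map of $S$, and by homogeneity of the $\ell_2$-norm it suffices to establish the JL inequality on the unit sphere $S^{d-1}:=\{\mathbf{x}\in S:\|\mathbf{x}\|_2=1\}$. The approach is the standard net reduction: cover $S^{d-1}$ by a small finite net, apply Corollary~\ref{Cor:FastFinite} to the net, then pass to the full sphere via the usual boosting argument. Fix an absolute $\delta\in(0,1/2)$ (e.g.\ $\delta=1/8$) and invoke Lemma~\ref{lem:SphereCover} to produce a $\delta$-net $\mathcal{N}\subset S^{d-1}$ with $|\mathcal{N}|\leq(3/\delta)^d$. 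Apply Corollary~\ref{Cor:FastFinite} to $\mathcal{N}$ with accuracy parameter $\epsilon'=\epsilon/C$ (where $C$ is an absolute constant to be tuned by the boosting step) and failure probability $p$. For this to be a legitimate application, $|\mathcal{N}|$ must satisfy the cardinality constraint \eqref{equ:Srestriction}; upon substituting $|\mathcal{N}|\leq(3/\delta)^d$ and taking logarithms this reduces to requiring $d\ln(3/\delta)+\ln(N/p)\leq c''\epsilon^2\sqrt{N}/\ln^3(N/\epsilon p)$ for an absolute $c''$, which is precisely what the hypothesis $d\leq c\epsilon^2\sqrt{N}/\ln^4(N/\epsilon p)-1$ ensures once $c$ is chosen small enough: the extra logarithmic factor (moving from $\ln^3$ to $\ln^4$) together with the subtracted $1$ is the slack needed to absorb the additive $\ln(N/p)$ term and the rescaling $\epsilon\mapsto\epsilon/C$.

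With the cardinality hypothesis verified, Corollary~\ref{Cor:FastFinite} produces, with probability at least $1-p$, an $m\times N$ matrix $E$ of the form \eqref{equ:Ddef} that is an $\epsilon'$-JL map of $\mathcal{N}$ whenever $m\gtrsim(\epsilon')^{-2}\ln(|\mathcal{N}|/p)\asymp\epsilon^{-2}(d+\ln(1/p))$, which is in turn implied by the stated lower bound on $m$ (after absorbing constants and using $\ln(1/\epsilon\sqrt[d]{p})=\ln(1/\epsilon)+d^{-1}\ln(1/p)$). To boost the $\epsilon'$-JL map on $\mathcal{N}$ to an $\epsilon$-JL map on $S^{d-1}$, I run the standard approximation argument: setting $M:=\sup_{\mathbf{y}\in S^{d-1}}\bigl|\|E\mathbf{y}\|_2^2-1\bigr|$, any $\mathbf{y}\in S^{d-1}$ admits the decomposition $\mathbf{y}=\mathbf{x}+\delta\mathbf{z}$ with $\mathbf{x}\in\mathcal{N}$ and $\mathbf{z}\in S$, $\|\mathbf{z}\|_2\leq 1$; since $\mathbf{z}/\|\mathbf{z}\|_2\in S^{d-1}$, expanding $\|E\mathbf{y}\|_2^2$ and using the JL bound on $\mathbf{x}$ yields a functional inequality of the form $M\leq 2\epsilon'+C'\delta\sqrt{M(1+M)}$, which for $\delta$ a fixed small absolute constant and $\epsilon'=\epsilon/C$ forces $M\leq\epsilon$.

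Finally, the runtime claim follows from Lemma~\ref{lem:FFTthenGaussiantime} together with the runtime assertion of Theorem~\ref{thm:MasterEmbedd}: the minimal admissible inner dimension is $m_1\asymp\epsilon^{-2}\ln(N|\mathcal{N}|/p)\cdot\mathrm{polylog}(N/\epsilon p)$, so after substituting $|\mathcal{N}|\leq(3/\delta)^d$ the dominant terms in $\log m_1$ become $\mathcal{O}(\log(d/\epsilon^2)+\log\log(N/\epsilon p))$, matching the advertised $\mathcal{O}(N\cdot(\log(d/\epsilon^2)+\log\log(N/\epsilon p)))$-time matrix-vector multiply. The main obstacle in the whole argument is the bookkeeping in the first paragraph: namely, confirming that the cardinality restriction \eqref{equ:Srestriction} of Corollary~\ref{Cor:FastFinite} is genuinely implied by the hypothesis on $d$, with the correct constants surviving all of the logarithmic absorption. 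Once that is in place, the sphere-to-net boosting and the runtime tracking are entirely routine given the tools assembled earlier in the paper.
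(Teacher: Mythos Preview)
Your overall plan---reduce to the unit sphere $S^{d-1}$ of the subspace, take a finite net via Lemma~\ref{lem:SphereCover}, apply Corollary~\ref{Cor:FastFinite} to the net, then boost to all of $S^{d-1}$---is exactly the paper's strategy. The paper carries out the boosting by citing an external lemma (Lemma~3 of \cite{doi:10.1137/19M1308116}) and takes the net radius to be $\epsilon/16$, so the net has cardinality at most $(48/\epsilon)^d$; this is precisely where the $d\ln(1/\epsilon)$ contribution in the stated lower bound on $m$ originates.

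Your boosting step with a \emph{fixed} net radius $\delta$, however, has a genuine gap. The functional inequality you write, $M\le 2\epsilon'+C'\delta\sqrt{M(1+M)}$, does not follow from a direct expansion of $\|E\mathbf{y}\|_2^2$: bounding the cross term $2\delta\langle E\mathbf{x},E\mathbf{z}\rangle$ via Cauchy--Schwarz yields something of the shape $M\le \epsilon'+C\delta\sqrt{(1+\epsilon')(1+M)}+\delta^2(1+M)$, with no factor of $\sqrt{M}$. More importantly, even your stated inequality would \emph{not} force $M\le\epsilon$ for an absolute $\delta$: solving $M\le 2\epsilon'+C'\delta\sqrt{M(1+M)}$ gives $M\lesssim\max(\epsilon',\delta^2)$, so for $\epsilon\ll\delta^2$ you are stuck at $M\approx\delta^2$ no matter how small $\epsilon'$ is. The naive expansion likewise only delivers $M\lesssim\epsilon'+\delta$.

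There are two clean fixes. First, follow the paper and take $\delta=c\epsilon$; then the $\delta$-terms are already $O(\epsilon)$ and the boosting is immediate (this also makes the $m$ bound line up directly with the stated $c'd\epsilon^{-2}\ln(1/\epsilon\sqrt[d]{p})$). Second, if you want to keep $\delta$ fixed, use polarization: writing
\[
\|E\mathbf{y}\|_2^2-\|\mathbf{y}\|_2^2=(\|E\mathbf{x}\|_2^2-1)+2\delta\bigl(\langle E\mathbf{x},E\mathbf{z}\rangle-\langle\mathbf{x},\mathbf{z}\rangle\bigr)+\delta^2\bigl(\|E\mathbf{z}\|_2^2-\|\mathbf{z}\|_2^2\bigr)
\]
and bounding the middle bracket by $\tfrac14\sum_{\pm}\bigl|\|E(\mathbf{x}\pm\mathbf{z})\|_2^2-\|\mathbf{x}\pm\mathbf{z}\|_2^2\bigr|\le 2M$ gives the \emph{linear} self-referential inequality $M\le\epsilon'+(4\delta+\delta^2)M$, which for $\delta=1/8$ yields $M\le 64\epsilon'/31$. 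Either route repairs the argument; the rest of your plan (verifying \eqref{equ:Srestriction} and the runtime accounting) is correct.
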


\begin{proof}
Note that, e.g., Lemma~3 in \cite{doi:10.1137/19M1308116} implies the desired embedding result if $E$ embeds an $\epsilon/16$-net of $S^{d-1} =$ the $d-1$-dimensional unit Euclidean sphere in $S$.  Applying Corollary~\ref{Cor:FastFinite} to such a minimal net whose size is bounded by Lemma~\ref{lem:SphereCover} then finishes the proof.
\end{proof}

With Theorem~\ref{thm:FastSubspace2} in hand we can now easily consider RIP matrices of order $(s,\epsilon)$ of the form in \eqref{equ:Ddef}.  The approach proposed in \cite{baraniuk2008simple}, for example, would be to simply apply Theorem~\ref{thm:FastSubspace2} to all $N \choose s$ subspaces of $\mathbbm{R}^N$ spanned by $s$ canonical basis vectors, and then to use the union bound.  The following bound on $N \choose s$ is useful for such a strategy.

\begin{lemma} \cite[Lemma C.5]{foucart_mathematical_2013}
For integers $N \geq s > 0$,
$$\left( \frac{N}{s} \right)^s \leq {N \choose s} \leq \left( \frac{eN}{s} \right)^s.$$
\label{lem:nchoosebounds}
\end{lemma}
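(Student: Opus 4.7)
The plan is to prove both inequalities directly from the product representation
$$\binom{N}{s} \;=\; \frac{N!}{s!\,(N-s)!} \;=\; \prod_{i=0}^{s-1} \frac{N-i}{s-i}.$$
This representation handles both bounds simultaneously once one observes, for the lower estimate, that each factor in the product is bounded below by $N/s$, and for the upper estimate, that the numerator is bounded above by $N^s$ while the denominator is bounded below by $(s/e)^s$.

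For the lower bound I would verify that $\frac{N-i}{s-i} \geq \frac{N}{s}$ for every $0 \leq i \leq s-1$. Cross-multiplying (both denominators are positive since $i < s$), this is equivalent to $s(N-i) \geq N(s-i)$, which simplifies to $i(N-s) \geq 0$. This holds trivially because $N \geq s$ and $i \geq 0$. Taking the product over $i = 0, 1, \dots, s-1$ immediately gives $\binom{N}{s} \geq (N/s)^s$.

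For the upper bound I would first show $s! \geq (s/e)^s$. The cleanest route is to exploit the Taylor series of the exponential,
$$e^s \;=\; \sum_{k=0}^{\infty} \frac{s^k}{k!} \;\geq\; \frac{s^s}{s!},$$
which rearranges to $s! \geq (s/e)^s$. Combining with the obvious bound $\frac{N!}{(N-s)!} = N(N-1)\cdots(N-s+1) \leq N^s$, one obtains $\binom{N}{s} \leq \frac{N^s}{s!} \leq \frac{N^s}{(s/e)^s} = (eN/s)^s$, which is the desired inequality.

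Because both steps are routine algebraic manipulations, there is no real obstacle here; the only mild subtlety is recognizing that the bound $s! \geq (s/e)^s$ is most efficiently derived from a single term of the exponential series rather than invoking the full Stirling approximation, which would introduce unnecessary constants.
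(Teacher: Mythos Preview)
Your proof is correct and entirely standard. Note that the paper does not actually supply a proof of this lemma; it simply cites \cite[Lemma C.5]{foucart_mathematical_2013} and states the result. Your argument---bounding each factor $\frac{N-i}{s-i}\geq \frac{N}{s}$ for the lower estimate and using $s!\geq (s/e)^s$ via a single term of the exponential series for the upper estimate---is exactly the usual elementary derivation (and in fact matches the proof given in the cited reference).
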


Pursuing the simple strategy above yields the following RIP result which we will not use going forward due to its highly strict requirements on the size of $s$.  Nonetheless, we state it here for the purposes of comparison.

\begin{theorem}
\label{thm:fasterRIPbad}
There exist absolute constants $c, c', c'' \in \mathbbm{R}^+$ such that following holds for $N \geq 50$.  Let $\epsilon, p \in (0,1)$   and $s^5 \leq c \epsilon^{2} \sqrt{N} / \ln^4(c' N/s \epsilon p)$.\footnote{The condition on $s$ here is highly pessimistic.  See Remark~\ref{Rem:sparsityCond} for additional discussion about other admissible upper bounds which scale better in $s$.}  Furthermore, suppose that $m \in \mathbbm{Z}^+$ satisfies
$$m \geq c'' s \epsilon^{-2} \ln \left(e N/ s \epsilon \sqrt[s]{p}\right).$$
Then, one may randomly select an $m \times N$ matrix $E$ of the form in \eqref{equ:Ddef} such that $E$ will have the RIP of order $(s,\epsilon)$ with probability at least $1-p$.  Furthermore, $E$ will always have an\\ $\mathcal{O}\left( N \cdot \left( \log (s/\epsilon^2) +  \log \log \left( N / \epsilon p \right) \right) \right)$-time matrix-vector multiply.
\end{theorem}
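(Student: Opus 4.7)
The plan is to apply Theorem~\ref{thm:FastSubspace2} to each of the $\binom{N}{s}$ coordinate subspaces of $\mathbbm{R}^N$ spanned by some choice of $s$ standard basis vectors, then to union bound over these subspaces.  This is the classical Baraniuk--Davenport--DeVore--Wakin reduction \cite{baraniuk2008simple}: every at-most-$s$-sparse vector lies in one of these coordinate subspaces, so if $E$ is an $\epsilon$-JL embedding of all of them simultaneously then $E$ must satisfy the RIP of order $(s,\epsilon)$.

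Concretely, I would invoke Theorem~\ref{thm:FastSubspace2} with the substitutions $d \leftarrow s$ and $p \leftarrow p/\binom{N}{s}$ for a single fixed coordinate subspace, so that a union bound over the $\binom{N}{s}$ choices yields a total failure probability of at most $p$. To translate the hypotheses into the stated form I would use the bound $\binom{N}{s} \leq (eN/s)^s$ from Lemma~\ref{lem:nchoosebounds} to estimate
\[
\ln\!\left(\tfrac{N\binom{N}{s}}{\epsilon p}\right) \;\leq\; \ln\!\left(\tfrac{N}{\epsilon p}\right) + s\ln\!\left(\tfrac{eN}{s}\right) \;\leq\; C\,s\,\ln\!\left(\tfrac{c' N}{s\,\epsilon p}\right),
\]
and
\[
\frac{1}{\epsilon\,(p/\binom{N}{s})^{1/s}} \;\leq\; \frac{eN}{s\,\epsilon\,\sqrt[s]{p}}.
\]
The second estimate turns the factor $\ln(1/(\epsilon\sqrt[d]{p}))$ appearing in Theorem~\ref{thm:FastSubspace2} into $\ln(eN/(s\epsilon\sqrt[s]{p}))$, producing the stated lower bound $m \geq c''\, s\,\epsilon^{-2}\,\ln(eN/(s\epsilon\sqrt[s]{p}))$.

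The pessimistic $s^5$ restriction emerges directly from the subspace-dimension constraint in Theorem~\ref{thm:FastSubspace2}:  substituting $d = s$ and the first logarithm estimate above into the hypothesis $d \leq c\,\epsilon^{2}\sqrt{N}/\ln^4(N/\epsilon p')$ produces a condition of the shape $s\cdot s^4\,\ln^4(c' N/(s\epsilon p)) \leq c\,\epsilon^2\sqrt{N}$, i.e.\ $s^5 \leq c\,\epsilon^2\sqrt{N}/\ln^4(c'N/(s\epsilon p))$, which is exactly the advertised sparsity constraint.  The runtime bound is inherited verbatim from Theorem~\ref{thm:FastSubspace2}, because $\log\log(N\binom{N}{s}/\epsilon p)$ differs from $\log\log(N/\epsilon p)$ by at most an additive $\mathcal{O}(\log s)$ term, and $\log s \leq \log(s/\epsilon^2)$ is already absorbed in the stated multiply cost.

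The main obstacle I anticipate is purely bookkeeping: making sure that when the extra $\binom{N}{s}$ factor is absorbed both into the sparsity constraint and into the $m$ lower bound, the substitutions do not pick up stray extra powers of $s$ in the wrong place.  Once Lemma~\ref{lem:nchoosebounds} has been applied cleanly and absolute constants are regrouped, the proof is essentially a one-line union bound over Theorem~\ref{thm:FastSubspace2}, which is why the statement is flagged as included merely for comparison.
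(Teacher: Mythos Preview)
Your proposal is correct and follows exactly the approach the paper takes: the paper explicitly describes this theorem as the outcome of ``simply apply[ing] Theorem~\ref{thm:FastSubspace2} to all $\binom{N}{s}$ subspaces of $\mathbbm{R}^N$ spanned by $s$ canonical basis vectors, and then [using] the union bound,'' and your bookkeeping with Lemma~\ref{lem:nchoosebounds} to push the $\binom{N}{s}$ factor through the dimension constraint, the $m$ lower bound, and the runtime is precisely what is needed to land on the stated hypotheses.
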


\begin{remark}
\label{Rem:sparsityCond}
Note that Theorem~\ref{thm:fasterRIPbad} requires $s^5 \leq c \epsilon^{2} \sqrt{N} / \ln^4(c' N/s \epsilon p)$ to hold.  In fact, simply being more careful about the $p$ dependence in the derivation of \eqref{equ:Srestriction} can improve the exponent $5$ on $s$ above, even within this simple proof framework.  However, achieving linear scaling on $s$ appears to require us to use a different argument that avoids aggressive use of the union bound at this stage.
\end{remark}

The following alternate and improved RIP result achieves better scaling on the allowable size of $s$.  It is proven using a covering argument over all unit length $s$-sparse vectors as opposed to the simpler approach outlined above.  Effectively this alternate argument allows us to scale $|S|$ in an expression analogous to \eqref{equ:Srestriction} by a factor of $N \choose s$ while leaving $p$ fixed, instead of forcing us to apply \eqref{equ:Srestriction} with $p \rightarrow p/{N \choose s}$.

\begin{theorem}[Fast RIP Matrices]
\label{thm:fasterRIP}
There exist absolute constants $c, c', c'' \in \mathbbm{R}^+$ such that following holds for $N \geq 50$.  Let $\epsilon \in \left(0, \frac{1}{3} \right)$, and $p \in \left(e^{-N}, \frac{1}{3} \right)$, and $s \leq c \epsilon^{2} \sqrt{N} / \ln^5(c' N/ \epsilon p)$.  Furthermore, suppose that $m \in \mathbbm{Z}^+$ satisfies
$$m \geq c'' s \epsilon^{-2} \ln \left( N/ \epsilon \sqrt[s]{p}\right).$$
Then, one may randomly select an $m \times N$ matrix $E$ of the form in \eqref{equ:Ddef} such that $E$ will have the RIP of order $(s,\epsilon)$ with probability at least $1-p$.  Furthermore, $E$ will always have an\\ $\mathcal{O}\left( N \cdot \left( \log (s/\epsilon^2) +  \log \log \left( N / \epsilon p \right) \right) \right)$-time matrix-vector multiply.
\end{theorem}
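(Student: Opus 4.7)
My plan is to construct a single finite $\delta$-net of the union $\Sigma_s^{N-1} := \{ {\bf x} \in \mathbbm{R}^N : \|{\bf x}\|_2 = 1,\ |\mathrm{supp}({\bf x})| \leq s \}$ of all unit $s$-sparse vectors, apply Corollary~\ref{Cor:FastFinite} to this single net in one shot, and then lift to an RIP guarantee on $\Sigma_s^{N-1}$ via a standard net-to-sphere argument applied independently to each $s$-dimensional coordinate subspace. This is precisely the alternative suggested in Remark~\ref{Rem:sparsityCond}: it avoids the subspace-by-subspace union bound used in Theorem~\ref{thm:fasterRIPbad}, which forced the failure probability to shrink from $p$ to $p/\binom{N}{s}$ and thereby inflated $\ln(1/p)$ by a factor of $s\ln(eN/s)$ inside the hypothesis \eqref{equ:Srestriction} of Corollary~\ref{Cor:FastFinite}. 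Concretely, I would fix $\delta := \epsilon/C$ for a sufficiently large absolute constant $C$ and let $\mathcal{N}_\delta$ be the union, over the $\binom{N}{s}$ size-$s$ supports $T \subset [N]$, of individual $\delta$-nets of the unit spheres $S^{s-1}_T \subset V_T := \mathrm{span}\{ {\bf e}_i : i \in T\}$. Lemmas~\ref{lem:SphereCover} and~\ref{lem:nchoosebounds} then yield
\[
|\mathcal{N}_\delta| \;\leq\; \binom{N}{s}\left(\tfrac{3}{\delta}\right)^s \;\leq\; \left(\tfrac{3eNC}{s\epsilon}\right)^s.
\]

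Next I would apply Corollary~\ref{Cor:FastFinite} to $\mathcal{N}_\delta$ with distortion $\epsilon/(2C)$ and failure probability $p$. Writing $L := \ln(c_0 N/\epsilon p)$ for a suitable absolute $c_0$, the cardinality hypothesis \eqref{equ:Srestriction} of that corollary reduces to the requirement $s\ln(3eNC/(s\epsilon)) + \ln(N/p) \lesssim \epsilon^2\sqrt{N}/L^3$, whose left-hand side is bounded by $O(sL)$ for $s \geq 1$. This is therefore implied, with room to spare in the absolute constants, by the theorem's hypothesis $s \leq c\epsilon^2\sqrt{N}/L^5$ (which also automatically forces $s \leq \sqrt{N}$, so that the block construction \eqref{equ:Ddef} is dimensionally valid and the intermediate dimension $m_1$ may be chosen inside the required range). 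Corollary~\ref{Cor:FastFinite} then delivers $m \gtrsim \epsilon^{-2}\ln(|\mathcal{N}_\delta|/p) \leq c'' s\epsilon^{-2}\ln(N/\epsilon\sqrt[s]{p})$, matching the stated embedding dimension, and the $\mathcal{O}(N(\log(s/\epsilon^2)+\log\log(N/\epsilon p)))$ matrix-vector runtime comes from the corresponding optimal choice $m_1 \lesssim sL^4/\epsilon^2$ inside Theorem~\ref{thm:MasterEmbedd}, for which $\log m_1 \lesssim \log(s/\epsilon^2) + \log\log(N/\epsilon p)$.

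To finish, I would promote the JL guarantee on $\mathcal{N}_\delta$ to the RIP on $\Sigma_s^{N-1}$ using the standard net-to-sphere lifting argument (for instance, the Lemma~3 of \cite{doi:10.1137/19M1308116} already invoked in the proof of Theorem~\ref{thm:FastSubspace2}), applied independently to each coordinate subspace $V_T$. Because $\mathcal{N}_\delta$ contains a $\delta$-net of each sphere $S^{s-1}_T$ and $E$ is an $\epsilon/(2C)$-JL map of $\mathcal{N}_\delta$ in its entirety, choosing $C$ sufficiently large simultaneously yields an $\epsilon$-JL guarantee on every $V_T$'s unit sphere, which is exactly the RIP of order $(s,\epsilon)$.

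The main obstacle, and the place that deserves the most care, is the bookkeeping: tracking logarithmic factors precisely enough to certify \eqref{equ:Srestriction} under the stated $\ln^5$ bound on $s$, and allocating the $\epsilon$ distortion budget correctly between the JL step on $\mathcal{N}_\delta$ and the lifting step. The assumption $p \geq e^{-N}$ plays only a supporting role: it keeps $|\ln p| \leq N$ so that the $\ln(N/p)$ summand in \eqref{equ:Srestriction} stays subleading to the exponential target $\epsilon^2\sqrt{N}/L^3$; apart from this the proof reduces to a direct invocation of Corollary~\ref{Cor:FastFinite}.
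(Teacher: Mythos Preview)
Your approach is correct and in fact somewhat more direct than the paper's. The paper proves Theorem~\ref{thm:fasterRIP} as a corollary of a general embedding result (Theorem~\ref{thm:MasterEmbedd2} in Appendix~\ref{sec:secproofofRIPres}) that works for \emph{arbitrary} $S\subset U(\mathbbm{R}^N)$. Because a generic $S$ has no subspace structure, the lifting step there (Lemma~\ref{lem:SimpleEmbedwAdditiveError2}) cannot bootstrap: for ${\bf y}\in S$ and a nearest net point ${\bf x}$, the difference ${\bf y}-{\bf x}$ need not be a rescaling of anything in $S$, so the paper instead bounds $\|E({\bf y}-{\bf x})\|_2\le a_E\,\delta$ with $a_E:=\max\{\sup_{{\bf z}\in U(S-S)}\|E{\bf z}\|_2,1\}$. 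This forces $\delta\le\epsilon/a_E$, and controlling $a_E$ requires the operator-norm estimate $\|A\|\le m_1$ (Lemmas~\ref{lem:Bchoice} and~\ref{lem:Achoice}), which is why the net in Theorem~\ref{thm:MasterEmbedd2} has scale $\epsilon/(c_1N)$ and why the restriction \eqref{equ:Srestriction2} carries an extra log factor over \eqref{equ:Srestriction}, ultimately producing the $\ln^5$ in the sparsity bound.

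You instead exploit the special structure of $\Sigma_s^{N-1}$: each coordinate subspace $V_T$ is closed under subtraction, so the standard self-referential bound $a_T\le\sqrt{1+\epsilon'}/(1-\delta)$ applies subspace-by-subspace with a net of scale $\delta=\epsilon/C$ for an \emph{absolute} constant $C$. This sidesteps the operator-norm detour entirely, lets you invoke Corollary~\ref{Cor:FastFinite} once on the union net with the unmodified failure probability $p$, and actually yields the marginally stronger constraint $s\lesssim\epsilon^2\sqrt{N}/\ln^4(cN/\epsilon p)$, which of course is implied by the theorem's $\ln^5$ hypothesis. What the paper's route buys is a reusable infinite-set embedding theorem; what yours buys is a shorter and slightly sharper proof of this particular statement.
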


\begin{proof}
See Appendix~\ref{sec:secproofofRIPres}.
\end{proof}

Comparing Theorem~\ref{thm:fasterRIP} to Theorem~\ref{thm:fasterRIPbad}, we can see that Theorem~\ref{thm:fasterRIP} applies to a much larger range of sparsities $s$.  Nonetheless, both theorems achieve a near-optimal scaling of the embedding dimension $m$ and have fast matrix-vector multiplies.  Comparing Theorem~\ref{thm:fasterRIP} to Corollary~\ref{Coro:SOBRIPcombined} we can see that our proposed matrices of the form \eqref{equ:Ddef} have matrix-vector multiplies which are always at least as fast as SOB RIP matrices while simultaneously improving on the current best bounds for their embedding dimension, $m$, by a multiplicative factor of size  roughly $\Theta \left(
\ln^2 s \right)$, having fixed $\epsilon$ and $p$. 
Of course, it must also be remembered that Theorem~\ref{thm:fasterRIP} applies to a smaller range of sparsities than Corollary~\ref{Coro:SOBRIPcombined} does.
We are now equipped with the tools necessary to prove our main oblivious embedding result for infinite sets.

\subsection{Fast Embeddings of Infinite Sets with Small Gaussian Width}

Having proven the RIP for matrices of the form \eqref{equ:Ddef} we can now establish the MRIP for such matrices using Theorem~\ref{MRIPfromRIP} with $a = 1/3$.  Doing so while carefully considering the domain of the function $f_N$ corresponding to Theorem~\ref{thm:fasterRIP} produces the following result.  As usual, the absolute constants have been adjusted and simplified as needed.

\begin{theorem}[Fast MRIP Matrices]
\label{thm:fasterMRIP}
There exist absolute constants $c_1, c_2, c_3 \in \mathbbm{R}^+$ such that following holds for $N \geq 50$.  Let $\epsilon \in \left(0, 1 \right)$, and $p \in \left(e^{-N}\left(\lceil \log_2(N/s) \rceil + 1\right), \frac{1}{3} \right)$, and $s \leq c_1 \epsilon^{2} \sqrt{N} / \ln^5(c_2 N/ \epsilon p)$.  Furthermore, suppose that $m \in \mathbbm{Z}^+$ satisfies
$$m \geq c_3 s \epsilon^{-2} \ln \left( N/ \epsilon \sqrt[s]{p}\right).$$
Then, one may randomly select an $m \times N$ matrix $E$ of the form in \eqref{equ:Ddef} such that $E$ will have the MRIP of order $(s,\epsilon)$ with probability at least $1-p$.  Furthermore, $E$ will always have an\\ $\mathcal{O}\left( N \cdot \left( \log (s/\epsilon^2) +  \log \log \left( N / \epsilon p \right) \right) \right)$-time matrix-vector multiply.
\end{theorem}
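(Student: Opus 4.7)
The plan is to invoke Theorem~\ref{MRIPfromRIP} with $a=1/3$, feeding it the RIP function $f_N$ supplied by Theorem~\ref{thm:fasterRIP}. From that earlier result we may take
\[
f_N(s',\epsilon',p') \;=\; c''\, s'\,(\epsilon')^{-2}\,\ln\!\bigl(N/(\epsilon' (p')^{1/s'})\bigr),
\]
valid whenever $\epsilon' \in (0,1/3)$, $p' \in (e^{-N},1/3)$, and $s' \le c\,(\epsilon')^2 \sqrt{N}/\ln^5(c' N/\epsilon' p')$. With $a=1/3$ the hypothesis of Theorem~\ref{MRIPfromRIP} calls for a success probability at each level of $p' = p/(\lceil \log_2(N/s)\rceil + 1)$, which is why the theorem statement assumes $p > e^{-N}(\lceil \log_2(N/s)\rceil+1)$; this is exactly the requirement $p' > e^{-N}$ for Theorem~\ref{thm:fasterRIP}.

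Next I would verify that the sparsity admissibility condition of Theorem~\ref{thm:fasterRIP} is met at every level. For the MRIP ``tail'' triple $(2\lceil s/(9\epsilon^2)\rceil, 1/18, p')$ this reduces to $s/\epsilon^2 \lesssim \sqrt{N}/\ln^5(N\log N/\epsilon p)$, which is exactly the hypothesis $s \le c_1 \epsilon^2 \sqrt{N}/\ln^5(c_2 N/\epsilon p)$ after absorbing the $\log\log N$ factor from $p'$ into $c_2$. For the generic levels $l<L$ the triple $(2^l s, 2^{l/2}\epsilon, p')$ has the invariant ratio $s'/(\epsilon')^2 = s/\epsilon^2$, and $\epsilon' \le 1/3$ by definition of $L$, so the same sparsity bound suffices uniformly in $l$. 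This is the step where the specific form of the hypotheses has to be carefully reconciled with the MRIP scaling.

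Then I would take the maximum of $f_N$ across both cases in \eqref{equ:MRIPmbound}. Since $s'/(\epsilon')^2 = s/\epsilon^2$ is preserved along the scale, and $(\epsilon')^{-2} = (2^l \epsilon^2)^{-1}$ is maximized at $l=0$, the dominant contribution is the $l=0$ term plus the tail term, giving
\[
m \;\gtrsim\; \frac{s}{\epsilon^2}\ln(N/\epsilon) \;+\; \frac{1}{\epsilon^2}\ln(1/p'),
\]
where $\ln(1/p') = \ln(1/p) + \ln(\lceil \log_2(N/s)\rceil+1)$. Folding the $\log\log$ term into the logarithmic factor and rewriting $\ln(N/\epsilon) + s^{-1}\ln(1/p) = \ln(N/(\epsilon p^{1/s}))$ reproduces the theorem's clean bound $m \ge c_3 s \epsilon^{-2} \ln(N/(\epsilon\sqrt[s]{p}))$. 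The runtime claim then simply inherits from Theorem~\ref{thm:fasterRIP} since we use the same matrix construction; the $s$ in the runtime is the original sparsity, not the inflated $s/\epsilon^2$ from the tail level, because the construction of $E$ does not depend on the level-by-level analysis.

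The main obstacle I anticipate is the bookkeeping in the second and third steps: one must check that the $\ln^5$ condition is preserved uniformly across the $L$ dyadic scales (slightly trickier because $p'$ shrinks by a $\log N$-sized factor), and then combine the maxima of $f_N$ so that the $\log\log N$ pollution from $p'$ can be absorbed into the absolute constants $c_1, c_2, c_3$ without inflating the final expression. Once that accounting is done, the proof is just a direct application of Theorems~\ref{MRIPfromRIP} and~\ref{thm:fasterRIP}, with the runtime and probability statements passed through unchanged.
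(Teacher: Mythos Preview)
Your proposal is correct and follows exactly the paper's approach: apply Theorem~\ref{MRIPfromRIP} with $a=1/3$ using the RIP function from Theorem~\ref{thm:fasterRIP}, then verify the domain constraints and absorb the $\log\log N$ pollution from $p'$ into the constants. One minor imprecision: your remark that ``the construction of $E$ does not depend on the level-by-level analysis'' is not quite the right reason for the runtime claim---$m_1$ must be chosen large enough to serve \emph{all} levels simultaneously---but since $s'/(\epsilon')^2$ is invariant across the dyadic scales (and only changes by an absolute constant at the tail level), a single choice $m_1 \asymp (s/\epsilon^2)\cdot\mathrm{polylog}$ works throughout, so $\log m_1 = \mathcal{O}(\log(s/\epsilon^2)+\log\log(N/\epsilon p))$ as stated.
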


Finally, we may now apply Theorem~\ref{thm:Mahdithm} in light of Theorem~\ref{thm:fasterMRIP} in order to obtain the main result of this section.

\begin{theorem}[Fast Embedding of Infinite Sets] 
\label{thm:OURmatforINfSets}
There exist absolute constants $c'_1, c'_2, c'_3, c'_4 \in \mathbbm{R}^+$ such that following holds. Let $S \subset \mathbb{R}^N$ be nonempty for $N \geq 50$, $\epsilon \in \left(0, 1 \right)$, $p \in \left(e^{-c'_1 N}, 1/3 \right)$, and $D'\in \mathbb{R}^{N \times N}$ be a random diagonal matrix with i.i.d Rademacher random variables on its diagonal.  Furthermore, suppose that $w^{2}(U(S)) \leq c'_2 \epsilon^2 \sqrt{N} / \ln^6(c'_3 N / \epsilon p)$ and that $m \in \mathbbm{Z}^+$ satisfies
$$m \geq c'_4 w^{2}(U(S))\frac{\ln \left( N/ \epsilon p\right) \ln(1/p)}{\epsilon^{2}}.$$
Then, one may randomly select an $m \times N$ matrix $E$ of the form in \eqref{equ:Ddef} such that $ED'$ will be an $\epsilon$-JL map of $S$ into $\mathbbm{R}^m$ with probability at least $1-p$.  Furthermore, $ED'$ will always have an $\mathcal{O}\left( N \cdot \left( \log \left(w(U(S))/\epsilon \right) +  \log \log \left( N / \epsilon p \right) \right) \right)$ run-time matrix-vector multiply.
\end{theorem}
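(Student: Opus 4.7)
\textbf{Proof proposal for Theorem~\ref{thm:OURmatforINfSets}.}
The overall plan is to combine the Fast MRIP result for matrices of the form \eqref{equ:Ddef} (Theorem~\ref{thm:fasterMRIP}) with the MRIP-implies-embedding theorem (Theorem~\ref{thm:Mahdithm}), following the template used to prove Theorem~\ref{thm:SORSforINfSets} for SORS matrices but substituting our new construction in place of the SOB matrix. Concretely, I would apply Theorem~\ref{thm:Mahdithm} with the set $T := U(S)$. Since $U(S)\subset \mathbb{S}^{N-1}$ we have $\mathrm{rad}(T)=1$ and $\max\{\mathrm{rad}(T),w(T)\}\le 1+w(U(S))$. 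The matrix needed in Theorem~\ref{thm:Mahdithm} must have the MRIP of order $(s_\star,\epsilon_\star)$ where $s_\star = 200(1+\ln(1/p))$ and $\epsilon_\star = \epsilon/(c\max\{1,w(U(S))\})$ for the absolute constant $c$ from that theorem. The conclusion there then yields $\sup_{x\in U(S)}\bigl|\|(ED')x\|_2^2-1\bigr|\le \epsilon$, which by homogeneity is exactly the $\epsilon$-JL map property on $S$.

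Next I would invoke Theorem~\ref{thm:fasterMRIP} to build an $E$ of the form \eqref{equ:Ddef} which has MRIP of order $(s_\star,\epsilon_\star)$ with probability at least $1-p$ (after a union-bound rescaling $p\leftarrow p/2$ combined with the $1-p$ guarantee coming from the $D'$ step of Theorem~\ref{thm:Mahdithm}; constants are then absorbed). Substituting $s_\star,\epsilon_\star$ into the MRIP dimension bound $m\ge c_3 s_\star\epsilon_\star^{-2}\ln(N/\epsilon_\star\sqrt[s_\star]{p})$ gives, up to constants,
\[
m \;\gtrsim\; \ln(1/p)\cdot\frac{\max\{1,w^2(U(S))\}}{\epsilon^2}\cdot\ln\!\left(\frac{N\,\max\{1,w(U(S))\}}{\epsilon\,p}\right),
\]
which is bounded by the hypothesized lower bound on $m$ in the statement (using $w(U(S))\lesssim\sqrt{N}$ to absorb the $w$-dependence inside the logarithm into $\ln(N/\epsilon p)$, and invoking a remark of the form Remark~\ref{rem:WecanIgnoreLittleD} to handle the trivial case $w(U(S))\le 1$).

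The main bookkeeping obstacle, and the step I expect to have to verify most carefully, is checking the sparsity admissibility hypothesis $s_\star \le c_1 \epsilon_\star^2\sqrt{N}/\ln^5(c_2 N/\epsilon_\star p)$ required by Theorem~\ref{thm:fasterMRIP}. Plugging in $s_\star=\Theta(\ln(1/p))$ and $\epsilon_\star=\Theta(\epsilon/\max\{1,w(U(S))\})$, this becomes (after rearrangement) a condition of the form
\[
\max\{1,w^2(U(S))\}\cdot\ln(1/p)\;\lesssim\;\frac{\epsilon^2\sqrt{N}}{\ln^5(c_2' N\,\max\{1,w(U(S))\}/\epsilon p)}.
\]
This is precisely why the hypothesis of the theorem carries a $\ln^6$ in the denominator rather than $\ln^5$: one extra logarithmic factor is spent to absorb the $\ln(1/p)$ factor coming from $s_\star$ and to allow the $\epsilon_\star$-dependent argument of the $\ln^5$ to be replaced by the cleaner $\ln(c'_3 N/\epsilon p)$. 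The restriction $p>e^{-c'_1 N}$ is inherited from the corresponding restriction in Theorem~\ref{thm:fasterMRIP} (with constants adjusted to cover the $\lceil\log_2(N/s_\star)\rceil+1$ loss there).

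Finally, the runtime claim is immediate: Theorem~\ref{thm:fasterMRIP} provides an $\mathcal{O}(N(\log(s_\star/\epsilon_\star^2)+\log\log(N/\epsilon_\star p)))$-time matrix-vector multiply for $E$, and left-multiplication by the diagonal $D'$ adds only $\mathcal{O}(N)$. Substituting $s_\star/\epsilon_\star^2 = \Theta(\ln(1/p)\,w^2(U(S))/\epsilon^2)$ and simplifying yields the stated $\mathcal{O}(N(\log(w(U(S))/\epsilon)+\log\log(N/\epsilon p)))$ bound, since $\log\log(1/p)$ is dominated by $\log\log(N/\epsilon p)$ under the hypothesis $p>e^{-c'_1 N}$.
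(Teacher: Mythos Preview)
Your proposal is correct and follows essentially the same approach as the paper: apply Theorem~\ref{thm:Mahdithm} with $T=U(S)$ and $p\leftarrow p/2$, then invoke Theorem~\ref{thm:fasterMRIP} with $s=200(1+\ln(2/p))$, $\epsilon'=\epsilon/(c+c\,w(T))$, $p\leftarrow p/2$, and simplify via the bounds $c'\le w(U(S))\le \sqrt{N}+c''$. One small correction: Remark~\ref{rem:WecanIgnoreLittleD} is not about the case $w(U(S))\le 1$; that remark concerns the redundancy of the inner sign matrix $D$ inside the blocks once the outer $D'$ is present, and the way the paper handles $\max\{1,w(U(S))\}$ is instead via the absolute lower bound $w(U(S))\ge c'$ for any nonempty $S$.
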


\begin{proof}
We will apply Theorem~\ref{thm:Mahdithm} with $T = U(S)$ and $p \leftarrow p/2$ noting that ${\rm rad}(T) = 1$.  Hence,  by the union bound it suffices to invoke Theorem~\ref{thm:fasterMRIP} with $s = 200(1 + \ln(2/p))$, $\epsilon' =  \epsilon / ( c+ c \cdot w(T))$, and $p \leftarrow p/2$.  Simplifying using the fact that $c' \leq w(U(S)) \leq w(U(\mathbbm{R}^N)) \leq \sqrt{N} + c''$ and combining absolute constants now leads to our final bounds.
\end{proof}

Comparing Theorem~\ref{thm:OURmatforINfSets} to Theorem~\ref{thm:SORSforINfSets} one can see that our proposed matrices \eqref{equ:Ddef} have matrix-vector multiplies which are always at least as fast as SORS matrices while simultaneously improving on the current best bounds for their embedding dimension, $m$, by a multiplicative factor of size  roughly $\Theta \left(
\ln^2\left( \frac{c w^2(U(S)) \ln(2/p)}{\epsilon^2} \right) \right)$, with $\epsilon$ fixed. 
Of course, it must also be remembered that Theorem~\ref{thm:OURmatforINfSets} applies to a smaller range of Gaussian widths than Theorem~\ref{thm:SORSforINfSets} does.

\begin{remark}
\label{rem:WecanIgnoreLittleD}
Note that there is some redundancy in the final form of the embedding matrices constructed by Theorem~\ref{thm:OURmatforINfSets}.  In particular, they look like
$$ED' = \sqrt{\frac{1}{m_2}} BCD' = \sqrt{\frac{1}{m_2}} B \begin{pmatrix} A &  & \\ & \ddots & \\
 &  & A \end{pmatrix}D' = \sqrt{\frac{m_1}{m_2}} B \begin{pmatrix} RUD &  & \\ & \ddots & \\
 &  & RUD \end{pmatrix} D'$$
 where $B \in \mathbbm{R}^{m_2 \times N/m_1}$ has i.i.d. mean $0$ and variance $1$ sub-gaussian entries, $R \in \{ 0,1 \}^{m_1 \times m_1^2}$ contains $m_1$ rows independently selected uniformly at random from the $m_1^2 \times m_1^2$ identity matrix, $U \in \mathbbm{R}^{m_1^2 \times m_1^2}$ is a unitary matrix with a bounded SOB constant, $D \in \{ 0, -1, 1 \}^{m_1^2 \times m_1^2}$ is a diagonal matrix with i.i.d. $\pm 1$ Rademacher random variables on its diagonal, and $D'\in \{ 0, -1, 1 \}^{N \times N}$ is a random diagonal matrix with i.i.d. Rademacher random variables on its diagonal.  Now one can see, for example, that the same embedding result will hold without having to use the smaller diagonal matrix $D$ since $\begin{pmatrix} D &  & \\ & \ddots & \\
 &  & D \end{pmatrix} D' = D' \begin{pmatrix} D &  & \\ & \ddots & \\
 &  & D \end{pmatrix}$ and $D'$ are identically distributed.
\end{remark}

Of course, before we can apply Theorem~\ref{thm:OURmatforINfSets} to, e.g., submanifolds of $\mathbbm{R}^N$ we will need covering bounds for their normalized secants.  We derive such bounds in the next section.

\section{Generalized Covering Bounds for Compact Smooth Submanifolds of $\mathbbm{R}^N$ with Respect to Reach}
\label{sec:GeometryandCoveringNumBounds}

In this section we prove four main theorems. In Theorem \ref{coverWithReachTheorem} we give upper bounds for the covering numbers of compact and smooth submanifolds of Euclidean spaces with empty boundary.  The method of proof is based on G$\ddot{\text{u}}$nther's volume comparison theorem \cite[page 169, Theorem 3.101, part ii]{gallot_riemannian_1990}.   In Theorem \ref{cover-boundary-gunther} we use Theorem \ref{coverWithReachTheorem} to give upper bounds for the covering numbers of a compact and smooth submanifold with nonempty boundary. We do so by first covering the boundary as an independent manifold. This covers a collar of the boundary, after which we cover the interior.  In Theorem \ref{ConveringNumberForSecantSet} we utilize our bounds for the covering numbers of submanifolds to bound above the covering numbers of their unit secant sets.

Finally, Theorem \ref{ConveringNumberForSecantSet} is applied in  Theorem~\ref{GaussianWidthOfManifodWithBoundaryViaGunther} to bound the Gaussian widths of the unit secant sets of submanifolds  of $\mathbbm{R}^N$ with boundary.  These Gaussian width bounds can then be employed together with the general embedding results from Sections~\ref{sec:Prelims} and~\ref{sec:FasterRIP} to produce our main theorems in Section~\ref{sec:ProofsofMainRes}.

\subsection{Reach and its Basic Properties for Submanifolds of $\mathbbm{R}^N$}
\label{sec:PropertiesOfReach}

Here we review the definition and basic properties of the \textit{reach} of a subset of  Euclidean space. We specialize to the case when the subset is a compact and smooth submanifold and review the relationship of reach to intrinsic Riemannian geometric features of the submanifold.  We include the case when the submanifold has nonempty boundary as is often the case for a manifold modeling real world data.\\  

Reach is an extrinsic parameter of a subset $S$ of Euclidean space defined based on how far away points can lie from $S$ while having a unique closest point in $S$. Reach has been used extensively as a regularity parameter for $S$ since 1959 when it was defined by Federer in \cite{federer_curvature_1959}. A historical viewpoint of its development can be found in \cite{thale_50_2008}. Its applications can be found in  \cite{aamari_estimating_2019}, \cite{boissonnat_reach_2019}, and \cite{eftekhari_new_2015}.

Here, our focus will be on the case when $S$ is a smooth submanifold of Euclidean space. In this case, the inner-product on the ambient $\mathbb{R}^N$ restricts to a Riemannian metric $g_S$ on $S$.  The Riemannian metric $g_S$ equips $S$ with the structure of a geodesic metric space $$d_S:S \times S\rightarrow \mathbb{R}$$ described below.  While reach is defined extrinsically, it bounds some intrinsic properties of the metrics $g_S$ such as its sectional curvatures and the injectivity radii of its points.  With these bounds in place, we employ Riemannian geometric techniques to obtain lower bounds on the intrinsic volumes of metric balls in $S$ having sufficiently small radii, and in turn, upper bounds on the covering numbers of compact and smooth submanifolds.  \\

We begin by recalling the definition of reach and will then review some of its basic properties. 
 
\begin{definition}\label{definition:reach}
(\textbf{Reach} \cite{federer_curvature_1959}, Definition 4.1) For a subset $S$ of Euclidean space $S \subset \mathbb{R}^N$, the reach $\tau_S$ is defined as 
$$
    \tau_S = \sup \left\{ t \geq 0  ~\big|~ \, \forall {\bf x} \in \mathbbm{R}^n \text{ such that } d({\bf x},S) < t, \, {\bf x} \text{ has a unique closest point in } S \right\}.
$$
\end{definition}

Open subsets of Euclidean space have zero reach.  Closed subsets can also have zero reach.  For example, the closed subset $\{(x,|x|)\, \vert\, x\in \mathbb{R}\}$ of $\mathbb{R}^2$ has zero reach because of the singular point $(0,0)$.  However, sufficiently regular closed subsets have nonzero reach.  In particular, compact smooth submanifolds of Euclidean spaces, the subsets under consideration herein, have positive reach \cite{federer_curvature_1959}. The reach of a closed subset can also be infinite; the closed convex subsets of Euclidean space are precisely the closed subsets having infinite reach.  We include a proof of this well-known characterization of convexity as it will be used below.

\begin{lemma}
\label{convex}
A closed subset of $\mathbb{R}^N$ is convex if and only if it has infinite reach.
\end{lemma}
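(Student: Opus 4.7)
My plan is to treat the two directions separately: the forward implication (closed convex $\Rightarrow$ infinite reach) is the classical Hilbert projection theorem, while the reverse (infinite reach $\Rightarrow$ convex) is a form of Motzkin's characterization of Chebyshev sets in $\mathbbm{R}^N$ and will rely on a short convex-analytic argument.

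For the forward direction, given closed convex $S\subset \mathbbm{R}^N$ and any $x\in\mathbbm{R}^N$, a minimizing sequence for $s\mapsto \|x-s\|_2$ in $S$ is bounded and so is subsequentially convergent to some $s\in S$ by closedness, giving existence of a nearest point. For uniqueness, if $s_1,s_2\in S$ both realize the minimum, convexity places $(s_1+s_2)/2$ in $S$ and the parallelogram identity
\[
\|x-s_1\|_2^2 + \|x-s_2\|_2^2 = 2\Bigl\|x - \tfrac{s_1+s_2}{2}\Bigr\|_2^2 + \tfrac12\|s_1-s_2\|_2^2
\]
forces $s_1=s_2$. This uniqueness holds at every $x\in\mathbbm{R}^N$ regardless of its distance to $S$, so the condition in Definition~\ref{definition:reach} is satisfied for every $t>0$ and $\tau_S=\infty$.

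For the reverse direction I would assume $\tau_S=\infty$, so that $\pi_S\colon\mathbbm{R}^N\to S$ is globally well-defined, and introduce
\[
F(x) := \|x\|_2^2 - d(x,S)^2 = \sup_{s\in S}\bigl(2\la x,s\ra - \|s\|_2^2\bigr),
\]
which as a pointwise supremum of affine functions is convex and everywhere finite on $\mathbbm{R}^N$. Infinite reach forces the supremum to have the unique maximizer $\pi_S(x)$, so the subdifferential of $F$ at $x$ is the singleton $\{2\pi_S(x)\}$, whence $F$ is differentiable with $\nabla F(x)=2\pi_S(x)$. A short subsequence argument using closedness of $S$ and the uniqueness of $\pi_S(x)$ then shows $\pi_S$ is continuous, upgrading $F$ to a $C^1$ convex function.

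The crux is the observation that $\mathrm{range}(\nabla F)=2S$: the inclusion $\nabla F(\mathbbm{R}^N)\subseteq 2S$ is immediate, and the reverse follows from $\pi_S(s)=s$ for each $s\in S$. I would then invoke the standard convex-analytic fact (Rockafellar, \emph{Convex Analysis}, Cor.~23.5.3) that for any proper lower semicontinuous convex function the range of its subdifferential is sandwiched between the relative interior and the closure of the effective domain of the Fenchel conjugate, and therefore has convex closure. Since $S$ is closed, $2S=\overline{\mathrm{range}(\nabla F)}$ is then convex, and hence so is $S$. The main obstacle is importing this convex-analytic conclusion cleanly; an elementary alternative would be a Motzkin-style contradiction: assuming some chord $[p,q]$ with $p,q\in S$ exits $S$, take a point $c^*$ on that chord that is farthest from $S$, and use the infinite-reach version of Federer's normal-ray property (any ray emanating from a point of $S$ through a point with unique projection keeps that projection) to force $\pi_S(c^*)$ to fail uniqueness, contradicting $\tau_S=\infty$.
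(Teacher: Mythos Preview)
Your forward direction is the same as the paper's: both use that the midpoint of two putative nearest points lies in $S$ and is strictly closer, the paper phrasing this via the Pythagorean theorem and you via the parallelogram identity.

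Your reverse direction is correct in outline but takes a genuinely heavier route than the paper. The paper's argument is very short: it cites Federer's result that the metric projection $P:\mathbb{R}^N\to S$ is $1$-Lipschitz when the reach is infinite, then argues that if a segment $\textbf{pq}$ with $\textbf{p},\textbf{q}\in S$ were not contained in $S$, the continuous path $P(\textbf{pq})$ joins $\textbf{p}$ to $\textbf{q}$ inside $S$ and hence must leave the segment; picking $\textbf{z}\in\textbf{pq}$ with $P(\textbf{z})\notin\textbf{pq}$ and applying the strict triangle inequality together with the $1$-Lipschitz bound gives $\|\textbf{p}-\textbf{q}\|<\|\textbf{p}-\textbf{q}\|$. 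Your approach is essentially Asplund's convex-analytic proof of Motzkin's theorem: exhibit $F(x)=\|x\|_2^2-d(x,S)^2$ as a convex function, show $\nabla F=2\pi_S$, and conclude via Rockafellar that $\overline{\operatorname{range}\nabla F}=2S$ is convex. Both proofs import one nontrivial external fact (Federer's Lipschitz bound versus Rockafellar's subdifferential-range theorem), but the paper's is more direct and avoids the differentiability discussion entirely.

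One small wobble in your write-up: the implication ``unique maximizer of the affine family $\Rightarrow$ $\partial F(x)$ is a singleton'' is not valid on its own. What actually makes $F$ differentiable is the continuity of $\pi_S$ (which you do establish): once $\pi_S$ is continuous, the sandwich $2\langle h,x-\pi_S(x+h)\rangle+\|h\|_2^2\le d(x+h,S)^2-d(x,S)^2\le 2\langle h,x-\pi_S(x)\rangle+\|h\|_2^2$ gives $\nabla(d(\cdot,S)^2)(x)=2(x-\pi_S(x))$, and hence $\nabla F(x)=2\pi_S(x)$. So your steps should be reordered: continuity of $\pi_S$ first, then differentiability of $F$, then $C^1$. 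Your alternative ``Motzkin-style'' sketch at the end is too vague to stand as written; the farthest-point-on-the-chord idea does not obviously produce non-uniqueness of $\pi_S(c^*)$ without further work, and the paper's $1$-Lipschitz argument is both simpler and complete.
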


\begin{proof}
First assume $S$ is a closed and convex subset of $\mathbb{R}^N$.  Seeking a contradiction, assume $S$ has finite reach. Then there exists a point $\textbf{x}\in \mathbb{R}^n$ and distinct points $\textbf{p},\textbf{q} \in S$ such that $d(\textbf{x},S)=\| \textbf{x}-\textbf{p}\|_2=\|\textbf{x}-\textbf{q}\|_2$.  Let $\textbf{z}$ be the midpoint of the line segment $\textbf{pq}$.  As $S$ is convex, $\textbf{z} \in S$. The Pythagorean Theorem implies $\|\textbf{x}-\textbf{z}\|_2<\|\textbf{x}-\textbf{p}\|_2,$ a contradiction.

Now suppose that $S$ is a closed subset of $\mathbb{R}^N$ with infinite reach.  As $S$ has infinite reach, the nearest point projection map $P:\mathbb{R}^N \rightarrow S$ is well-defined. By \cite[Theorem 4.8(8)]{federer_curvature_1959}, $P$ is $1$-Lipshitz.  Let $\textbf{p},\textbf{q} \in S$ be distinct points, and seeking a contradiction, suppose the line segment $\textbf{pq}$ does not lie entirely in $S$.  Then since $P(\textbf{pq})$ is a continuous path joining $\textbf{p}$ to $\textbf{q}$ lying entirely in $S$, it cannot lie entirely in $\textbf{pq}$.  Therefore, there exists a point $\textbf{z} \in \textbf{pq}$ with $P(\textbf{z})\notin \textbf{pq}$.

We now have 
\begin{align*}
\|\textbf{p}-\textbf{q}\|_2 &<\|\textbf{p}-P(\textbf{z})\|_2+\|P(\textbf{z})-\textbf{q}\|_2=\|P(\textbf{p})-P(\textbf{z})\|_2+\|P(\textbf{z})-P(\textbf{q})\|_2\\ &\leq \|\textbf{p}-\textbf{z}\|_2+\|\textbf{z}-\textbf{q}\|_2=\|\textbf{p}-\textbf{q}\|_2,
\end{align*}
a contradiction.  Here the last inequality uses that $P$ is $1$-Lipshitz.
\end{proof}

We now restrict to the case of compact smooth $d$-dimensional submanifolds of $\mathbb{R}^N$ ($d\leq N$).  Throughout we denote such a manifold by $\mathcal{M}$ to emphasize the manifold setting. We quickly review the definition of these spaces.\\

By a slight abuse of notation, below we let $$\mathbb{R}^d=\{\textbf{x}=(x_1,\ldots,x_N)\in \mathbb{R}^N\, \vert\, x_{d+1}=\cdots=x_N=0\}\,\,\,\,\,\,\text{and}\,\,\,\,\, \mathbb{H}^d=\{\textbf{x} \in \mathbb{R}^d\,\vert\, x_d\geq 0\}.$$ By definition, $\mathcal{M}$ is a compact subset of $\mathbb{R}^N$ having the property that for each $\textbf{x} \in \mathcal{M}$ there exists open subsets $U$ and $V$ of $\mathbb{R}^N$ with $\textbf{x} \in U$ and $\textbf{0} \in V$ and a smooth diffeomorphism $\phi:U\rightarrow V$ with $\phi(\textbf{x})=\textbf{0}$ and such that 
\begin{enumerate}
    \item $\phi(U\cap \mathcal{M})=V\cap \mathbb{R}^d$, or
    \item $\phi(U \cap \mathcal{M})=V\cap \mathbb{H}^d.$
\end{enumerate}

Precisely one of 1 or 2 holds for each $\textbf{x} \in \mathcal{M}$.  The interior of $\mathcal{M}$, denoted $int\mathcal{M}$ is the union of points for which 1 holds.  The boundary of $\mathcal{M}$, denoted $\partial\mathcal{M}$ is the union of points for which 2 holds.  The following Lemma is readily deduced from the definition; we omit its standard proof.

\begin{lemma}
\label{boundary}
Let $\mathcal{M}$ be a compact smooth $d$-dimensional submanifold of $\mathbb{R}^N$.
\begin{enumerate}
\item $int\mathcal{M}$ is nonempty.
    \item $\partial{\mathcal{M}}$ has finitely many connected components.
    \item  If $\mathcal{C}$ is a nonempty connected component of $\partial{\mathcal{M}},$ then $\mathcal{C}$ is a compact smooth $(d-1)$-dimensional submanifold of $\mathbb{R}^N$ with $\partial \mathcal{C}=\emptyset.$
\end{enumerate}
\end{lemma}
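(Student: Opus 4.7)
The plan is to derive all three claims from the local chart structure of $\mathcal M$ stated in the definition immediately above, with part~(3) serving as the structural core from which (1) and (2) follow with little effort. I would begin with part~(3). Fix $\textbf{x}\in\partial\mathcal M$ and a corresponding chart $\phi: U\to V$ of type~2, so that $\phi(U\cap\mathcal M)=V\cap \mathbb{H}^d$ and $\phi(\textbf{x})=\textbf{0}$. The key step is to verify that $\phi$ restricts to a smooth diffeomorphism
$$\phi|_{U\cap\partial\mathcal M}: U\cap\partial\mathcal M \longrightarrow V\cap\mathbb{R}^{d-1},$$
where I identify $\mathbb{R}^{d-1}=\partial\mathbb{H}^d$. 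Here the ``precisely one of 1 or 2 holds'' clause of the definition does all the work: any point of $U\cap\mathcal M$ mapping into $V\cap\operatorname{int}(\mathbb{H}^d)$ has an appropriate restriction of $\phi$ as a type-1 chart and therefore cannot lie in $\partial\mathcal M$, while every point of $U\cap\partial\mathcal M$ must map into $V\cap\mathbb{R}^{d-1}$ for the same reason. Since each image $V\cap\mathbb{R}^{d-1}$ is open in $\mathbb{R}^{d-1}$, these restricted maps furnish type-1 charts for $\partial\mathcal M$, exhibiting $\partial\mathcal M$ as a smooth $(d-1)$-dimensional submanifold of $\mathbb{R}^N$ with empty boundary. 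Finally, $\partial\mathcal M$ is closed in the compact set $\mathcal M$ (being the complement of the open set $\operatorname{int}\mathcal M$) and hence compact, and each connected component $\mathcal C$ inherits closedness, compactness, and the local chart structure just described.

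For part~(1) I would take any $\textbf{x}\in\mathcal M$; if $\textbf{x}\in\operatorname{int}\mathcal M$ we are done, and otherwise a type-2 chart at $\textbf{x}$ carries the nonempty open set $V\cap\operatorname{int}(\mathbb{H}^d)$ back into $\operatorname{int}\mathcal M$ by the observation used above, so $\operatorname{int}\mathcal M$ is nonempty. For part~(2), once (3) is established $\partial\mathcal M$ is a compact smooth manifold, hence locally Euclidean and in particular locally connected; its connected components are then open in $\partial\mathcal M$, and compactness forbids an infinite disjoint open cover, so there are only finitely many.

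The only subtle point I foresee is the boundary-preserving identity $\phi(U\cap\partial\mathcal M)=V\cap\mathbb{R}^{d-1}$ invoked in (3), which is a form of invariance of boundary. In the present setting this is already packaged into the definition by its uniqueness clause, so it requires no separate topological argument. Everything else reduces to routine bookkeeping with charts and elementary point-set topology.
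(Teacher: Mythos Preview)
Your proposal is correct and follows exactly the standard line the paper has in mind: the paper itself omits the proof entirely, stating only that the lemma ``is readily deduced from the definition; we omit its standard proof.'' Your argument---extracting type-1 charts for $\partial\mathcal M$ from the type-2 charts of $\mathcal M$ via the uniqueness clause, then reading off (1) and (2) from compactness and local connectedness---is precisely the routine verification the authors are alluding to.
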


\begin{lemma}
\label{affine}
Let $\mathcal{M}$ be a compact smooth $d$-dimensional submanifold of $\mathbb{R}^N$ with infinite reach.  Then \begin{enumerate}
    
    \item There exists a $d$-dimensional affine subspace $V$ of $\mathbb{R}^N$ such that $\mathcal{M} \subset V$.
    \item The boundary $\partial \mathcal{M}$ is homeomorphic to the $(d-1)$-dimensional sphere.
    \end{enumerate}
\end{lemma}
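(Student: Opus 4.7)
The plan is to invoke Lemma \ref{convex} to reduce to the case in which $\mathcal{M}$ is a compact convex subset of $\mathbbm{R}^N$, after which both parts follow by combining classical convex geometry with the smooth manifold structure of $\mathcal{M}$.

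For Part 1, I would fix a point $\mathbf{x}_0 \in \mathrm{int}\,\mathcal{M}$ (which exists by Lemma \ref{boundary}(1)) and set $V := \mathbf{x}_0 + T_{\mathbf{x}_0}\mathcal{M}$, a $d$-dimensional affine subspace of $\mathbbm{R}^N$. For any $\mathbf{y} \in \mathcal{M}$, convexity places the segment $\gamma(t) := \mathbf{x}_0 + t(\mathbf{y}-\mathbf{x}_0)$ inside $\mathcal{M}$ for all $t \in [0,1]$, so $\gamma$ is a smooth curve in $\mathcal{M}$ through $\mathbf{x}_0$. Differentiating at $t=0$ gives $\gamma'(0) = \mathbf{y}-\mathbf{x}_0 \in T_{\mathbf{x}_0}\mathcal{M}$, so $\mathbf{y} \in V$. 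Hence $\mathcal{M} \subset V$.

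For Part 2, using Part 1 we may regard $\mathcal{M}$ as a $d$-dimensional compact convex submanifold with boundary inside the $d$-dimensional affine space $V \cong \mathbbm{R}^d$. The next step is to identify the manifold boundary $\partial \mathcal{M}$ with the topological boundary of $\mathcal{M}$ relative to $V$. A manifold interior point admits a chart onto an open subset of $\mathbbm{R}^d \cong V$ and so lies in the topological interior; conversely, a manifold boundary point of $\mathcal{M}$ has arbitrarily close points of $V \setminus \mathcal{M}$ because its local half-space chart, together with invariance of domain, precludes the existence of a full $V$-ball of $\mathcal{M}$-points around it.

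Having matched the two notions of boundary, I would complete the proof by radial projection from any $\mathbf{x}_0 \in \mathrm{int}\,\mathcal{M}$. For each unit vector $\mathbf{v}$ in $T_{\mathbf{x}_0}\mathcal{M}$, the value $t^*(\mathbf{v}) := \sup\{t \geq 0 \,:\, \mathbf{x}_0 + t\mathbf{v} \in \mathcal{M}\}$ is finite by compactness and strictly positive since $\mathbf{x}_0$ is interior; convexity makes $t^*$ continuous and nonvanishing on the unit sphere of $T_{\mathbf{x}_0}\mathcal{M}$. The map $\mathbf{v} \mapsto \mathbf{x}_0 + t^*(\mathbf{v})\mathbf{v}$ is then a continuous bijection from the unit sphere of $T_{\mathbf{x}_0}\mathcal{M}$ to $\partial \mathcal{M}$ and hence a homeomorphism between compact Hausdorff spaces, giving the required identification with $S^{d-1}$. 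I expect the main subtle step to be the identification of the manifold boundary with the topological boundary inside $V$, as this is where an appeal to invariance of domain is unavoidable.
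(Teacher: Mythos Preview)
Your proposal is correct and follows essentially the same approach as the paper: both invoke Lemma~\ref{convex} for convexity, then for Part~1 pick an interior point $\mathbf{x}_0$ and use the segment $\mathbf{x}_0\mathbf{y}\subset\mathcal{M}$ to force $\mathbf{y}-\mathbf{x}_0\in T_{\mathbf{x}_0}\mathcal{M}$, and for Part~2 cite the standard fact that the boundary of a compact convex body with nonempty interior in $\mathbb{R}^d$ is homeomorphic to $\mathbb{S}^{d-1}$. The paper simply asserts this last fact, whereas you spell out the identification of manifold boundary with topological boundary (via invariance of domain) and the radial-projection homeomorphism; your extra detail is sound and does not change the argument.
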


\begin{proof}
By Lemma \ref{convex}, $\mathcal{M}$ is convex.  Let $\textbf{x} \in int\mathcal{M}$.  For each $\textbf{p}\in \mathcal{M}$ the line segment $\textbf{xp}$ lies entirely in $\mathcal{M}$ and so also in the tangent space, concluding the proof of 1.  The manifold $\mathcal{M}$ is a compact convex subset of the $d$-dimensional affine space $T_{\textbf{x}}\mathcal{M}$ with nonempty interior and so has boundary homeomorphic to a $(d-1)$-dimensional sphere, concluding 2.
\end{proof}

We will now briefly review relevant facts from Riemannian geometry used to establish our covering number bounds.    Let $\mathcal{M}$ be a connected, compact, and smooth $d$-dimensional submanifold of $\mathbb{R}^N$.  The Euclidean inner-product on $\mathbb{R}^N$ induces a Riemannian metric $g_{\mathcal{M}}$ on $\mathcal{M}$ defined by restricting, for each $\textbf{x}\in\mathcal{M}$, the Euclidean inner-product to the tangent space $T_{\textbf{x}}\mathcal{M}$.  Each sufficiently regular curve in $\mathcal{M}$ has a well defined $g_{\mathcal{M}}$-length:  If $I \subset \mathbb{R}$ is an interval and $\gamma:I \rightarrow \mathcal{M}$ is a piecewise $C^1$-regular curve in $\mathcal{M}$, then $\gamma$ has $g_{\mathcal{M}}$-length $$L(\gamma)=\int_{I} \sqrt{ g_{\mathcal{M}}(\gamma'(t),\gamma'(t)) } dt.$$ Define $$d_{\mathcal{M}}:\mathcal{M} \times \mathcal{M} \rightarrow \mathbb{R},$$ by setting $d_{\mathcal{M}}(\textbf{p},\textbf{q})$ equal to the infimum of the $g_{\mathcal{M}}$-lengths of piecewise $C^1$-regular curves joining $\textbf{p}$ to $\textbf{q}$ for each $(\textbf{p},\textbf{q})\in \mathcal{M} \times \mathcal{M}$.  It is routine to check that  $(\mathcal{M},d_{\mathcal{M}})$ is a complete metric space majorizing the Euclidean (chordal) metric on $\mathcal{M}$:  For all $(\textbf{p},\textbf{q}) \in \mathcal{M} \times \mathcal{M}$, $$\|\textbf{p}-\textbf{q}\|_2\leq d_{\mathcal{M}}(\textbf{p},\textbf{q}).$$  Given $\textbf{x} \in \mathcal{M}$ and $r\in(0,\infty)$, let $$B_{\mathcal{M}}(\textbf{x},r)=\{\textbf{y}\in \mathcal{M}\, \vert\, d_{\mathcal{M}}(\textbf{x},\textbf{y})<r\}\,\,\,\,\,\,\text{and}\,\,\,\,\,B(\textbf{x},r)=\{\textbf{y}\in \mathbb{R}^N\,\vert\, \|\textbf{x}-\textbf{y}\|_2<r\}$$ and note that \begin{equation}
\label{contain}
B_{\mathcal{M}}(\textbf{x},r)\subset \mathcal{M}\cap B(\textbf{x},r).
\end{equation}

When the manifold $\mathcal{M}$ has empty boundary, \textit{geodesics} in $\mathcal{M}$ are classically defined as the smooth curves $\gamma:I \rightarrow \mathcal{M}$ with zero internal acceleration: For each $t \in I$, $\gamma''(t)$ is normal to the tangent space $T_{\gamma(t)}\mathcal{M}$.  It is standard, albeit nontrivial, to argue that geodesics are equivalently defined as the locally distance minimizing curves in $\mathcal{M}$.  This metric-geometry approach is the starting point for defining geodesics in a Riemannian manifold with possibly nonempty boundary. 

Herein, a \textit{geodesic} in $\mathcal{M}$ is defined to be a locally distance minimizing constant speed parameterized path: A continuous path $\gamma:I \rightarrow \mathcal{M}$ such that there exists $\nu\geq 0$ having the property for each $t \in I$ there exists a subinterval $J \subset I$ with $t \in J$ and such that for each $t_1,t_2 \in J$, $$d_{\mathcal{M}}(\gamma(t_1),\gamma(t_2))=\nu|t_1-t_2|.$$  A geodesic can be reparameterized so that its speed $\nu=1$.  In this case, it is said to be parameterized by arclength.  A geodesic $\gamma:I \rightarrow \mathcal{M}$ is \textit{minimizing} if the above equality holds for all $t_1,t_2 \in I$.

  Geodesics in Riemannian manifolds with boundary are $C^{1}$-regular \cite{alexander_geodesics_1981} and have one sided second derivative \cite{alexander_riemannian_1987}.  In particular, since $\mathcal{M}$ is compact, for each $(\textbf{p},\textbf{q}) \in \mathcal{M}$, there exists a $C^1$-regular minimizing geodesic $\gamma$ joining $\textbf{p}$ to $\textbf{q}$.  We define an \textit{interior geodesic} to be a geodesic $\gamma:I \rightarrow \mathcal{M}$ with image disjoint from $\partial\mathcal{M}$, or equivalently, with image in $int\mathcal{M}$. An interior geodesics has $C^{\infty}$-regularity, is characterized by having zero internal acceleration as above, and is uniquely determined by a pair $(\gamma(t),\gamma'(t))$ for any $t \in I$.  In contrast, when $\partial\mathcal{M}\neq \emptyset$, geodesics need not be $C^{\infty}$-regular and may no longer be determined by an initial position and velocity due to possible bifurcations at $\partial \mathcal{M}$ (see Figure \ref{ImportanceOfBoudnary}).  Our analysis below avoids these complications by working with interior geodesics.  To this end, we adopt the convention that if $\partial\mathcal{M}=\emptyset,$ then for each $\textbf{x} \in \mathcal{M}$, $d_{\mathcal{M}}(\textbf{x},\partial\mathcal{M})=\infty.$  With this convention, if $\textbf{x} \in int\mathcal{M}$, $0<r<d_{\mathcal{M}}(\textbf{x},\partial \mathcal{M})$, and if $B_r\subset T_\textbf{x}\mathcal{M}$ denotes the ball of radius $r$ in $T_\textbf{x}\mathcal{M}$ centered at the origin, then for each ${\bf v} \in B_r$ there is a unique interior geodesic $$\gamma_{\bf v}:[0,1]\rightarrow int\mathcal{M}$$ with $\gamma_{\bf v}'(0)= {\bf v}.$ This gives rise to the exponential map
 $$\exp_{\textbf{x},r}:B_r \rightarrow int\mathcal{M},$$  defined by for each ${\bf v} \in B_r$, $\exp_{\textbf{x},r}({\bf v})=\gamma_{\bf v}(1).$ Note that $$\exp_{\textbf{x},r}(B_r)=B_{\mathcal{M}}(\textbf{x},r).$$

 The derivative of $\exp_{\textbf{x},r}$ at the origin in $B_r$ is the identity map of $T_{\textbf{x}}\mathcal{M}$.  It follows from the inverse function theorem that if $r$ is sufficiently small, then $\exp_{\textbf{x},r}$ is a diffeomorphism onto its image.  In Lemma \ref{diffeomorphism} below, we estimate, in terms of the reach of $\mathcal{M}$, how large $r$ can be while retaining this property.  As a preliminary step, we first consider how large $r$ can be while having the property that $\exp_{\textbf{x},r}$ is a local diffeomorphism.  To do this, we will apply the well known Rauch comparison theorem stated below. This theorem bounds the range of $r$ for which $\exp_{\mathbf{x},r}$ is a local diffeomorphism from below in terms of an upper bound on the \textit{sectional curvatures} of $int\mathcal{M}$.

 Intuitively, the sectional curvatures of $\mathcal{M}$ quantify how much $\mathcal{M}$ bends in $\mathbb{R}^N$ along each two-dimensional tangent plane.  A precise definition can be found in any Riemannian geometry textbook.  Here we are concerned with their relationship to the reach parameter. As an example, let $S$ be a Euclidean sphere in $\mathbb{R}^N$ of radius $r$. Then $S$ has constant sectional curvatures equal to $r^{-2}$ and reach $\tau_S=r$.  For more general submanifolds, the reach bounds sectional curvatures above as in the following well-known lemma.

 \begin{lemma}
  \label{curvature}
 Suppose $\mathcal{M}$ has reach $\tau$.  
 \begin{itemize}
     \item If $\tau<\infty$, then $int\mathcal{M}$ has sectional curvatures bounded above by $\tau^{-2}$.
     \item If $\tau=\infty$, then $int\mathcal{M}$ has sectional curvatures equal to zero.
 \end{itemize}
 \end{lemma}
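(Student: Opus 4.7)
The plan is to derive the curvature bound from a bound on the second fundamental form of $\mathcal{M}$ and then apply the Gauss equation. Write $II$ for the second fundamental form of $\mathcal{M}$ viewed as a submanifold of $\mathbb{R}^N$; for $\textbf{x}\in int\mathcal{M}$ and unit tangent $v\in T_\textbf{x}\mathcal{M}$, $II(v,v)$ is the Euclidean acceleration $\gamma''(0)$ of the unique unit-speed interior geodesic $\gamma$ with $\gamma'(0)=v$. The first step I would carry out is the claim that $\|II(v,v)\|_2\leq 1/\tau$ for every unit $v$. This is a classical consequence of Definition \ref{definition:reach}: the osculating circle of $\gamma$ at $\textbf{x}$ has radius $1/\|II(v,v)\|_2$, and if this radius were strictly less than $\tau$, then its center, which lies at Euclidean distance $1/\|II(v,v)\|_2<\tau$ from $\mathcal{M}$, would fail to have a unique nearest point on $\mathcal{M}$ (two nearby points on $\gamma$ are equidistant to it up to higher-order terms, and a standard perturbation argument produces two genuine nearest points, contradicting the definition of reach). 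Polarization then yields $\|II(u,v)\|_2\leq 1/\tau$ for all unit orthonormal $u,v\in T_\textbf{x}\mathcal{M}$.

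The second step is to invoke the Gauss equation for submanifolds of Euclidean space, which for orthonormal $u,v\in T_\textbf{x}\mathcal{M}$ expresses the sectional curvature of the plane $\mathrm{span}\{u,v\}$ as
\begin{equation*}
K(u,v) \;=\; \langle II(u,u),\, II(v,v)\rangle \;-\; \|II(u,v)\|_2^{\,2}.
\end{equation*}
Applying Cauchy--Schwarz to the first term and discarding the (non-positive contribution of the) second term gives
\begin{equation*}
K(u,v)\;\leq\;\|II(u,u)\|_2\cdot \|II(v,v)\|_2\;\leq\;\tau^{-2},
\end{equation*}
which is the desired upper bound in the finite-reach case.

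For the case $\tau=\infty$, Lemma \ref{affine} shows that $\mathcal{M}$ is contained in a $d$-dimensional affine subspace $V\subset\mathbb{R}^N$. Since $\mathcal{M}$ carries the induced metric and sits totally geodesically inside the flat space $V$, its second fundamental form vanishes identically, and the Gauss equation then gives $K\equiv 0$ on $int\mathcal{M}$. Alternatively, one can simply let $\tau\to\infty$ in the bound from the first part.

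The main potential obstacle is rigorously justifying the step $\|II(v,v)\|_2\leq 1/\tau$ from the definition of reach, because the argument involves showing that an osculating circle of radius smaller than $\tau$ genuinely forces a non-unique nearest point. The cleanest route is simply to cite the standard reference for this fact (e.g.\ Federer \cite{federer_curvature_1959}, or the reach--curvature lemma in Niyogi--Smale--Weinberger), since it is well established; everything downstream of it is then just the Gauss equation and Cauchy--Schwarz. Care must also be taken to restrict attention to $int\mathcal{M}$, so that genuine $C^\infty$ interior geodesics exist (avoiding the boundary bifurcation issues flagged earlier in the paper), which is precisely why the statement is phrased for sectional curvatures of $int\mathcal{M}$.
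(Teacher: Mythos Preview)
Your approach is essentially the same as the paper's: bound the norm of the second fundamental form by $\tau^{-1}$ (the paper cites \cite[Lemma 4]{boissonnat_reach_2019} and \cite[Proposition 6.1]{niyogi_finding_2008} for this, matching your suggested citation of Niyogi--Smale--Weinberger) and then invoke the Gauss equation. Your write-up is slightly more explicit about the Cauchy--Schwarz step and about the $\tau=\infty$ case via Lemma~\ref{affine}, but the substance is identical; note also that your polarization remark about $\|II(u,v)\|_2\le 1/\tau$ is correct but unnecessary, since you discard the $-\|II(u,v)\|_2^2$ term anyway.
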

 
 \begin{proof}
 By Gauss' equation \cite [page 100, Theorem 5]{oneill_semi-riemannian_1983}, it suffices to argue that the norms of the second fundamental forms at points in $int\mathcal{M}$ are bounded by $\tau^{-1}$.  See \cite[Lemma 4]{boissonnat_reach_2019} or \cite[Proposition 6.1]{niyogi_finding_2008} for a proof of this bound.
 \end{proof}

 Having bounded the sectional curvatures above, we may now bound the local diffeomorphism range of $r$ below using the Rauch comparison theorem.
 
 \begin{theorem}[Rauch comparison]
 \label{rauch}
 Let ${\bf x} \in \mathcal{M}$ and $0<r<d_{\mathcal{M}}({\bf x},\partial\mathcal{M})$, and assume $int\mathcal{M}$ has sectional curvatures bounded above by $K\in [0,\infty)$.  \begin{enumerate}
     \item If $K>0$,then $\exp_{{\bf x},r}$ is a local diffeomorphism provided $r<\frac{\pi}{\sqrt{K}}.$
     \item If $K=0$, then $\exp_{{\bf x},r}$ is a local diffeomorphism.
 \end{enumerate} 
 \end{theorem}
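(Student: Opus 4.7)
The plan is to reduce the local diffeomorphism question to the absence of conjugate points along interior geodesics emanating from $\mathbf{x}$, and then apply the Jacobi field comparison inequality that is the analytic heart of Rauch's theorem. First I would recall that $\exp_{\mathbf{x},r}$ fails to be a local diffeomorphism at $\mathbf{v} \in B_r$ precisely when its differential is singular, and that $d(\exp_{\mathbf{x}})_{t\mathbf{v}}(t\mathbf{w})$ is realized by the unique Jacobi field $J$ along $\gamma_{\mathbf{v}}$ with $J(0) = \mathbf{0}$ and $J'(0) = \mathbf{w}$. Thus the kernel of $d(\exp_{\mathbf{x},r})_{\mathbf{v}}$ corresponds precisely to the non-trivial Jacobi fields along $\gamma_{\mathbf{v}}\vert_{[0,1]}$ that vanish at both endpoints, i.e., the condition that $\exp_{\mathbf{x}}(\mathbf{v})$ is conjugate to $\mathbf{x}$ along $\gamma_{\mathbf{v}}$. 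Since $r < d_{\mathcal{M}}(\mathbf{x},\partial\mathcal{M})$, every such $\gamma_{\mathbf{v}}$ is an interior geodesic and the standard smooth Jacobi equation $J'' + R(J,\gamma')\gamma' = 0$ is in force throughout.

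The key analytic input is the standard Jacobi field comparison: if the sectional curvatures of $\text{int}\,\mathcal{M}$ are bounded above by $K$, and $J$ is a normal Jacobi field along a unit-speed interior geodesic with $J(0) = \mathbf{0}$ and $|J'(0)| = 1$, then $|J(t)| \geq s_K(t)$ for all $t$ up to the first positive zero of $s_K$, where $s_K$ solves $s_K'' + K s_K = 0$ with $s_K(0) = 0$ and $s_K'(0) = 1$. Explicitly, $s_K(t) = t$ when $K = 0$ and $s_K(t) = \sin(\sqrt{K}\,t)/\sqrt{K}$ when $K > 0$. The proof of this inequality proceeds by comparing the index form of $J$ to that of the corresponding Jacobi field in the constant-curvature model space, via a Sturm--Liouville / index-comparison argument; I would quote this step as the classical ingredient rather than reprove it.

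With these ingredients, case $(2)$ is immediate: when $K = 0$, $s_K(t) = t$ has no positive zero, so no non-trivial Jacobi field vanishing at $t = 0$ can vanish again on $[0,1]$, whence $d(\exp_{\mathbf{x},r})_{\mathbf{v}}$ is nonsingular for every $\mathbf{v} \in B_r$. In case $(1)$, the first positive zero of $s_K$ occurs at $\pi/\sqrt{K}$; reparameterizing $\gamma_{\mathbf{v}}$ by arclength on $[0, \|\mathbf{v}\|_2]$ and using $\|\mathbf{v}\|_2 < r < \pi/\sqrt{K}$ shows that no conjugate point to $\mathbf{x}$ is reached along $\gamma_{\mathbf{v}}$, so once again the kernel of $d(\exp_{\mathbf{x},r})_{\mathbf{v}}$ is trivial for every $\mathbf{v} \in B_r$. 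In both cases, the inverse function theorem delivers the claimed local diffeomorphism property.

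The main obstacle is making sure that the smooth-manifold Jacobi field machinery genuinely applies in the manifold-with-boundary setting of $\mathcal{M}$, where geodesics can bifurcate and only have $\mathcal{C}^1$ regularity in general (as illustrated in Figure~\ref{ImportanceOfBoudnary}). This is precisely what the hypothesis $0 < r < d_{\mathcal{M}}(\mathbf{x}, \partial \mathcal{M})$ and the definition of $\exp_{\mathbf{x},r}$ via interior geodesics buy us: $\gamma_{\mathbf{v}}$ has image in $\text{int}\,\mathcal{M}$, where the ambient second fundamental form is smooth and the $\mathcal{C}^\infty$ Jacobi equation is valid, so the boundary pathologies never intrude on the comparison argument.
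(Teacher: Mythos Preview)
The paper does not prove this theorem at all: it is stated as the ``well known Rauch comparison theorem'' and immediately applied, with no proof given. Your proposal is the standard classical argument (reduce to absence of conjugate points via Jacobi fields, then invoke the Jacobi field comparison $|J(t)| \geq s_K(t)$), and it is correct; your care in noting that the hypothesis $r < d_{\mathcal{M}}(\mathbf{x},\partial\mathcal{M})$ confines all relevant geodesics to $\mathrm{int}\,\mathcal{M}$, where the smooth Jacobi equation applies, is exactly the point needed to make the classical argument go through in this manifold-with-boundary context.
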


 We now apply the local diffeomorphism bound to obtain a bound for the range of $r$ for which $\exp_{\textbf{x},r}$ is a diffeomorphism.

\begin{lemma}
\label{diffeomorphism}
Let $\mathcal{M}$ be a smooth and compact $d$-dimensional submanifold of $\mathbb{R}^N$ with reach $\tau>0$.  Further assume that ${\bf x}\in \mathcal{M}$ and $r \in  \mathbb{R}$ satisfy $$0<r<\pi \tau\,\,\,\,\,\text{and}\,\,\,\,\, d_{\mathcal{M}}({\bf x}, \partial \mathcal{M})>r.$$  Then the exponential map $exp_{{\bf x},r}:B_r \rightarrow int \mathcal{M}$ is a diffeomorphism onto its image $B_{\mathcal{M}}({\bf x},r)$.
\end{lemma}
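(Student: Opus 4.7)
The plan is to verify the three defining properties of a diffeomorphism from $B_r$ onto $B_{\mathcal{M}}(\mathbf{x},r)$ in sequence: local diffeomorphism, surjectivity onto this target, and injectivity.  First I would combine Lemma~\ref{curvature} with Theorem~\ref{rauch} to see that $\exp_{\mathbf{x},r}$ is a local diffeomorphism everywhere on $B_r$. Lemma~\ref{curvature} supplies the sectional curvature upper bound $K=\tau^{-2}$ (or $K=0$ if $\tau=\infty$), and the hypothesis $r<\pi\tau=\pi/\sqrt{K}$ is precisely what Rauch requires in order for every point of $B_r$ to be a regular point of the exponential. The complementary hypothesis $d_{\mathcal{M}}(\mathbf{x},\partial\mathcal{M})>r$ simultaneously guarantees that each interior geodesic of length less than $r$ emanating from $\mathbf{x}$ stays in $\text{int}\,\mathcal{M}$, so $\exp_{\mathbf{x},r}$ is smooth and well-defined on the entire open ball $B_r\subset T_{\mathbf{x}}\mathcal{M}$.

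For surjectivity onto $B_{\mathcal{M}}(\mathbf{x},r)$, I would fix $\mathbf{y}\in B_{\mathcal{M}}(\mathbf{x},r)$ and use compactness of $\mathcal{M}$ to extract a $C^1$ minimizing geodesic $\gamma:[0,1]\to\mathcal{M}$ from $\mathbf{x}$ to $\mathbf{y}$ of length $L=d_{\mathcal{M}}(\mathbf{x},\mathbf{y})<r$. The triangle inequality forces $d_{\mathcal{M}}(\gamma(t),\partial\mathcal{M})\geq d_{\mathcal{M}}(\mathbf{x},\partial\mathcal{M})-tL>0$ for every $t\in[0,1]$, so $\gamma$ is an interior geodesic and $\mathbf{y}=\exp_{\mathbf{x},r}(\gamma'(0))$ with $\gamma'(0)\in B_r$. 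The reverse inclusion $\exp_{\mathbf{x},r}(B_r)\subseteq B_{\mathcal{M}}(\mathbf{x},r)$ is immediate since the length of $t\mapsto \exp_{\mathbf{x},r}(t\mathbf{v})$ equals $\|\mathbf{v}\|$ and upper bounds $d_{\mathcal{M}}(\mathbf{x},\exp_{\mathbf{x},r}(\mathbf{v}))$.

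The main obstacle is injectivity. My approach is to invoke the classical cut-locus inequality $\mathrm{inj}(\mathbf{x})\geq \min\{\mathrm{conj}(\mathbf{x}),\tfrac{1}{2}\ell(\mathbf{x})\}$, where $\mathrm{conj}(\mathbf{x})$ is the conjugate radius of $\mathbf{x}$ in $\text{int}\,\mathcal{M}$ and $\ell(\mathbf{x})$ is the length of the shortest interior geodesic loop based at $\mathbf{x}$. The first step already gives $\mathrm{conj}(\mathbf{x})\geq \pi\tau$ via Rauch. For $\ell(\mathbf{x})$, the key extrinsic input is the same second-fundamental-form bound used in the proof of Lemma~\ref{curvature}: every interior geodesic $\gamma$ satisfies $\|\gamma''\|_2\leq 1/\tau$ in $\mathbb{R}^N$, so any geodesic loop at $\mathbf{x}$ produces a closed piecewise-smooth curve in $\mathbb{R}^N$ with uniformly bounded ambient curvature. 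A Fenchel-type total curvature argument, combined with the fact that pieces of such a loop cannot approach each other too closely without violating the reach hypothesis, yields $\ell(\mathbf{x})\geq 2\pi\tau$. Combining gives $\mathrm{inj}(\mathbf{x})\geq \pi\tau>r$, so $\exp_{\mathbf{x},r}|_{B_r}$ is injective.

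I expect the subtlest ingredient to be the loop-length bound $\ell(\mathbf{x})\geq 2\pi\tau$: a direct application of Fenchel's theorem to a piecewise-smooth closed curve with a single corner at $\mathbf{x}$ only gives $\ell(\mathbf{x})\geq \pi\tau$, because an exterior angle of up to $\pi$ can appear at the base point. I would handle this either by invoking the submanifold injectivity-radius estimate from Niyogi--Smale--Weinberger, or by sharpening the corner contribution using that an angle close to $\pi$ between the two one-sided velocities at $\mathbf{x}$ would force self-intersections incompatible with reach $\tau$. Everything else then fits together immediately. In the infinite-reach case, the proof simplifies dramatically: Lemma~\ref{affine} reduces $\mathcal{M}$ to a convex subset of a $d$-dimensional affine subspace of $\mathbb{R}^N$, and $\exp_{\mathbf{x},r}$ coincides with the affine translation $\mathbf{v}\mapsto\mathbf{x}+\mathbf{v}$ restricted to $B_r$, which is evidently a diffeomorphism.
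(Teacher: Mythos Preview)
Your local-diffeomorphism step and surjectivity step are exactly what the paper does (Lemma~\ref{curvature} plus Theorem~\ref{rauch}, and the observation that $\exp_{\mathbf{x},r}(B_r)=B_{\mathcal{M}}(\mathbf{x},r)$).  The divergence is entirely in the injectivity argument.  The paper does not attempt a Klingenberg-type estimate at all: it simply invokes \cite[Theorem~3]{alexander_riemannian_1987}, a CAT($K$)-style result for Riemannian manifolds with boundary, which directly yields that $\exp_{\mathbf{x},r}$ is one-to-one once $r<\pi\tau$.  Then ``bijective local diffeomorphism $\Rightarrow$ diffeomorphism'' finishes the proof in one line.

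Your route through $\mathrm{inj}(\mathbf{x})\geq\min\{\mathrm{conj}(\mathbf{x}),\tfrac12\ell(\mathbf{x})\}$ is a genuinely different and more self-contained strategy, but as you yourself flag, it stalls at the loop-length bound.  Fenchel on a geodesic loop with a single corner only delivers $\ell(\mathbf{x})\geq\pi\tau$, which after halving gives $\mathrm{inj}(\mathbf{x})\geq\pi\tau/2$, too weak for the full range $r<\pi\tau$.  Your suggested patches (appealing to NSW, or arguing that a near-$\pi$ corner angle is incompatible with reach $\tau$) are plausible but not carried out, and both face the complication that $\operatorname{int}\mathcal{M}$ is incomplete, so the standard Klingenberg formulation needs justification in this setting.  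In short, the proposal is structurally sound but leaves the key injectivity step unresolved; the paper's one-citation approach via Alexander's theorem for manifolds with boundary bypasses all of these difficulties.
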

 
\begin{proof}
Since $r<\pi \tau$, \cite[Theorem 3]{alexander_riemannian_1987} implies that $\exp_{\textbf{x},r}$ is one-to-one and so a bijection between $B_r$ and $B_{\mathcal{M}}(\textbf{x},r)$.  Applying Lemma \ref{curvature} and Theorem \ref{rauch}, $\exp_{\textbf{x},r}$ is a local diffeomorphism.  This concludes the proof since a bijective local diffeomorphism is a diffeomorphism. 
\end{proof}

Let $\mathcal{H}^d$ denote the $d$-dimensional Hausdorff measure on $\mathbb{R}^N$.  The Riemannian volume of a measurable subset of a compact smooth $d$-dimensional submanifold $\mathcal{M}$ coincides with its $\mathcal{H}^d$-measure.  We adopt the following notional conventions.  Given $d \in \mathbb{N}$ and $s>0$,

\begin{itemize}
    \item Let $\mathbb{D}_{s}^d$ denote the closed  Euclidean ball in $\mathbb{R}^d$ with center $\textbf{0}$ and radius $s$.  Furthermore we let $\mathbb{D}^d=\mathbb{D}^d_1$.
    
    \item Let $\mathbb{S}_{s}^d=\partial \mathbb{D}_{s}^{d+1}$ denote the $d$-dimensional sphere in $\mathbb{R}^{d+1}$ with center $\textbf{0}$ and radius $s$, and let $\mathbb{S}^d=\mathbb{S}^d_1$.
    \item Let $\omega_d=\mathcal{H}^d(\mathbb{D}^d).$
    \item If $0<r\leq\pi s$, let $V(d,s,r)$ denote the $\mathcal{H}^d$-measure of one (hence any) intrinsic metric open $r$-ball in $\mathbb{S}^d_{s}$. Note that $\pi s$ equals ${\rm diam}(\mathbb{S}^d_s)$ with respect to the Riemannian metric on the sphere.
    \item If $\mathcal{M}$ is a compact smooth $d$-dimensional submanifold, let $V_{\mathcal{M}}=\mathcal{H}^d(\mathcal{M})$.
    \end{itemize}

 Given $\textbf{x} \in \mathcal{M}$ and $r$ as in Lemma \ref{diffeomorphism}, we obtain a lower bound on $\mathcal{H}^d(B_{\mathcal{M}}(\textbf{x},r))$ as described in the next Proposition.
 
\begin{proposition}[G\"unther's Volume Comparison]
\label{volume bound}
Let $\mathcal{M}$ be a smooth and compact $d$-dimensional submanifold of $\mathbb{R}^N$ with  reach $\tau>0$ and $d\geq 2$. 
\begin{enumerate}
    
    \item If ${\bf x} \in int\mathcal{M}$ and $r\in \mathbb{R}$ satisfy $0<r<\pi \tau$ and $d_{\mathcal{M}}({\bf x}, \partial \mathcal{M})>r,$ then $$\mathcal{H}^d(B_{\mathcal{M}}({\bf x},r))\geq V(d,\tau,r).$$
    
    \item If $r<\sqrt{6}\tau,$ then $$V(d,\tau,r)\geq \omega_d \left(1-\frac{r^2}{6\tau^2} \right)^{d-1}r^d.$$
\end{enumerate}
\end{proposition}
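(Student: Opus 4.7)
The plan is to handle the two parts separately, with part (1) following from a direct application of the classical G\"unther volume comparison theorem and part (2) being an elementary calculus estimate on the explicit volume function for the constant curvature model space.

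For part (1), I would first invoke Lemma \ref{curvature} to conclude that the sectional curvatures of $int\mathcal{M}$ are bounded above by $K = \tau^{-2}$. The hypotheses $0 < r < \pi\tau = \pi/\sqrt{K}$ and $d_{\mathcal{M}}(\mathbf{x},\partial\mathcal{M}) > r$ then put us in the setting of Lemma \ref{diffeomorphism}, so $\exp_{\mathbf{x},r} : B_r \to B_{\mathcal{M}}(\mathbf{x}, r)$ is a diffeomorphism. With the curvature bound in place and the exponential map a diffeomorphism on $B_r$, G\"unther's volume comparison theorem (cited in the paper as Theorem 3.101(ii) of Gallot-Hulin-Lafontaine) applies and yields that $\mathcal{H}^d(B_{\mathcal{M}}(\mathbf{x}, r))$ is bounded below by the volume of a geodesic $r$-ball in the $d$-dimensional model space of constant curvature $\tau^{-2}$, which is exactly $\mathbb{S}^d_\tau$. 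This is $V(d,\tau,r)$ by definition.

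For part (2), I would use the explicit formula for the volume of a geodesic $r$-ball in $\mathbb{S}^d_\tau$ obtained from polar coordinates: the geodesic sphere of radius $t$ in $\mathbb{S}^d_\tau$ is an isometric copy of $\mathbb{S}^{d-1}_{\tau \sin(t/\tau)}$, so
\begin{equation*}
V(d,\tau,r) = d\,\omega_d \int_0^r \left(\tau \sin(t/\tau)\right)^{d-1} dt.
\end{equation*}
The Taylor remainder estimate $\sin(x) \geq x - x^3/6$ for $x \geq 0$ gives $\tau \sin(t/\tau) \geq t(1 - t^2/(6\tau^2))$. The hypothesis $r < \sqrt{6}\tau$ ensures that $1 - t^2/(6\tau^2) > 0$ on $[0,r]$, so raising to the $(d-1)$-th power is monotone and bounded below by $(1 - r^2/(6\tau^2))^{d-1}$ on $[0,r]$. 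Pulling this constant outside the integral and evaluating $\int_0^r t^{d-1} \, dt = r^d / d$ produces the claimed inequality.

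I expect neither step to be a serious obstacle: part (1) is essentially a verification that the hypotheses of G\"unther's theorem are met and part (2) is a one-line Taylor estimate followed by a monomial integration. The only mild subtlety is bookkeeping the cases $d = 2$ versus $d \geq 3$ in the Taylor step (the inequality $(1 - t^2/(6\tau^2))^{d-1} \geq (1 - r^2/(6\tau^2))^{d-1}$ is trivially tight when $d = 2$), and ensuring the model-space volume formula is applied with the correct normalization $d\omega_d$ for the surface measure of $\mathbb{S}^{d-1}$ rather than $\omega_{d-1}$.
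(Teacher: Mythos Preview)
Your proposal is correct and follows essentially the same route as the paper: part (1) combines Lemma~\ref{curvature}, Lemma~\ref{diffeomorphism}, and the classical G\"unther comparison, while part (2) uses the polar-coordinate volume formula on $\mathbb{S}^d_\tau$ together with the elementary bound $\sin(x)/x \geq 1 - x^2/6$ and a monotonicity argument to pull the constant $(1 - r^2/(6\tau^2))^{d-1}$ outside the integral. The only point the paper handles that you do not mention explicitly is the degenerate case $\tau = \infty$, which it dispatches separately via Lemma~\ref{affine}; this is a one-line special case and does not affect the substance of your argument.
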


\begin{proof}
If the reach of $\mathcal{M}$ is infinite, then by Lemma \ref{affine}, $\mathcal{M}$ is a convex subset of a $d$-dimensional affine space. Since $\textbf{x}$ is at least $r$ away from the boundary, we have  
$\mathcal{H}^d(B_{\mathcal{M}}(\textbf{x},r)) =  \omega_d r^d$, concluding the proof in this case. In the remainder of the proof, we assume $\tau$ is finite.

Let $\textbf{x} \in int\mathcal{M}$ and $r>0$ be as in the statement of 1. By Lemma \ref{diffeomorphism}, $\exp_{\textbf{x},r}$ is a diffeomorphism onto its image, the geodesic ball $B_{\mathcal{M}}(\textbf{x},r)$.  By Lemma \ref{curvature} and G$\ddot{\text{u}}$nther's volume comparison theorem \cite[page 169, Theorem 3.101, part ii]{gallot_riemannian_1990}, $\mathcal{H}^d(B_{\mathcal{M}}(\textbf{x},r))$ is bounded below by the volume of a metric $r$-ball in  the sphere $\mathbb{S}^d_{\tau}$ having constant sectional curvatures $\tau^{-2}$, concluding the proof of 1.

To prove 2, let $f(x)=\frac{\sin(x)}{x}$. We use a formula derived from \cite{li_concise_2010}:

    $$V(d,\tau,r) = d\omega_d \left( \int_{0}^{r} \left(xf \left(\frac{x}{\tau} \right) \right) dx \right)^{d-1} .$$

As $f(x)$ is positive and decreasing on $(0,\pi)$ and $r<\sqrt{6}\tau<\pi\tau$, it follows 

    $$V(d,\tau,r) \geq d\omega_d \left( \int_{0}^{r} \left(x f \left(\frac{r}{\tau} \right) \right)  dx \right)^{d-1} = \omega_d f \left(\frac{r}{\tau} \right)^{d-1}r^d.$$

Now using $0 < 1 - \frac{x^2}{6}  < f(x)$ on $(0,\sqrt{6})$, we obtain

$$V(d,\tau,r)\geq \omega_d \left(1-\frac{r^2}{6\tau^2} \right)^{d-1}r^d,$$ concluding the proof.
\end{proof}

Given a compact and smooth $d$-dimensional submanifold $\mathcal{M}$ of $\mathbb{R}^N$ with reach $\tau_\mathcal{M}$, the ratio $$\frac{V_{\mathcal{M}}}{\tau_{\mathcal{M}}^d}$$ is invariant under a rescaling of the ambient $\mathbb{R}^N$.  The preceding Proposition \ref{volume bound} applies to show this ratio is uniformly bounded below for compact smooth $d$-dimensional submanifolds with $\partial \mathcal{M}=\emptyset$ as in the next proposition.

\begin{proposition}
\label{prop:coverboundOK}
Let $\mathcal{M}$ be a compact smooth $d$-dimensional submanifold of $\mathbb{R}^N$ with   $d\geq 1$.  If $\partial \mathcal{M}=\emptyset$, then
\begin{align*}
      \frac{V_{\mathcal{M}}}{\tau_{\mathcal{M}}^d}  \geq \mathcal{H}^d(\mathbb{S}^d).
\end{align*}
\end{proposition}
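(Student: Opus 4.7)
The plan is to obtain the bound by picking a single point $\mathbf{x} \in \mathcal{M}$, estimating the $\mathcal{H}^d$-volume of a large intrinsic ball about $\mathbf{x}$ from below via G\"unther's comparison (Proposition~\ref{volume bound}), and then letting the radius approach $\pi\tau_{\mathcal{M}}$ so that the comparison ball swallows the whole sphere of curvature $\tau_{\mathcal{M}}^{-2}$.

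First I would verify that $\tau_{\mathcal{M}}<\infty$ when $d\geq 1$ and $\partial\mathcal{M}=\emptyset$.  By Lemma~\ref{convex} and Lemma~\ref{affine}, infinite reach would force $\mathcal{M}$ to be a compact convex subset of a $d$-dimensional affine subspace $V\subset \mathbb{R}^N$; but any compact $d$-manifold without boundary that is contained in a $d$-dimensional affine space must be both open and closed in $V$, which is impossible for $d\geq 1$.  Hence $\tau_{\mathcal{M}}<\infty$ and G\"unther applies with the finite curvature bound from Lemma~\ref{curvature}.

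Next, assuming $d\geq 2$, I would fix any $\mathbf{x}\in \mathcal{M}=\mathrm{int}\,\mathcal{M}$ (so by convention $d_{\mathcal{M}}(\mathbf{x},\partial\mathcal{M})=\infty$) and, for each $r\in(0,\pi\tau_{\mathcal{M}})$, apply part~(1) of Proposition~\ref{volume bound} to conclude
$$V_{\mathcal{M}}\;\geq\;\mathcal{H}^d\bigl(B_{\mathcal{M}}(\mathbf{x},r)\bigr)\;\geq\;V(d,\tau_{\mathcal{M}},r).$$
Letting $r\nearrow \pi\tau_{\mathcal{M}}$, the open intrinsic ball of radius $r$ in $\mathbb{S}^d_{\tau_{\mathcal{M}}}$ exhausts $\mathbb{S}^d_{\tau_{\mathcal{M}}}$ minus the antipodal point, so by monotone convergence
$$V(d,\tau_{\mathcal{M}},r)\;\longrightarrow\;\mathcal{H}^d(\mathbb{S}^d_{\tau_{\mathcal{M}}})\;=\;\tau_{\mathcal{M}}^{d}\,\mathcal{H}^d(\mathbb{S}^d),$$
which yields the claimed inequality after dividing by $\tau_{\mathcal{M}}^d$.

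Finally I would treat $d=1$ as a separate short case, since Proposition~\ref{volume bound} is stated for $d\geq 2$.  A compact smooth $1$-manifold without boundary is a disjoint union of smoothly embedded circles; by Lemma~\ref{curvature} the curvature of each component is bounded above by $\tau_{\mathcal{M}}^{-1}$, so Fenchel's theorem (total curvature $\geq 2\pi$) gives length at least $2\pi\tau_{\mathcal{M}}=\mathcal{H}^1(\mathbb{S}^1)\,\tau_{\mathcal{M}}$ on each component, and summing yields $V_{\mathcal{M}}\geq \tau_{\mathcal{M}}\,\mathcal{H}^1(\mathbb{S}^1)$.  The main (minor) obstacle is essentially bookkeeping: confirming that the limit in $r$ for $V(d,\tau_{\mathcal{M}},r)$ truly equals the full sphere volume $\tau_{\mathcal{M}}^d\mathcal{H}^d(\mathbb{S}^d)$, and handling the $d=1$ edge case; the geometric content is entirely supplied by G\"unther's theorem together with the curvature--reach bound already established in Lemma~\ref{curvature}.
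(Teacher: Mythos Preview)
Your proof is correct and follows essentially the same approach as the paper: reduce to finite reach via Lemma~\ref{affine}, apply G\"unther's comparison (Proposition~\ref{volume bound}) at a single point with radius tending to $\pi\tau_{\mathcal{M}}$, and invoke Fenchel for $d=1$.  If anything, your version is slightly more careful, since Proposition~\ref{volume bound}(1) is stated only for $r<\pi\tau$ strictly, so the paper's direct use of $r=\pi$ implicitly relies on the same limit you spell out.
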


\begin{proof}
By Lemma \ref{affine}, the reach $\tau_{\mathcal{M}}$ is finite.  Since $\frac{V_{\mathcal{M}}}{\tau_{\mathcal{M}}^d}$ is scale invariant, it is sufficient to consider the case where $\tau_{\mathcal{M}}=1$. The case of $d=1$ is classical \cite{borsuk1948courbure,  fenchel_differential_1951}. When $d \geq 2$, let $\textbf{x} \in \mathcal{M}$ and apply
Proposition \ref{volume bound} to deduce 
$$V_{\mathcal{M}}\geq \mathcal{H}^d(B_{\mathcal{M}}(\textbf{x},\pi)) \geq V(d,1,\pi)=\mathcal{H}^d(\mathbb{S}^d). $$

\end{proof}

With the volume comparison Proposition \ref{volume bound} in place, we are now prepared to prove covering number bounds for compact smooth submanifolds of $\mathbbm{R}^N$ in terms of reach.

\subsection{Upper Bounds for the Covering Numbers of Compact Smooth Submanifolds of $\mathbb{R}^N$}

We begin by reviewing the related covering and packing numbers of a subset of a metric space.
\begin{definition}
Let $(X,d)$ be a metric space, $E$ a subset of $X$, and $r>0$.  
\begin{enumerate}
    \item The packing number $N^{pack}_r(E)$ is the largest number of points $x_1,\ldots,x_n \in E$ such that the metric balls $B(x_1,r),\ldots,B(x_n,r)$ are pairwise disjoint.
    
    \item The covering number $N^{cover}_r(E)$ is the fewest number of points $x_1,\ldots,x_n \in E$ such that $E$ lies in the union of the metric balls $\overline{B(x_1,r)},\ldots, \overline{B(x_n,r)}.$
\end{enumerate}
\end{definition}

This section presents upper bounds for the covering numbers of compact smooth submanifolds of Euclidean spaces.  The method employed is to give upper bounds for the packing numbers of these submanifolds and to apply the following well known lemma \cite[lemma 4.2.8]{vershynin_high-dimensional_2018}.

\begin{lemma}
\label{pack}
For each $r>0$, $N^{cover}_r(E)\leq N^{pack}_{r/2}(E).$
\end{lemma}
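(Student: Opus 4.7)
The plan is to prove the inequality by constructing an explicit cover from a maximal packing. If $N^{pack}_{r/2}(E)=\infty$ the conclusion is vacuous, so I assume $n := N^{pack}_{r/2}(E)$ is finite.

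First I would fix a packing that achieves the maximum, i.e.\ choose $x_1,\dots,x_n \in E$ such that the balls $B(x_1,r/2),\dots,B(x_n,r/2)$ are pairwise disjoint. By the maximality of $n$, this packing is maximal in the sense that no $y \in E$ can be added while preserving disjointness: indeed, if one could add such a $y$, then $\{x_1,\dots,x_n,y\}$ would form a packing of size $n+1$, contradicting $n = N^{pack}_{r/2}(E)$.

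Next I would show $\{x_1,\dots,x_n\}$ is an $r$-cover of $E$. Let $y \in E$ be arbitrary. By the previous paragraph, $B(y,r/2)$ must intersect some $B(x_i,r/2)$; picking a point $z$ in the intersection and applying the triangle inequality gives
\begin{equation*}
d(y,x_i) \leq d(y,z) + d(z,x_i) < r/2 + r/2 = r,
\end{equation*}
so $y \in \overline{B(x_i,r)}$. Hence $E \subset \bigcup_{i=1}^n \overline{B(x_i,r)}$, which by the definition of $N^{cover}_r(E)$ yields $N^{cover}_r(E) \leq n = N^{pack}_{r/2}(E)$.

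The only subtle point is justifying the existence of a maximum-size packing when one is handed the abstract supremum defining $N^{pack}_{r/2}(E)$; in the finite case this is immediate since the supremum is attained, and in the infinite case the lemma holds trivially. No obstacle of substance arises, as this is a standard packing-covering duality argument tailored to the paper's definitions (closed balls for covers, open balls for packings), and the factor of $2$ between the radii is exactly what the triangle inequality consumes.
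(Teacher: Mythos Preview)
Your proof is correct and is precisely the standard packing--covering duality argument. The paper does not actually supply its own proof of this lemma; it merely states the result and cites \cite[Lemma 4.2.8]{vershynin_high-dimensional_2018}, so there is nothing further to compare.
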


\begin{theorem}[Covering a Compact Smooth Submanifold with Empty Boundary] \label{coverWithReachTheorem}
Let $\mathcal{M}$ be a compact smooth $d$-dimensional submanifold of $\mathbbm{R}^N$ with $\partial \mathcal{M}=\emptyset$. 
Let $\tau_{\mathcal{M}}, V_{\mathcal{M}}\in (0,\infty)$ denote the  reach and volume of $\mathcal{M}$, respectively. 

\begin{enumerate}
    \item If $d=0$ and $\epsilon>0$, then $N^{cover}_{\epsilon}(\mathcal{M})\leq V_{\mathcal{M}}.$
    
    \item If $d>0$ and if $0 < \epsilon < 2\sqrt{6}\tau_{\mathcal{M}}$, then $N^{cover}_{\epsilon}(\mathcal{M})\leq  \frac{V_{\mathcal{M}}}{ \omega_d \left(1 - \frac{\epsilon^2}{24\tau_{\mathcal{M}}^2}\right)^{d-1} \left(\frac{\epsilon}{2}\right)^d} .$
    
\end{enumerate}

\end{theorem}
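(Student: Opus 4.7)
The plan is to derive both bounds from the packing number bound in Lemma \ref{pack} by turning a chordal packing into a collection of pairwise disjoint intrinsic geodesic balls, and then invoking the G\"unther-type lower bound on their volumes from Proposition \ref{volume bound}.

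For case (1) ($d=0$), $\mathcal{M}$ is a finite set of points in $\mathbb{R}^N$, so $V_{\mathcal{M}} = \mathcal{H}^0(\mathcal{M}) = |\mathcal{M}|$, and placing one ball at each point trivially gives $N^{cover}_{\epsilon}(\mathcal{M}) \leq |\mathcal{M}| = V_{\mathcal{M}}$. For case (2), I would first apply Lemma \ref{pack} to reduce bounding $N^{cover}_{\epsilon}(\mathcal{M})$ to bounding $N^{pack}_{\epsilon/2}(\mathcal{M})$. So, take any maximal packing $\{{\bf x}_1, \ldots, {\bf x}_n\} \subset \mathcal{M}$ for which the chordal balls $B({\bf x}_i, \epsilon/2)$ are pairwise disjoint, i.e.\ $\|{\bf x}_i - {\bf x}_j\|_2 \geq \epsilon$ for all $i \neq j$. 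Since the geodesic metric majorizes the chordal metric, we also have $d_{\mathcal{M}}({\bf x}_i,{\bf x}_j)\geq \epsilon$, so the intrinsic geodesic balls $B_{\mathcal{M}}({\bf x}_i,\epsilon/2)$ are pairwise disjoint subsets of $\mathcal{M}$.

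Next, I would verify that Proposition \ref{volume bound} applies with $r = \epsilon/2$: the hypothesis $\epsilon/2 < \pi\tau_{\mathcal{M}}$ and the second-part hypothesis $\epsilon/2 < \sqrt{6}\tau_{\mathcal{M}}$ both follow immediately from the assumption $\epsilon < 2\sqrt{6}\tau_{\mathcal{M}}$ (using $\sqrt{6} < \pi$), while the boundary condition $d_{\mathcal{M}}({\bf x}_i, \partial \mathcal{M}) > \epsilon/2$ is vacuous since $\partial \mathcal{M} = \emptyset$ forces $d_{\mathcal{M}}({\bf x}_i, \partial \mathcal{M}) = \infty$ by the convention set in Section \ref{sec:PropertiesOfReach}. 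Combining the disjointness with Proposition \ref{volume bound} yields, for $d \geq 2$,
\begin{equation*}
V_{\mathcal{M}} \;\geq\; \sum_{i=1}^{n} \mathcal{H}^d \bigl(B_{\mathcal{M}}({\bf x}_i,\epsilon/2)\bigr) \;\geq\; n\,\omega_d \left(1 - \frac{\epsilon^2}{24\tau_{\mathcal{M}}^2}\right)^{d-1}\!\left(\frac{\epsilon}{2}\right)^d,
\end{equation*}
and rearranging gives the stated bound on $n \geq N^{pack}_{\epsilon/2}(\mathcal{M}) \geq N^{cover}_{\epsilon}(\mathcal{M})$.

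The one wrinkle is the case $d = 1$, which Proposition \ref{volume bound} excludes. Here $\mathcal{M}$ is a finite disjoint union of simple closed $C^\infty$ curves, and for each ${\bf x}_i$ the intrinsic ball $B_{\mathcal{M}}({\bf x}_i, \epsilon/2)$ is a geodesic arc of length exactly $\epsilon$ whenever $\epsilon/2$ is strictly less than the injectivity radius; the reach condition $\epsilon/2 < \sqrt{6}\tau_{\mathcal{M}} < \pi\tau_{\mathcal{M}}$ guarantees this by the curvature bound in Lemma \ref{curvature} and the fact that the injectivity radius of a closed curve of reach $\tau$ is at least $\pi\tau$. Since $\omega_1 = 2$ and $(1 - \epsilon^2/(24\tau_{\mathcal{M}}^2))^{0} = 1$, the target lower bound $\omega_1 (\epsilon/2) = \epsilon$ holds with equality, and the same packing argument closes the case. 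The main (minor) obstacle is just this careful checking that the $d=1$ case can be handled uniformly with the same final estimate, together with the subtle point that disjointness of \emph{chordal} balls implies disjointness of the \emph{intrinsic} balls used in the volume comparison.
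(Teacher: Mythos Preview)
Your proposal is correct and follows essentially the same packing-plus-volume-comparison argument as the paper: reduce to $N^{pack}_{\epsilon/2}$ via Lemma~\ref{pack}, pass from disjoint chordal balls to disjoint intrinsic balls (the paper does this via the containment \eqref{contain}, you via $d_{\mathcal{M}}\geq\|\cdot\|_2$, which are equivalent), and invoke Proposition~\ref{volume bound} with $r=\epsilon/2$. Your explicit treatment of the $d=1$ case is in fact more careful than the paper's proof, which applies Proposition~\ref{volume bound} uniformly for $d>0$ without separately noting that the proposition is stated only for $d\geq 2$.
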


\begin{proof}
First assume $d=0$.  Then $V_{\mathcal{M}}\in \mathbb{N}$  and there exists a set of $V_{\mathcal{M}}$ points $$\{\textbf{x}_1, \ldots, \textbf{x}_{V_{\mathcal{M}}}\} \subset  \mathbb{R}^N$$ such that $\mathcal{M}=\{\textbf{x}_1, \ldots, \textbf{x}_{V_{\mathcal{M}}}\}$.  The Euclidean balls $B(\textbf{x}_i,\epsilon)$ with $i=1,\ldots, V_{\mathcal{M}}$ cover $\mathcal{M}$, concluding the proof in this case.  Next assume $d > 0$.  By Lemma \ref{pack}, it suffices to establish the inequality $$N^{pack}_{\epsilon/2}(\mathcal{M})\leq \frac{V_{\mathcal{M}}}{ \omega_d \left(1 - \frac{\epsilon^2}{24\tau_{\mathcal{M}}^2}\right)^{d-1} \left(\frac{\epsilon}{2}\right)^d}.$$  

Recall from (\ref{contain}) that for each $\textbf{x}\in \mathcal{M}$ and $r>0$, $$B_\mathcal{M}(\textbf{x},r)\subset B(\textbf{x},r)\cap \mathcal{M}.$$  Conclude that if $\{\textbf{x}_1, \ldots \textbf{x}_p\}$ are $p$-points in $\mathcal{M}$ such that the Euclidean balls $B(\textbf{x}_i,r)$ are pairwise disjoint, then the intrinsic metric balls $B_{\mathcal{M}}(\textbf{x}_i,r)$ are also pairwise disjoint.  In this case, since $\mathcal{H}^d$ is additive,  $$V_{\mathcal{M}}\geq p \min\{\mathcal{H}^d(B_{\mathcal{M}}(\textbf{x}_i,r))\,\vert\, i=1, \ldots, p\}.$$  With this in mind, applying Proposition \ref{volume bound} with $r=\frac{\epsilon}{2}$ yields the desired conclusion. 
\end{proof}

We will now apply our theorem to $\mathbb{S}^d$ to judge its tightness. A standard estimate for covering $\mathbb{S}^d$ with balls of radius $\epsilon$ centered on the sphere is $\left(\frac{3}{\epsilon} \right)^d$, see \cite[corollary 4.2.13]{vershynin_high-dimensional_2018}. Theorem~\ref{coverWithReachTheorem} yields an upper bound of comparable quality.

\begin{corollary}
For $0 < \epsilon < 1$, $\mathbb{S}^d \subset \mathbb{R}^N$ can be covered with at most $(3.4\sqrt{d})\frac{2.1^d}{\epsilon^d}$ Euclidean $N$-dimensional balls of radius $\epsilon$ centered in $\mathbb{S}^d$. 
\end{corollary}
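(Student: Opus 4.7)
The plan is to apply Theorem \ref{coverWithReachTheorem} directly to $\mathcal{M}=\mathbb{S}^d$, regarded as a compact smooth $d$-dimensional submanifold of $\mathbb{R}^N$ with empty boundary and reach $\tau_{\mathbb{S}^d}=1$. Since the hypothesis $0<\epsilon<1<2\sqrt{6}$ lies in the admissible range, part (2) of that theorem immediately yields
\begin{equation*}
N^{cover}_\epsilon(\mathbb{S}^d) \;\leq\; \frac{V_{\mathbb{S}^d}}{\omega_d\,(1-\epsilon^2/24)^{d-1}\,(\epsilon/2)^d},
\end{equation*}
and the entire remaining task is to massage the right-hand side into the asserted form $3.4\sqrt{d}\cdot 2.1^d/\epsilon^d$.

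I would split the right-hand side into three factors, namely $V_{\mathbb{S}^d}/\omega_d$, $\,2^d/(1-\epsilon^2/24)^{d-1}$, and $\epsilon^{-d}$, and bound them separately. The second factor is easy: since $\epsilon<1$ we have $1-\epsilon^2/24 > 23/24$, so $2^d/(1-\epsilon^2/24)^{d-1} \leq 2(48/23)^{d-1}$, and a short direct check shows $2(48/23)^{d-1}\leq 2.1^d$ for every $d\geq 1$ (it reduces to $2/2.1 \leq (48.3/48)^{d-1}$, whose left side is below $1$ and whose right side is at least $1$). This handles the $2.1^d$ and $\epsilon^{-d}$ parts of the target bound cleanly, leaving only the $3.4\sqrt{d}$ factor to be accounted for by $V_{\mathbb{S}^d}/\omega_d$.

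The main work is therefore bounding the volume ratio $V_{\mathbb{S}^d}/\omega_d = 2\sqrt{\pi}\,\Gamma(d/2+1)/\Gamma((d+1)/2)$ by $3.4\sqrt{d}$ uniformly in $d\geq 1$. I would handle this by combining a direct computation for small $d$ (the values $\pi,\,4,\,3\pi/2,\,16/3$ divided by $\sqrt{d}$ give roughly $3.14,\,2.83,\,2.72,\,2.67$, all below $3.4$) with a Gautschi-type estimate such as $\Gamma(x+1)/\Gamma(x+1/2)\leq \sqrt{x+1/2}$ applied at $x=d/2$ to control the ratio asymptotically by $\sqrt{2\pi(d+c)}$, which is at most $3.4\sqrt{d}$ once $d$ is large enough. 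Alternatively, a monotonicity argument for the sequence $V_{\mathbb{S}^d}/(\omega_d\sqrt{d})$ reduces the question to checking a finite initial segment plus its Stirling limit $\sqrt{2\pi}\approx 2.507$. Either way, this Gamma-ratio estimate is the only genuinely nontrivial step; everything else is routine bookkeeping. Multiplying the three factor bounds together then produces the claimed covering number bound $3.4\sqrt{d}\cdot 2.1^d/\epsilon^d$.
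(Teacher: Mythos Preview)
Your proposal is correct and follows essentially the same route as the paper: apply Theorem~\ref{coverWithReachTheorem} with $\tau_{\mathbb{S}^d}=1$, compute $V_{\mathbb{S}^d}/\omega_d = 2\sqrt{\pi}\,\Gamma(d/2+1)/\Gamma((d+1)/2)$, and simplify. The only difference is bookkeeping: the paper uses the looser one-line bound $V_{\mathbb{S}^d}/\omega_d < 2\sqrt{\pi d}\approx 3.545\sqrt{d}$ and then checks $2\sqrt{\pi d}\,(24/23)^{d-1}\,2^d/\epsilon^d \leq 3.4\sqrt{d}\,(2.1/\epsilon)^d$ in one step, absorbing the slack into the $2.1^d$ factor; you instead match each factor of the target separately, which forces you to prove the slightly sharper $V_{\mathbb{S}^d}/\omega_d \leq 3.4\sqrt{d}$ and hence to invoke a Gautschi-type estimate or a small-$d$ check. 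Both routes work, but the paper's is shorter since it avoids the finer Gamma-ratio analysis.
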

\begin{proof}
By Theorem \ref{coverWithReachTheorem}, we need at most $\frac{V}{ \omega_d \left(1 - \frac{\epsilon^2}{24\tau^2} \right)^{d-1} \left(\frac{\epsilon}{2} \right)^d}$ balls. We have $\tau = 1$, $V = \frac{2\pi^{\frac{d+1}{2}}}{\Gamma(\frac{d+1}{2})}$, $\omega_d = \frac{\pi^{\frac{d}{2}}}{\Gamma(\frac{d}{2}+1)}$, and $\frac{V}{\omega_d} = 2\sqrt{\pi} \frac{\Gamma(\frac{d}{2} +1)}{\Gamma(\frac{d+1}{2})} < 2\sqrt{\pi d}$. This leads to an upper bound of $2\sqrt{\pi d} \left(\frac{24}{23} \right)^{d-1} \frac{2^d}{\epsilon^d} \leq 3.4 \sqrt{d} \left(\frac{2.1}{\epsilon} \right)^d$. 
\end{proof}

While there are tighter bounds on the order of $\mathcal{O}\left(\frac{d^{1.5}\ln(d)}{\epsilon^d} \right)$ (\cite[Theorem 6.8.1]{zky2004}), they only apply to $\mathbb{S}^d$, whereas Theorem~\ref{coverWithReachTheorem} has the advantage of applying to a general submanifold.

Using Theorem \ref{coverWithReachTheorem} and Lemma \ref{boundary} we now present a covering estimate for a manifold with nonempty boundary. We first introduce some notation.  Given a compact smooth $d$-dimensional submanifold $\mathcal{M}$ of $\mathbb{R}^N$ with $\partial{M}\neq \emptyset$, 

\begin{itemize} 
\item Let $\mathcal{C}_1,\ldots, \mathcal{C}_k$ ($k\geq 1$) denote the nonempty connected components of $\partial\mathcal{M}$.

\item Let $\tau_{M}$ denote the reach of $\mathcal{M}$.

\item For each $i \in \{1,\ldots, k\}$ let $\tau_{\mathcal{C}_i}$ denote the reach of $\mathcal{C}_i$.  

\item Let $\mu_{\partial\mathcal{M}}=\min\{\tau_{\mathcal{C}_i} ~\big \vert ~ i\in \{1,\ldots,k\}\}$

\item Let $V_{\mathcal{M}}=\mathcal{H}^d(\mathcal{M})\,\,\,\,\,\text{and}\,\,\,\, V_{\partial\mathcal{M}}=\mathcal{H}^{d-1}(\partial\mathcal{M})$.

\item For each $i\in \{1,\ldots, k\},$ let $V_{\mathcal{C}_i}=\mathcal{H}^{d-1}(\mathcal{C}_i)$.  
\end{itemize}

Note that $$V_{\partial \mathcal{M}}=\sum_{i=1}^k V_{\mathcal{C}_i}.$$

\begin{theorem}[Covering a Compact Smooth Submanifold with Nonempty Boundary] \label{cover-boundary-gunther}
Let $\mathcal{M}$ be a compact smooth $d$-dimensional submanifold of $\mathbbm{R}^N$ with $d\geq 1$ and $\partial \mathcal{M}\neq \emptyset$. Further assume $\epsilon \in (0,\min\{  4\sqrt{6}\mu_{\partial {\mathcal M}}, 2\sqrt{6}\tau_{\mathcal M} \}]$.  

\begin{enumerate}
\item If $d = 1$ then $$N^{cover}_{\epsilon}(\mathcal{M})\leq \frac{V_{\mathcal{M}}}{\epsilon} + V_{\partial {\mathcal M}}.$$ 
\item If $d \geq 2$, then
$$N^{cover}_{\epsilon}(\mathcal{M})\leq \frac{V_{\mathcal M}}
{ \omega_d \left(1 - \frac{\epsilon^2}{24\tau_{\mathcal M}^2}\right)^{d-1} \left(\frac{\epsilon}{2} \right)^d} 
+ \frac{V_{\partial \mathcal{M}}}
{ \omega_{d-1} \left(1 - \frac{\epsilon^2}{96\mu_{\partial \mathcal{M}}^2} \right)^{d-2} \left(\frac{\epsilon}{4} \right)^{d-1}}.$$ 
\end{enumerate}
\end{theorem}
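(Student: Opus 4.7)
My plan is to decompose $\mathcal{M}$ into an Euclidean collar of the boundary and a deep interior, cover each piece independently, and sum the estimates. The second term of the stated bound will arise from applying Theorem~\ref{coverWithReachTheorem} to $\partial\mathcal{M}$ (which by Lemma~\ref{boundary}(3) is itself a compact smooth $(d-1)$-dimensional boundary-less submanifold of $\mathbbm{R}^N$) at scale $\epsilon/2$, while the first term will come from a G\"unther-style packing-volume argument on the deep interior. The choice of scale $\epsilon/2$ for the boundary is made precisely so that the triangle inequality upgrades an $(\epsilon/2)$-cover of $\partial\mathcal{M}$ into an $\epsilon$-cover of the collar.

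For the boundary step, I would apply Theorem~\ref{coverWithReachTheorem} to each connected component $\mathcal{C}_i$ of $\partial\mathcal{M}$ at scale $\epsilon/2$ to produce an Euclidean $(\epsilon/2)$-cover $\mathcal{A}\subset\partial\mathcal{M}$. Using $\tau_{\mathcal{C}_i}\geq \mu_{\partial\mathcal{M}}$ to uniformly lower-bound the $\left(1-\frac{(\epsilon/2)^2}{24\tau_{\mathcal{C}_i}^2}\right)^{d-2}$ factors and summing the $V_{\mathcal{C}_i}$ into $V_{\partial\mathcal{M}}$ immediately produces the claimed second term. The hypothesis $\epsilon\leq 4\sqrt{6}\mu_{\partial\mathcal{M}}$ is exactly what is required to verify the hypothesis $\epsilon/2<2\sqrt{6}\mu_{\partial\mathcal{M}}$ of Theorem~\ref{coverWithReachTheorem} on each component. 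Any $\mathbf{x}\in\mathcal{M}$ with $\|\mathbf{x}-\partial\mathcal{M}\|_2\leq\epsilon/2$ then lies within Euclidean distance $\epsilon$ of some center of $\mathcal{A}$.

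For the interior step (when $d\geq 2$), set $\mathcal{M}_0:=\{\mathbf{x}\in\mathcal{M}:d(\mathbf{x},\partial\mathcal{M})>\epsilon/2\}$ (Euclidean distance) and choose a maximal Euclidean $\epsilon$-separated set $\{\mathbf{x}_1,\ldots,\mathbf{x}_p\}\subset\mathcal{M}_0$; by maximality it is an Euclidean $\epsilon$-cover of $\mathcal{M}_0$. Since $d_{\mathcal{M}}\geq\|\cdot\|_2$, each center satisfies $d_{\mathcal{M}}(\mathbf{x}_i,\partial\mathcal{M})>\epsilon/2$, the open intrinsic balls $B_{\mathcal{M}}(\mathbf{x}_i,\epsilon/2)$ lie in $int\mathcal{M}$, and $d_{\mathcal{M}}(\mathbf{x}_i,\mathbf{x}_j)\geq\epsilon$ makes them pairwise disjoint. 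Proposition~\ref{volume bound} with $r=\epsilon/2$, whose applicability is guaranteed by the hypothesis $\epsilon\leq 2\sqrt{6}\tau_{\mathcal{M}}$, then lower-bounds each $\mathcal{H}^d(B_{\mathcal{M}}(\mathbf{x}_i,\epsilon/2))$ by $\omega_d\left(1-\frac{\epsilon^2}{24\tau_{\mathcal{M}}^2}\right)^{d-1}\left(\frac{\epsilon}{2}\right)^d$, and dividing $V_{\mathcal{M}}$ through yields the first term. Adding $|\mathcal{A}|+p$ gives the stated bound for $d\geq 2$. For $d=1$ the boundary is a finite set of points, so $\mu_{\partial\mathcal{M}}=\infty$, the first constraint is vacuous, and $|\mathcal{A}|\leq V_{\partial\mathcal{M}}$; meanwhile each interior intrinsic $(\epsilon/2)$-ball is an arc of length exactly $\epsilon$, so the same disjoint-volume argument delivers $p\leq V_{\mathcal{M}}/\epsilon$.

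The main subtlety is the intrinsic-versus-Euclidean bookkeeping: the interior volume comparison in Proposition~\ref{volume bound} requires the centers to satisfy the \emph{intrinsic} condition $d_{\mathcal{M}}(\mathbf{x}_i,\partial\mathcal{M})>\epsilon/2$, and the volume-sum bound requires the intrinsic balls to be pairwise disjoint, yet both conditions are enforced only via the \emph{Euclidean} packing and collar-exclusion. Fortunately the single inequality $\|\cdot\|_2\leq d_{\mathcal{M}}$ takes care of both: it converts Euclidean distance-to-boundary into intrinsic distance-to-boundary, and Euclidean separation of centers into intrinsic separation. Making this one-line observation explicit is the only structurally delicate step; everything else is routine tracking of the constants coming out of Proposition~\ref{volume bound} and Theorem~\ref{coverWithReachTheorem}.
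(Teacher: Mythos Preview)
Your proof is correct and follows essentially the same collar-plus-interior strategy as the paper, applying Theorem~\ref{coverWithReachTheorem} to $\partial\mathcal{M}$ at scale $\epsilon/2$ for the collar and a packing argument with Proposition~\ref{volume bound} for the deep interior. The paper defines the collar via the intrinsic metric $d_{\mathcal{M}}$ rather than the Euclidean one (and handles $d=1$ by a direct component-by-component count instead of specializing the general decomposition), but both variants hinge on the same inequality $\|\cdot\|_2\leq d_{\mathcal{M}}$ that you correctly flag as the crux.
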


\begin{proof} First consider the case when $d=1$. Each connected component of $\mathcal{M}$ is either an embedded circle or an embedded closed interval.  Let $\mathcal{D}$ denote a connected component and let $V_{\mathcal{D}}$ denote its length.  We claim

\begin{enumerate}
    \item If  $\partial\mathcal{D}=\emptyset$, then $N^{cover}_{\epsilon}(\mathcal{D})\leq \frac{V_{\mathcal{D}}}{\epsilon},$  and
    \item If $\partial\mathcal{D}\neq \emptyset,$ then
    $N^{cover}_{\epsilon}(\mathcal{D})\leq \frac{V_{\mathcal{D}}}{\epsilon}+2.$ 
\end{enumerate}

Assuming the claim, the desired upper bound follows from summing the above upper bounds over the connected components of $\mathcal{M}$, noting those components $\mathcal{D}$ with  $\partial\mathcal{D}\neq \emptyset$ have $V_{\partial{\mathcal{D}}}=2$ and   those with $\partial{\mathcal{D}}=\emptyset$ have $V_{\partial{\mathcal{D}}}=0.$

We now establish the claim.  We apply Lemma \ref{pack}, and instead give an upper bound for $N^{pack}_{\frac{\epsilon}{2}}(\mathcal{D})$. 

In case 1, by (\ref{contain}), each Euclidean ball of radius $\epsilon/2$ centered in $\mathcal{D}$ contains a geodesic ball of the same radius. The length of this ball is $2(\epsilon/2)$. Therefore, $N^{pack}_{\frac{\epsilon}{2}}(\mathcal{D})\leq \frac{V_{\mathcal{D}}}{\epsilon}$, where we have used that $\frac{V_{\mathcal{D}}}{\epsilon} \geq \frac{V_{\mathcal{D}}}{2 \sqrt{6}\tau_{\mathcal M}} \geq  \frac{V_{\mathcal{D}}}{2 \sqrt{6}\tau_{\mathcal D}} \geq \frac{2 \pi}{2 \sqrt{6}} > 1$ by Proposition~\ref{prop:coverboundOK}.  By the same reasoning, in case 2, all but at most two of the Euclidean balls in an $\epsilon/2$ packing of $\mathcal{D}$ will meet $\mathcal{D}$ in a geodesic interval of length at least $\epsilon$.  The two potentially exceptional balls are those centered at points nearest to the two boundary points.  Now 2 follows, concluding the proof of the theorem when $d=1$.\\

Now assume $d\geq 2$. We will cover the following two subsets of $\mathcal{M}$ separately: $$S_1 = \left\{ {\bf x} \in \mathcal{M} ~\big|~ d_{\mathcal M}( {\bf x},\partial \mathcal{M}) < \frac{\epsilon}{2} \right\}\,\,\,\,\, \text{and}\,\,\,\,\, S_2 = \mathcal{M} \backslash S_1.$$  We begin by obtaining a covering number bound for $S_1$.  We first claim \begin{equation}
    \label{gotoboundary}
N^{cover}_{\epsilon}(S_1)\leq N^{cover}_{\frac{\epsilon}{2}}(\partial \mathcal{M}).
\end{equation}

Indeed, assume that $\partial\mathcal{M}$ has been covered by a finite set of Euclidean $\frac{\epsilon}{2}$-balls and let $C \subset \partial\mathcal{M}$ denote the set of centers of these balls. Given ${\bf x} \in S_1$, there exists ${\bf y} \in \partial \mathcal{M}$ and $\textbf{c}\in C$ such that $$d_{\mathcal M}({\bf x},{\bf y}) < \frac{\epsilon}{2}\,\,\,\, \text{and}\,\,\, \|\textbf{y}-\textbf{c}\|_2 <\epsilon/2.$$ Then, $$\|\textbf{x}-\textbf{c}\|_2 \leq \|\textbf{x}-\textbf{y}\|_2+\|\textbf{y}-\textbf{c}\|_2\leq d_{\mathcal{M}}(\textbf{x},\textbf{y})+\epsilon/2<\epsilon,$$ demonstrating that $C$ is the central set for an $\epsilon$ covering of $S_1$ and establishing (\ref{gotoboundary}).

By Lemma \ref{boundary}, the boundary $\partial\mathcal{M}$ is a compact smooth $(d-1)$-dimensional submanifold with empty boundary and with finitely many connected components $\mathcal{C}_1,\ldots,\mathcal{C}_k$.  As $\epsilon\leq 4\sqrt{6}\mu_{\partial \mathcal{M}}$ and $\mu_{\partial \mathcal{M}}=\min\{\tau_{\mathcal{C}_i}\,\vert\,i\in\{1,\ldots,k\}\}$, we may apply Theorem \ref{coverWithReachTheorem} to each component $\mathcal{C}_i$ to deduce 
\begin{equation}
\label{intermediate}
 N^{cover}_{\frac{\epsilon}{2}}(\mathcal{C}_i) \leq \frac{V_{\mathcal{C}_i}}{\omega_{d-1}(1-\frac{\epsilon^2}{96\tau_{\mathcal{C}_i}^2})^{d-2}(\frac{\epsilon}{4})^{d-1}}\leq \frac{V_{\mathcal{C}_i}}{\omega_{d-1}(1-\frac{\epsilon^2}{96\mu_{\partial \mathcal{M}}^2})^{d-2}(\frac{\epsilon}{4})^{d-1}}.   
    \end{equation}
    
Combining (\ref{gotoboundary}), (\ref{intermediate}), and the obvious inequality $N^{cover}_{\frac{\epsilon}{2}}(\partial \mathcal{M})\leq \sum_{i=1}^{k} N^{cover}_{\frac{\epsilon}{2}}(\mathcal{C}_i)$ we have
\begin{equation}
    \label{s1done}
    N^{cover}_{\epsilon}(S_1)\leq  \sum_{i=1}^{k} N^{cover}_{\frac{\epsilon}{2}}(\mathcal{C}_i)\leq \frac{\sum_{i=1}^{k} V_{\mathcal{C}_i}}{\omega_{d-1}(1-\frac{\epsilon^2}{96\mu_{\partial \mathcal{M}}^2})^{d-2}(\frac{\epsilon}{4})^{d-1}}=\frac{V_{\partial\mathcal{M}}}{\omega_{d-1}(1-\frac{\epsilon^2}{96\mu_{\partial \mathcal{M}}^2})^{d-2}(\frac{\epsilon}{4})^{d-1}}.\end{equation}

We next obtain a covering bound for $S_2$ using the method employed in Theorem \ref{cover-boundary-gunther}.  By Lemma \ref{pack}, $$N^{cover}_{\epsilon}(S_2)\leq N^{pack}_{\frac{\epsilon}{2}}(S_2).$$  If $C \subset S_2$ is the set of centers of a packing by Euclidean $\frac{\epsilon}{2}$-balls, then the $d_\mathcal{M}$ metric $\frac{\epsilon}{2}$-balls are pairwise disjoint in $\mathcal{M}$.  As $\epsilon<2\sqrt{6}\tau_{\mathcal{M}}$, Proposition \ref{volume bound} applies with $r=\frac{\epsilon}{2}$ to show each such ball has $\mathcal{H}^d$-measure at least $\omega_d \left(1-\frac{\epsilon^2}{24\tau_{\mathcal{M}}} \right)^{d-1} \left(\frac{\epsilon}{2} \right)^d$.  It follows that \begin{equation}\label{s2done}
    N^{cover}_{\epsilon}(S_2)\leq \frac{V_{\mathcal{M}}}{\omega_d(1-\frac{\epsilon^2}{24\tau_{\mathcal{M}}})^{d-1}(\frac{\epsilon}{2})^d}.
\end{equation}
The claimed upper bound for $N^{cover}_{\epsilon}(\mathcal{M})$ now follows from (\ref{s1done}), (\ref{s2done}), and the obvious inequality $N^{cover}_{\epsilon}(\mathcal{M})\leq N^{cover}_{\epsilon}(S_1)+N^{cover}_{\epsilon}(S_1),$ concluding the proof of the theorem. 
\end{proof}

To illustrate Theorem~\ref{cover-boundary-gunther} we will now apply our estimate to the standard closed $d$-dimensional unit ball $\mathbb{D}^d$ (e.g., the closed unit disk for $d=2$) as a manifold with boundary. 

\begin{corollary}
Consider $\mathbb{D}^d \subset \mathbb{R}^N$ with $d\leq N$ and $N \geq 2.$ If $\epsilon \in (0,1),$ then $$N^{cover}_{\epsilon}(\mathbb{D}^d)\leq \left(\frac{2}{\epsilon} \right)^d + 2\pi \left(\frac{4.05}{\epsilon}\right)^{d-1}.$$

\end{corollary}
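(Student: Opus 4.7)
My plan is to invoke Theorem~\ref{cover-boundary-gunther} with $\mathcal{M}=\mathbb{D}^d$, after recording the relevant reach parameters. Since $\mathbb{D}^d$ is closed and convex, Lemma~\ref{convex} gives $\tau_{\mathbb{D}^d}=\infty$; its boundary $\partial\mathbb{D}^d=\mathbb{S}^{d-1}$ is a unit sphere, so $\mu_{\partial\mathbb{D}^d}=\tau_{\mathbb{S}^{d-1}}=1$. The admissibility condition $\epsilon\in(0,\min\{4\sqrt{6}\,\mu_{\partial\mathcal{M}},\,2\sqrt{6}\,\tau_{\mathcal{M}}\}]=(0,4\sqrt{6}]$ therefore reduces to the hypothesis $\epsilon\in(0,1)$.

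When $d=1$, $\mathbb{D}^1=[-1,1]$ has $V_{\mathcal{M}}=2$ and $V_{\partial\mathcal{M}}=2$, so part~1 of Theorem~\ref{cover-boundary-gunther} immediately yields $N^{cover}_\epsilon(\mathbb{D}^1)\leq \tfrac{2}{\epsilon}+2\leq \tfrac{2}{\epsilon}+2\pi$, which is the claimed bound in this case.

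When $d\geq 2$, I apply part~2. Substituting $V_{\mathbb{D}^d}=\omega_d$ and using $\tau_{\mathcal{M}}=\infty$ to collapse the first reach-correction factor to $1$, the first summand telescopes exactly to $(2/\epsilon)^d$, matching the first term of the claimed bound. For the second summand I substitute the standard surface-area-to-ball-volume relation $V_{\mathbb{S}^{d-1}}=d\omega_d$ together with $\mu_{\partial\mathcal{M}}=1$, obtaining
\[
\frac{d\,\omega_d}{\omega_{d-1}}\cdot\frac{(4/\epsilon)^{d-1}}{(1-\epsilon^2/96)^{d-2}}.
\]

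The remaining work, and the main obstacle, is to dominate this quantity by $2\pi(4.05/\epsilon)^{d-1}$ uniformly in $d\geq 2$ and $\epsilon\in(0,1)$. Equivalently, I need to establish the purely numerical inequality
\[
\frac{d\,\omega_d}{\omega_{d-1}}\cdot(1-\epsilon^2/96)^{-(d-2)}\leq 2\pi\cdot(1.0125)^{d-1}.
\]
I would estimate $d\omega_d/\omega_{d-1}$ using Gautschi's inequality for $\Gamma(x+\tfrac{1}{2})/\Gamma(x+1)$ with $x=d/2$, and bound $(1-\epsilon^2/96)^{-(d-2)}$ in terms of its supremum value at $\epsilon=1$. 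The slack $(4.05/4)^{d-1}=(1.0125)^{d-1}$, combined with the prefactor $2\pi$, must then absorb both of these factors; verifying this cleanly over the full range of $d\geq 2$ is the delicate step, so I would separate the $\epsilon$-dependence carefully and check the small-$d$ base cases numerically to confirm that the stated constant $4.05$ suffices.
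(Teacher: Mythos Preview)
Your setup and the first summand match the paper exactly. The divergence is in the second summand: you correctly take $V_{\mathbb{S}^{d-1}}=d\,\omega_d$, which gives
\[
\frac{V_{\mathbb{S}^{d-1}}}{\omega_{d-1}}=2\sqrt{\pi}\,\frac{\Gamma\!\left(\tfrac{d+1}{2}\right)}{\Gamma\!\left(\tfrac{d}{2}\right)}\sim\sqrt{2\pi d}.
\]
The paper, by contrast, tabulates $V_{\mathbb{S}^{d-1}}=2\pi^{(d+1)/2}/\Gamma\!\left(\tfrac{d+1}{2}\right)$ --- this is actually $\mathcal{H}^d(\mathbb{S}^d)$, an off-by-one in the sphere dimension --- which makes the ratio identically $2\pi$ and produces the stated bound in one line.

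Your ``delicate step'' is not merely delicate; it is false. The inequality you isolate,
\[
\frac{d\,\omega_d}{\omega_{d-1}}\,(1-\epsilon^2/96)^{-(d-2)}\le 2\pi\,(1.0125)^{d-1},
\]
already fails for $d=8$ with $\epsilon$ near $1$: the left side is about $\tfrac{35\pi}{16}\cdot(96/95)^{6}\approx 7.32$ while the right side is about $2\pi\cdot(1.0125)^{7}\approx 6.85$, and the gap widens for larger $d$ since $(1.0125)/(96/95)\approx 1.002$ cannot outrun the $\sqrt{d}$ growth on any useful range. So the constant $4.05$ in the corollary cannot be obtained from Theorem~\ref{cover-boundary-gunther} once the correct sphere volume is used; the paper's derivation succeeds only because of the dimension slip. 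The statement itself is certainly true (e.g.\ the elementary bound $(1+2/\epsilon)^d$ for a Euclidean ball dominates it), but the route through Theorem~\ref{cover-boundary-gunther} does not deliver these constants.
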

\begin{proof}
From Theorem \ref{cover-boundary-gunther}, we need at most $\frac{V_{\mathbb{D}^d}}
{ \omega_d \left(1 - \frac{\epsilon^2}{24\tau_{\mathbb{D}^d}^2} \right)^{d-1} \left(\frac{\epsilon}{2} \right)^d} 
+ \frac{V_{\mathbb{S}^{d-1}}}
{ \omega_{d-1} \left(1 - \frac{\epsilon^2}{96\tau_{\mathbb{S}^{d-1}}^2} \right)^{d-2} \left(\frac{\epsilon}{4} \right)^{d-1}} $ balls. The relevant parameters are
\begin{center}
\begin{tabular}{ | *{14}{l|} }
\hline
$\mathcal M$ &  $V_{\mathbb{D}^d}$  & $\tau_{\mathbb{D}^d}$ & $\partial \mathbb{D}^d$  & $V_{\mathbb{S}^{d-1}}$ & $\tau_{\mathbb{S}^{d-1}}$ & $ \omega_d$ & $ \omega_{d-1}$ \\
\hline
$\mathbb{D}^d$ &  $\frac{\pi^{\frac{d}{2}}}{\Gamma(\frac{d}{2}+1)}$ & $\infty$ & $\mathbb{S}^{d-1}$ & $2\frac{\pi^{\frac{d+1}{2}}}{\Gamma(\frac{d+1}{2})}$ & $1$ & $V_{\mathbb{D}^d}$ & $V_{\mathbb{D}^{d-1}}$ \\
\hline
\end{tabular}
\end{center}

Using $0 < \epsilon < 1$ we get, 
$\left(\frac{2}{\epsilon} \right)^d + 2\pi \frac{95}{96} \left(\frac{96}{95}\frac{4}{\epsilon} \right)^{d-1} <  \left(\frac{2}{\epsilon} \right)^d + 2\pi \left(\frac{4.05}{\epsilon} \right)^{d-1}.$
\end{proof}

\subsection{Covering Estimate for the Unit Secants of a Submanifold from Above}

Recall from Section 2, the unit rescaling map $$U:\mathbb{R}^N\setminus \{0\} \rightarrow \mathbb{S}^{N-1}$$ defined by $U(\textbf{v})=\frac{\textbf{v}}{\|\textbf{v}\|_2}$ and that for a subset $S$ of $\mathbb{R}^N$,  $$S-S=\left\{\textbf{p}-\textbf{q}~\big|~ {\bf p} \neq {\bf q}, \hspace{1mm} {\bf p},{\bf q} \in S \right\}.$$

Elements in $S-S$ are the \textit{secants} generated by $S$ and elements in $U(S-S)$ are the \textit{unit secants} generated by $S$.    In this section we provide an upper bound for the covering number of the closure of the unit secant set generated by a compact smooth $d$-dimensional submanifold $\mathcal{M}$ of $\mathbb{R}^N$.
Such an object, denoted herein by $$\overline{U(\mathcal{M}-\mathcal{M})},$$ has been studied previously in \cite[Section 3]{lashof_immersion_1958}, \cite[Page 1323]{pohl_integral_1968}, \cite[Section 1]{white_self-linking_1969}, and \cite[Section 3]{white_self-linking_1971}.  We first consider two special simple cases. 

\begin{proposition}
\label{prop:infinitereach}
Let $\mathcal{M}$ be a compact smooth $d$-dimensional submanifold of $\mathbb{R}^N$. Let $V_M$ denote the volume of $\mathcal M$ and $\tau_\mathcal{M}$ denote its reach. Let $0 < \epsilon < 1$. 
\begin{enumerate}
    \item If $d = 0$, then $$N^{cover}_{\epsilon}(\overline{U(\mathcal{M}-\mathcal{M})})\leq V_{\mathcal M}^2.$$ 
   \item If $d\geq 1$ and $\tau_{\mathcal M} = \infty$, then $$N^{cover}_{\epsilon}(\overline{U(\mathcal{M}-\mathcal{M})})\leq \left(1 + \frac{2}{\epsilon} \right)^d.$$
\end{enumerate}
\end{proposition}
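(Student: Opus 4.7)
The proof splits naturally along the two cases; each one reduces to a routine argument once the right structural reduction is identified.

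For part (1), when $d=0$, the manifold $\mathcal{M}$ is a finite set of points whose cardinality equals $V_{\mathcal{M}} = \mathcal{H}^0(\mathcal{M})$. Consequently, $\mathcal{M} - \mathcal{M}$ contains at most $V_{\mathcal{M}}(V_{\mathcal{M}}-1) \leq V_{\mathcal{M}}^2$ elements, and $U(\mathcal{M} - \mathcal{M})$ inherits the same cardinality bound. Since this set is already finite it equals its own closure, so the trivial cover by closed $\epsilon$-balls centered at each element provides at most $V_{\mathcal{M}}^2$ balls.

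For part (2), when $d \geq 1$ and $\tau_{\mathcal{M}} = \infty$, I would first invoke Lemma \ref{convex} to conclude that $\mathcal{M}$ is closed and convex in $\mathbb{R}^N$, and then Lemma \ref{affine}(1) to place $\mathcal{M}$ inside some $d$-dimensional affine subspace $V \subset \mathbb{R}^N$. The secants $\mathcal{M} - \mathcal{M}$ then lie in the $d$-dimensional linear subspace $V - V$, so their normalizations, and hence the closure $\overline{U(\mathcal{M} - \mathcal{M})}$, all lie inside the unit sphere $\mathbb{S}^{d-1}$ of $V - V$.

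Writing $T := \overline{U(\mathcal{M} - \mathcal{M})} \subset \mathbb{S}^{d-1}$, I would estimate $N_\epsilon^{cover}(T)$ via the standard volumetric packing-to-cover argument recorded as Lemma \ref{pack}. Any maximal $(\epsilon/2)$-packing $\{x_1, \ldots, x_n\} \subset T$ yields an $\epsilon$-cover of $T$, and the pairwise disjoint open balls $B(x_i, \epsilon/2)$ all lie inside the Euclidean ball of radius $1 + \epsilon/2$ in the $d$-dimensional ambient subspace $V - V$. Comparing $d$-dimensional Lebesgue volumes yields $n \, \omega_d (\epsilon/2)^d \leq \omega_d (1 + \epsilon/2)^d$, hence $n \leq (1 + 2/\epsilon)^d$.

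Neither step involves a serious obstacle; the only mild care required is that the covering-number definition demands centers inside $T$ rather than anywhere in the ambient sphere, which the packing-based argument naturally accommodates.
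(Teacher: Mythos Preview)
Your proof is correct and follows essentially the same approach as the paper: both reduce part~(1) to the finite cardinality of $\mathcal{M}\times\mathcal{M}$, and both handle part~(2) by using Lemma~\ref{affine} to confine the unit secants to a $(d-1)$-sphere inside a $d$-dimensional linear subspace, then invoke the standard $(1+2/\epsilon)^d$ covering bound for that sphere. The only cosmetic differences are that the paper asserts $\overline{U(\mathcal{M}-\mathcal{M})}$ is actually \emph{congruent} to the full $\mathbb{S}^{d-1}$ (using that $\mathcal{M}$ is a convex body with nonempty interior in $V$) and then cites \cite[Corollary~4.2.13]{vershynin_high-dimensional_2018}, whereas you only argue the containment $\overline{U(\mathcal{M}-\mathcal{M})}\subset\mathbb{S}^{d-1}$ and reprove the volumetric covering bound inline; your containment is all that is needed for the upper bound, so nothing is lost.
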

\begin{proof}
1. As $d=0$, $\mathcal{M}$ consists of $V_{\mathcal M}$ points in $\mathbb{R}^N$.  From its definition, $\overline{U({\mathcal M}-{\mathcal M})}$ has cardinality at most the cardinality of $\mathcal{M}-\mathcal{M}$.  The latter is bounded above by the cardinality of $\mathcal{M}\times \mathcal{M}$.\\

2. By Lemma \ref{affine}, there is a $d$-dimensional affine subspace $V$ of $\mathbb{R}^N$ such that $\mathcal{M}$ is a compact smooth convex body in $V$.  It follows that $\overline{U(\mathcal{M}-\mathcal{M})}$ is congruent to the $(d-1)$-sphere $\mathbb{S}^{d-1}\subset \mathbb{S}^{N-1}$. This sphere has the standard covering bound $\left(1 + \frac{2}{\epsilon} \right)^d$ \cite[Corollary 4.2.13]{vershynin_high-dimensional_2018}. 

\end{proof}

We now move to the general case of a compact smooth $d$-dimensional submanifold $\mathcal{M}$, with $d\geq 1$ and $\tau_{\mathcal{M}}<\infty.$ We allow the possibility that $\partial \mathcal{M} \neq \emptyset$ and adopt the notation preceding Theorem \ref{cover-boundary-gunther}. Further, we let $$\tau=\min\{\tau_M,\mu_{\partial\mathcal{M}}\},$$ where we set $\mu_{\partial\mathcal{M}}=\infty$ when $\partial\mathcal{M}=\emptyset.$ 

Given a sufficiently small number $\epsilon>0$, we will estimate the covering number $N^{cover}_{\epsilon}(\overline{U(\mathcal{M}-\mathcal{M})})$ following related arguments for manifolds without boundary presented in  \cite{clarkson_tighter_2008}, \cite{eftekhari_new_2015}, and \cite{lahiri_random_2016}. The strategy is to separate the secants $\mathcal{M}-\mathcal{M}$ into long and short secants and to cover their images in $U(\mathcal{M}-\mathcal{M})$ separately.  Before proceeding, we record three lemmas that will be useful in the course of the proof. For long secants, we will use the following lemma.

\begin{lemma}\cite[Lemma 4.1]{clarkson_tighter_2008} \label{longlemma}
Let ${\bf p},{\bf p}^*,{\bf q}$ and ${\bf q}^*$ be 4 points in $\mathbbm{R}^N$. Let $0< l := \| {\bf p} - {\bf q} \|_2$ and $\| {\bf p}- {\bf p}^*\|_2, \| {\bf q}- {\bf q}^* \|_2 < d \in \mathbbm{R}^+$. Let $0 < \epsilon < 1$ and assume $\frac{4d}{l} \leq \epsilon$. Then, 

$$\left \| U({\bf p}-{\bf q}) - U({\bf p}^*-{\bf q}^*) \right \|_2 \leq \epsilon.$$

\end{lemma}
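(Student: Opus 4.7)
The statement is essentially a perturbation bound for normalized vectors, so the plan is to reduce the claim to a clean inequality about two vectors $v,v^*\in\mathbb{R}^N$ and then bound the difference of their normalizations using the triangle inequality twice.

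First I would set $v:={\bf p}-{\bf q}$ and $v^*:={\bf p}^*-{\bf q}^*$, so that the quantity to be bounded is $\|v/\|v\|_2 - v^*/\|v^*\|_2\|_2$. Applying the triangle inequality component-wise to the endpoints gives the first key estimate
\[
\|v-v^*\|_2 \;=\; \|({\bf p}-{\bf p}^*) - ({\bf q}-{\bf q}^*)\|_2 \;\leq\; \|{\bf p}-{\bf p}^*\|_2 + \|{\bf q}-{\bf q}^*\|_2 \;<\; 2d.
\]
Under the hypothesis $4d/l \leq \epsilon < 1$ this already forces $2d < l/2$, and in particular $v^*\neq 0$ since by reverse triangle inequality $\|v^*\|_2 \geq l - 2d > l/2 > 0$, so the normalization $U(v^*)$ is well defined.

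Next I would pass to a common denominator and add/subtract a useful term. Writing
\[
\frac{v}{\|v\|_2} - \frac{v^*}{\|v^*\|_2} \;=\; \frac{v\|v^*\|_2 - v^*\|v\|_2}{\|v\|_2\|v^*\|_2} \;=\; \frac{(v-v^*)\|v^*\|_2 + v^*\bigl(\|v^*\|_2-\|v\|_2\bigr)}{\|v\|_2\|v^*\|_2},
\]
and applying the triangle inequality together with the reverse triangle inequality $|\|v^*\|_2-\|v\|_2|\leq \|v-v^*\|_2$, the numerator is bounded by $2\|v^*\|_2\|v-v^*\|_2$. Dividing by $\|v\|_2\|v^*\|_2$ and cancelling $\|v^*\|_2$ yields the clean bound
\[
\left\| U(v) - U(v^*) \right\|_2 \;\leq\; \frac{2\|v-v^*\|_2}{\|v\|_2} \;=\; \frac{2\|v-v^*\|_2}{l}.
\]

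Finally I would combine this with the first estimate $\|v-v^*\|_2 < 2d$ to conclude
\[
\|U(v)-U(v^*)\|_2 \;<\; \frac{4d}{l} \;\leq\; \epsilon,
\]
which is the desired inequality. There is no real obstacle here; the only subtlety is choosing the "add and subtract" step so that the denominator $\|v\|_2$ (which is the exact quantity $l$) survives the cancellation, rather than the inferior lower bound $\|v^*\|_2 > l-2d$, which would produce a factor of $2\epsilon$ instead of $\epsilon$ and miss the claimed constant.
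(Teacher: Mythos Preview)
Your argument is correct and complete: the add-and-subtract decomposition in the numerator is the right move, and your observation that one must cancel $\|v^*\|_2$ (not $\|v\|_2$) so that the exact value $\|v\|_2=l$ remains in the denominator is precisely the point that makes the constant come out as $4d/l$ rather than something looser.

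The paper does not actually prove this lemma; it simply cites it as \cite[Lemma 4.1]{clarkson_tighter_2008} and uses it as a black box in the proof of Theorem~\ref{ConveringNumberForSecantSet}. So there is nothing to compare against here beyond noting that your self-contained elementary proof is the standard one.
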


For short secants, we will use the following two lemmas.
\begin{lemma}
\label{shortlemma2}
Assume $\textbf{p},\textbf{q} \in \mathcal{M}$ satisfy $0 < \|\textbf{p}-\textbf{q}\|\leq \frac{\tau}{2}$.  Given a unit tangent vector ${\bf w} \in T_{\textbf{p}}\mathcal{M}$, let ${\bf w}^* \in T_{\textbf{q}}\mathcal{M}$ be a unit tangent vector obtained by parallel translating ${\bf w}$ along a minimizing geodesic joining $\textbf{p}$ to $\textbf{q}$.  Then \begin{enumerate}
\item $d_{\mathcal{M}}(\textbf{p},\textbf{q})\leq \tau$,

   \item $d_{\mathcal{M}}(\textbf{p},\textbf{q})\leq \|\textbf{p}-\textbf{q}\|_2\left(1+\frac{2\|\textbf{p}-\textbf{q}\|_2}{\tau}\right)$, and
   
   \item if $\theta$ is the angle between ${\bf w}$ and ${\bf w}^*$, then $\theta\leq \frac{d_{\mathcal{M}}(\textbf{p},\textbf{q})}{\tau}.$

\end{enumerate}
\end{lemma}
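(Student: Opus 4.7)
The three claims can be proved in order, with parts 1 and 2 flowing from a single reach-based comparison inequality and part 3 from a parallel transport estimate.

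For parts 1 and 2, I would invoke the classical comparison between intrinsic and chord distance on a closed set of positive reach: whenever $S\subset \mathbbm{R}^N$ is closed with reach $\tau_S>0$ and ${\bf a},{\bf b}\in S$ satisfy $\|{\bf a}-{\bf b}\|_2 \leq \tau_S/2$, one has
\[
d_S({\bf a},{\bf b}) \;\leq\; \tau_S\left(1-\sqrt{1-2\|{\bf a}-{\bf b}\|_2/\tau_S}\right).
\]
This inequality originates with Federer and is recorded, for instance, in the Niyogi--Smale--Weinberger paper on manifold learning. Its proof relies only on the nearest-point projection onto $S$ being well-defined and $1$-Lipschitz on a tube of radius $\tau_S$, and so is insensitive to whether $S$ has boundary. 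Applied with $S=\mathcal{M}$ and $\tau_\mathcal{M}\geq \tau$, the hypothesis $\|{\bf p}-{\bf q}\|_2\leq \tau/2 \leq \tau_\mathcal{M}/2$ activates the bound. Part 1 then follows from the elementary inequality $1-\sqrt{1-t}\leq t$ on $[0,1]$ (verified by squaring), while part 2 follows from the refined inequality $1-\sqrt{1-x}\leq x/2+x^2/2$ on $[0,1]$ (also by squaring) applied with $x=2\|{\bf p}-{\bf q}\|_2/\tau_\mathcal{M}$, followed by replacing $\tau_\mathcal{M}$ with the smaller $\tau$ in the quadratic remainder.

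For part 3, I would parameterize the given minimizing geodesic $\gamma:[0,L]\to \mathcal{M}$ from ${\bf p}$ to ${\bf q}$ by arclength, so that $L=d_\mathcal{M}({\bf p},{\bf q})$ and $\|\gamma'\|_2=1$ a.e. The parallel transported unit field ${\bf w}(t)$ satisfies $\nabla_{\gamma'}{\bf w}=0$ along $\gamma$, so viewed in the ambient Euclidean space the tangential component of $\tfrac{d{\bf w}}{dt}$ vanishes and $\tfrac{d{\bf w}}{dt}=II(\gamma'(t),{\bf w}(t))$, where $II$ is the second fundamental form of $\mathcal{M}$. The proof of Lemma \ref{curvature} yields the operator-norm bound $\|II\|\leq 1/\tau_\mathcal{M}\leq 1/\tau$ at interior points, so $\|\tfrac{d{\bf w}}{dt}\|_2\leq 1/\tau$ a.e. Since parallel transport preserves lengths, the curve $t\mapsto {\bf w}(t)$ lies on $\mathbb{S}^{N-1}$ and has spherical length at most $L/\tau$; the angle $\theta$ between ${\bf w}(0)$ and ${\bf w}(L)$ equals their spherical geodesic distance, which is bounded by this length, giving $\theta\leq d_\mathcal{M}({\bf p},{\bf q})/\tau$.

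The main delicacy is that in a manifold with boundary, minimizing geodesics are only $C^1$-regular and may have one-sided second derivatives where they contact $\partial \mathcal{M}$. This means that both the use of $II$ in the parallel transport argument and the reach inequality above need to be justified near such contact points. The assumption $\tau\leq \mu_{\partial \mathcal{M}}$ provides the parallel second-fundamental-form control for each boundary component, and combining it with the $C^1$-regularity of $\gamma$ together with the a.e. bound $\|\tfrac{d{\bf w}}{dt}\|_2\leq 1/\tau$ suffices to integrate and obtain the stated inequality. An equivalent approach is to decompose $\gamma$ into interior arcs separated by boundary contact times, sum the bounds obtained on each arc, and pass to the limit.
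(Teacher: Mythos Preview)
Your proposal is correct and follows essentially the same approach as the paper: for parts 1 and 2 both use the reach-based comparison $d_\mathcal{M}(\mathbf{p},\mathbf{q})\leq \tau\bigl(1-\sqrt{1-2\|\mathbf{p}-\mathbf{q}\|_2/\tau}\bigr)$ (the paper cites Niyogi--Smale--Weinberger and Eftekhari--Wakin) together with the elementary bound $1-\sqrt{1-x}\leq (x+x^2)/2$, and for part 3 you unpack the second-fundamental-form argument that the paper obtains by citing Boissonnat et al. Your explicit handling of the boundary case is additional care that the paper's proof does not spell out.
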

\begin{proof}
By \cite[Lemma 6.3]{niyogi_finding_2008} and \cite[Lemma 7]{eftekhari_new_2015},

\begin{equation}\label{choice1}
d_{\mathcal{M}}(\textbf{p},\textbf{q}) \leq \tau - \tau \sqrt{ 1 - \frac{2\|\textbf{p}-\textbf{q}\|_2}{\tau}},
\end{equation} implying the first inequality in the Lemma. Note that for each $x \in[0,1],$
\begin{equation}\label{helper}
1-\sqrt{1-x}\leq \frac{x+x^2}{2}.
\end{equation} 
Using this with $x=\frac{2\|\textbf{p}-\textbf{q}\|_2}{\tau}$ in (\ref{choice1}) implies the second inequality in the Lemma.  See \cite[Lemma 6]{boissonnat_reach_2019} for the third inequality in the lemma. 
\end{proof}

\begin{lemma}\label{shortlemma}
Assume $\textbf{p}, \textbf{q} \in \mathcal{M}$ satisfy $0<\|\textbf{p}-\textbf{q}\|_2\leq\frac{\tau}{2}$.  Let $\textbf{u} \in T_{\textbf{p}}\mathcal{M}$ be the initial unit length velocity vector of a minimizing unit speed geodesic in $\mathcal{M}$ joining $\textbf{p}$ to $\textbf{q}$.  Let $\phi$ denote the angle between ${\bf u}$ and $U(\textbf{q}-\textbf{p}).$  Then $$\sin(\phi) \leq \frac{\|\textbf{p}-\textbf{q}\|_2}{2\tau}\left(1+\frac{2 \|\textbf{p}-\textbf{q}\|_2 }{\tau}\right)^2.$$

\end{lemma}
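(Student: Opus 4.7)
The plan is to parameterize a minimizing unit-speed geodesic $\gamma:[0,L]\to\mathcal{M}$ from $\textbf{p}$ to $\textbf{q}$ (with $\gamma'(0)=\textbf{u}$ and $L=d_\mathcal{M}(\textbf{p},\textbf{q})$) and to compare the chord $\textbf{q}-\textbf{p}=\int_0^L\gamma'(t)\,dt$ with the tangent segment $L\textbf{u}$. The hypothesis $\|\textbf{p}-\textbf{q}\|_2\leq\tau/2$ together with Lemma \ref{shortlemma2}(1) gives $L\leq\tau$, while Lemma \ref{shortlemma2}(2) supplies the refined bound $L\leq\|\textbf{p}-\textbf{q}\|_2(1+2\|\textbf{p}-\textbf{q}\|_2/\tau)$ that I save for the final step.

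The core estimate is the identity $\textbf{q}-\textbf{p}-L\textbf{u}=\int_0^L(\gamma'(t)-\gamma'(0))\,dt$ combined with the pointwise bound $\|\gamma'(t)-\gamma'(0)\|_2\leq t/\tau$. This latter inequality follows by applying Lemma \ref{shortlemma2}(3) to the pair $(\gamma(0),\gamma(t))$ with ${\bf w}=\textbf{u}$ and ${\bf w}^{*}=\gamma'(t)$, since $\gamma'$ is precisely the parallel translate of $\textbf{u}$ along $\gamma$ by the geodesic equation. In the regime $t>\tau/2$, where the chord-length hypothesis of Lemma \ref{shortlemma2}(3) might fail, I subdivide $[0,t]$ into subintervals of length at most $\tau/2$, invoke Lemma \ref{shortlemma2}(3) on each piece, and sum the resulting angle bounds via the triangle inequality for arclengths on the unit sphere. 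Integrating in $t$ then yields $\|\textbf{q}-\textbf{p}-L\textbf{u}\|_2 \leq L^2/(2\tau)$.

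Next I decompose $\textbf{q}-\textbf{p}=a\textbf{u}+\textbf{v}$ with $\textbf{v}\perp\textbf{u}$, so by definition of $\phi$ we have $\|\textbf{v}\|_2=\|\textbf{q}-\textbf{p}\|_2\sin\phi$. Since $\textbf{q}-\textbf{p}-L\textbf{u}=(a-L)\textbf{u}+\textbf{v}$ is also an orthogonal decomposition, its Euclidean norm is at least $\|\textbf{v}\|_2$, so combining with the previous paragraph gives $\|\textbf{q}-\textbf{p}\|_2\sin\phi\leq L^2/(2\tau)$. Substituting the bound on $L$ from Lemma \ref{shortlemma2}(2) and dividing through by $\|\textbf{q}-\textbf{p}\|_2$ produces exactly
\[
\sin\phi \leq \frac{\|\textbf{p}-\textbf{q}\|_2}{2\tau}\left(1+\frac{2\|\textbf{p}-\textbf{q}\|_2}{\tau}\right)^2.
\]

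The main obstacle is the rigorous application of Lemma \ref{shortlemma2}(3) when $\gamma$ may touch $\partial\mathcal{M}$: in that regime $\gamma$ is only $C^1$, so one cannot simply integrate a bound on $\|\gamma''(t)\|_2$. The subdivision workaround sidesteps this difficulty by invoking only the statement of Lemma \ref{shortlemma2}(3), which already accounts for boundary contact through the definition $\tau=\min\{\tau_\mathcal{M},\mu_{\partial\mathcal{M}}\}$ used throughout this section.
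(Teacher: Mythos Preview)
Your proof is correct and follows essentially the same route as the paper's. Both arguments parameterize a minimizing unit-speed geodesic $\gamma$, use Lemma~\ref{shortlemma2}(3) to obtain the turning bound $\theta(t)\leq t/\tau$ on the velocity vectors, integrate this over $[0,L]$ to obtain a deviation of size $L^2/(2\tau)$, identify that deviation with $\|\textbf{q}-\textbf{p}\|_2\sin\phi$, and then substitute Lemma~\ref{shortlemma2}(2) for $L$. The only cosmetic difference is that the paper bounds the perpendicular distance $h$ from $\textbf{q}$ to the tangent line by integrating $\sin(\theta(s))\leq\sin(s/\tau)$ and then invoking $1-\cos t<t^2/2$, whereas you bound the full displacement $\|\textbf{q}-\textbf{p}-L\textbf{u}\|_2$ via $\|\gamma'(t)-\gamma'(0)\|_2=2\sin(\theta(t)/2)\leq t/\tau$ and extract the perpendicular component afterward; both computations land on $L^2/(2\tau)$. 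Your explicit subdivision workaround for $t>\tau/2$ is more careful than the paper's bare citation of Lemma~\ref{shortlemma2} at that step, but the underlying content is the same.
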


\begin{proof}
See Figure \ref{anglesDistnace}.  Let $h$ be the distance between ${\bf q}$ and the line through ${\bf p}$ with direction ${\bf u}$, and let $d=\|\textbf{p}-\textbf{q}\|_2$.  We claim \begin{equation}
    \label{helpme}
 h\leq   \frac{d^2}{2\tau} \left(1+\frac{2d}{\tau} \right)^2. 
\end{equation} 
Assuming (\ref{helpme}), we have $$\sin(\phi)=\frac{h}{d}\leq\frac{\frac{d^2}{2\tau} \left(1+\frac{2d}{\tau} \right)^2}{d}=\frac{\|\textbf{p}-\textbf{q}\|_2}{2\tau}\left(1+\frac{2 \|\textbf{p}-\textbf{q}\|_2 }{\tau}\right)^2$$ as stated in the lemma.  We conclude by establishing (\ref{helpme}).\\

To this end, let $\gamma:[0,d_{\mathcal{M}}(\textbf{p},\textbf{q})]\rightarrow \mathcal{M}$ be a unit speed minimizing geodesic joining $\textbf{p}$ to $\textbf{q}$.  For each $s \in [0,d_{\mathcal{M}}(\textbf{p},\textbf{q})]$, let $V_s=\dot{\gamma}(s)$ and let $\theta(s)$ denote the angle between $\textbf{u}=V_0$ and $V_s.$ By Lemma \ref{shortlemma2}, 
\begin{equation}
    \label{smalld}
   d_{\mathcal{M}}(\textbf{p},\textbf{q})\leq \tau
\end{equation} and

\begin{equation}
    \label{turning}
    \theta(s)\leq \frac{s}{\tau}.
\end{equation}

By (\ref{smalld}) and (\ref{turning}), for each $s \in [0,d_{\mathcal{M}(\textbf{p},\textbf{q})}]$, $$\theta(s)\leq \frac{s}{\tau}\leq \frac{d_{\mathcal{M}}(\textbf{p},\textbf{q})}{\tau}\leq1<\frac{\pi}{2}.$$ As $\sin(t)$ is increasing on $\left[0,\frac{\pi}{2}\right]$, $$\sin(\theta(s))\leq \sin \left(\frac{s}{\tau} \right).$$

The incremental gain of $\gamma(s)$ in the direction $h$ is at most $\sin(\theta(s))$. Therefore 
\begin{equation}
\label{step1}
h\leq \int_{0}^{d_{\mathcal{M}}(\textbf{p},\textbf{q})} \sin(\theta(s))~ds\leq \int_{0}^{d_{\mathcal{M}}(\textbf{p},\textbf{q})} \sin \left(\frac{s}{\tau} \right)~ds=\tau \left(1-\cos \left(\frac{d_{\mathcal{M}}(\textbf{p},\textbf{q})}{\tau} \right) \right).
\end{equation}

Use (\ref{step1}) and the fact that  $$ 1-\cos(t)<\frac{t^2}{2} $$ for nonzero $t$ to conclude

\begin{equation}\label{step2}
h\leq \tau \left(\frac{\left(\frac{d_{\mathcal{M}}(\textbf{p},\textbf{q})}{\tau}\right)^2}{2} \right)=\frac{d_{\mathcal{M}}(\textbf{p},\textbf{q})^2}{2\tau}.
\end{equation}

Combining (\ref{step2}) with inequality 2 in Lemma \ref{shortlemma2} establishes (\ref{helpme}),  completing the proof.

\begin{figure}[H]
\centering
\includegraphics[scale=0.3]{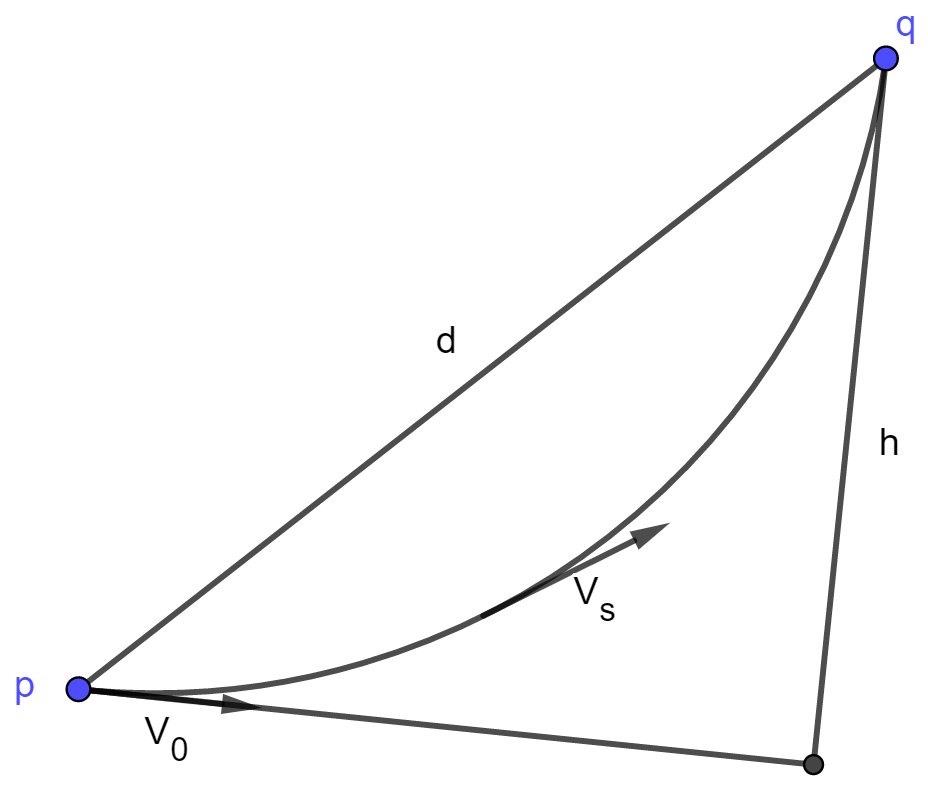} 
\caption{Two points with the geodesic and secant line between them. We bound the angle between $U(\textbf{q}-\textbf{p})$ and $V_0=\textbf{u}$ using $\tau$.}\label{anglesDistnace}
\end{figure}
\end{proof}

\begin{theorem}[Covering the Unit Secants] \label{ConveringNumberForSecantSet}
Let $d \geq 1$ and let $\mathcal{M}$ be 
a compact smooth $d$-dimensional submanifold of $\mathbbm{R}^N$ with $\tau_M< \infty.$\footnote{For infinite reach, see Proposition~\ref{prop:infinitereach}.} Let $\epsilon \in (0,1)$. \\

\begin{enumerate}
    \item If $d=1$, define $\alpha :=  \frac{20 V_{\mathcal M}}{\tau}  + V_{\partial {\mathcal M}}$.  Then $$N^{cover}_{\epsilon}(\overline{U(\mathcal{M}-\mathcal{M})})\leq \left(\alpha^2 + 2\alpha \right) \frac{1}{\epsilon^{4}}.$$
\item If $d \geq 2$, define $\alpha :=
\frac{V_\mathcal{M}}{ \omega_d} \left(\frac{41}{\tau} \right)^d 
+ \frac{V_{\partial \mathcal{M}}}{ \omega_{d-1}} \left(\frac{81}{\tau} \right)^{d-1}$.  Then $$N^{cover}_{\epsilon}(\overline{U(\mathcal{M}-\mathcal{M})})\leq \left( 
\alpha^2 +  3^{d} \alpha
\right) 
\frac{1}{\epsilon^{4d}}.$$ 

\end{enumerate}

\end{theorem}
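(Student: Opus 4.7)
The plan is to adapt the Eftekhari--Wakin strategy of splitting secants into ``long'' ($\|\mathbf{p}-\mathbf{q}\|_2 \geq r$) and ``short'' ($\|\mathbf{p}-\mathbf{q}\|_2 < r$) for a threshold $r = \Theta(\tau\epsilon)$ to the boundary setting. The four key ingredients are the submanifold cover of Theorem~\ref{cover-boundary-gunther}, the long-secant stability Lemma~\ref{longlemma}, the tangent-approximation Lemmas~\ref{shortlemma} and~\ref{shortlemma2}, and the sphere cover of Lemma~\ref{lem:SphereCover}. I will describe the $d\geq 2$ case in detail, and then indicate how the $d=1$ case differs.

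First I would fix a $\delta$-cover $C$ of $\mathcal{M}$ with $\delta$ roughly $2\tau\epsilon^2/41$. Since $\epsilon\in(0,1)$ and $\delta/\tau$ is tiny, the correction factors $(1-\delta^2/24\tau^2)^{-(d-1)}$ and $(1-\delta^2/96\mu_{\partial\mathcal{M}}^2)^{-(d-2)}$ appearing in Theorem~\ref{cover-boundary-gunther} can be absorbed into the explicit constants $41$ and $81$, so that $|C|\leq \alpha/\epsilon^{2d}$. Next, applying Lemma~\ref{longlemma} with the threshold $r=8\tau\epsilon/41$, which was chosen so that $4\delta/r\leq \epsilon$, shows that for every $\mathbf{p},\mathbf{q}\in\mathcal{M}$ with $\|\mathbf{p}-\mathbf{q}\|_2\geq r$ the unit secant $U(\mathbf{q}-\mathbf{p})$ is within $\epsilon$ of $U(\mathbf{q}^*-\mathbf{p}^*)$, where $\mathbf{p}^*,\mathbf{q}^*\in C$ are nearest points to $\mathbf{p},\mathbf{q}$ in $C$. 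Hence the long unit secants are covered by at most $|C|^2\leq \alpha^2/\epsilon^{4d}$ points.

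For a short secant $0<\|\mathbf{p}-\mathbf{q}\|_2<r$, I would compare $U(\mathbf{q}-\mathbf{p})$ to the initial unit velocity $\mathbf{u}_{\mathbf{p}}\in T_{\mathbf{p}}\mathcal{M}$ of a minimizing geodesic from $\mathbf{p}$ to $\mathbf{q}$ (angular error controlled by Lemma~\ref{shortlemma}), then parallel-transport $\mathbf{u}_{\mathbf{p}}$ to $\mathbf{u}_{\mathbf{p}^*}\in T_{\mathbf{p}^*}\mathcal{M}$ at the nearest $\mathbf{p}^*\in C$ (angular error controlled by Lemma~\ref{shortlemma2}), and finally approximate $\mathbf{u}_{\mathbf{p}^*}$ by a point in an $\epsilon^2$-cover of the unit sphere in $T_{\mathbf{p}^*}\mathcal{M}$, which by Lemma~\ref{lem:SphereCover} has size at most $(3/\epsilon^2)^d=3^d/\epsilon^{2d}$. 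With the above choices of $r$ and $\delta$ the three angular errors are of order $r/\tau$, $\delta/\tau$, and $\epsilon^2$ respectively; after converting $\sin$ to chord distance they sum to at most $\epsilon$. This gives $|C|\cdot 3^d/\epsilon^{2d}\leq 3^d\alpha/\epsilon^{4d}$ short-secant cover points, so the total is $(\alpha^2+3^d\alpha)/\epsilon^{4d}$; the claim for $\overline{U(\mathcal{M}-\mathcal{M})}$ follows since the covering number passes to the closure (at worst one replaces $\epsilon$ by a slightly smaller radius throughout). The $d=1$ case is analogous but simpler: one uses the $d=1$ line of Theorem~\ref{cover-boundary-gunther} to obtain $|C|\leq \alpha/\epsilon^{2}$, and since the tangent ``sphere'' at each $\mathbf{p}^*$ is the two-point set $\{\pm\mathbf{u}_{\mathbf{p}^*}\}$, the short-secant count becomes $2|C|$, yielding $(\alpha^2+2\alpha)/\epsilon^{4}$.

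The main obstacle is essentially bookkeeping: the constants $41$, $81$, and $3^d$ in the statement emerge only from a tight choice of $\delta$, $r$, and the sphere-net radius, and verifying that the three angular errors above simultaneously sum to at most $\epsilon$ for the specific constants requires carefully grinding through the inequalities $r\leq \tau/2$, $4\delta/r\leq \epsilon$, and the small-$x$ estimate $1-\sqrt{1-x}\leq (x+x^2)/2$ underlying Lemmas~\ref{shortlemma} and~\ref{shortlemma2}. Beyond that constant-tracking, the structure of the argument is a direct generalization of the boundary-free case, the presence of $\partial\mathcal{M}$ entering only through the improved cover of Theorem~\ref{cover-boundary-gunther}.
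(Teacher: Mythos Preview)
Your approach matches the paper's: cover $\mathcal{M}$ at scale $\Theta(\tau\epsilon^2)$, split secants at threshold $\Theta(\tau\epsilon)$, handle long unit secants via $U(C-C)$ and Lemma~\ref{longlemma}, and handle short unit secants (together with the limiting tangent directions that appear in the closure) by nets of the unit tangent spheres at the cover points using Lemmas~\ref{shortlemma} and~\ref{shortlemma2}. The paper's specific parameters are $\delta=\tau\epsilon^2/20$, threshold $r=\tau\epsilon/5$, and an $\epsilon/3$-net (rather than your $\epsilon^2$-net) on each tangent sphere, which makes the three-term triangle inequality close cleanly as $\epsilon/3+\epsilon/3+\epsilon/3$; apart from this reshuffling of where the slack sits, the argument is exactly what you outline.
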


\begin{proof}
Note that $\overline{U(\mathcal{M}-\mathcal{M})} \subset \mathbb{R}^N$ includes all the unit tangent vectors to $\mathcal{M}$. Fix a $\left(\frac{\tau\epsilon^2}{20} \right)$-net $C$ in $\mathcal{M}$ having least cardinality and for each $\textbf{c} \in C,$ let $V_\textbf{c}$ be an $\frac{\epsilon}{3}$-net in the unit tangent sphere $S_\textbf{c} \mathcal{M}$ having least cardinality. Let $$V=\bigcup_{\textbf{c} \in C} V_\textbf{c}\,\,\,\,\, \text{and}\,\,\,\,\, D=U(C-C) \cup V.$$  To prove the Theorem, we will prove that $D$ is an $\epsilon$-net for $\overline{U(\mathcal{M} -\mathcal{M})}$ and then give an upper bound for its cardinality $|D|$.\\

\underline{\textbf{ Proving $D$ is an $\epsilon$-net for $\overline{U(\mathcal{M}-\mathcal{M})}$.} }\\

To this end, subdivide $\mathcal{M}-\mathcal{M}$ into disjoint subsets consisting of \textit{long} and \textit{short} secants: $$S_1= \left\{\textbf{v} \in \mathcal{M}-\mathcal{M}\, \big\vert\, \|\textbf{v}\|_2> \frac{\tau\epsilon}{5} \right\}\,\,\,\,\, \text{and}\,\,\,\,\,S_2= \left\{\textbf{v} \in \mathcal{M}-\mathcal{M}\, \big\vert\, \|\textbf{v}\|_2\leq\frac{\tau\epsilon}{5} \right\}.$$  Note that the elements of $\overline{U(\mathcal{M}-\mathcal{M})}$ that do not lie in $U(\mathcal{M}-\mathcal{M})$ are the unit tangent vectors to $\mathcal{M}$ obtained as limits of elements in $U(S_2)$.  Hence $$\overline{U(\mathcal{M}-\mathcal{M})}=U(S_1)\cup\overline{U(S_2)}.$$  We argue in two steps, first showing that $U(C-C)$ is an $\epsilon$-net for the unit-rescaled long secants $U(S_1)$, and then showing that $V$ is an $\epsilon$-net for the closure of the unit-rescaled short secants $\overline{U(S_2)}$.\\

\textbf{\underline{Step 1: Proving $U(C-C)$ is an $\epsilon$-net for $U(S_1)$.}}\\

Given $\textbf{v}=\textbf{p}-\textbf{q} \in S_1$, let $\textbf{p}^*$ and $\textbf{q}^*$ be closest points to them in $C$. The triangle inequality implies $$\|\textbf{p}^*-\textbf{q}^*\|\geq \|\textbf{p}-\textbf{q}\|-\|\textbf{p}-\textbf{p}^*\|-\|\textbf{q}-\textbf{q}^*\|\geq \frac{\tau\epsilon}{5}-2\frac{\tau\epsilon^2}{20}>0.$$ In particular, $\textbf{p}^* \neq \textbf{q}^*,$ and so $$\textbf{v}^*:=\textbf{p}^*-\textbf{q}^{*} \in C-C.$$

Applying Lemma \ref{longlemma} with $l=\|\textbf{p}-\textbf{q}\|_2$ and $d=\frac{\tau \epsilon^2}{20}$, $$\|U(\textbf{v})-U(\textbf{v}^*)\|_2 \leq \frac{4(\frac{\tau\epsilon^2}{20})}{\| {\bf p} - {\bf q} \|_2} < \frac{(\frac{\tau\epsilon^2}{5})}{(\frac{\tau\epsilon}{5})}\leq \epsilon,$$ concluding the proof that $U(C-C)$ is an $\epsilon$-net for $U(S_1)$.\\

\textbf{\underline{Step 2: Proving $V$ is an $\epsilon$-net for $\overline{U(S_2)}$.}}\\

The proof is based on the following two claims.\\

\textbf{Claim 1.} \textit{If $\textbf{v} \in \overline{U(S_2)}$, then there exists a unit tangent vector $\textbf{w}$ to $\mathcal{M}$ such that \begin{equation}\label{one}
\|\textbf{v}-\textbf{w}\|_2< \frac{\epsilon}{3}.\end{equation} }\\

\textbf{Claim 2.} \textit{If $\textbf{w}$ is a unit tangent vector to $\mathcal{M}$, then there exists $\textbf{c} \in C$ and a unit tangent vector $\textbf{w}^*\in T_{\textbf{c}}\mathcal{M}$ such that \begin{equation}\label{two}
\|\textbf{w}-\textbf{w}^*\|_2<\frac{\epsilon}{3}.\end{equation}} \\

We now prove that $V$ is an $\epsilon$-net for $\overline{U(S_2)}$ assuming the validity of these claims. Given $\textbf{v} \in \overline{U(S_2)},$ let $\textbf{w}$, $\textbf{c}$, and $\textbf{w}^*$ be as in the statements of the two Claims.  By the definition of $V_\textbf{c}$, there exists ${\bf v}^*\in V_\textbf{c} \subset V$ such that \begin{equation} \label{three}
\|\textbf{w}^*-\textbf{v}^*\|_2<\frac{\epsilon}{3}.
\end{equation}

The triangle inequality and (\ref{one})-(\ref{three}) now imply $$\|\textbf{v}-\textbf{v}^*\|<\epsilon,$$ concluding the conditional proof. It remains to prove the claims.\\

\textbf{Proof of Claim 1:} Let $\textbf{v}\in \overline{U(S_2)}$.  Without loss of generality, $\textbf{v}$ is not a unit tangent vector to $\mathcal{M}$ and so there exist $\textbf{p},\textbf{q}\in \mathcal{M}$ with \begin{equation}\label{useme}
    0<\|\textbf{p}-\textbf{q}\|_2\leq\frac{\tau \epsilon}{5}
\end{equation} such that $\textbf{v}=U(\textbf{p}-\textbf{q}).$  There exists a unit speed minimizing geodesic $\gamma:[0,d_{\mathcal{M}}(\textbf{p},\textbf{q})] \rightarrow \mathcal{M}$ joining $\textbf{p}$ to $\textbf{q}$.  Consider the unit tangent vector $\textbf{u}=\gamma'(0)$ and let $\phi$ denote the angle between $\textbf{u}$ and $-\textbf{v}=U(\textbf{q}-\textbf{p})$.  Using Lemma \ref{shortlemma}, (\ref{useme}), and the hypothesis $0<\epsilon <1$, we have \begin{equation}\label{usemetoo}
    \sin(\phi)\leq \frac{\|\textbf{p}-\textbf{q}\|_2}{2\tau}\left(1+\frac{2 \|\textbf{p}-\textbf{q}\|_2 }{\tau}\right)^2 \leq \frac{\epsilon(5+2\epsilon)^2}{250}<\frac{49\epsilon}{250}.
\end{equation}

By (\ref{usemetoo}), $\sin(\phi)<\frac{1}{2}$ and so $0\leq \phi<\frac{\pi}{6}.$  It follows \begin{equation}\label{usemethree}
2\sin(\phi/2)\leq 3/2\sin(\phi). \end{equation}

Let $\textbf{w}=-\textbf{u}$ and use (\ref{usemetoo})-(\ref{usemethree}) to conclude $$\|\textbf{v}-\textbf{w}\|_2=\|-\textbf{v}-\textbf{u}\|_2=2\sin(\frac{\phi}{2})\leq\frac{3}{2}\sin(\phi)<\frac{147\epsilon}{500}<\frac{\epsilon}{3},$$ concluding the proof of Claim 1.\\

\textbf{Proof of Claim 2:} Let $\textbf{x} \in \mathcal{M}$ and let $\textbf{w} \in T_{\textbf{x}}\mathcal{M}$ be a unit length tangent vector. Let $\textbf{c} \in C$ be a closest net point to ${\bf x}$ so that \begin{equation}\label{label1}
\|\textbf{x}-\textbf{c}\|_2<\frac{\tau\epsilon^2}{20}.
\end{equation}

Consider a unit speed minimizing geodesic $$\gamma:[0,d_{\mathcal{M}}(\textbf{x},\textbf{c})]\rightarrow \mathcal{M}$$ joining $\textbf{x}$ to $\textbf{c}$ and let $\textbf{w}^* \in T_{\textbf{c}}\mathcal{M}$ be the unit-tangent vector obtained by parallel translating $\textbf{w}$ along the geodesic $\gamma$. In addition, let $\theta$ denote the angle between $\textbf{w}$ and $\textbf{w}^*$.  By Lemma \ref{shortlemma2}, 
\begin{equation}\label{label2}
    \theta \leq \frac{d_{\mathcal{M}}(\textbf{x},\textbf{c})}{\tau}
\end{equation} and 
\begin{equation} \label{label3}
    d_{\mathcal{M}}(\textbf{x},\textbf{c})\leq \|\textbf{x}-\textbf{c}\|_2\left(1+\frac{2\|\textbf{x}-\textbf{c}\|_2}{\tau}\right).
    \end{equation}

Using (\ref{label1})-(\ref{label3}) and the hypothesis $0<\epsilon<1$, we now have $$\|\textbf{w}-\textbf{w}^*\|_2=2\sin \left(\frac{\theta}{2} \right)\leq \theta\leq \frac{\|\textbf{x}-\textbf{c}\|_2}{\tau}\left(1+\frac{2\|\textbf{x}-\textbf{c}\|_2}{\tau}\right)<\frac{\epsilon^2}{20}\left(1+\frac{\epsilon^2}{10}\right)<\frac{\epsilon}{3},$$ concluding the proof of Claim 2.\\

As we have now established that $D$ is an $\epsilon$-net for $\overline{U(\mathcal{M}-\mathcal{M})}$, we have $$N^{cover}_{\epsilon}(\overline{U(\mathcal{M}-\mathcal{M})})\leq |D|.$$ It remains to bound $|D|$ from above.\\

\underline{\textbf{Bounding $|D|$ from above.}}\\

We first consider the case when $d\geq 2.$  As $C$ is an $\frac{\tau\epsilon^2}{20}$-net for $\mathcal{M}$ of minimal cardinality, Theorem \ref{cover-boundary-gunther} implies \begin{equation}\label{howmanyC}
|C|\leq N \left(\frac{\tau\epsilon^2}{20} \right),    
\end{equation}

where 

\begin{equation}\label{define}
    N(x) :=\frac{V_{\mathcal M}}
{ \omega_d \left(1 - \frac{x^2}{24\tau_{\mathcal M}^2} \right)^{d-1} \left(\frac{x}{2} \right)^d} 
+ \frac{V_{\partial \mathcal{M}}}
{ \omega_{d-1} \left(1 - \frac{x^2}{96\tau_{\partial \mathcal{M}}^2} \right)^{d-2} \left(\frac{x}{4} \right)^{d-1}}.
\end{equation}

As $|U(C-C)|\leq |C\times C|$, (\ref{howmanyC}) implies that 

\begin{equation}\label{numberCsecants}
|U(C-C)|\leq \left(N \left(\frac{\tau\epsilon^2}{20} \right)\right)^2.
\end{equation}

Next, we estimate $|V|$.  For each $\textbf{c}\in C$, $V_\textbf{c}$ is a minimal $\frac{\epsilon}{3}$-net in the unit tangent sphere $S_{\textbf{c}}\mathcal{M}$.  This sphere is isometric to the unit sphere $\mathbb{S}^{d-1}$.  By \cite[Corollary 4.2.13]{vershynin_high-dimensional_2018}, $|V_{\textbf{c}}|\leq\left(\frac{3}{\epsilon}\right)^d$. As $V=\bigcup_{\textbf{c}\in C}V_{\textbf{c}}$, we have
\begin{equation}\label{howmanyV}
    |V|\leq |C|\left(\frac{3}{\epsilon}\right)^d\leq N \left(\frac{\tau\epsilon^2}{20} \right)\left(\frac{3}{\epsilon}\right)^d.
\end{equation}

Finally, since $D=U(C-C) \cup V,$  
\begin{equation}\label{add}
|D| \leq |U(C-C)|+|V|\leq \left(N \left(\frac{\tau\epsilon^2}{20} \right)\right)^2+N \left(\frac{\tau\epsilon^2}{20} \right)\left(\frac{3}{\epsilon}\right)^d.
\end{equation}

This right hand side of (\ref{add}) is rather inconvenient so we will simplify it. \\

One has  

\begin{align*}
    N\left(\frac{\tau\epsilon^2}{20} \right) = \frac{V_{\mathcal M}}
{ \omega_d \left(1 - \frac{(\frac{\tau\epsilon^2}{20})^2}{24\tau_{\mathcal M}^2} \right)^{d-1} \left(\frac{(\frac{\tau\epsilon^2}{20})}{2} \right)^d} 
+ \frac{V_{\partial {\mathcal M}}}
{ \omega_{d-1} \left(1 - \frac{(\frac{\tau\epsilon^2}{20})^2}{96\mu_{\partial \mathcal{M}}^2} \right)^{d-2} \left(\frac{(\frac{\tau\epsilon^2}{20})}{4} \right)^{d-1}}.
\end{align*}

We note $0 < \epsilon < 1$, $\tau = \min \{ \tau_{\mathcal M}, \mu_{\partial \mathcal{M}} \}$, and so
\begin{align*}
\frac{(2 \times 20)^d}{\left(1-\frac{\epsilon^4}{(24)(20^2)} \left( \frac{\tau}{\tau_{\mathcal M}} \right)^2 \right)^{d-1}} &< 40.005^d < 41^d \\
\frac{(4 \times 20)^{d-1}}{\left(1-\frac{\epsilon^4}{(96)(20^2)} \left( \frac{\tau}{\mu_{\partial \mathcal M}} \right)^2\right)^{d-2}} &< 80.003^d <  81^d
\end{align*}

Define $ \alpha =
\frac{V_{\mathcal M}}{ \omega_d} \left( \frac{41}{\tau} \right)^d 
+ \frac{V_{\partial \mathcal M}}{ \omega_{d-1}} \left(\frac{81}{\tau} \right)^{d-1}$.
Then 

\begin{equation}\label{alpha}
     N \left(\frac{\tau\epsilon^2}{20} \right) < \frac{\alpha}{\epsilon^{2d}}.
\end{equation}

Finally, substitute (\ref{alpha}) in (\ref{add}) and use $0<\epsilon<1$ to obtain $$N^{cover}_{\epsilon}(\overline{U(\mathcal{M}-\mathcal{M})})\leq |D|<\left(\alpha^2+\alpha 3^d\right)\frac{1}{\epsilon^{4d}}$$ as in the statement of the Theorem.

We conclude with the case when $d=1$.  By Theorem \ref{cover-boundary-gunther}, $$|U(C- C)|\leq |C\times C|=|C|^2 \leq \left(\frac{20 V_{\mathcal M}}{\tau\epsilon^2} + V_{\partial {\mathcal M}}\right)^2.$$

For each $\textbf{c}\in C$, there are precisely two unit length tangent vectors in $T_{\textbf{c}}\mathcal{M}$ and so $|V_{\textbf{c}}|\leq 2$ and $$|V|\leq 2|C|\leq 2\left(\frac{20 V_{\mathcal M}}{\tau\epsilon^2} + V_{\partial {\mathcal M}}\right).$$

 Therefore, $$|D|\leq |U(C-C)|+|V|\leq \left(\frac{20 V_{\mathcal M}}{\tau\epsilon^2} + V_{\partial {\mathcal M}}\right)^2+2\left(\frac{20 V_{\mathcal M}}{\tau\epsilon^2} + V_{\partial {\mathcal M}}\right).$$
 
 Let $\alpha =  \frac{20 V_{\mathcal M}}{\tau}  + V_{\partial {\mathcal M}}$. Then 
 $$N^{cover}_{\epsilon}(\overline{U(\mathcal{M}-\mathcal{M}})\leq |D| < \left(\alpha^2+2\alpha\right)\frac{1}{\epsilon^4},$$ as in the statement of the Theorem.
\end{proof}

One might be concerned about the existence of a sequence of $d$-dimensional submanifolds of $\mathbbm{R}^N$ with $\alpha \rightarrow 0$ as such a sequence would invalidate Theorem \ref{ConveringNumberForSecantSet}.  In fact, no such sequence of manifolds exists. Indeed, if $d\geq 2$ and $\partial \mathcal{M}=\emptyset$, one can apply Proposition~\ref{prop:coverboundOK} to $\mathcal{M}$ directly to obtain \begin{equation}\label{uno}\alpha \geq \frac{41^d}{\omega_d}\mathcal{H}^d(\mathbb{S}^d),\end{equation} and if $\partial \mathcal{M}\neq \emptyset$, one can apply Proposition~\ref{prop:coverboundOK} to a boundary component, a $d-1$-manifold without boundary to obtain \begin{equation}\label{dos}\alpha\geq \frac{81^{d-1}}{\omega_{d-1}}\mathcal{H}^{d-1}(\mathbb{S}^{d-1}).\end{equation}  Similarly, if $d=1$ and $\partial \mathcal{M}=\emptyset$, Proposition~~\ref{prop:coverboundOK} implies \begin{equation}\label{tres}\alpha \geq 20 \mathcal{H}^1(\mathbb{S}^1),\end{equation} and if $\partial\mathcal{M}\neq \emptyset,$ then $V_{\partial \mathcal{M}}\geq 2$ whence \begin{equation}\label{quatro}\alpha \geq 2.\end{equation}

As a final example we apply our covering estimate again with $\mathcal{M} = \mathbb{S}^d$ in Corollary \ref{examplefin} below, noting that in this case, $\overline{U(\mathcal{M}-\mathcal{M})} = \mathbb{S}^d$. There is considerable redundancy in this example as many pairs of secants project under $U$ to the same unit secant. It is an interesting geometry question to find submanifolds such that their secants avoid being parallel.  In this direction there is the work on totally skew embeddings \cite{ghomi2008totally}. Such submanifolds would be great candidates for benchmarking JL maps as their unit secants are expected to be large and ``worst case'' in terms of size.  Here we lose constant factors in the exponent compared to prior bounds for $\mathbb{S}^d$ we have stated due to the high level of redundancy present in the unit secants of $\mathbb{S}^d$ which our general argument over counts.

\begin{corollary}\label{examplefin}
Let $d \geq 2$ and consider $\mathbb{S}^d$ as a submanifold of $\mathbb{R}^N.$ If $\epsilon \in (0,1)$, then $$N^{cover}_{\epsilon}(\overline{U(\mathbb{S}^d-\mathbb{S}^d)})\leq \frac{20d 41^{2d}}{\epsilon^{4d}}.$$   
\end{corollary}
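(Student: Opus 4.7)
The plan is to apply Theorem \ref{ConveringNumberForSecantSet} directly with $\mathcal{M} = \mathbb{S}^d$. First I would observe that $\mathbb{S}^d$ is smooth, compact, and has empty boundary, so $V_{\partial \mathcal{M}} = 0$ and $\tau = \tau_{\mathcal{M}} = 1$. Thus, invoking part 2 of Theorem \ref{ConveringNumberForSecantSet}, the parameter $\alpha$ collapses to a single term:
\begin{equation*}
\alpha = \frac{V_{\mathbb{S}^d}}{\omega_d}\cdot 41^d = \frac{2\pi^{(d+1)/2}/\Gamma((d+1)/2)}{\pi^{d/2}/\Gamma(d/2+1)}\cdot 41^d = 2\sqrt{\pi}\,\frac{\Gamma(d/2+1)}{\Gamma((d+1)/2)}\cdot 41^d.
\end{equation*}

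Next I would estimate the Gamma ratio. Applying Gautschi's inequality $\Gamma(x+1)/\Gamma(x+1/2) < \sqrt{x+1}$ with $x = d/2$ gives $\Gamma(d/2+1)/\Gamma((d+1)/2) < \sqrt{d/2+1} \le \sqrt{d}$ (using $d \geq 2$). Consequently, $\alpha \leq 2\sqrt{\pi d}\cdot 41^d$.

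Finally, I would plug this into the covering bound $N^{cover}_{\epsilon}\bigl(\overline{U(\mathbb{S}^d - \mathbb{S}^d)}\bigr) \leq (\alpha^2 + 3^d\alpha)/\epsilon^{4d}$ from Theorem \ref{ConveringNumberForSecantSet}. The square term yields $\alpha^2 \leq 4\pi d\cdot 41^{2d}$, while the cross term satisfies $3^d\alpha \leq 2\sqrt{\pi d}\cdot 123^d = 2\sqrt{\pi d}\,(123/1681)^d\cdot 41^{2d}$. Since $123/1681 < 1/13$ and $d \geq 2$, the factor $(123/1681)^d$ is at most $1/169$, so $3^d\alpha$ is a completely negligible contribution compared to $\alpha^2$. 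The arithmetic then gives $\alpha^2 + 3^d\alpha < (4\pi + \text{tiny})\,d\cdot 41^{2d} < 20d\cdot 41^{2d}$, which is the claimed bound.

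The main obstacle is nothing deep — just keeping track of constants. The only mildly nonroutine piece is the Gamma ratio bound; if one prefers to avoid citing Gautschi, the same estimate can be obtained from Wendel's inequality or from a direct induction on $d$, exploiting that $\Gamma((d+1)/2)/\Gamma(d/2)$ grows like $\sqrt{d/2}$. Once that estimate is in hand, the rest is substitution and a clean numerical verification that the $3^d\alpha$ cross term is dwarfed by $\alpha^2$ for all $d \geq 2$.
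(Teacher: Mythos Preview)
Your proposal is correct and follows essentially the same approach as the paper: both apply Theorem~\ref{ConveringNumberForSecantSet} to $\mathcal{M}=\mathbb{S}^d$ with $\tau=1$ and $\partial\mathcal{M}=\emptyset$, obtain $\alpha\leq 2\sqrt{\pi d}\cdot 41^d$ via the same Gamma-ratio bound, and then verify $\alpha^2+3^d\alpha\leq 20d\cdot 41^{2d}$ by routine numerics. The only difference is that you justify the Gamma-ratio estimate explicitly via Gautschi's inequality, whereas the paper simply asserts $2\sqrt{\pi}\,\Gamma(d/2+1)/\Gamma((d+1)/2)\leq 2\sqrt{\pi d}$.
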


\begin{proof} 
Using Theorem \ref{ConveringNumberForSecantSet}, it suffices to show $$\alpha^2+\alpha 3^d \leq 20d41^{2d}.$$ We have $$\alpha =
\frac{V_{\textbf{S}^d}}{ \omega_d} \left(\frac{41}{\tau} \right)^d 
+ \frac{V_{\partial \textbf{S}^{d-1}}}{ \omega_{d-1}} \left(\frac{81}{\tau} \right)^{d-1}=\frac{V_{\textbf{S}^d}}{ \omega_d} \left(\frac{41}{\tau} \right)^d$$ since $\partial\mathbb{S}^d=\emptyset.$ 
We have $\tau = 1$,
$V_{\mathbb{S}^d} = 2 \frac{\pi^{\frac{d+1}{2}}}{\Gamma(\frac{d+1}{2})}$ and 
$V_{\mathbb{B}^d} = \omega_d = \frac{\pi^{\frac{d}{2}}}{\Gamma(\frac{d}{2}+1)}$. Thus, $$\frac{V_{\mathbb{S}^d}}{\omega_d} = 2\sqrt{\pi}\frac{\Gamma(\frac{d}{2}+1)}{\Gamma(\frac{d+1}{2})} \leq 2\sqrt{\pi d}$$. Therefore $\alpha \leq 2\sqrt{\pi d}41^d$, and $$\alpha^2 + \alpha 3^d \leq 4\pi d41^{2d}+2\sqrt{\pi d}41^d 3^d\leq 16d41^{2d}+4d41^{2d}.$$ \\[2pt]
\end{proof}

Having established covering number bounds for the unit secants of general compact smooth  submanifolds of $\mathbbm{R}^N$, we are now able to bound the Gaussian Widths of these sets.  After doing so we will then be able to use the established bounds together with results from Sections~\ref{sec:Prelims} and~\ref{sec:FasterRIP} to produce a variety of new embedding results for submanifolds.

\subsection{A Gaussian Width Bound for Unit Secants from Above}

Theorem \ref{ConveringNumberForSecantSet} will now be used to bound the Gaussian width of the closure of the unit secant set for a compact smooth submanifold of $\mathbbm{R}^N$.

\begin{theorem}[The Gaussian Width of the Unit Secants of a Submanifold of $\mathbbm{R}^N$ with Boundary] \label{GaussianWidthOfManifodWithBoundaryViaGunther}
Let $\mathcal{M}$  be
a compact smooth $d$-dimensional submanifold of $\mathbbm{R}^N$ with $d \geq 2$ and with $\tau_{\mathcal{M}}<\infty$. Let $\alpha$ and $\tau$ be as in Theorem \ref{ConveringNumberForSecantSet} and let $c=\alpha^2+\alpha3^d$.  Then the Gaussian width of $\overline{U(\mathcal{M}-\mathcal{M})}$ satisfies $$\omega(\overline{U(\mathcal{M}-\mathcal{M})})\leq 8\sqrt{2}\sqrt{\ln(c)+4d}.$$

\end{theorem}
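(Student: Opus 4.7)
My plan is to apply Dudley's integral inequality and feed it the covering number bound from Theorem~\ref{ConveringNumberForSecantSet}. Recall that Dudley's inequality bounds the Gaussian width of a bounded set $T \subset \mathbb{R}^N$ by an absolute constant times $\int_0^{\mathrm{diam}(T)} \sqrt{\ln \mathcal{N}(T,\epsilon)}\, d\epsilon$. Since $\overline{U(\mathcal{M}-\mathcal{M})}$ lies in the unit sphere $\mathbb{S}^{N-1}$, its diameter is at most $2$, so the integral is over a bounded interval and Theorem~\ref{ConveringNumberForSecantSet} governs the integrand on $(0,1)$.

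First, I would substitute the bound $\mathcal{N}(\overline{U(\mathcal{M}-\mathcal{M})},\epsilon) \leq c/\epsilon^{4d}$ (valid for $\epsilon \in (0,1)$) into Dudley's integral on $(0,1)$, and use the monotone bound $\mathcal{N}(\cdot,\epsilon) \leq \mathcal{N}(\cdot,1) \leq c$ for $\epsilon \in [1,2]$, obtaining
\begin{align*}
w(\overline{U(\mathcal{M}-\mathcal{M})}) \;\leq\; C \int_0^1 \sqrt{\ln c + 4d \ln(1/\epsilon)}\, d\epsilon \;+\; C \int_1^2 \sqrt{\ln c}\, d\epsilon.
\end{align*}
Next, I would apply $\sqrt{a+b} \leq \sqrt{a} + \sqrt{b}$ inside the first integrand to separate the $\ln c$ and $\ln(1/\epsilon)$ contributions. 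This reduces the nontrivial piece to $2\sqrt{d}\int_0^1 \sqrt{\ln(1/\epsilon)}\, d\epsilon$. The substitution $u = \ln(1/\epsilon)$ turns this into the Gamma integral $\int_0^\infty \sqrt{u}\, e^{-u}\, du = \Gamma(3/2) = \sqrt{\pi}/2$.

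Collecting the pieces yields a sum of the shape $C_1 \sqrt{\ln c} + C_2 \sqrt{d}$. A final application of Cauchy-Schwarz (i.e., $a+b \leq \sqrt{2}\sqrt{a^2+b^2}$), together with $\pi < 4$ to replace $\pi d$ by $4d$, combines these terms into the stated form $8\sqrt{2}\sqrt{\ln c + 4d}$. The main obstacle will be pinning down the correct absolute Dudley constant $C$ and bookkeeping carefully through the integrand-splitting step above to arrive at precisely the factor $8\sqrt{2}$; conceptually, however, the argument is an entirely standard Dudley-entropy computation once the covering bound of Theorem~\ref{ConveringNumberForSecantSet} is in hand.
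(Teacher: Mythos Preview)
Your proposal is correct and follows the same overall template as the paper: Dudley's inequality (with the explicit constant $4\sqrt{2}$ from \cite[Theorem 8.23]{foucart_mathematical_2013}), the split of $[0,2]$ into $[0,1]$ and $[1,2]$, and the covering bound $c/\epsilon^{4d}$ from Theorem~\ref{ConveringNumberForSecantSet} on $(0,1)$ together with the crude bound $c$ on $[1,2]$.

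The one technical divergence is how the integral $\int_0^1\sqrt{\ln c+4d\ln(1/\epsilon)}\,d\epsilon$ is handled. You split via $\sqrt{a+b}\le\sqrt a+\sqrt b$ and evaluate $\int_0^1\sqrt{\ln(1/\epsilon)}\,d\epsilon=\Gamma(3/2)=\sqrt\pi/2$. The paper instead keeps the integrand intact, factors out $\sqrt{\ln c}$, and applies Cauchy--Schwarz in the form $\int_0^1\sqrt{f}\le\sqrt{\int_0^1 f}$ to reduce to the elementary identity $\int_0^1(1+k\ln(1/\epsilon))\,d\epsilon=1+k$, obtaining $4\sqrt{2}\sqrt{\ln c}\bigl(1+\sqrt{1+4d/\ln c}\bigr)\le 8\sqrt{2}\sqrt{\ln c+4d}$ directly. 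This Jensen/Cauchy--Schwarz step is what delivers the constant $8\sqrt{2}$ cleanly; your subadditivity split followed by $a+b\le\sqrt{2}\sqrt{a^2+b^2}$ yields $4\sqrt{2}\bigl(2\sqrt{\ln c}+\sqrt{\pi d}\bigr)$, which after the same recombination gives a leading constant strictly larger than $8\sqrt{2}$ (roughly $16$). So your method proves the theorem up to a constant, but the exact $8\sqrt{2}$ requires the paper's one-shot Jensen trick rather than the split-then-recombine route.
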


\begin{proof}
Note that by (\ref{uno})-(\ref{quatro}), $c>1$. We use the covering number bounds in Theorem \ref{ConveringNumberForSecantSet} and Dudley's inequality (see, e.g., Theorem 8.23 in  \cite{foucart_mathematical_2013}):

\begin{align}
\omega \left(\overline{U(\mathcal{M}-\mathcal{M})} \right) \leq 4\sqrt{2} \int_{0}^{\infty} \sqrt{\ln\left(N^{cover}_{\epsilon}\left(\overline{U(\mathcal{M}-\mathcal{M})}\right)\right)} ~d\epsilon.
\end{align}

By Theorem \ref{ConveringNumberForSecantSet}, for each $\epsilon \in (0,1)$, $$N^{cover}_{\epsilon}(\overline{U(\mathcal{M}-\mathcal{M})})\leq \frac{c}{\epsilon^{4d}}.$$ As the covering numbers $N^{cover}_{\epsilon}\left(\overline{U(\mathcal{M}-\mathcal{M})}\right)$ are non-increasing in $\epsilon$, for each $\epsilon \geq 1$, $$N^{cover}_{\epsilon}\left(\overline{U(\mathcal{M}-\mathcal{M})} \right)\leq c.$$ As $U(\mathcal{M}-\mathcal{M}) \subset \mathbb{S}^{N-1}$, for each $\epsilon>2$, $$N^{cover}_{\epsilon}\left(\overline{U(\mathcal{M}-\mathcal{M})}\right)=1.$$ 

Therefore

\begin{align*}
&\omega \left(\overline{U(\mathcal{M}-\mathcal{M})} \right) \leq 4\sqrt{2} \left( \int_{0}^{1} \sqrt{\ln \left( \frac{c}{\epsilon^{4d}} \right)} ~d\epsilon ~+~ \int_{1}^{2} \sqrt{\ln  \left(c\right) } ~d\epsilon\right)\\
& = 4\sqrt{2}\int_{0}^{1} \sqrt{\ln(c) + 4d \ln \left(\frac{1}{\epsilon} \right)} ~d\epsilon~+~ 4\sqrt{2} \sqrt{\ln(c)}\\
&\leq 4\sqrt{2} \sqrt{\ln(c)} 
\left( 1 + \int_0^1 \sqrt{1  + \frac{4d}{\ln(c)} \ln \left(\frac{1}{\epsilon} \right)}~d\epsilon \right)\\
&\leq 
4\sqrt{2} \sqrt{\ln(c)}  \left(1 + \sqrt{\int_0^1 1  + \frac{4d}{\ln(c)} \ln \left(\frac{1}{\epsilon} \right) ~d\epsilon }\right),
\end{align*}

where the last inequality follows from Cauchy-Schwartz.  Note that for $k > 0$
\begin{align*}
\int_{0}^1 1 + k \ln \left(\frac{1}{x} \right)~dx 
&= 1 + \lim_{a \rightarrow 0} - k (x\ln(x) - x) \big|_{a}^{1}  \\
&= 1 + k.
\end{align*}

Hence,

\begin{align*}
\omega \left(\overline{U(\mathcal{M}-\mathcal{M})} \right)
&\leq  
4\sqrt{2} \sqrt{\ln(c)}   \left( 1+    \sqrt{ 1  + \frac{4d}{\ln(c)}}  \right)\\
&\leq
8\sqrt{2} \sqrt{\ln(c)  + 4d} \\
\end{align*}

as claimed. 

\end{proof}

We have now established all the results needed to prove our main theorems. \\

\textbf{Data Availability} The datasets generated during and/or analysed during the current study are available from the corresponding author on reasonable request.

\bibliographystyle{plain}
\bibliography{refs} 

\begin{appendix}
\section{The proof of Theorem~\ref{thm:fasterRIP}}
\label{sec:secproofofRIPres}

The following lemma describes properties that the matrices $A$ and $B$ can have in order to guarantee that $E$ in \eqref{equ:Ddef} will approximately preserve the norms of all elements of an arbitrary bounded set $S \subset \mathbbm{R}^N$.  Though we will present the proof in general, we will be primarily interested in the case where $S$ contains all unit norm $s$-sparse vectors so that
\begin{equation}
\label{equ:SforRIP}
S = U\left(  \bigcup_{S' \subset [N],~ |S'| = s} {\rm span} \left( \left \{  {\bf e}_j \right\}_{j \in S'} \right) \right).
\end{equation}

\begin{lemma}
\label{lem:SimpleEmbedwAdditiveError2}
Let $\epsilon \in \left(0, \frac{1}{3} \right)$, $S \subset \mathbbm{R}^N$, and $A \in \mathbbm{R}^{m_1 \times m_1^2}$, $B \in \mathbbm{R}^{m_2 \times N/m_1}$, $C \in \mathbbm{R}^{N/m_1 \times N}$, and $E \in \mathbbm{R}^{m_2 \times N}$ be as above in \eqref{equ:Ddef} with  $m_1 \geq m_2$.  Furthermore, let $a_E := \max \left\{ \sup_{{\bf x} \in U(S - S)} \| E {\bf x} \|_2,~1 \right\}$, ${\mathcal C}_{\delta} \subset S$ be a finite $\delta$-cover of $S$ for $\delta \leq \epsilon/a_E$, and suppose that
\begin{enumerate}
    \item[(a)] $A$ is an $\epsilon$-JL map of $P_j {\mathcal C}_{\delta}$ into $\mathbbm{R}^{m_1}$ for all $j \in [N/m_1^2]_0$, and that
    \item[(b)] $\frac{1}{\sqrt{m_2}}B$ is an $\epsilon$-JL map of $C {\mathcal C}_{\delta}$ into $\mathbbm{R}^{m_2}$.
\end{enumerate}
Then, 
\begin{equation}
\label{equ:DembedS}
(1 - 2\epsilon) \| {\bf x} \|_2 - \epsilon(1+\sqrt{5/3}) \leq \|E {\bf x} \|_2 \leq (1 + 3\epsilon/2) \| {\bf x} \|_2 + \epsilon(1+\sqrt{2})
\end{equation}
will hold for all ${\bf x} \in S$.  If, in addition, $S$ is a subset of the unit sphere so that $\| {\bf x} \|_2 = 1$ for all ${\bf x} \in S$, then $E = \frac{1}{\sqrt{m_2}} BC \in \mathbbm{R}^{m_2 \times N}$ will also be a $14 \epsilon$-JL map of $S$ into $\mathbbm{R}^{m_2}$.  
\end{lemma}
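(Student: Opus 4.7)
The proof plan is to establish strong norm control on the cover $\mathcal{C}_\delta$ by composing hypotheses (a) and (b), and then transfer this control to arbitrary $\mathbf{x}\in S$ via the triangle inequality, using $a_E$ to absorb the tail $\|E(\mathbf{x}-\mathbf{c})\|_2$.

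\textbf{Step 1 (Control on cover points).} For $\mathbf{x}\in \mathcal{C}_\delta$, the block decomposition used in the proof of Lemma~\ref{lem:SimpleEmbedwAdditiveError} together with hypothesis (a) yields
\begin{equation*}
(1-\epsilon)\|\mathbf{x}\|_2^2 \,\leq\, \|C\mathbf{x}\|_2^2 \,\leq\, (1+\epsilon)\|\mathbf{x}\|_2^2 .
\end{equation*}
Since $C\mathbf{x}\in C\mathcal{C}_\delta$, hypothesis (b) applied to $\frac{1}{\sqrt{m_2}}B$ gives $(1-\epsilon)\|C\mathbf{x}\|_2^2 \le \|E\mathbf{x}\|_2^2 \le (1+\epsilon)\|C\mathbf{x}\|_2^2$. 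Chaining these, using $(1-\epsilon)^2 \ge 1-2\epsilon$ and $(1+\epsilon)^2 \le 1+3\epsilon$, and taking square roots produces
\begin{equation*}
\sqrt{1-2\epsilon}\,\|\mathbf{x}\|_2 \,\leq\, \|E\mathbf{x}\|_2 \,\leq\, \sqrt{1+3\epsilon}\,\|\mathbf{x}\|_2 .
\end{equation*}

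\textbf{Step 2 (Extension to $S$).} Given $\mathbf{x}\in S$, pick $\mathbf{c}\in \mathcal{C}_\delta$ with $\|\mathbf{x}-\mathbf{c}\|_2 \le \delta$. When $\mathbf{x}\neq \mathbf{c}$ the normalised secant $(\mathbf{x}-\mathbf{c})/\|\mathbf{x}-\mathbf{c}\|_2$ lies in $U(S-S)$, so by the definition of $a_E$ and the hypothesis $\delta \le \epsilon/a_E$,
\begin{equation*}
\|E(\mathbf{x}-\mathbf{c})\|_2 \,\leq\, a_E\,\|\mathbf{x}-\mathbf{c}\|_2 \,\leq\, a_E\,\delta \,\leq\, \epsilon.
\end{equation*}
Combining this with the triangle inequality, $\|\mathbf{c}\|_2 \le \|\mathbf{x}\|_2 + \delta \le \|\mathbf{x}\|_2 + \epsilon/a_E$, Step 1, the linearisation $\sqrt{1+3\epsilon} \le 1+3\epsilon/2$ applied to the principal term, and the crude bound $\sqrt{1+3\epsilon} \le \sqrt{2}$ (valid since $\epsilon \le 1/3$) applied to the tail (together with $a_E \ge 1$), gives
\begin{equation*}
\|E\mathbf{x}\|_2 \,\leq\, \sqrt{1+3\epsilon}(\|\mathbf{x}\|_2 + \epsilon/a_E) + \epsilon \,\leq\, (1+3\epsilon/2)\|\mathbf{x}\|_2 + (1+\sqrt{2})\epsilon.
\end{equation*}
The reverse triangle inequality $\|E\mathbf{x}\|_2 \ge \|E\mathbf{c}\|_2 - \epsilon$, paired with $\|\mathbf{c}\|_2 \ge \|\mathbf{x}\|_2 - \delta$, the bound $\sqrt{1-2\epsilon} \ge 1-2\epsilon$ on the principal term, and a parallel tail calculation, delivers the matching lower estimate $(1-2\epsilon)\|\mathbf{x}\|_2 - (1+\sqrt{5/3})\epsilon$ asserted in \eqref{equ:DembedS}.

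\textbf{Step 3 (JL map on the unit sphere).} When $\|\mathbf{x}\|_2 = 1$, specialising \eqref{equ:DembedS} gives $|\|E\mathbf{x}\|_2 - 1| \le (3+\sqrt{5/3})\epsilon$. Writing $|\|E\mathbf{x}\|_2^2 - 1| = |\|E\mathbf{x}\|_2 - 1|\cdot(\|E\mathbf{x}\|_2 + 1)$, inserting the upper bound $\|E\mathbf{x}\|_2 \le 1 + (5/2+\sqrt{2})\epsilon$, and using $\epsilon \le 1/3$ to absorb $\epsilon^2$ cross-terms into $\epsilon$, one obtains $|\|E\mathbf{x}\|_2^2 - 1| \le 14\epsilon$; for $\epsilon > 1/14$ the lower JL inequality is automatic since $\|E\mathbf{x}\|_2^2 \ge 0 \ge 1-14\epsilon$.

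The only genuine obstacle is the bookkeeping in Step~2: extracting exactly the constants $1+\sqrt{2}$ and $1+\sqrt{5/3}$ requires the right mixture of sharp $\sqrt{1\pm k\epsilon}$ and linearised $1 \pm k\epsilon/2$ estimates, allocated between principal and tail terms. Once that mixture is fixed, every individual inequality is elementary.
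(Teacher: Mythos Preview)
Your approach is essentially identical to the paper's: Step~1 and Step~2 match the paper's argument line for line (the paper also gets $\sqrt{1-2\epsilon}\|\mathbf{x}\|_2 \le \|E\mathbf{x}\|_2 \le \sqrt{1+3\epsilon}\|\mathbf{x}\|_2$ on the cover via Lemma~\ref{lem:SimpleEmbedwAdditiveError}, then extends by the triangle inequality using $\|E(\mathbf{y}-\mathbf{x})\|_2 \le a_E\delta \le \epsilon$ and the same mixture of linearised and crude square-root estimates on the tail).

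One small numerical slip in Step~3: your factorisation route, as written, bounds $|\|E\mathbf{x}\|_2^2 - 1| \le (3+\sqrt{5/3})\epsilon \cdot (2+(5/2+\sqrt{2})\epsilon)$, and with $\epsilon = 1/3$ this evaluates to about $14.18\,\epsilon$, not $14\epsilon$. The paper avoids this by first coarsening \eqref{equ:DembedS} to $(1-5\epsilon) \le \|E\mathbf{x}\|_2 \le (1+4\epsilon)$ and then squaring, which gives $(1+4\epsilon)^2 \le 1 + 8\epsilon + \tfrac{16}{3}\epsilon < 1+14\epsilon$ and $(1-5\epsilon)^2 \ge 1-10\epsilon \ge 1-14\epsilon$ cleanly. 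Alternatively, your factorisation works if you split cases: when $\|E\mathbf{x}\|_2 \ge 1$ use $\|E\mathbf{x}\|_2-1 \le (5/2+\sqrt{2})\epsilon$ in the first factor, and when $\|E\mathbf{x}\|_2 < 1$ use $\|E\mathbf{x}\|_2+1 \le 2$ in the second.
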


\begin{proof}
By Lemma~\ref{lem:SimpleEmbedwAdditiveError}
we see that $E$ will be a $3\epsilon$-JL map of ${\mathcal C}_{\delta}$ into $\mathbbm{R}^{m_2}$ since
\begin{align}
\label{equ:DisJLmapofS}
(1-2\epsilon) \| {\bf x} \|_2^2 \leq (1-\epsilon)^2 \| {\bf x} \|_2^2 \leq (1-\epsilon) \| C {\bf x} \|_2^2 &\leq \| E {\bf x}\|^2_2\\ &\leq (1+\epsilon) \| C {\bf x} \|_2^2 \leq (1+\epsilon)^2 \| {\bf x} \|_2^2 \leq (1 + 3 \epsilon) \| {\bf x} \|_2^2 \nonumber
\end{align}
will hold for all ${\bf x} \in {\mathcal C}_{\delta}$.  Continuing, let ${\bf y} \in S$ and choose ${\bf x} \in {\mathcal C}_{\delta} \subset S$ be such that $\| {\bf y} - {\bf x} \|_2 \leq \delta$.  Using \eqref{equ:DisJLmapofS} we have that
\begin{align}
\label{DyUpperB}
\| E {\bf y} \|_2  &\leq \|E{\bf x} \|_2 + \| E({\bf y} - {\bf x}) \|_2 \leq \sqrt{1 + 3\epsilon} \| {\bf x} \|_2 + a_E \| {\bf y} - {\bf x} \|_2\\
&\leq \sqrt{1 + 3\epsilon} \| {\bf y} \|_2 + \delta \sqrt{1 + 3\epsilon} + a_E \delta \leq (1 + 3\epsilon/2) \| {\bf y} \|_2 + \epsilon(1+\sqrt{2}). \nonumber
\end{align}
Similarly, we will also have that
\begin{align}
\label{DyLowerB}
\| E {\bf y} \|_2  &\geq \|E{\bf x} \|_2 - \| E({\bf y} - {\bf x}) \|_2 \geq \sqrt{1 - 2\epsilon} \| {\bf x} \|_2 - a_E \| {\bf y} - {\bf x} \|_2\\
&\geq \sqrt{1 - 2\epsilon} \| {\bf y} \|_2 - \delta \sqrt{1 + 2\epsilon} - a_E \delta \geq (1 - 2\epsilon) \| {\bf y} \|_2 - \epsilon(1+\sqrt{5/3}). \nonumber
\end{align}
Combining \eqref{DyUpperB} and \eqref{DyLowerB} gives us \eqref{equ:DembedS}.  Finally, if all the elements of $S$ are unit norm, then we can see from \eqref{DyUpperB} and \eqref{DyLowerB} that 
$$(1-5 \epsilon) \| {\bf y} \|_2 \leq \| E {\bf y} \|_2 \leq  (1+4 \epsilon) \| {\bf y} \|_2$$
will hold for all ${\bf y} \in S$.  Squaring throughout now proves the remaining claim.
\end{proof}

Note that Lemma~\ref{lem:SimpleEmbedwAdditiveError2} requires the matrix $B/\sqrt{m_2}$ to be an $\epsilon$-JL map of a finite subset $S_C$ of $C(S)$ (see assumption $(b)$).  In addition, we need to have some way of bounding $a_E = \sup_{{\bf x} \in U(S - S)} \| E {\bf x} \|_2$ from above in order to safely upper bound the cardinality of $S_C = C(C_\delta)$ in the first place.  The next lemma addresses both of these needs for sub-gaussian matrices $B$.

\begin{lemma}
Let $\epsilon, p \in \left(0, \frac{1}{3} \right)$, $S \subset \mathbbm{R}^N$, and $S_C \subset C(S) \subset \mathbbm{R}^{N/m_1}$ be finite.  Furthermore, suppose that $B \in \mathbbm{R}^{m_2 \times N/m_1}$ in the definition of $E$ in \eqref{equ:Ddef} has $m_2 \geq c_1 \epsilon^{-2} \ln(|S_C|/p)$
independent, isotropic and sub-gaussian rows.
Then, all of\\ 

$(1)$ $\frac{1}{\sqrt{m_2}} B$ will be an $\epsilon$-JL map of $S_C$ into $\mathbbm{R}^{m_2}$, and\\

$(2)$ $\displaystyle  \sup_{{\bf x} \in U(S - S)} \| E {\bf x} \|_2 \leq c_2~ \|A\| \left( \frac{ w\left(U(S-S) \right) + \sqrt{\ln(\frac{1}{p}})}{\sqrt{m_2}} + 1 \right) \leq c_3 \left( \frac{\|A\|}{\sqrt{m_2}} \right) \left( \sqrt{N} + \sqrt{\ln \left(\frac{1}{p} \right)}  \right)$

 will hold simultaneously with probability at least $1-p$.  Here $c_1, c_2, c_3 \in \mathbbm{R}^+$ are constants that only depend on the distributions of the rows of $B$ (i.e., they are absolute constants once distributions for the rows of $B$ are fixed).
\label{lem:Bchoice}
\end{lemma}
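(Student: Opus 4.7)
My plan is to handle $(1)$ and $(2)$ independently and then combine via a union bound with each event contributing failure probability at most $p/2$.  Throughout the proof I interpret $E$ with the normalization inherited from Lemma~\ref{lem:SimpleEmbedwAdditiveError2}, so that effectively $E = BC/\sqrt{m_2}$; the first displayed bound in $(2)$ matches this convention.

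For $(1)$, I would simply invoke Corollary~\ref{vershynin-specialized-finite} applied to the finite set $S_C \subset \mathbbm{R}^{N/m_1}$ with failure probability $p/2$.  Choosing $c_1$ large enough to absorb the constant $c''$ from that corollary together with the additional $\ln 4$ coming from the replacement $p \to p/2$ makes the hypothesis $m_2 \geq c_1 \epsilon^{-2} \ln(|S_C|/p)$ sufficient to conclude that $\frac{1}{\sqrt{m_2}} B$ is an $\epsilon$-JL map of $S_C$ with probability at least $1 - p/2$.

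For $(2)$, I would apply Theorem~\ref{vershynin-matrix-deviation-theorem} to $B$ with the test set $T := C(U(S-S)) \subset \mathbbm{R}^{N/m_1}$ and failure probability $p/2$.  Two geometric quantities need to be controlled in terms of $\|A\|$ and the corresponding attributes of $U(S-S)$: the radius and the Gaussian width of $T$.  Since $C$ is block-diagonal with all blocks equal to $A$, one has $\|C {\bf x}\|_2^2 = \sum_{j \in [N/m_1^2]_0} \|A P_j {\bf x}\|_2^2 \leq \|A\|^2 \|{\bf x}\|_2^2$, so ${\rm rad}(T) \leq \|A\|$ and $\|C\|_{\mathrm{op}} \leq \|A\|$.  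For the Gaussian width, the standard contraction inequality $w(C T') \leq \|C\|_{\mathrm{op}} \cdot w(T')$ gives $w(T) \leq \|A\| \cdot w(U(S-S))$; this is a quick consequence of Sudakov--Fernique comparison applied to the centered Gaussian processes $X_{\bf x} := \langle C^{\top} {\bf g}, {\bf x}\rangle$ and $Y_{\bf x} := \|A\| \langle {\bf h}, {\bf x}\rangle$ indexed by ${\bf x} \in U(S-S)$, using $C^{\top} C \preceq \|A\|^2 I$ to verify that the increment variances of $X$ are dominated by those of $Y$.  Plugging these into Theorem~\ref{vershynin-matrix-deviation-theorem}, dividing by $\sqrt{m_2}$, and using the triangle inequality against $\|C {\bf x}\|_2 \leq \|A\|$ yields the first bound in $(2)$ with probability at least $1 - p/2$ after absorbing numerical factors into $c_2$.

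The second displayed bound in $(2)$ then follows by noting $U(S-S) \subset \mathbb{S}^{N-1}$ together with the textbook estimate $w(\mathbb{S}^{N-1}) \leq \sqrt{N}$, and (assuming the natural regime $m_2 \leq N$) using $\sqrt{m_2} \leq \sqrt{N}$ to fold the additive constant $1$ in the first bound into $\sqrt{N}/\sqrt{m_2}$, at the cost of enlarging the absolute constant.  A final union bound over the events associated with $(1)$ and $(2)$ delivers the joint statement with probability at least $1 - p$.  The only non-routine ingredient is the Gaussian-width contraction step, but this is a clean application of Sudakov--Fernique; everything else amounts to tracking how $\|A\|$ enters as the operator norm of the block-diagonal map $C$ with identical blocks $A$.
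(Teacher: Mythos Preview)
Your proposal is correct and follows essentially the same approach as the paper: a union bound over the two parts, the finite-set sub-gaussian JL result for $(1)$, and for $(2)$ Theorem~\ref{vershynin-matrix-deviation-theorem} applied to $T=C(U(S-S))$ combined with the block-diagonal bound $\|C\|\le\|A\|$ and the Gaussian-width contraction $w(CT')\le\|C\|\,w(T')$ (which the paper cites as \cite[Exercise~7.5.4]{vershynin_high-dimensional_2018} and you justify via Sudakov--Fernique). The paper likewise uses the assumption $N/m_2\ge1$ to absorb the additive $1$ into the second inequality, so your ``natural regime'' caveat matches the paper's reasoning.
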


\begin{proof}
To prove that both conclusions $(1)$ and $(2)$ above hold simultaneously with probability at least $1-p$, we will prove that each one holds separately with probability at least $1 - p/2$.  The desired result will then follow from the union bound.\\  

To establish conclusion $(1)$ above with probability at least $1-p/2$ one may simply appeal, e.g., to the proof of \cite[Lemma 9.35]{foucart_mathematical_2013}.\\

Toward establishing conclusion $(2)$ above we first note that 
\begin{align}
\|C\| ~=& \sup_{{\bf x} \in U(\mathbbm{R}^N)} \| C {\bf x} \|_2 \leq \sqrt{\sup_{{\bf x} \in U(\mathbbm{R}^N)} \| C {\bf x} \|^2_2} ~=~ \sqrt{\sup_{{\bf x} \in U(\mathbbm{R}^N)} \sum_{j \in [N/m_1^2]_0} \| A P_j {\bf x}\|_2^2} \nonumber \\
\leq& \sqrt{\sup_{{\bf x} \in U(\mathbbm{R}^N)} \sum_{j \in [N/m_1^2]_0} \| A \|^2 \|P_j {\bf x} \|_2^2} ~=~ \| A \|. 
\label{equ:operatorNorm}
\end{align}
Continuing, we have
\begin{align*}
\sup_{{\bf x} \in U(S - S)} \| E {\bf x} \|_2  ~=&~ \frac{1}{\sqrt{m_2}} \sup_{{\bf x} \in C(U(S - S))} \| B {\bf x} \|_2~\leq 
\frac{1}{\sqrt{m_2}} \sup_{{\bf x} \in C(U(S - S))} \left| \| B {\bf x} \|_2 - \sqrt{m_2} \| {\bf x} \|_2 \right| + \sqrt{m_2} \| {\bf x} \|_2\\
~\leq&~ \frac{1}{\sqrt{m_2}} \left(\sup_{{\bf x} \in C(U(S - S))} \left| \| B {\bf x} \|_2 - \sqrt{m_2}\| {\bf x} \|_2 \right| \right) + \sup_{{\bf x} \in U(S - S)} \| C {\bf x} \|_2.
\end{align*}
Now appealing to Theorem~\ref{vershynin-matrix-deviation-theorem} and \eqref{equ:operatorNorm} we can see that
\begin{align*}
\sup_{{\bf x} \in U(S - S)} \| E {\bf x} \|_2 \leq& \frac{\tilde{c} \left( w\left(C(U(S-S)) \right) + \sqrt{\ln(\frac{4}{p})} \cdot \sup_{{\bf x} \in C(U(S-S))} \| {\bf x} \|_2 \right)}{\sqrt{m_2}} + \|C\| \\
\leq& \frac{\tilde{c} \left( w\left(C(U(S-S)) \right) + \sqrt{\ln(\frac{4}{p})} \cdot \|A\| \right)}{\sqrt{m_2}} + \|A\|.
\end{align*}
will hold with probability at least $1 - p/2$, where $\tilde{c} \in \mathbbm{R}^+$ is a constant that only depends on the distributions of the rows of $B$.  Finally, using properties of Gaussian width (see, e.g., \cite[Exercise 7.5.4]{vershynin_high-dimensional_2018}) the last inequality can be simplified further to
\begin{align*}
\sup_{{\bf x} \in U(S - S)} \| D {\bf x} \|_2 \leq& \frac{\tilde{c} \left( \|C \| ~w\left(U(S-S) \right) + \sqrt{\ln(\frac{4}{p})} \cdot \| A \| \right)}{\sqrt{m_2}} + \| A \|.
\end{align*}
Using \eqref{equ:operatorNorm} one last time and simplifying using that $\ln(1/p) \geq 1$ now yields the first inequality in $(2)$ above.\\

To obtain a different version of the second inequality in $(2)$ one might be tempted to use, e.g., \cite[Theorem~4.4.5 ]{vershynin_high-dimensional_2018} and then repeat analogous simplifications to those just performed above. Indeed, doing so provides a slight better bound on $\sup_{{\bf x} \in U(S - S)} \| E {\bf x} \|_2 $ than the second inequality in $(2)$ does in the end.  However, for our purposes the second inequality in $(2)$ suffices and also follows automatically from what we have already proven given that $w(U \left(S-S) \right) \leq w \left(U(\mathbbm{R}^N) \right) \leq \sqrt{N} + c''$ for an absolute constant $c''$ (see, e.g., \cite[Example 7.5.7]{vershynin_high-dimensional_2018}).  Simplifying using that $N / m_2 \geq 1$ finishes the job.
\end{proof}

Lemma~\ref{lem:Bchoice} proposes that the matrix $B$ in the definition of $E$ in  \eqref{equ:Ddef} be chosen as a sub-gaussian random matrix.  Indeed, it demonstrates that doing so will at least partially fulfill the requirements of Lemma~\ref{lem:SimpleEmbedwAdditiveError2} with high probability.  Our next lemma proposes an auspicious choice for the remaining matrix $A \in \mathbbm{R}^{m_1 \times m_1^2}$.

\begin{lemma}
Fix $p,\epsilon \in (0,1/3)$, a finite set $\tilde{S} \subset \mathbbm{R}^N$, $K \in \left[1,N^{\frac{1}{4}}\right)$, and suppose that $m_1 \in \mathbbm{Z}^+$ satisfies
$$\sqrt{N} ~\geq~ m_1 ~\geq~ c K^2 \frac{\ln\left( N  |\tilde{S}| / p \right)}{\epsilon^2} \ln(N/p) \ln^2 \left(\frac{\ln\left( N  |\tilde{S}| / p \right)K^2}{\epsilon} \right),$$
where $c \in \mathbbm{R}^+$ is an absolute constant.  Next, let $U \in \mathbbm{R}^{m_1^2 \times m_1^2}$ be a unitary matrix with BOS constant $m_1 \cdot \max_{k,t} |u_{t,k}| \leq K$, $D \in \{ 0, -1, 1 \}^{m_1^2 \times m_1^2}$ be a diagonal matrix with i.i.d. $\pm 1$ Rademacher random variables on its diagonal, and $R \in \{ 0,1 \}^{m_1 \times m_1^2}$ be $m_1$ rows independently selected uniformly at random from the $m_1^2 \times m_1^2$ identity matrix.  Set $A := \sqrt{m_1} R U D$.  Then, $\| A \| \leq m_1$ always.  Furthermore, $A$ will be an $\epsilon$-JL map of $P_j \tilde{S}$ into $\mathbbm{R}^{m_1}$ for all $j \in [N/m_1^2]_0$ with probability at least $1 - p$.
\label{lem:Achoice}
\end{lemma}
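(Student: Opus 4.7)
The plan is to view $A = \sqrt{m_1}\,RUD$ itself as a SORS matrix in the sense of Definition~\ref{sors-matrix-definition}, with ambient dimension $m_1^2$, embedding dimension $m_1$ (so that the natural prefactor $\sqrt{N/m}$ reduces to $\sqrt{m_1^2/m_1} = \sqrt{m_1}$), and constant $K$. The JL statement then follows by applying Corollary~\ref{coro:SORSfiniteembed} to each finite set $P_j\tilde S\subset \mathbbm{R}^{m_1^2}$ and union bounding over $j \in [N/m_1^2]_0$.

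For the deterministic bound $\|A\|\leq m_1$, I would just write $\|A\|\leq \sqrt{m_1}\,\|R\|\,\|U\|\,\|D\|$, note that $\|U\|=\|D\|=1$ (unitary and signed diagonal), and bound $\|R\|^2 = \|RR^T\|$ via Gershgorin: $RR^T$ has ones on its diagonal, and its $(i,i')$ entry equals $1$ iff rows $i$ and $i'$ of $R$ coincide, so each row sum is at most $m_1$. This gives $\|RR^T\|\leq m_1$, hence $\|R\|\leq \sqrt{m_1}$ and $\|A\|\leq \sqrt{m_1}\cdot\sqrt{m_1} = m_1$.

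For the JL claim I would invoke Corollary~\ref{coro:SORSfiniteembed} on each $P_j\tilde S$ (at most $|\tilde S|$ points, ambient dimension $m_1^2$) with per-$j$ failure budget $p' := pm_1^2/N$, and then union bound across the $\lceil N/m_1^2\rceil$ projections to obtain a total failure probability at most $p$. It remains to verify that the hypothesized lower bound on $m_1$ implies Corollary~\ref{coro:SORSfiniteembed}'s sufficient condition
\[
m_1 \gtrsim \frac{K^2}{\epsilon^2}\ln\!\bigl(|\tilde S|/p'\bigr)\Bigl(\ln^2\!\bigl(\ln(|\tilde S|/p')K^2/\epsilon\bigr)\ln(e m_1^2) + \ln(2e/p')\Bigr).
\]
Using $1\leq m_1\leq\sqrt N$ I would repeatedly apply $\ln(e m_1^2)\leq\ln(eN)$, $\ln(1/p')\leq\ln(N/p)$, and $\ln(|\tilde S|/p')\leq\ln(N|\tilde S|/p)$, which are precisely the factors appearing in the hypothesis (up to adjusting the absolute constant $c$).

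The only mildly delicate step is the last one: Corollary~\ref{coro:SORSfiniteembed} combines $\ln(eN)$ and $\ln(2e/p)$ additively inside one of its factors, so one must confirm that the single product $\ln(N/p)\ln^2(\cdots)$ in the hypothesis simultaneously absorbs both summands after the $p\mapsto p'$ reduction. This is routine bookkeeping, and no probabilistic input beyond Corollary~\ref{coro:SORSfiniteembed} is needed.
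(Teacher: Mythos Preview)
Your proposal is correct and follows essentially the same route as the paper: the paper also bounds $\|A\|\le\sqrt{m_1}\,\|R\|\le m_1$ (via a column-multiplicity argument equivalent to your Gershgorin bound on $RR^T$), and its JL argument is just the unpacked form of Corollary~\ref{coro:SORSfiniteembed}, establishing RIP of $\sqrt{m_1}RU$ once via Corollary~\ref{Coro:SOBRIPcombined} and then applying \cite[Theorem~9.36]{foucart_mathematical_2013} to each $P_j\tilde S$ with failure budget $p m_1^2/(2N)$ before union bounding. Your direct invocation of Corollary~\ref{coro:SORSfiniteembed} per $j$ with $p'=pm_1^2/N$ is the packaged version of exactly this, and the bookkeeping showing that the hypothesized lower bound on $m_1$ absorbs both additive terms goes through as you indicate.
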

\begin{proof}
Due to the unitary nature of both $U$ and all admissible $D$ we have 
$$\| A \| \leq \sqrt{m_1} \| R \| \| U \| \| D \| ~=~ \sqrt{m_1} \| R \| = \sqrt{m_1} \sup_{{\bf x} \in U(\mathbbm{R}^{m_1^2})} \sqrt{\sum_{j = 1}^{m_1^2} \left( \sum_{\ell = 1}^{m_1} R_{\ell,j} \right) |x_j|^2} \leq \sqrt{m_1 \| R \|_1},$$
where $\| R \|_1 := \max_{1 \leq j \leq m_1^2} \sum_{\ell = 1}^{m_1} |R_{\ell,j}| \leq m_1$ for all admissible $R$.  This proves the claim regarding $\| A \|$.\\

Now fix $j \in [N/m_1^2]_0$ and let $s := 16 \ln\left( 8 e N  |\tilde{S}| / p \right) \geq \max_{j \in [N/m_1^2]_0} 16 \ln\left( 8 N  |P_j \tilde{S}| / m_1^2 p \right)$.  For this choice of $s$ \cite[Theorem 9.36]{foucart_mathematical_2013} guarantees that $A$ will be an $\epsilon$-JL map of $P_j \tilde{S}$ into $\mathbbm{R}^{m_1}$ with probability at least $1 - p/(2N/m_1^2)$ provided that $\sqrt{m_1}RU$ has the RIP of order $(2s,\epsilon/4)$.  The union bound then guarantees that $A$ will be an $\epsilon$-JL map of $P_j \tilde{S}$ into $\mathbbm{R}^{m_1}$ for all $j \in [N/m_1^2]_0$ with probability at least $1-p/2$.  
To finish, by a final application of the union bound it suffices to prove that $\sqrt{m_1}RU$ will indeed have the RIP of order $(2s,\epsilon/4)$ with probability at least $1-p/2$.  This RIP condition is provided by Corollary~\ref{Coro:SOBRIPcombined} for any BOS matrix $\frac{m_1}{\sqrt{m'}}R'U \in \mathbbm{C}^{m' \times m_1^2}$ with at least
\begin{equation*}
    m' \geq m_{\rm min} := \left\lceil c K^2 \frac{\ln\left( N  |\tilde{S}| / p \right)}{\epsilon^2} \ln(m_1/p) \ln^2 \left(\frac{\ln\left( N  |\tilde{S}| / p \right)K^2}{\epsilon} \right)\right\rceil
\end{equation*}
rows, where $c \in \mathbbm{R}^+$ is a sufficiently large absolute constant.  Note that our assumed bounds on $m_1$ guarantee that $m_1 \geq m_{\rm min}$.  Furthermore, if $m_1 > m_{\rm min}$ we can simply increase $m'$ to $m_1$ without losing the desired RIP condition.
\end{proof}

We are now prepared to prove the main result of this section.

\begin{theorem}
\label{thm:MasterEmbedd2}
Let $S \subset U(\mathbbm{R}^N)$, $K \in \left[1,N^{\frac{1}{4}}\right)$, $\epsilon \in \left(0, \frac{1}{3} \right)$, $p \in \left(e^{-N}, \frac{1}{3} \right)$, and fix a sequence $X = X_1,\dots$ of i.i.d. mean zero, sub-gaussian random variables from which to draw the entries of $B$ in \eqref{equ:Ddef}.  Next, suppose that $m_1 \in \mathbbm{Z}^+$ satisfies
$$\sqrt{N} ~\geq~ m_1 ~\geq~ c_2 K^2 \frac{\ln\left( N  \mathcal{N}(S, \frac{\epsilon}{c_1 N}) / p \right)}{\epsilon^2} \ln(N/p) \ln^2 \left(\frac{\ln\left( N  \mathcal{N}(S, \frac{\epsilon}{c_1 N}) / p \right)K^2}{\epsilon} \right),$$ and that $m_2 \in \mathbbm{Z}^+$ satisfies $$m_1 \geq m_2 \geq c_3 \epsilon^{-2} \ln \left(\mathcal{N}(S, \delta)/p\right)$$
for $\delta := c_4 \epsilon/\left(m_1 \left( w\left(U(S-S) \right) + \sqrt{\ln(1/p)}\right) \right),$
where $c_1, c_2, c_3, c_4 \in \mathbbm{R}^+$ are absolute constants.  
Finally, choose $A \in \mathbbm{R}^{m_1 \times m_1^2}$ and $B \in \mathbbm{R}^{m_2 \times N/m_1}$ in \eqref{equ:Ddef} so that:
\begin{enumerate}
    \item $A := \sqrt{m_1} R U D$ where $U \in \mathbbm{R}^{m_1^2 \times m_1^2}$ be a unitary matrix with BOS constant $m_1 \cdot \max_{k,t} |u_{t,k}| \leq K$, $D \in \{ 0, -1, 1 \}^{m_1^2 \times m_1^2}$ be a diagonal matrix with i.i.d. $\pm 1$ Rademacher random variables on its diagonal, and $R \in \{ 0,1 \}^{m_1 \times m_1^2}$ be $m_1$ rows independently selected uniformly at random from the $m_1^2 \times m_1^2$ identity matrix.
    
    \item $B$ has i.i.d. mean zero, sub-gaussian entries drawn according to the first $m_2 N / m_1$ random variables in $X$.
\end{enumerate}
Then, $E = \frac{1}{\sqrt{m_2}} BC \in \mathbbm{R}^{m_2 \times N}$ will be an $\epsilon$-JL map of $S$ into $\mathbbm{R}^m_2$ with probability at least $1-p$.  Furthermore, if $A \in \mathbbm{R}^{m_1 \times m_1^2}$ has an $m_1^2 \cdot f(m_1)$ time matrix-vector multiplication algorithm, then $E$
will have an $\mathcal{O}(N \cdot f(m_1))$-time matrix-vector multiply.
\end{theorem}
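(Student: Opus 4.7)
The plan is to apply Lemma~\ref{lem:SimpleEmbedwAdditiveError2} to a minimal $\delta$-cover $\mathcal{C}_\delta \subset S$ of $S$ for the value of $\delta$ specified in the theorem, verifying hypothesis (a) via Lemma~\ref{lem:Achoice} and hypothesis (b) (together with the needed control on $a_E$) via Lemma~\ref{lem:Bchoice}, each at failure probability $p/3$, and then intersecting the two random events via the union bound. Since $S \subset U(\mathbbm{R}^N)$, the ``unit-sphere'' half of the conclusion of Lemma~\ref{lem:SimpleEmbedwAdditiveError2} produces a $14\epsilon$-JL map; rescaling $\epsilon \leftarrow \epsilon/14$ throughout the sub-lemmas only inflates the absolute constants $c_2, c_3, c_4$ and so yields the $\epsilon$-JL conclusion of the theorem. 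The runtime bound follows directly from Lemma~\ref{lem:FFTthenGaussiantime} once $A$ has the structure of item~1 (a subsampled bounded orthonormal system times a Rademacher diagonal).

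The first bookkeeping step is to check that the cover size $|\mathcal{C}_\delta| = \mathcal{N}(S,\delta)$ is controlled by $\mathcal{N}(S, \epsilon/(c_1 N))$ so that the hypothesis on $m_1$ really is large enough to invoke Lemma~\ref{lem:Achoice} with $\tilde{S} = \mathcal{C}_\delta$. For this I use $m_1 \leq \sqrt{N}$, the standard Gaussian-width bound $w(U(S-S)) \leq w(U(\mathbbm{R}^N)) \leq \sqrt{N} + c''$ (see \cite[Example 7.5.7]{vershynin_high-dimensional_2018}), and $\ln(1/p) \leq N$ from the assumption $p \geq e^{-N}$, to conclude that $m_1 \bigl( w(U(S-S)) + \sqrt{\ln(1/p)} \bigr) \lesssim N$. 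This forces $\delta \geq \epsilon/(c_1 N)$ for a suitable absolute constant $c_1$, whence $\mathcal{N}(S,\delta) \leq \mathcal{N}(S, \epsilon/(c_1 N))$ by monotonicity of covering numbers. The same bound covers hypothesis (b) of Lemma~\ref{lem:SimpleEmbedwAdditiveError2} via $|C\mathcal{C}_\delta| \leq |\mathcal{C}_\delta| = \mathcal{N}(S,\delta)$, matching the theorem's $m_2$-bound to what Lemma~\ref{lem:Bchoice} requires.

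The principal quantitative task is then to verify $\delta \leq \epsilon/a_E$. Here $\|A\| \leq m_1$ holds deterministically by Lemma~\ref{lem:Achoice}, and conclusion~(2) of Lemma~\ref{lem:Bchoice} gives, on the good event of probability $\geq 1 - p/3$, a bound of the form
\begin{equation*}
a_E \leq c_2 \, m_1 \left( \frac{w(U(S-S)) + \sqrt{\ln(1/p)}}{\sqrt{m_2}} + 1 \right) \leq C \, m_1 \bigl( w(U(S-S)) + \sqrt{\ln(1/p)} + 1 \bigr)
\end{equation*}
for an absolute constant $C$, using $m_2 \geq 1$. Because $w(U(S-S))$ is bounded below by a positive absolute constant whenever $S - S$ is nontrivial, choosing $c_4$ small enough relative to $C$ (and absorbing the trailing ``$+1$'' into this choice) produces $a_E \, \delta \leq \epsilon$ on this event. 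The main obstacle is really nothing more than this interlocked bookkeeping: $a_E$ is random through $B$, yet $\delta$ must be fixed in advance, so one must bound $a_E$ on a high-probability event and then choose $\delta$ by the resulting deterministic bound; once the two events from Lemmas~\ref{lem:Achoice} and~\ref{lem:Bchoice} are intersected, all three hypotheses of Lemma~\ref{lem:SimpleEmbedwAdditiveError2} hold deterministically on that event, giving the desired embedding of $S$ and completing the proof.
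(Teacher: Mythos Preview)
Your proposal is correct and follows essentially the same approach as the paper: reduce to Lemma~\ref{lem:SimpleEmbedwAdditiveError2} with $\epsilon\leftarrow\epsilon/14$, verify hypothesis~(a) via Lemma~\ref{lem:Achoice} and hypothesis~(b) together with the $a_E$ bound via Lemma~\ref{lem:Bchoice}, show $\delta\geq \epsilon/(c_1N)$ deterministically from $m_1\leq\sqrt{N}$, $w(U(S-S))\lesssim\sqrt{N}$, and $\ln(1/p)\leq N$, and read off the runtime from Lemma~\ref{lem:FFTthenGaussiantime}. The only cosmetic difference is that the paper combines the events via a short conditional-probability computation rather than a plain union bound, but since the deterministic upper bound $a_E'\leq c_1 N$ on which the applicability of Lemma~\ref{lem:Achoice} rests is itself nonrandom, your union-bound version is equally valid and arguably cleaner.
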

\begin{proof}
Note the stated result follows from Lemma~\ref{lem:SimpleEmbedwAdditiveError2} provided that its assumptions $(a)$ and $(b)$ both hold with $\epsilon \leftarrow \epsilon / 14$ and $\delta$ sufficiently small.  Hence, we seek to establish that both of these assumptions will simultaneously hold with probability at least $1-p$ for our choices of $A$ and $B$ above.  We will use Lemmas~\ref{lem:Bchoice} and~\ref{lem:Achoice} to accomplish this objective below, thereby proving the theorem.\\

To begin, we will apply Lemma~\ref{lem:Bchoice} with $S \leftarrow S$, $p \leftarrow p/3$, $\epsilon \leftarrow \epsilon / 14$, and $S_C \leftarrow C \mathcal{C}_{\delta}$ where $\mathcal{C}_{\delta} \subset S$ is a minimal $\delta$-cover of $S$.  In doing so we note that $c_1, c_2, c_3$ in Lemma~\ref{lem:Bchoice} will be absolute constants given $X$.\footnote{Note that $c_1, c_2, c_3$ depend on the subgaussian norms of the rows of $B$ in Lemma~\ref{lem:Bchoice} through an application of Theorem~\ref{vershynin-matrix-deviation-theorem}.  However, after the distributions of $B$'s entries, $X$, are fixed these norms are also both fixed and independent of the final length of the rows (see, e.g., \cite[Lemma 3.4.2 and Theorem 9.1.1]{vershynin_high-dimensional_2018} in connection with the use of  Theorem~\ref{vershynin-matrix-deviation-theorem} in the proof of Lemma~\ref{lem:Bchoice}).}  As a consequence we learn that both the event $\mathcal{E}_{(c)} := \bigg\{ \frac{1}{\sqrt{m_2}} B$ is an $\epsilon/14$-JL map of $C \mathcal{C}_{\delta}$ into $\mathbbm{R}^{m_2} \bigg\}$, and that 
\begin{align}
a_E &\leq \sup_{{\bf x} \in U(S - S)} \| E {\bf x} \|_2+1 \leq c~ m_1 \left( \frac{ w\left(U(S-S) \right) + \sqrt{\ln(\frac{2}{p}})}{\sqrt{m_2}} + 1 \right) \nonumber \\
&\leq c' m_1 \left( w\left(U(S-S) \right) + \sqrt{\ln(1/p)}\right) =: a'_E  
\label{equ:defa'boundd} \\
&\leq c_1 N,  \nonumber
\end{align}
will simultaneously hold with probability at least $1-p/3$, where $c, c', c_1$ are absolute constants.  In \eqref{equ:defa'boundd} we have used the assumptions that $m_1 \leq \sqrt{N}$ and $(1/p) \leq e^N$ as well as the fact that $\| A \| \leq m_1$ always (see Lemma~\ref{lem:Achoice}), and that $w\left(U(S-S) \right) \leq \sqrt{N} + c''$ for an absolute constant $c''$ (see, e.g., \cite[Example 7.5.7]{vershynin_high-dimensional_2018}).  Furthermore, we note that \eqref{equ:defa'boundd} implies that $\frac{\epsilon}{c_1 N} \leq \delta = \frac{c' c_4 \epsilon}{a_E'} \leq \frac{\epsilon}{a_E}$ holds provided that, e.g., $c_4$ is chosen to be $1/c'$.\\

Next, Lemma~\ref{lem:Achoice} with $p \leftarrow p/3$, $\epsilon \leftarrow \epsilon / 14$, $\tilde{S} \leftarrow \mathcal{C}_{\delta}$, $K \leftarrow K$ reveals that $\mathcal{E}_{(b)} := \bigg\{A$ is an $\epsilon/14$-JL map of $P_j \mathcal{C}_{\delta}$ into $\mathbbm{R}^{m_1}$ for all $j \in [N/m_1^2]_0 \bigg\}$ will also hold with probability at least $1 - p/3$ provided that \eqref{equ:defa'boundd} holds.  Here we have used the fact that $\mathcal{N}\left(S, \frac{\epsilon}{c_1 N} \right) \geq \mathcal{N}(S, \delta) = |\mathcal{C}_{\delta}|$ when $\delta \geq \epsilon/c_1N$.  As a consequence we can finally see that both assumptions $(a)$ and $(b)$ of Lemma~\ref{lem:SimpleEmbedwAdditiveError2} with $\epsilon \leftarrow \epsilon / 14$ will hold with probability at least $1-p$ since
\begin{align*}
    \mathbbm{P}\left[ \mathcal{E}_{(c)} \cap \eqref{equ:defa'boundd} \cap \mathcal{E}_{(b)} \right] &\geq 1 - \mathbbm{P}\left[ \overline{ \eqref{equ:defa'boundd} \cap \mathcal{E}_{(b)}} \right] - p/3 = \mathbbm{P}\left[ \eqref{equ:defa'boundd} \cap \mathcal{E}_{(b)} \right]  - p/3\\
    &= \mathbbm{P}\left[  \mathcal{E}_{(b)} ~|~ \eqref{equ:defa'boundd}  \right] \mathbbm{P}\left[\eqref{equ:defa'boundd}  \right] - p/3 \geq (1-p/3) \mathbbm{P}\left[ \eqref{equ:defa'boundd}  \right] - p/3 \\
    &\geq (1-p/3)^2 - p/3 > 1- p.
\end{align*}
Lemma~\ref{lem:SimpleEmbedwAdditiveError2} now finishes the proof.  The runtime result follows from Lemma~\ref{lem:FFTthenGaussiantime}.
\end{proof}

Note that an application of Theorem~\ref{thm:MasterEmbedd2} requires a valid choice of $m_1$ to be made.  This will effectively limit the sizes of the sets which we can embed below.  In order to make the discussion of this limitation a bit easier below we can further simplify the lower bound for $m_1$ by noting that for all fixed $S \subset U(\mathbbm{R}^N)$, $K \in [1,N^{\frac{1}{4}})$, $\epsilon \in (0, 1/3)$ and $p \in \left(e^{-N}, \frac{1}{3} \right)$ we will have
\begin{align*}
     \ln\left(\frac{N}{p} \right) \ln^2 \left(\frac{\ln\left( N  \mathcal{N}(S, \frac{\epsilon}{c_1 N}) / p \right)K^2}{\epsilon} \right) &\leq \ln^3 \left(\frac{N K^2}{\epsilon p}\right) \leq c \ln^3 \left(\frac{N}{\epsilon p}\right)
\end{align*}
for an absolute constant $c \in \mathbbm{R}^+$, provided that $\mathcal{N}\left(S, \frac{\epsilon}{c_1 N}\right) \leq p e^N/N$.  As a consequence, we may weaken the lower bound for $m_1$ and instead focus on the smaller interval 
\begin{align*}
  \sqrt{N} \geq m_1 \geq c''_2 K^2 \frac{\ln\left( \mathcal{N}(S, \frac{\epsilon}{c_1 N}) / p \right)}{\epsilon^2} \ln^4 \left(\frac{N}{\epsilon p}\right) \geq c'_2 K^2 \frac{\ln\left( N  \mathcal{N}(S, \frac{\epsilon}{c_1 N}) / p \right)}{\epsilon^2} \ln^3 \left(\frac{N}{\epsilon p}\right)
\end{align*}
for simplicity.  Further assuming that $K$ is upper bounded by a universal constant below (as it will be in all subsequent applications) we can see that our smaller range for $m_1$ will be nonempty when 
\begin{equation}
\mathcal{N}\left(S, \delta\right) \leq \mathcal{N} \left(S, \frac{\epsilon}{c_1 N} \right) \leq p e^{c \epsilon^2 \sqrt{N}/\ln^4 \left(\frac{N}{\epsilon p}\right)} 
\label{equ:Srestriction2}
\end{equation}
for a sufficiently small absolute constant $c \in \mathbbm{R}^+$.  We will use \eqref{equ:Srestriction2} in place of \eqref{equ:Srestriction} below to limit the sizes of the sets that we embed so that Theorem~\ref{thm:MasterEmbedd2} can always be applied with a valid minimal choice of $m_1 \leq c_2''' K^2 \frac{\ln\left( \mathcal{N}(S, \frac{\epsilon}{c_1 N}) / p \right)}{\epsilon^2} \ln^4 \left(\frac{N}{\epsilon p}\right) \leq \sqrt{N}$ below.

\subsection{Theorem~\ref{thm:fasterRIP} as a Corollary of Theorem~\ref{thm:MasterEmbedd2}}
As above, we let $B$ have, e.g., i.i.d. Rademacher entries and will choose $U \in \mathbbm{R}^{m_1^2 \times m_1^2}$ to be, e.g., a Hadamard or DCT matrix (see, e.g., \cite[Section 12.1]{foucart_mathematical_2013}.)  Making either choice for $U$ will endow $A$ with an $\mathcal{O}(m_1^2 \log(m_1))$-time matrix vector multiply via FFT-techniques, and will also ensure that $K = \sqrt{2}$ always suffices.  As a result, we note that  $f(m_1) = \mathcal{O}(\log(m_1))$ in Theorem~\ref{thm:MasterEmbedd2}.\\

We may therefore apply Theorem~\ref{thm:MasterEmbedd2} with $S$ as in \eqref{equ:SforRIP}.
To upper bound the final embedding dimension we note that we may safely choose
$$m_2 \geq c_3 \epsilon^{-2} \ln \left(\mathcal{N}\left(S, \frac{\epsilon}{c_1 N}\right)/p\right) \geq c_3 \epsilon^{-2} \ln \left(\mathcal{N}(S, \delta)/p\right).$$
Furthermore, applying Lemmas~\ref{lem:SphereCover} and~\ref{lem:nchoosebounds} we can further see that
$$\left( \frac{eN}{s} \right)^s \left( \frac{3c_1 N}{\epsilon} \right)^s \geq \mathcal{N}\left(S, \frac{\epsilon}{c_1 N}\right).$$
The stated lower bound on $m$ now follows after adjusting and simplifying constants.  Finally, and most crucially, the condition it suffices for $s$ to satisfy now also follows from \eqref{equ:Srestriction2}.

\end{appendix}

\end{document}